\setlist[itemize]{nosep}
\setlist[enumerate]{nosep}
\setlist[itemize]{topsep=3pt}
\setlist[enumerate]{topsep=3pt}
\definecolor{Eggplant}{RGB}{97, 64, 81}
\newtheorem{theorem}{Theorem}
\newtheorem{proposition}{Proposition}
\newtheorem{corollary}{Corollary}
\newtheorem{lemma}{Lemma}
\newtheorem{remark}{Remark}
\newtheorem{definition}{Definition}
\newtheorem{heuristic}{Heuristic}
\newtheorem*{problem}{Problem}
\newcommand{\algorithmstyle}[1]{\renewcommand{\algocf@style}{#1}}
\let\oldnl\nl
\newcommand\nonl{
  \renewcommand{\nl}{\let\nl\oldnl}}
\thetcbcounter\ifstrempty{#1}{\ignorespaces}{~---~#1}}}}
\newtcolorbox[auto counter]{boxedAlgorithm}[1][]{
    colback=white,
    colframe=Eggplant,
    boxrule=1pt,
    titlerule=0pt,
    sharp corners=all,
    colbacktitle=white,enhanced,
    attach boxed title to top center={yshift=-10pt},
    boxed title style={boxrule=-1pt},
    fonttitle=\sffamily,
    coltitle=Eggplant,
    algotitle={},
    #1
}
\def\tagform@#1{\maketag@@@{\ignorespaces#1\unskip\@@italiccorr}}
\let\orgtheequation\theequation
\def\theequation{(\orgtheequation)}
\let\originalleft\left
\let\originalright\right
\renewcommand{\left}{\mathopen{}\mathclose\bgroup\originalleft}
\renewcommand{\right}{\aftergroup\egroup\originalright}
\newcommand{\GSO}{\mbox{\textsc{gso}\xspace}}
\newcommand{\LLL}{\mbox{\textsc{lll}}}
\newcommand{\BKZ}{\mbox{\textsc{bkz}}}
\newcommand{\DBKZ}{\mbox{\textsc{dbkz}}}
\newcommand{\GCD}{\mbox{\textsc{gcd}}}
\DeclareMathOperator{\id}{Id}
\DeclareMathOperator{\tr}{tr}
\DeclareMathOperator{\degree}{deg}
\DeclareMathOperator{\covol}{vol}
\newcommand{\Vol}[1]{\covol\left({#1}\right)}
\newcommand{\inner}[2]{\langle {#1},{#2} \rangle}
\renewcommand{\deg}[1]{{\degree}\ {#1}}
\newcommand{\E}{\mathbb{E}}
\newcommand{\order}{\mathcal{O}}
\newcommand{\ideal}[1]{\mathfrak{#1}}
\newcommand{\norm}{\mathcal{N}}
\newcommand{\NN}{\mathbf{N}}
\newcommand{\ZZ}{\mathbf{Z}}
\newcommand{\RR}{\mathbf{R}}
\newcommand{\KK}{\mathbf{K}}
\newcommand{\QQ}{\mathbf{Q}}
\newcommand{\CC}{\mathbf{C}}
\newcommand{\lL}{\mathbf{L}}
\newcommand{\Gl}{\textrm{GL}}
\newcommand{\Lat}{\Lambda}
\newcommand{\module}{\mathcal{M}}
\newcommand{\me}{\mathrm{e}}
\newcommand{\algName}[1]{{\color{Eggplant}\textbf{\textsf{#1}}\color{black}}}
\newcommand{\bigO}[1]{\textrm O\left(#1\right)}
\newcommand{\bigOtilde}[1]{\tilde{\textrm O}\left(#1\right)}
\newcommand{\littleO}[1]{o\left(#1\right)}
\newcommand{\profile}{\mu}
\newcommand{\Mod}{\bmod}
\newcommand{\Log}{\textrm{Log}}
\newcommand{\rad}{\textrm{rad}}
\newcommand{\Id}{\textrm{Id}}
\let\conj\overline
\crefname{tcb@cnt@boxedAlgorithm}{algorithm}{algorithms}
\Crefname{tcb@cnt@boxedAlgorithm}{Algorithm}{Algorithms}
\crefname{heuristic}{heuristic}{heuristics}
\Crefname{heuristic}{Heuristic}{Heuristics}
\title{Algebraic and Euclidean Lattices: Optimal Lattice Reduction and Beyond}
 \author{
 Paul Kirchner, Thomas Espitau and Pierre-Alain Fouque }
 \address{Sorbonne universit\'es, lip6, paris, france}
 \email{t.espitau@gmail.com}
 \address{Rennes Univ, irisa}
 \email{pa.fouque@gmail.com}
 \email{paul.kirchner@irisa.fr}
 \thanks{This work has been supported in part by the European Union H2020
 Programme under grant agreement number \textsc{ERC}-669891 and
 \textsc{Prometheus Project}-780701.}
\begin{document}

\maketitle

\begin{abstract}

We introduce a framework generalizing lattice reduction algorithms to module
lattices in order to practically and efficiently solve the $\gamma$-Hermite Module-SVP
problem over arbitrary cyclotomic fields. The core idea is to exploit the
structure of the subfields for designing a doubly-recursive strategy of
reduction: both recursive in the rank of the module and in the field we are
working in. Besides, we demonstrate how to leverage the inherent symplectic
geometry existing in the tower of fields to provide a significant speed-up of the
reduction for rank two modules. The recursive strategy over the rank can also
be applied to the reduction of Euclidean lattices, and we can
perform a reduction in asymptotically almost the same time as matrix
multiplication.  As a byproduct of the design of these fast reductions, we
also generalize to all cyclotomic fields and provide speedups for many
previous number theoretical algorithms.

Quantitatively, we show that a module of
rank 2 over a cyclotomic field of
degree $n$ can be heuristically reduced within approximation factor
$2^{\bigOtilde{n}}$ in time $\bigOtilde{n^2B}$, where $B$ is the bitlength of
the entries. For $B$ large enough, this complexity shrinks to
$\bigOtilde{n^{\log_2 3}B}$. This last result is particularly striking as it goes below
the estimate of $n^2B$ swaps given by the classical analysis of the \LLL{}
algorithm using the so-called potential.

Finally, all this framework is fully parallelizable, and we provide a full
implementation. We apply it to break multilinear cryptographic candidates on
concrete proposed parameters. We were able to reduce matrices of dimension
4096 with 6675-bit integers in 4 days, which is more than a million times
faster than previous state-of-the-art implementations. Eventually, we
demonstrate a quasicubic time for the Gentry-Szydlo algorithm which finds a
generator given the relative norm and a basis of an ideal. This algorithm is
important in cryptanalysis and requires efficient ideal multiplications and
lattice reductions; as such we can practically use it in dimension 1024.

 \end{abstract}

\newpage

\section{Introduction}
Lattice-based cryptography increasingly uses ideal and module lattices for
efficiency reasons as the NTRU cryptosystem since 1996. This
achieves quasilinear key size, encryption/decryption and signature time
complexities instead of quadratic. Consequently, it is of utmost importance
to reduce such lattices very efficiently. Peikert in~\cite{Peikert16} asked
the following question: \emph{For worst-case problems on ideal
lattices, especially in cyclotomic rings, are there (possibly quantum)
algorithms that substantially outperform the known ones for general
lattices? If so, do these attacks also extend to work against the
ring-SIS and ring-LWE problems themselves ?} So far, there is no result
in this direction and the security parameters are chosen so that these
lattices are as hard to reduce as random lattices.

The classical way of reducing algebraic lattices starts by \emph{descending}
the algebraic lattice over the integers $\ZZ$. This corresponds to forgetting the
algebraic structure of the module and running a reduction algorithm on it. But
the image over $\ZZ$ of a rank $d$ algebraic lattice is of rank $d\times n$,
where $n$ is the degree of field inside which we are working initially.
Hence, even in the case where the lattice is of small rank, the reduction
can be very costly as the actual dimension over $\ZZ$ might be large.
This process is forgetful of the algebraic specificities of the
base ring. But these properties translate into symmetries over
modules, as they are very structured. Consequently, the above-mentioned
reduction cannot take these symmetries into account. Thus, it is natural to
wonder if it is possible to \emph{exploit} the algebraic structure of the
fields to speed up the reduction.

 In this paper, we present several optimal and heuristic algorithms for
 \LLL\nobreakdash-reducing lattices defined over $\ZZ$ and more generally over module
 lattices defined over cyclotomic fields~\cite{DCC:LanSte15}.
 In the special case of rank-2 module, which is the
 case in the cryptanalysis of the NTRU cryptosystem~\cite{NTRU}, we describe more
 specific algorithms. One of them takes into account the symplectic structure
 of these lattices. Since recent advanced cryptographic constructions such as
 multilinear maps~\cite{AC:ACLL15} and fully homomorphic encryption
 schemes~\cite{EC:VGHV10,C:CorLepTib13} increasingly use
 lattices with high dimension and very large numbers,
 our goal is to give \emph{very efficient} and \emph{parallel}
 algorithms to reduce them.  Consequently, we depart from the current research
 line of proved worst-case lattice reductions to present heuristic
 algorithms with high performance. However, the introduced heuristics
 are practically verified and a large part of the algorithms is proven.

\subsection{Technical framework}
We introduce a framework of techniques to provide fast polynomial-time
algorithms for reducing algebraic lattices defined over cyclotomic fields.
The core design principles of our reductions are:
\begin{description}
  \item[A recursive strategy on the rank] The reduction of a rank $d$ lattice
    is performed recursively on large blocks. Instead of relying on a local
    (\LLL-like) strategy consisting in choosing the first (or an arbitrary)
    block for which some progress can be made, we systematically perform the
    reduction of the blocks. This global process is somewhat similar to the
    ironing out strategies of \BKZ{}-like reductions or to the fast variant of
    \LLL{} of Neumaier and Stehlé~\cite{ISSAC:NeuSte16}, where successive passes of local
    reductions are made on the \emph{whole} basis to gradually improve its
    reduceness. However, we differ from the iterative design \`a la \BKZ{} as
    we shift the blocks between odd and even steps to mix all basis vectors as
    in the early parallelized versions of \LLL{} of Villard~\cite{ISSAC:Villard92}. A
    generic instance of two successive passes of our strategy is given in the following:\\
    \hspace{-1.4cm}
    \includegraphics[scale=0.45]{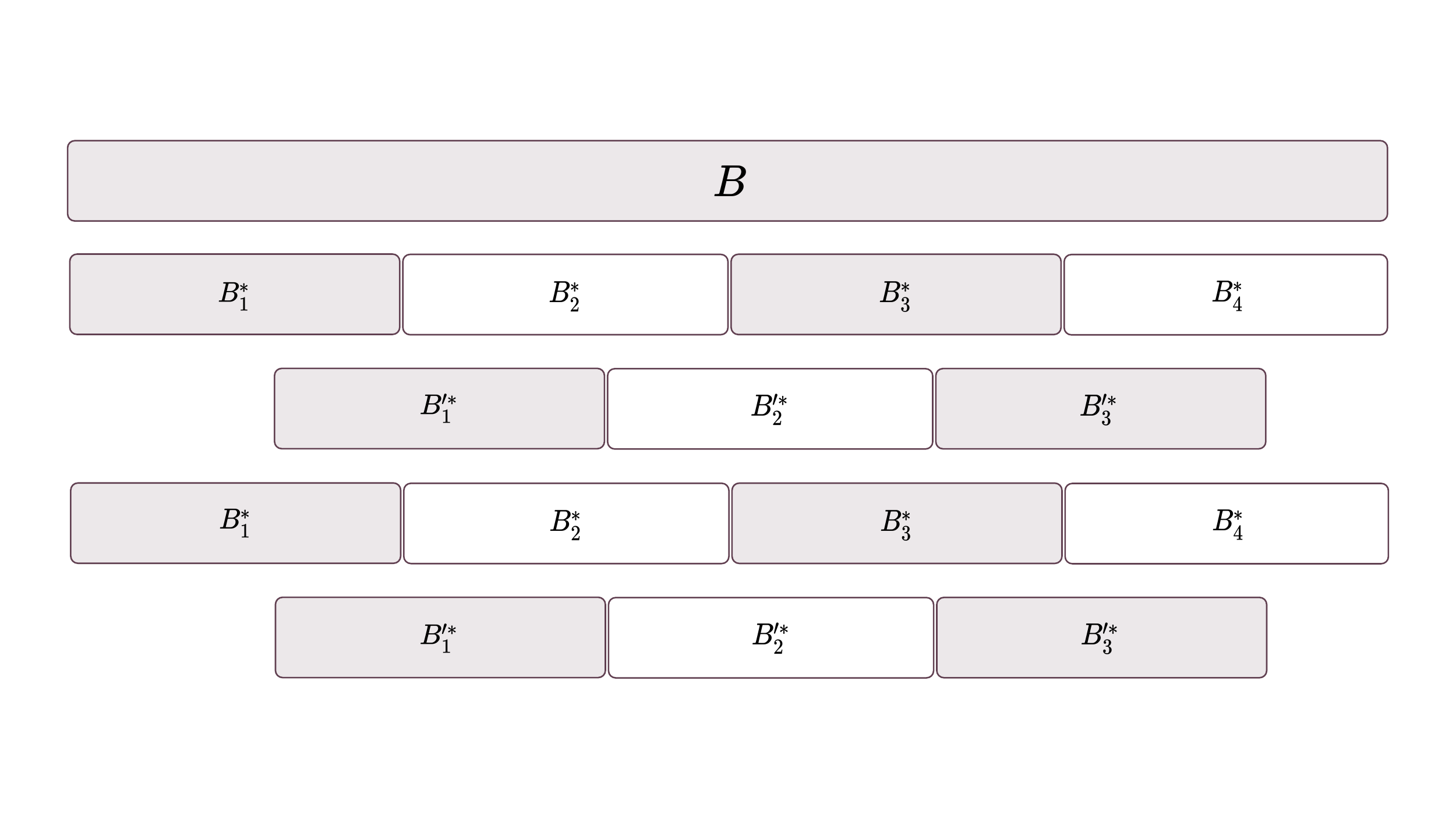}

    The basis $B$ is here sundered in four chunks $B_1, B_2, B_3, B_4$ of
    length $|B|/4$. The reduction process will start by reducing (possibly at
    the same time) the first chunk $B_1^*=B_1$, the projection $B_2^*$ of the second
    one orthogonally to $B_1$, the projection $B_3^*$ of the third one
    orthogonally of $B_1\|B_2$ and so on. When this pass is over, the
    same process starts again, but this time on shifted blocks (i.e.\ the first
    block $B_1'$ starts with the vector $|B|/8$ and is of length $|B|/4$).
    Hence, the rank of the lattices which are called recursively decreases
    until we reach rank 2 lattices, where we can use a fast reduction like
    Sch\"onhage's algorithm~\cite{Schonhage91}.

  \item[A recursive strategy on the degree of the field] Suppose that we are
    given a tower of number fields $\KK_0\subset \KK_1 \subset \ldots\subset
    \KK_h$. Let $\Lat$ be an algebraic lattice defined over the ring of
    integers of the upper field $\KK_h$. We can look at $\Lat$ as an algebraic
    lattice defined over the field right under, that is $\KK_{h-1}$.

    \begin{center}
      \begin{tikzpicture}
        \matrix (m) [matrix of math nodes,row sep=3em,column sep=4em,minimum width=2em]
        {
          \KK_h & \order_{\KK_h} && \Lat\\
          \KK_{h-1} & \order_{\KK_{h-1}}&& \\
          \vdots & \vdots&&\\
        \QQ & \ZZ&&\\};
        \path[-stealth]
          (m-2-1) edge node [left] {} (m-1-1)
          (m-3-1) edge node [left] {} (m-2-1)
          (m-4-1) edge node [left] {} (m-3-1)
          (m-2-2) edge node [left] {} (m-1-2)
          (m-3-2) edge node [left] {} (m-2-2)
          (m-4-2) edge node [left] {} (m-3-2)
          (m-1-4) edge [dashed,-] (m-1-2)
          edge [dashed,-] (m-2-2)
          edge [dashed,-] (m-4-2);
      \end{tikzpicture}
    \end{center}

    Such an identification is possible at the cost of increasing the rank of
    the lattice: the rank of $\Lat$ seen over $\KK_{h-1}$ is exactly
    $[\KK_h:\KK_{h-1}]$ times its rank over $\KK_h$. Then we make use of the
    recursive design  over the rank, introduced above, to reduce this problem
    into numerous instances of reduction of rank two lattices over
    $\KK_{h-1}$. Each of them can be seen over $\KK_{h-2}$, inviting us to
    pursue this descent until we get to the bottom of the tower and are now
    reducing lattices over $\ZZ$, that is, Euclidean lattices.

  \item[A generic use of symplectic structures in number fields] A Euclidean
    space is a vector space endowed with a positive definite symmetric
    bilinear form acting on it. Replacing this form by an antisymmetric one
    yields the notion of \emph{symplectic space}. Lattices embedded in
    symplectic spaces have additional symmetries that can be exploited to
    (roughly) halve the cost of the reduction. We prove that we can define a
    recursive symplectic structure over a tower of number fields. As a
    consequence we can halve the running time of the reduction at \emph{each} level
    of the recursion tree, yielding significant asymptotic speedups on the
    overall reduction.

  \item[A (controlled) low precision reduction]
    We use approximations instead of exact computations, which corresponds to
    reducing the projected sublattices with only the most significant bits of
    their basis.  A careful analysis of the precision required to ensure a
    global reduction gains a factor up to $d$ depending on
    the condition number of the initial basis, where $d$ is the rank of the
    lattice we want to reduce. Furthermore, we can show that the
    precision needed will significantly decrease during \emph{some}
    recursive calls, up to a factor of $d$ once again.

  \item[A fast and generic algorithmic for the log-unit lattice] During the
    reduction of an algebraic lattice, we need to balance the size of the
    Archimedean embeddings of elements to avoid a blow-up of the precision
    used. This can be done by carefully multiplying the considered quantities
    by units of the field, yielding a decoding problem in the so-called
    \emph{log-unit lattice} of cyclotomic fields. We generalize the work of
    Cramer, Ducas, Peikert, and Regev~\cite{EC:CDPR16}, which proved two
    different
    results.  The first is that, given a point, we can find a unit nearby with
    prime-power cyclotomics\footnote{This was later extended by
      Wesolowski~\cite{wesolowski2018arithmetic} to all cyclotomics, however the
    running time is still superquadratic.}. The second one is that, given a
    log-unit lattice point plus some large subgaussian noise, we can find the
    lattice point in polynomial time. We prove that these results can be
    achieved within quasilinear running time, and for any cyclotomic field.
\end{description}

\subsection{Results and practical considerations}
We now discuss the practical implication of the techniques above-mentioned.
Using the recursion on the rank with the low precision technique yields a fast
heuristic reduction algorithm for Euclidean lattices. More precisely we prove
that for a Euclidean lattice given by a matrix $M$ of dimension $d$ with
entries in $\ZZ$ of bitsize at most $B$, with \emph{condition number} bounded by
$2^B$, our reduction algorithm finds a lattice vector $v$ such that $\|v\|
\leq 2^{\frac{d}{2}} |\det M|^{1/d}$ (that is the $ 2^\frac{d}{2}$-Hermite
\textsc{SVP}) in time: \[\bigO{
\frac{d^\omega}{(\omega-2)^2} \cdot \frac{B}{\log B} + d^2B\log B},\] where
$\omega$ is the exponent of matrix
multiplication.
We give in~\cref{app:reduction} a reduction from lattice reduction
to modular linear algebra which suggests that this complexity is almost
optimal.
We also show that for the ubiquitous ``knapsack-like'' matrices, we can further
reduce by a factor of $d$ the complexity.

Combining the recursion over the degree of the number fields yields
a reduction algorithm for module lattices over cyclotomic fields.
Over a cyclotomic field of degree  $n$ and sufficiently smooth conductor, we
can reduce a rank two module
represented as a $2\times 2$ matrix $M$ whose
  number of bits in the input coefficients is uniformly bounded by
  $B>n$, in time \[\bigOtilde{n^2B}.\]
   The first column of the reduced matrix
  has its coefficients uniformly bounded by
  $2^{\bigOtilde{n}}\left(\covol M\right)^{\frac{1}{2n}}$.
Using the symplectic technique gives the
fastest heuristic reduction algorithm over cyclotomic fields, achieving the
same approximation factor of $2^{\bigOtilde{n}}$ in time:
\[
  \bigOtilde{ n^{2+ \frac{\log(1/2+1/2q)}{\log q}} B} + n^{\bigO{\log \log n}}
\]
where $q$ is a prime, and the conductor is a power of $q$.

\subsubsection*{A note on the approximation factor.}
It is noticeable that the approximation factor increases quickly with
the height of the tower. If we can perform a reduction over a number field
above $\QQ$ directly, then there is no need to descend to a
$\ZZ$-basis and we can instead stop at this intermediate level. Actually, the
larger the ring is, the more
efficient the whole routine is. It is well-known that it is possible to come
up with a direct reduction algorithm for an algebraic lattice when the
underlying ring of integer is norm-Euclidean, as first mentioned by
Napias~\cite{Napias}. The reduction algorithm over such a ring
$\mathcal{O}_{\KK}$ can be done exactly as for the classical \LLL~algorithm,
by replacing the norm over $\QQ$ by the algebraic norm over $\KK$. Hence a
natural choice would be
$\ZZ[x]/(x^n + 1)$ with $n \leq 8$ as these rings are proved to be
norm-Euclidean. We explain in~\cref{sec:implementation} how we can in fact
deal with the larger ring $\ZZ[x]/(x^{16} + 1)$ even though it is not
norm-Euclidean.  In several applications, it is interesting to decrease the
approximation factor.  Our technique is, at the lowest level of recursion, and
when the number of bits is low, to use a \LLL-type algorithm. Each time the
reduction is finished, we descend the matrix to a lower level where the
approximation factor is lower.

\subsubsection{Practical impact in cryptography}

We test our algorithm on a large instance coming from multilinear map
candidates based on ideal lattices proposed in~\cite{AC:ACLL15} where
$q\approx 2^{6675}$ and $N=2^{16}$.  We solve this instance over the smaller
field $n=2^{11}$ in 13 core-days. If we compare this computation with the
previous large computation with \textsf{fplll}, Albrecht \emph{et al.}\ were able to
compute with $n=2^8$, $q\approx 2^{240}$ in 120 hours. As the complexity of
their code is about $n^4\log(q)^2$ we can estimate our improvement factor to 4
million.

As a byproduct of our reduction we were also able to drastically enhance the
Gentry-Szydlo algorithm~\cite{EC:GenSzy02}.
The key in this algorithm is to quicken the ideal
arithmetic. Instead of the classical $\ZZ$-basis representation, we choose to
represent ideals with a small family of elements over the order of a subfield
of ${\KK}$. Then, one can represent the product of two ideals using the family
of all products of generators.  However, this leads to a blow-up in the size
of the family. A reasonable approach is then to sample a bit more than $[\lL
: \KK]$ random elements in the product so that with overwhelming probability
the ideal generated by these elements is the product ideal itself.  It then
suffices to reduce the corresponding module with the fast reduction process to
go back to a representation with few generators.

An important piece is then the reduction of an ideal itself. Our practical
approach is here to reduce a square matrix of dimension $[\lL : \KK]$, and
every two rounds to add a new random element with a small Gram-Schmidt in the
ideal at the last position. We show in~\cref{subsec:gs} that the overall
complexity is $\bigOtilde{n^3}$, while the previous implementation was in
$\bigO{n^6}$.  The running time of the first practical implementation
published~\cite{EC:BEFGK17} in dimension 256 was 20 hours while we were able
to do it in 30 minutes. Assuming it is proportional to $n^6$ leads to an
estimate of 10 years for $n = 1024$ while we were able to compute it in 103
hours.

\subsection{Related Work}
Recently some independent line of research started to tackle the problem of
reduction of algebraic lattices~\cite{ASIACRYPT:LPSW19,EPRINT:MukSteDav19}.
These works actually provide polynomial time reduction from
$\gamma$-module-\textsc{svp} (or $\gamma$-Hermite-\textsc{svp}) in small rank
to the same problem in arbitrary rank.
However, an implementation would rely on an actual oracle for this problem,
yielding algorithms whose running time would be exponential in the degree of
the field. We emphasize here that while our techniques rely on \emph{many} heuristics,
the resulting algorithms are implemented and enable a fast reduction of high-dimensional
lattices.

The fastest (theoretical) asymptotic variant of the \LLL{} reduction is
the recursive strategy of Neumaier and Stehl\'e~\cite{ISSAC:NeuSte16}, whose
running time is \[d^4B^{1+\littleO{1}}\] for an integer lattice of rank $d$
with coefficients of bitsize smaller than $B$.\\
In all applications of \LLL{} known to the authors, the condition number of a
matrix is barely larger than the matrix entries; however, we underscore
that it can be much larger.
Also, even though both their and our algorithms are not proven to return
an \LLL{}-reduced basis, but a basis starting with a short vector; in practice the basis
returned is in fact \LLL{}-reduced.

We give an example of a round in Neumaier-Stehlé's algorithm:

\hspace{-1.4cm}
\includegraphics[scale=0.45]{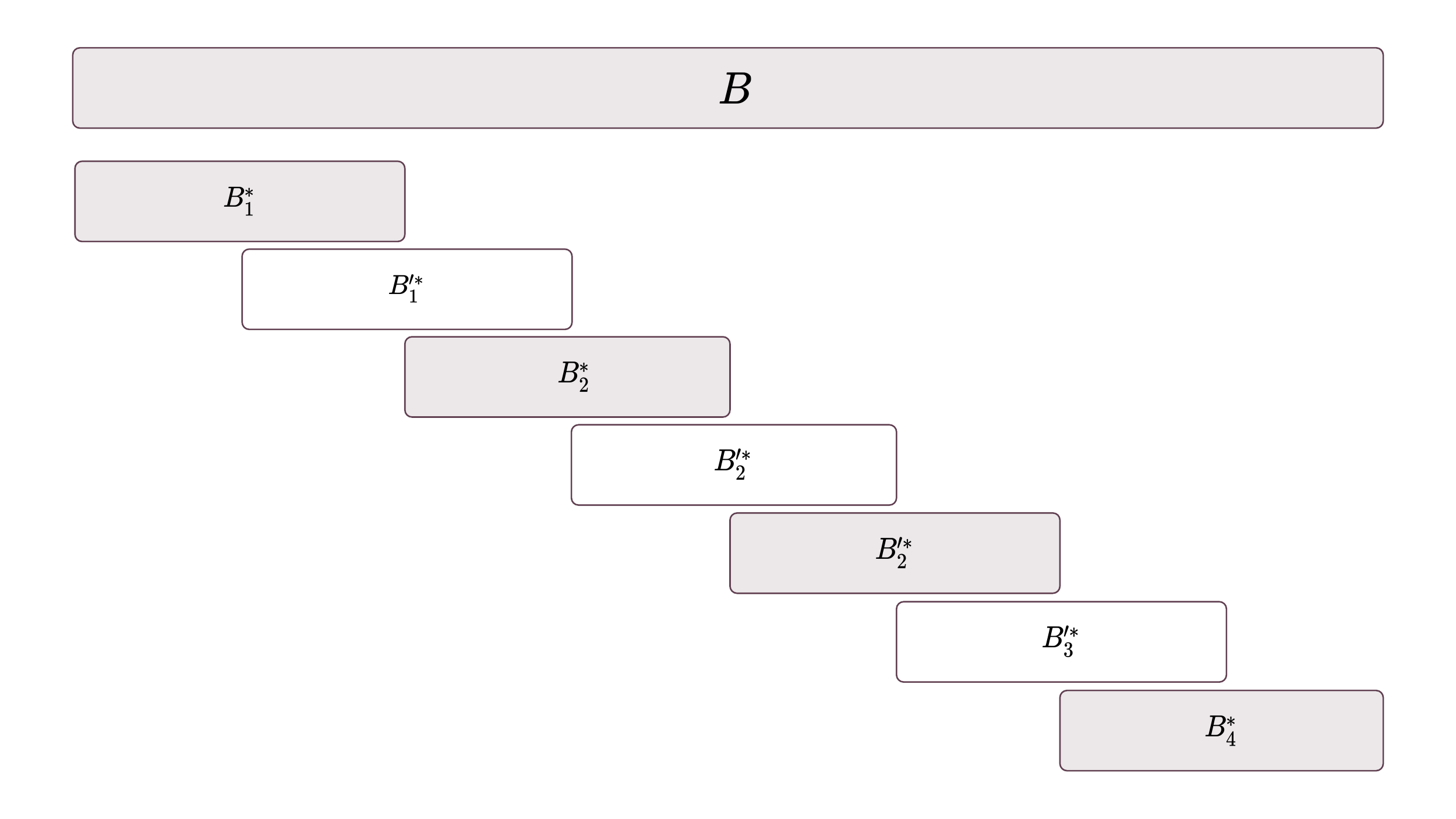}

The main difference is that their round prevent any parallelism.

It is well-known since the work of Schnorr~\cite{Schnorr88} that we can provably use
approximations in the computations to reduce the needed bitsize. However, previous
papers were limited to the ``well-conditioned'' \LLL{}-reduced part of the matrix,
which prevented the use of fast matrix multiplications.
In contrast, we give a framework able to work with approximations, such that the
the number of bits needed is within a small constant factor of the optimal.
This, in turn, enables a reduction in the precision used on the
partially-reduced intermediary bases.

\subsection{Organization of the paper.}
In the next section, we present the mathematical objects we need in the paper
and the \LLL{} algorithm.
In~\cref{sec:fast_LLL_NF} we present the algorithm which reduces rank 2 modules,
whose complexity is analyzed in~\cref{sec:algorithm_complexity}.
In~\cref{sec:fastlll} we show how to efficiently reduce high-rank modules, and
its impact on the reduction of knapsack-like bases.
Then in~\cref{sec:symplectic}, we explain how to use the symplectic structure to
obtain an even faster reduction of rank 2 modules.
We describe tricks for a faster implementation in~\cref{sec:implementation} and
detail applications and compare with a previous implementation
in~\cref{sec:applications}.

The~\cref{app:precision} is dedicated to fast approximate algorithms, as well as
bounding the precision needed.
\Cref{app:unites} explains how to round efficiently with respect to the cyclotomic units.
\Cref{app:sympallnf} indicates ways to obtain a symplectic structure with all
number fields.
Finally, \cref{app:reduction} reduces lattice reduction to modular linear algebra.
 \tableofcontents

\section{Background}
\label{sec:mathematical_background}
We describe the mathematical definitions and lattice reduction
algorithm. For algebraic number theory results, a comprehensive
reference can be found in~\cite{Neukirch}.

\subsection{Notations and conventions}
The bold capitals $\ZZ$, $\QQ$, $\RR$ refer as usual to the
ring of integers and respectively the field of rational and real.
Given a real number $x$, its integral rounding denoted by
$\lfloor x \rceil$ returns its closest.  Its fractional part is the
excess beyond that number's integer part and denoted by $\{x\}$.

These operators are extended to operate on vectors and matrices by point-wise
composition. The complex conjugation of $z\in\CC$ is denoted by the
usual bar $\bar{z}$. The logarithm functions are used as $\log$ for
the binary logarithm and $\ln$ for the natural one.

We say that an integer $n\in\ZZ$ is \textbf{log-smooth} if all the prime
factors of $n$ are bounded by $\log(n)$.

\medskip\noindent
\textbf{Matrix and norms.}
For a field $\KK$, let us denote by $\KK^{d\times d}$ the space of
square matrices of size $d$ over $\KK$, $\Gl_d(\KK)$ its group of
invertibles. Denote classically the elementary matrices by
$T_{i,j}(\lambda)$ and $D_i(\lambda)$ for respectively the transvection
(or shear mapping) and the dilatation of parameter $\lambda$.

We extend the definition of the product for any pair of
matrices $(A,B)$: for every matrix $C$ with compatible size with
$A$ and $B$, we set: $(A,B)\cdot C = (AC,BC)$.

For a vector $v$ (resp. matrix $A$), we denote by $\|v\|_\infty$ (resp.
$\|A\|_{\max}$) its absolute (resp. max) norm, that is the maximum of
the absolute value of its coefficients.

We adopt the following conventions for submatrix extraction: for any
matrix $M = (m_{i,j})\in\KK^{n\times n}$ and $1\leq a<b\leq n, 1\leq
c<d\leq n$, define the extracted submatrix
\[
  M[a:b,c:d] = \left(m_{i,j}\right)_{a\leq i\leq b, c\leq j\leq d},
\]
while $M_i$ refers to the $i$th column of $M$.

\medskip\noindent
\textbf{Computational setting.}
We use the standard model in algorithmic theory, i.e.\ the word-RAM with
unit cost and logarithmic size register (see for instance~\cite[Section
2.2]{mehlhorn2008algorithms} for a comprehensive reference).
The number of bits in the register is $w$.\medskip

For a non-negative integer $d$, we set $\omega(d)$ to be the exponent
of matrix multiplication of $d\times d$ matrices. If the dimension $d$
is clear from context we might omit it and write simply
$\bigO{d^\omega}$ for this complexity. We can assume that this exponent
is not too close to 2, in particular $\omega(d) > 2+1/\log(d)$, so that
complexities with terms in $(\omega-2)^{-1}$ make sense.
Also, we assume that $\omega$ is non-increasing.
Note the conflict with Landau's notations.

\subsection{Background on Algebraic Number Theory}

\medskip\noindent
\textbf{Number fields.}
\label{sec:nf_definition}
A number field $\KK$ is an algebraic extension of $\QQ$ such that: \[\KK
\cong \QQ[X]/(P)=\QQ(\alpha),\]
where $P$ is a monic irreducible polynomial of degree $n$ over $\ZZ$
and $\alpha$ is the image of $X$ in the quotient. For a number field
$\lL$ containing $\KK$ denote by $[\lL:\KK]$ the dimension of $\lL$
seen as a $\KK$-vector space. This integer is called the relative
degree of $\lL$ to $\KK$.
Any element $\gamma$ of $\KK$ has a minimal polynomial, i.e.\ the
unique monic polynomial of least degree among all polynomials of
$\QQ[X]$ vanishing at $\gamma$. An \emph{algebraic integer} has
its minimal polynomial in $\ZZ[X]$.
The set of all integers in $\KK$ forms a ring, called the {\it ring of
integers} or \emph{maximal order} of $\KK$, $\order_\KK$.

Let $\left(\alpha_1,\dotsc,\alpha_n\right) \in \CC^n$ be the
distinct complex roots of $P$. Then, there are $n$ distinct
embeddings, field homomorphisms, of $\KK$ in $\CC$. We
define the $i$-th embedding $\sigma_i:\KK \to
\CC$ as the morphism mapping $\alpha$ to $\alpha_i$.
We distinguish embeddings induced by real roots,
{\it real embeddings} from embeddings coming from
complex roots, {\it complex embeddings}.
Assume that $P$ has $r_1$ real roots and $r_2$ complex roots,
$n=r_1+r_2$.
This leads to the Archimedean \emph{embedding} $\sigma$:
\[
  \begin{array}{cccl}
    \sigma:& \KK & \longrightarrow & \RR^{r_1}\times \CC^{r_2}\\
           &x & \longmapsto & \left(\sigma_1(x),\ldots,\sigma_{r_1}(x),
    \sigma_{r_1+1}(x), \ldots \sigma_{r_1+r_2}(x)\right) .
\end{array}
\]

This embedding can be used to define a Hermitian symmetric bilinear form on
$\KK$, which is positive definite and endows $\KK$ with a natural
Hermitian structure:
\[\inner{a}{b}_\sigma= \sum_{i=1}^{n} \sigma_i(a)\overline{\sigma_i(b)}.\]

\medskip\noindent
\textbf{Modules and Ideals.}
Let fix $R$ be a ring with multiplicative identity $1_R$.
A $R$-module $\module$ consists of an abelian group
$(\module, +)$ and a composition law $\cdot: R\times \module \rightarrow
\module$ which is bilinear and associative.
Suppose $\module$ is a $R$-module and $\mathcal{N}$ is a
subgroup of $\module$. Then $\mathcal{N}$ is a $R$-submodule if, for any
$v$
in $\mathcal{N}$ and any $r$ in $R$, the product $r\cdot v$ is in
$\mathcal{N}$.  A $R$-module $\module$ is  said to be \emph{free} if it
is isomorphic to $R^d$ for some positive integer $d$. Consequently,
there
exists a set of elements $v_1, \ldots, v_d\in \module$ so that
every element in $\module$ can be uniquely written as an $R$-linear
combination of the $v_i$'s. Such a family is called a basis of the
module.

An ideal of $\order_\KK$ is as an $\order_\KK$-submodule of
$\order_\KK$. Every ideal $\ideal{a}$ in number fields are finitely
generated modules that is it can be described by a finite family of
generators i.e.\ expressed as $\alpha_1 \order_\KK + \cdots +\alpha_k
\order_\KK$, for some integer $k$ with the $(\alpha_i)$ belongings to
$\order_\KK$. Since the ring $\order_\KK$ is
Dedekind, any ideal can be generated by two elements.
The product of two ideals $\mathfrak{a}$ and $\mathfrak{b}$
is defined as follows
\[
  \mathfrak{a} \mathfrak{b}:=\left\{a_1v_1+ \dots + a_mv_m \mid a_i \in
    \mathfrak{a} \mbox{ and } v_i \in \mathfrak{b}, i\in\{1,\ldots, n\};
  \mbox{ for } m\in\NN\right\},
\]
i.e., the product is the ideal generated by all products $ab$ with $a
\in\mathfrak{a}$ and $b \in \mathfrak{b}$.

\medskip\noindent
\textbf{Trace and norm in $\KK$.}
Let $\KK\subset\lL$ a number field extension and $n=[\lL:\KK]$.
Let $\sigma^\KK_i:\lL\rightarrow\CC$ the $n$ field embeddings fixing
$\KK$.  For any element $\alpha\in \lL$ define its (relative) algebraic
norm $\norm_{\lL/\KK}(\alpha)$ to be the determinant of the $\KK$-linear
map $x\mapsto x\alpha$.  One can describe this norm
using the $\sigma^\KK_i$ embeddings as: $\norm_{\lL/\KK}(\alpha) =
\prod_{1 \leq i \leq n} \sigma^\KK_i(\alpha),$ showing in particular
that the relative norm is multiplicative.  Similarly define its
(relative) trace $\tr_{\lL/\KK}(\alpha)$ to be the trace of the
$\KK$-linear map $x\mapsto x\alpha$.  This
trace is described using the $\sigma^\KK_i$ embeddings as:
$\tr_{\lL/\KK}(\alpha) = \sum_{1 \leq i \leq n} \sigma^\KK_i(\alpha),$
showing in particular that the relative trace is additive. It is clear
from these definitions that the for any $\alpha\in\lL$, its relative
trace and norm are elements of $\KK$.  Remark that by definition
of the Archimedean structure of $\KK$, we have $\inner{a}{b}_\sigma =
\tr_{\KK/\QQ}\left(a\overline{b}\right)$ for any elements $a,b\in\KK$.
We define the (relative) canonical norm of an element over $\KK$ to be
\[
  \|\alpha\|_{\lL/\KK} = \left(\tr_{\lL/\KK}\left(\alpha
  \overline{\alpha}\right)\right)^\frac{1}{2}.
\]

We easily derive a relation between the algebraic norm of an
integer and its canonical norm,  based on the inequality
of arithmetic and geometric means.

\begin{lemma}[Inequality between relative arithmetic and geometric
  norms]
  \label{lem:inequality}
  Let $\QQ\subset\KK\subset\lL$ a tower of number field.
  For every $\alpha \in \lL$:
  \[
    |\norm_{\lL/\QQ}(\alpha)|\leq
    \left(\frac{\norm_{\KK/\QQ}(\|\alpha\|_{\lL/\KK})}{\sqrt{[\lL:\KK]}}\right)^{[\lL:\KK]}.
  \]
\end{lemma}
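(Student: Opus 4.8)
The plan is to combine transitivity of the norm with the arithmetic–geometric mean inequality, applied one complex place of $\KK$ at a time. Write $d=[\lL:\KK]$ and $e=[\KK:\QQ]$, and let $\rho_1,\dots,\rho_e\colon\KK\to\CC$ be the embeddings of $\KK$. Since $\lL/\KK$ is separable, the $de$ embeddings of $\lL$ into $\CC$ split into $e$ groups of size $d$ according to their restriction to $\KK$; write $\tau_{k,1},\dots,\tau_{k,d}$ for those restricting to $\rho_k$, so that $|\norm_{\lL/\QQ}(\alpha)|=\prod_{k=1}^{e}\prod_{j=1}^{d}|\tau_{k,j}(\alpha)|$. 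The case $\alpha=0$ is trivial, so assume $\alpha\neq0$.

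First I would identify the arithmetic mean that AM–GM will produce. Base-changing the relative trace along $\rho_k$ gives $\rho_k\!\left(\tr_{\lL/\KK}(\gamma)\right)=\sum_{j=1}^{d}\tau_{k,j}(\gamma)$ for every $\gamma\in\lL$; applying this to $\gamma=\alpha\overline\alpha$ and using that complex conjugation commutes with the embeddings (equivalently, the identity $\inner{a}{b}_\sigma=\tr_{\KK/\QQ}(a\overline b)$ recorded in the excerpt, valid for the cyclotomic fields at hand), so that $\tau_{k,j}(\alpha\overline\alpha)=|\tau_{k,j}(\alpha)|^{2}$, yields
\[
  \sum_{j=1}^{d}|\tau_{k,j}(\alpha)|^{2}=\rho_k\!\left(\tr_{\lL/\KK}(\alpha\overline\alpha)\right)=\rho_k\!\left(\|\alpha\|_{\lL/\KK}^{2}\right).
\]
In particular $\tr_{\lL/\KK}(\alpha\overline\alpha)$ is totally positive, so $\|\alpha\|_{\lL/\KK}$ is legitimately its positive square root in $\KK\otimes\RR$ and $\rho_k(\|\alpha\|_{\lL/\KK}^{2})=\rho_k(\|\alpha\|_{\lL/\KK})^{2}$ with $\rho_k(\|\alpha\|_{\lL/\KK})\geq0$.

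Then I would apply AM–GM to the $d$ nonnegative reals $|\tau_{k,j}(\alpha)|^{2}$ for each fixed $k$:
\[
  \prod_{j=1}^{d}|\tau_{k,j}(\alpha)|=\left(\prod_{j=1}^{d}|\tau_{k,j}(\alpha)|^{2}\right)^{1/2}\leq\left(\frac1d\sum_{j=1}^{d}|\tau_{k,j}(\alpha)|^{2}\right)^{d/2}=\frac{\rho_k(\|\alpha\|_{\lL/\KK})^{d}}{d^{d/2}},
\]
and multiply over $k=1,\dots,e$. Using $\norm_{\KK/\QQ}(\beta)=\prod_{k}\rho_k(\beta)$ (the norm $\norm_{\KK/\QQ}$ extended multiplicatively to $\KK\otimes\RR$) this gives
\[
  |\norm_{\lL/\QQ}(\alpha)|=\prod_{k=1}^{e}\prod_{j=1}^{d}|\tau_{k,j}(\alpha)|\leq\frac{\norm_{\KK/\QQ}(\|\alpha\|_{\lL/\KK})^{d}}{d^{de/2}}\leq\left(\frac{\norm_{\KK/\QQ}(\|\alpha\|_{\lL/\KK})}{\sqrt d}\right)^{d},
\]
the last step just discarding the harmless factor $d^{d(e-1)/2}\geq1$. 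This is the stated bound (in fact with a slightly stronger constant).

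There is no deep obstacle here; the only points where I would be careful are bookkeeping. The main one is (i) that complex conjugation interacts correctly with the embeddings, so that $|\tau_{k,j}(\alpha)|^{2}=\tau_{k,j}(\alpha\overline\alpha)$ — this is exactly where the CM (cyclotomic) structure is used. The second is (ii) that $\|\alpha\|_{\lL/\KK}$ must be read as a totally positive element of $\KK\otimes\RR$, not of $\KK$, on which $\norm_{\KK/\QQ}$ extends as a multiplicative map compatible with taking positive square roots; once that is set up, the rest is the base change $\tr_{\lL/\QQ}=\tr_{\KK/\QQ}\circ\tr_{\lL/\KK}$ (i.e.\ the $de$ embeddings of $\lL$ factoring through the $e$ embeddings of $\KK$) together with a single application of AM–GM.
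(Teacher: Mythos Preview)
Your proof is correct and follows exactly the approach the paper indicates (AM--GM applied embedding by embedding); the paper itself only states that the lemma ``is based on the inequality of arithmetic and geometric means'' without writing out the argument, and your write-up is precisely the natural elaboration. Your observation that the argument actually yields the sharper denominator $d^{de/2}$ rather than $d^{d/2}$ is also correct.
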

\subsection{Cyclotomic fields and Modules over $\ZZ[\zeta_f]$}
We denote by $\Phi_f \in\ZZ[X]$ the $f$-th cyclotomic polynomial, that
is the unique monic polynomial whose roots $\zeta_f^k=\exp(2ik\pi/f)$ with
$\gcd(k,f)=1$ are the $f$-th primitive roots of the unity.
The $f$-th cyclotomic polynomial can be written as:
$\Phi_f = \prod_{k\in\ZZ_f^{\times}} (X-\zeta_f^k)$ and the
cyclotomic field $\QQ(\zeta_f)$ is obtained by adjoining a primitive
root $\zeta_f$ to the rational numbers. As such, $\QQ(\zeta_f)$ is
isomorphic to the field $\QQ[X]/(\Phi_f)$. Its degree over $\QQ$ is
$\deg(\Phi_f) = \varphi(f)$, the Euler totient of $f$. In this specific
class of number fields, the ring of integers is precisely
$\ZZ[X]/(\Phi_f) \cong \ZZ[\zeta_f]$ (see~\cite[Proposition
10.2]{Neukirch} ).

\medskip\noindent
\textbf{Canonical Hermitian structure.}
Let $\module$ be a free module of rank $d$
over the cyclotomic ring of integers $\ZZ[\zeta_f]$. It is isomorphic to
$\bigoplus_{i=1}^d \alpha_i \ZZ[\zeta_f]$, for some
linearly independent vectors $\alpha_i\in\QQ(\zeta_f)^d$. The
Hermitian structure of $\QQ(\zeta_f)^d$ naturally lifts to
$\module{}$ as defined to $\inner{\alpha_i}{\alpha_j} = \sum_{t=1}^d
\tr \left(\alpha_i^{(t)}\overline{\alpha_j^{(t)}}\right)$ on the basis
elements and extended by linearity. We denote by $\|\cdot\|$ the
corresponding norm. More generically we also use this notation to denote
the associated induced norm on endomorphisms (or matrices) over this
$\QQ(\zeta_f)^d$.

\medskip\noindent
\textbf{Relative structure of ring of integers in a tower.}
\label{sec:descending}
Let $\KK\subseteq \lL$ be a subfield of $\lL$ of index $n$.
Then $\order_{\KK}$ is a subring of $\order_\lL$, so that
$\order_\lL$ is a module over $\order_{\KK}$. In whole generality, it is
not necessarily free over $\order_{\KK}$, but by the Steinitz theorem it is
isomorphic to $\order_\KK^{n-1}\oplus \ideal{a}$ for a fractional ideal
$\ideal{a}$ of $\KK$ (see for instance~\cite[Theorem 7.23]{Bourbaki85} ).
Nonetheless, in our case, we only consider the case where both
$\KK$ and $\lL$ are \emph{both} cyclotomic fields.
In this precise situation, $\order_\lL$ is a free $\order_\KK$ module of
rank $n$ over $\order_{\lL}$.
Henceforth, the module $\module$ can itself be viewed as a free module over
$\order_\KK$ of rank $dn$. Indeed, consider $(\xi_1, \ldots,
\xi_n)$ a basis of $\order_\KK$ over $\order_{\lL}$ and $(v_1,
\ldots, v_d)$ a basis of $\module$ over $\order_\KK$. For any $1\leq
i\leq d$, each coefficient of the vector $v_i$ decomposes uniquely in
the basis $(\xi_j)$. Grouping the corresponding coefficients
accordingly yields a decomposition \[ v_i = v_i^{(1)}\xi_1 + \cdots +
v_i^{(d)} \xi_n,\] where $v_i^{(j)}\in\order_{\lL}^{dn}$.  The
family $(v_i^{(j)}\xi_j)_{\substack{1\leq i \leq d, \\1\leq j\leq
n}}$ is a basis of $\module$ viewed as $\order_\KK$-module.

\medskip\noindent
\textbf{Unit rounding in cyclotomic fields.}
The group of units of a number field is the group of invertible
elements of its ring of integers. Giving the complete description of the
units of a generic number field is a computationally hard problem in
algorithmic number theory.
It is possible to describe a subgroup of finite index of the unit
group, called the \emph{cyclotomic units}. This subgroup contains all
the units that are products of elements\footnote{One should notice that
$\zeta_f^i-1$ is not a unit for $f$ a prime-power.} of the form $\zeta_f^i - 1$ for any $1\leq
i\leq f$.

As these units are dense, structured and explicit we can use them to
round an element. The following theorem is a fast variant
of~\cite[Theorem 6.3]{EC:CDPR16}, and is fully proved in
\cref{app:unites}.

\begin{theorem}
  \label{thm:unites}
  Let $\KK$ be the cyclotomic field of conductor $f$.
  There is a quasi-linear randomized algorithm that given any element in
  $x\in (\RR \otimes \KK)^\times$ finds a unit $u\in \order_\KK^\times$
  such that for any field embedding $\sigma:\KK \rightarrow\CC$ we have
  \[ \sigma\left({x}{u}^{-1}\right)= 2^{\bigO{\sqrt{f\log
  f}}}\norm_{\KK/\QQ}(x)^{\frac{1}{\varphi(f)}}.\]
\end{theorem}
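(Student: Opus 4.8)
The goal is to round $x \in (\RR \otimes \KK)^\times$ by a cyclotomic unit so that, up to a $2^{\bigO{\sqrt{f\log f}}}$ multiplicative factor, all Archimedean embeddings of $xu^{-1}$ agree with the target value $\norm_{\KK/\QQ}(x)^{1/\varphi(f)}$. Working in the logarithmic embedding, this is a decoding problem in the log-unit lattice: we must find a cyclotomic unit $u$ whose $\Log$-vector lies within $\ell^\infty$-distance $\bigO{\sqrt{f\log f}}$ of $\Log x$ projected onto the trace-zero hyperplane. The main point over \cite[Theorem 6.3]{EC:CDPR16} is to achieve this in quasi-linear time and for \emph{all} conductors $f$, not just prime powers.

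First I would recall the explicit generating set of the cyclotomic units coming from the cyclotomic numbers $\zeta_f^i - 1$ (or, in the prime-power case, the ratios $\frac{\zeta_f^i - 1}{\zeta_f - 1}$ that are genuinely units, as flagged in the footnote), and their images under $\Log$. The classical fact (Washington, \emph{Cyclotomic Fields}) is that these vectors, suitably indexed by $\ZZ_f^\times/\{\pm 1\}$, are obtained by applying a circulant-like (group-algebra) operator to a fixed ``seed'' vector, and that this operator is, up to controlled factors, close to an isometry after projecting out the all-ones direction. Concretely, the $\Log$-embeddings of the $\zeta_f^i - 1$ are a Galois-translation family, so the change-of-basis matrix between the standard coordinates and the cyclotomic-unit coordinates is diagonalized by the characters of $(\ZZ/f\ZZ)^\times$, with eigenvalues expressible via generalized Bernoulli numbers / $L(1,\chi)$ values. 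The decoding step is then: express the projection of $\Log x$ in this character basis, divide by the eigenvalues, and round each coordinate back to $\ZZ$. The rounding error in each character coordinate contributes at most $O(1)$ per coordinate, but transforming back to the embedding coordinates amplifies this by the operator norm of the inverse transform; bounding that norm by $2^{\bigO{\sqrt{f\log f}}}$ — which is where the subexponential-in-$\log$ error term comes from — is the heart of the estimate, and matches the bound in \cite{EC:CDPR16}.

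For the \emph{running time}, the key observation is that all the linear-algebraic operations involved — multiplying by the $\Log$-of-cyclotomic-units matrix and by its pseudo-inverse — are multiplications by elements of the group algebra $\RR[(\ZZ/f\ZZ)^\times]$, hence can be performed by a generalized (multidimensional) FFT over the abelian group $(\ZZ/f\ZZ)^\times$ in time $\bigOtilde{\varphi(f)}$, i.e.\ quasi-linear in $\varphi(f) \le f$. When $f$ is not a prime power one writes $(\ZZ/f\ZZ)^\times$ as a product of its $p$-parts and handles the non-prime-power conductor by a tensor/CRT decomposition of both the field and the unit group; this is precisely the generalization beyond \cite{EC:CDPR16}, and I would reduce the general case to the prime-power building blocks, combining the per-prime roundings. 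The randomization enters in computing $\norm_{\KK/\QQ}(x)$ and in handling the $\pm 1$ and torsion ambiguities (and possibly in a Las Vegas check that the output unit is correct to the claimed precision).

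\textbf{Main obstacle.} The delicate part is the precision/norm bookkeeping: bounding the $\ell^\infty \to \ell^\infty$ operator norm of the inverse of the (projected) $\Log$-cyclotomic-unit map by $2^{\bigO{\sqrt{f\log f}}}$ uniformly in $f$, and then verifying that this decoding still goes through when $f$ is composite and the relevant matrices become tensor products whose inverses could a priori compound their norms multiplicatively over the prime factors of $f$. Controlling that compounding — showing the exponents add up to $\bigO{\sqrt{f \log f}}$ rather than blowing up — together with carrying the FFT-based arithmetic at finite precision without losing the bound, is the technical crux; everything else (the algebraic identities for cyclotomic units, the Galois-circulant structure, the FFT speedup) is fairly mechanical once those estimates are in place. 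Full details are deferred to \cref{app:unites}.
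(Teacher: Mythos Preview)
Your outline captures the group-algebra/FFT structure correctly, but the core analytic mechanism is off in two places.

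First, the $2^{\bigO{\sqrt{f\log f}}}$ bound does \emph{not} come from the operator norm of the inverse of the $\Log$-unit map. In the paper (prime-power case) one divides $\Log x$ by $\Log(\zeta_f-1)$ in the group ring to get $y$, then rounds the coordinates of $y$ in the standard (not character) basis; the error in $\Log$-space is then $\Log(\zeta_f-1)\cdot(y-z)$. Since the entries of $y-z$ are independent and bounded by~$1$, each coordinate of the product is subgaussian with parameter $\|\Log(\zeta_f-1)\|_2=\bigO{\sqrt{f}}$, and a union bound over $\varphi(f)/2$ coordinates supplies the extra $\sqrt{\log f}$. The inverse map's norm (governed by lower bounds on $|L(1,\chi)|$) is in fact polynomially bounded and is used only for the separate BDD theorem, not here. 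Rounding in the character basis, as you suggest, would not even yield an element of $\ZZ[G]$.

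Second, the randomization is not about norm computation or torsion. The crucial issue is that for prime-power $f$ the generators $\zeta_f^\alpha-1$ are not units; only combinations in the \emph{augmentation ideal} $\sum_\alpha g_\alpha=0$ give units. Deterministic coordinatewise rounding of $y$ does not respect this constraint, and forcing one coordinate to absorb the defect could cost $\Omega(n)$. The paper's fix is randomized rounding (each $z_\alpha\in\{\lfloor y_\alpha\rfloor,\lceil y_\alpha\rceil\}$ with mean $y_\alpha$) and then Berry--Esseen: the forced coordinate $z_1=-\sum_{\alpha\neq 1}z_\alpha$ satisfies $|y_1-z_1|\leq\sqrt{n}/\log n$ with probability $\Theta(1/\log n)$, which is small enough not to spoil the subgaussian bound. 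For composite $f$, the paper does not use a CRT/tensor reduction to prime powers; it uses Ku\v{c}era's generating set indexed by products of the $q_i=p_i^{e_i}$, orthogonalizes it in character space via the explicit values $\chi(\Log(\zeta_f^e-1))$ in terms of $L(1,\chi)$, and then runs a Babai-style sequential decoding, applying the randomized-rounding trick only on the pieces where $f/e$ is a prime power. Your tensor-compounding worry is a red herring: the subgaussian parameters from the different generators combine additively as $\sum_{e}\varphi(f/e)=\bigO{f}$, not multiplicatively.
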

\begin{remark}
  Recall that $\frac{f}{\varphi(f)}=\bigO{\log \log f}$, then denoting by
  $n=\varphi(n)$ the dimension of $\KK$, we then shall use the
  bound \[2^{\bigO{\sqrt{n\log n\log \log
  n}}}\norm_{\KK/\QQ}(x)^{\frac{1}{n}},\]
  in the result of \cref{thm:unites}.
\end{remark}
We call \algName{Unit}~the corresponding program.

\subsection{Lattice}
\label{sec:Hermitian_spaces}

\begin{definition}[Lattice]
  A \emph{lattice}  $\Lat$ is a finitely generated free $\ZZ$-module,
  endowed with a Euclidean norm on $\|.\|$ on the rational vector space
  $\Lat\otimes_\ZZ \QQ$.
\end{definition} We may omit to write down the norm to refer to a
lattice $\Lat$ when there is
no ambiguity. By definition of a
finitely-generated free module, there exists a finite family $(v_1,
\ldots, v_d) \in \Lat^d$ such that $\Lat = \bigoplus_{i=1}^d v_i \ZZ$,
called a \emph{basis} of $\Lat$. Every basis has the same number of
elements called the rank of the lattice.

Two different bases of the same lattice $\Lat$ are related by a
unimodular transformation, which is a linear transformation represented
by an element of $\Gl_d(\ZZ)$, set of $d\times d$ integer-valued
matrices of determinant $\pm 1$.  Thus, algorithms acting on lattice
bases can be seen as sequences of unimodular transformations. Among
these procedures, reduction algorithms are of the utmost importance.
They aim at congenial classes of bases,
proving that for any lattice, one can efficiently find
\emph{quasi-orthogonal} bases with controlled norm vectors.
The \emph{volume} of a lattice is defined to be the square root
of the Gram-matrix of any basis, that is:
\[
  \covol \Lat = \sqrt{\det \left(\inner{v_i}{v_j}\right)_{i,j}}
\]

\medskip\noindent
\textbf{Orthogonalization of vectors in Hermitian space.}
\label{sec:gram_schmidt}

Let $\mathcal{S} = (v_1,\dots, v_d)$ a family of linearly independent
vectors of a space $E$.  The orthogonal complement $\mathcal{S}^\bot$ is
the subspace $\{ x\in E~|~ \forall i,~\inner{x}{v_i} = 0\}$. Denote by
$\pi_i$ the orthogonal projection on $(v_{1}, \dotsc, v_{i-1})^\bot$,
with the convention that $\pi_1 = \id$. The Gram-Schmidt
orthogonalization process (\GSO) is an algorithmic method for
orthogonalizing $\mathcal{S}$ while preserving the increasing chain of
subspaces $(\bigoplus_{j=1}^i v_j\RR)_i$. It constructs the orthogonal
set $\mathcal{S}^* = \left(\pi_1(v_1), \ldots, \pi_d(v_d)\right)$. For
notational
simplicity we refer generically to the orthogonalized vectors of such
family by $v_i^*$ for $\pi_i(v_i)$. The computation of
$\mathcal{S}^*$ can be done inductively as follows: for all
$1\leq i\leq d$,
\[
  v_i^* = v_i - \sum_{j=1}^{i-1}
  \frac{\inner{v_i}{v_j^*}}{\inner{v_j^*}{v_j^*}}v_j.
\]
Collect the family $\mathcal{S}$ in a matrix $S$; the Gram-Schmidt transformation
corresponds to the QR decomposition of $S$. Namely we have $S=QR$ for
an orthogonal matrix $Q$ and an upper triangular matrix $R$, where
$R_{i,j} = \frac{\inner{v_i}{v_j^*}}{\|v_j^*\|}$ and $Q = \left[
  \frac{v^*_1}{\|v^*_1\|}, \ldots, \frac{v^*_d}{\|v_d^*\|}\right]$

The volume of the parallelepiped spanned by the vectors of $S$ can be
computed from the Gram-Schmidt vectors $S^*$
as: $\Vol{S} = \prod_{i=1}^d \|v_i^*\|$.

\medskip\noindent
\textbf{Size-reduction of a family of vectors.}
\label{sec:size_reduction}
Let $\Lat$ be a rank $d$ lattice given by a basis $(v_1, \ldots, v_d)$,
we might want to use the Gram-Schmidt process. However since the
quotients
$\frac{\inner{v_i}{v_j^*}}{\inner{v_j^*}{v_j^*}}$ are not
integral in general, the vectors $v_i^*$ may not lie in
$\Lat$. However, we can approximate the result of this process by
taking a rounding to a nearest integer. This process is
called \emph{Size-reduction} and corresponds to the simple iterative
algorithm, where $v_j^*$ refers to the current value of $v_j$:
\[
  \begin{array}{ll}
    &\textrm{\textbf{for~}} i = 2 \textrm{~\textbf{to}~} n \textrm{\textbf{~do}}\\
    &\qquad\textrm{\textbf{for~}} j = i-1 \textrm{~\textbf{to}~} 1 \textrm{\textbf{~do}}\\
    &\qquad\qquad v_i \gets v_i -
    \left\lceil\frac{\inner{v_i}{v_j^*}}{\inner{v_j^*}{v_j^*}}\right\rfloor
    v_j\\
    &\qquad\textrm{\textbf{end}}\\
    &\textrm{\textbf{end}}\\
  \end{array}
\]

\subsection{The LLL~reduction algorithm}
\label{sec:textbook_LLL}
Lenstra, Lenstra, and Lovász~\cite{LLL82} proposed a notion
called \emph{\LLL-reduction} and a polynomial time algorithm
that computes an {\LLL}-reduced basis from an arbitrary basis
of the same lattice. Their reduction notion is formally defined as follows:

\begin{definition}[LLL reduction]
  A basis $\mathcal{B}$ of a lattice is said to be
  $\delta$-\LLL-reduced for  certain parameters $1/4<\delta\leq
  1$, if the following two conditions are satisfied:
  \[
    \forall i <j, \quad
    \left|\langle{{v_j}},{v_i^*}\rangle \right|
    \leq \frac{1}{2}{\|v_i^*\|^2}
    \quad \textrm{(Size-Reduction condition)}
  \]
  \[
    \forall i, \quad \delta \|v_i^*\|^2 \leq
    \|v_{i+1}^*\|^2 +
    \frac{\inner{v_{i+1}}{v_i^*}^2}{\|v_i^*\|^2}
    \quad \textrm{(Lovász condition)}.
  \]
\end{definition}

To find a basis satisfying these conditions, it suffices to
iteratively modify the current basis at any point where one of these
conditions is violated. This yields the simplest version of the
\LLL~algorithm as described in \cref{alg:original_lll}. The method
can be extended to lattices described by a generating family rather than
by a basis~\cite{Pohst87}.

\begin{boxedAlgorithm}[algotitle=Textbook LLL reduction,label=alg:original_lll]
  \begin{algorithm}[H]
    \BlankLine
    \KwIn{Initial basis $B = ({b_1}, \ldots, {b_d})$}
    \KwResult{A $\delta$-\LLL-reduced basis}
    \BlankLine
    $k\gets 1$\;
    \While{$k< d$}{
      Compute the $R$ part of the QR-decomposition of $B$\;
      \For{$j = k-1$ \Downto $1$}{
        $b_k \gets b_k - \left\lceil R_{k,j} \right\rfloor\cdot
        b_j$ \;
        $R_k \gets R_k - \left\lceil R_{k,j} \right\rfloor\cdot
        R_j$
      }
      \uIf{\small{$\delta \|(R_{k,k},0)\|^2\leq \|(R_{k+1,k},
        R_{k+1,k+1})\|^2$}}{
        $k \gets k+1$\;
      }
      \uElse{
        Swap $b_k$ and $b_{k+1}$\;
        $k \gets \max(k-1,1)$\;
      }
    }
    \Return (${b_1}, \ldots, {b_d}$)
  \end{algorithm}
\end{boxedAlgorithm}

\medskip\noindent
\textbf{Decrease of the potential and complexity.}
The algorithm can only terminate when the current lattice basis is
{\LLL}-reduced. Moreover, as shown in~\cite{LLL82}, it terminates in
polynomial time when $\delta < 1$. Indeed, consider the (square of
the) product of the volumes of the flag associated with the basis:
$\prod_{i=1}^d \|v_i^*\|^{2(n-i+1)},$ which is often called its
\emph{potential}. This value decreases by a factor at least
$\delta^{-1}$ in each exchange step and is left unchanged by other
operations. Indeed:
\begin{itemize}
  \item The flag is not modified by any operation other than swaps.
  \item A swap between ${v_k}$ and ${v_{k-1}}$ only changes the sublattice
    spanned by the $k-1$ first vectors. The corresponding volume
    $\prod_{i=1}^{k-1} \|v_i^*\|^{2}$ decreases by a factor at least
    $\delta^{-1}$ and so does the potential.
\end{itemize}
Since the total number of iterations can be bounded by
twice the number of swaps plus the dimension of the lattice, this
suffices to conclude that it is
bounded by $\bigO{d^2B }$ where $B$ is a bound on the size of the
coefficients of the matrix
of the initial basis.  As the cost of a loop iteration is of $\bigO{d^2}$
arithmetic operations on \emph{rational} coefficients of length at most
$\bigO{dB}$, the total cost in term of arithmetic
operations is loosely bounded by $\bigO{d^6 B^3}$.

\medskip\noindent
\textbf{Reduceness of LLL-reduced bases and approximation factor.}
Let $\Lat$ be a rank $d$ lattice and $v_1, \ldots, v_d$ a
$\delta$-\LLL~ reduced basis of $\Lat$. The length of
vectors and orthogonality defect of this basis is related
to the reduction parameter $\delta$:

\begin{proposition}
  \label{prop:norm_bound}
  Let $1/4< \delta< 1$ be an admissible \LLL~parameter.
  Let $(v_1, \ldots, v_d)$ a $\delta$-\LLL~reduced basis of rank-$d$ lattice
  $(\Lat, \inner{\cdot}{\cdot})$. Then for any $1\leq k\leq d$:
  \[
    \Vol{v_1,\ldots,v_k}\leq (\delta-1/4)^{-\frac{(d-k)k}{4}}
    \Vol{\Lat}^{\frac{k}{d}} .
  \]
\end{proposition}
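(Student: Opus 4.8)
The plan is to prove the bound $\Vol{v_1,\ldots,v_k}\leq (\delta-1/4)^{-(d-k)k/4}\Vol{\Lat}^{k/d}$ by combining two ingredients: a lower bound on consecutive Gram-Schmidt norms coming from the Lovász condition, and the fact that the full volume factors as $\prod_{i=1}^d \|v_i^*\|$. First I would observe that the size-reduction condition $|\inner{v_{i+1}}{v_i^*}|\leq \tfrac12\|v_i^*\|^2$ plugged into the Lovász condition yields $\delta\|v_i^*\|^2 \leq \|v_{i+1}^*\|^2 + \tfrac14\|v_i^*\|^2$, hence $\|v_{i+1}^*\|^2 \geq (\delta-\tfrac14)\|v_i^*\|^2$ for every $i$. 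Iterating, $\|v_j^*\|^2 \geq (\delta-\tfrac14)^{j-i}\|v_i^*\|^2$ for all $i\leq j$, so a later Gram-Schmidt norm cannot be much smaller than an earlier one; equivalently $\|v_i^*\| \leq (\delta-\tfrac14)^{-(j-i)/2}\|v_j^*\|$.

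The key step is then to compare $\Vol{v_1,\ldots,v_k}=\prod_{i=1}^k\|v_i^*\|$ with $\Vol{\Lat}^{k/d}=\left(\prod_{i=1}^d\|v_i^*\|\right)^{k/d}$. Write $r_i=\|v_i^*\|$. I would bound each of the $k$ factors $r_1,\dots,r_k$ appearing in $\Vol{v_1,\ldots,v_k}$ against each of the $d$ factors $r_1,\dots,r_d$ of $\Vol\Lat$: for $i\leq k$ and any $j$, the inequality above gives $r_i \leq (\delta-\tfrac14)^{-(j-i)/2} r_j$ when $j\geq i$, while for $j<i$ we trivially have $r_i \leq (\delta-\tfrac14)^{-(j-i)/2} r_j$ as well since the exponent $-(j-i)/2$ is then positive and $(\delta-\tfrac14)<1$, but that direction needs $r_i \le r_j (\delta-1/4)^{-(i-j)/2}$... — actually the clean route is: raise $\Vol{v_1,\ldots,v_k}^d=\prod_{i=1}^k r_i^d$ and $\Vol\Lat^k=\prod_{j=1}^d r_j^k$, and form the ratio $\prod_{i=1}^k\prod_{j=1}^d (r_i/r_j)$. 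Using $r_i/r_j \leq (\delta-\tfrac14)^{-(j-i)/2}$ (valid for all $i,j$ with $i\le k$, since it holds with equality-direction when $j\ge i$ and the bound only weakens the right side when $j<i$ because then we are free to use the weaker estimate $r_i/r_j\le (\delta-1/4)^{(i-j)/2}\le(\delta-1/4)^{-(j-i)/2}$ — here one must check the sign carefully), the product of exponents telescopes to $\sum_{i=1}^k\sum_{j=1}^d (j-i) = k\binom{d+1}{2} - d\binom{k+1}{2}$ over $2$, which simplifies to $\tfrac{dk(d-k)}{2}$. Hence $\left(\Vol{v_1,\ldots,v_k}^d / \Vol\Lat^k\right) \leq (\delta-\tfrac14)^{-dk(d-k)/4}$, and taking $d$-th roots gives exactly the claimed inequality.

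The main obstacle is getting the bookkeeping of the double sum of exponents right and, in particular, justifying the estimate $r_i/r_j \leq (\delta-\tfrac14)^{-(j-i)/2}$ uniformly: for $j\geq i$ it is the direct iterate of the Lovász-plus-size-reduction inequality, but for $j<i$ one should instead argue that the worst case for the ratio $\Vol{v_1,\dots,v_k}/\Vol\Lat^{k/d}$ is attained when the sequence $(r_i)$ decays as slowly as allowed, i.e. $r_{i+1}^2=(\delta-\tfrac14)r_i^2$ exactly, and then simply compute both volumes explicitly for that extremal geometric sequence. That extremal computation is the cleanest presentation: set $r_i^2 = c^{-(i-1)}$ with $c=(\delta-\tfrac14)^{-1}\geq 1$ replaced appropriately, compute $\Vol{v_1,\dots,v_k}=\prod_{i=1}^k r_i = c^{-k(k-1)/4}$ (after normalizing $r_1$), $\Vol\Lat = c^{-d(d-1)/4}$, and check $\Vol{v_1,\dots,v_k}\,\Vol\Lat^{-k/d} = c^{(dk(d-k))/ (4d)} \cdot(\text{const})$... — i.e. verify the exponent $\tfrac{k(d-k)}{4}$ pops out, which is a one-line algebra check. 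I would present the proof via this extremal-sequence reduction rather than the raw double-product, as it sidesteps the sign subtleties entirely while using only \cref{prop:norm_bound}'s hypotheses and the volume-factorization identity $\Vol{v_1,\ldots,v_k}=\prod_{i=1}^k\|v_i^*\|$.
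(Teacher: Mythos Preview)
The paper states this proposition as a standard background fact without proof, so there is nothing on the paper's side to compare against. Your ingredients are the right ones --- the Lov\'asz-plus-size-reduction inequality $\|v_{i+1}^*\|^2\geq(\delta-\tfrac14)\|v_i^*\|^2$ and the factorization $\Vol{v_1,\ldots,v_k}=\prod_{i\leq k}\|v_i^*\|$ --- and your exponent arithmetic is correct, but both routes you sketch have a gap.

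In the double-product route you need $r_i/r_j\leq(\delta-\tfrac14)^{-(j-i)/2}$ for \emph{every} pair $(i,j)$ with $i\leq k$ and $1\leq j\leq d$. The Lov\'asz chain only yields this when $j\geq i$; for $j<i$ it gives the reverse inequality, and your parenthetical attempt to patch that case is wrong (it would amount to claiming $r_i<r_j$ whenever $j<i$, which an LLL-reduced basis need not satisfy). The clean repair is a one-line observation you are missing: in the product $\prod_{i=1}^{k}\prod_{j=1}^{d}(r_i/r_j)$ the sub-product over $1\leq j\leq k$ is identically $1$, since each $r_\ell$ with $\ell\leq k$ occurs $k$ times in the numerator and $k$ times in the denominator. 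Hence only pairs with $j>k\geq i$ survive, for which $j>i$ always holds and the bound applies legitimately; the exponent sum $\sum_{i=1}^{k}\sum_{j=k+1}^{d}(j-i)=dk(d-k)/2$ then gives exactly the stated inequality after taking $d$-th roots.

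Your alternative extremal-sequence route is also incomplete: you assert that the maximum of $\Vol{v_1,\ldots,v_k}/\Vol{\Lat}^{k/d}$ under the constraints $r_{i+1}^2\geq(\delta-\tfrac14)r_i^2$ is attained at the geometric sequence, but you do not prove it (and the phrase ``decays as slowly as allowed'' is backwards --- you want the sequence to drop as \emph{fast} as the constraint permits). One can justify this by passing to $s_i=\log r_i$ and slack variables $t_i=s_{i+1}-s_i-\tfrac12\log(\delta-\tfrac14)\geq 0$, checking that the objective is linear in the $t_i$ with all coefficients nonpositive; but that is no shorter than the cancellation argument above. I would present the double-product proof with the cancellation made explicit.
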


\subsection{$\order_\KK$-lattices}

We now generalize the notion of Euclidean lattice to the higher-degree
context. As a lattice is a finitely
generated free $\ZZ$-module $\Lat$ endowed with a Euclidean structure
on its real ambient space $\Lat\otimes_\ZZ\RR$. To extend this
definition we want to replace the base-ring $\ZZ$ by the ring of integer
$\order_\KK$ of a number field $\KK$. In the present context we will
keep the freeness condition of the module, even if this setting is
slightly too restrictive in general\footnote{Indeed in a tower of
field $\QQ\subseteq \KK\subseteq\lL$, the module $\order_\lL$ seen
over the Dedekind domain $\order_\KK$ is not necessarily free. Hence
using as definition for such a generalized lattice $\Lat$ to be a free
$\order_\lL$-module would forbid $\Lat$ to be a lattice over
$\order_\KK$. Relaxing the freeness into projectiveness is however
sufficient as $\order_\lL$ is always a projective
$\order_\KK$-module.}.

\begin{definition}[$\order_\KK$-lattice]
  \label{def:algebraic_lattice}
  Let $\KK$ be a cyclotomic number field. An $\order_\KK$-lattice---or
  algebraic lattice over $\order_\KK$---is a free
  $\order_\KK$-module $\Lat$ endowed with a $\KK\otimes\RR$-linear
  positive definite self-adjoint\footnote{The definition of such a form
    is done in the usual manner: it is represented by a matrix $A$ such
    that $A=A^*$ for $*$ being the composition of the transposition and
  conjugation operator of $\KK_\RR$.} form on the ambient vector space
  $\Lat\otimes_{\order_\KK} \RR$.
\end{definition}

As for Euclidean lattices without loss of generality we can only look at
the case where the inner product is the one derived from the
polarization of the canonical norm introduced in
\cref{sec:nf_definition}. As for the Euclidean case we now study the
orthogonalization process in such space and devise the equivalent notion
of the volume of an algebraic lattice.

\medskip\noindent
Taking the basis $(m_1, \ldots, m_d)$ of $\module$, one can construct
an orthogonal family $(m_1^*, \ldots, m_d^*)$ such that the
flag of subspaces $(\oplus_{i=1}^k b_i\KK )_{1\leq k\leq d}$ is
preserved. This routine is exactly the same as for Euclidean lattices
and is given in \cref{alg:orthogonalize}, \algName{Orthogonalize}. We
present it here in its matrix form, which generalizes the so-called
$QR$-decomposition.

\begin{boxedAlgorithm}[algotitle={Orthogonalize},
  label=alg:orthogonalize]
  \begin{algorithm}[H]
    \Input{Basis $M\in\order_{\KK_h}^{d\times d}$ of an
    $\order_{\KK_h}-$module $\module$}
    \Output{$R$ part of the QR-decomposition of $M$}
    \BlankLine
    \For{$j=1$ \To $d$}
    {
      $Q_j \gets M_j - \sum_{i=1}^{j-1}
      \frac{\inner{M_j}{Q_i}}{\inner{Q_i}{Q_i}}Q_i$
    }
  \Return $R=\left(\frac{\inner{Q_i}{M_j}}{\|Q_i\|}\right)_{1\leq i<j \leq d}$\;
  \end{algorithm}
\end{boxedAlgorithm}

The volume of $S$ can be computed from the Gram-Schmidt vectors
collected in the matrix $R$ as:
$\Vol{\module} = \norm_{\KK/\QQ}\left(\prod_{i=1}^d R_{i,i}\right)$.

 \section{Reduction of low-rank $\order_\KK$-modules in cyclotomic fields}
\label{sec:fast_LLL_NF}

Let $h$ be a non-negative integer.  In the following of this section we
fix a tower of log-smooth conductor cyclotomic fields $\KK_h^\uparrow =
\left(\QQ = \KK_0 \subset \KK_1 \subset \cdots \subset \KK_h\right)$
and denote by $1 = n_0 < n_1 < \cdots < n_h$ their respective degrees
over $\QQ$. Then we consider a free module $\module$ of rank $d$ over
the upper field $\KK_h$, which is represented by a basis $(m_1,
\ldots, m_d)$ given as the columns of a matrix
$M\in\order_{\KK_h}^{d\times d}$. For notational simplicity, in this section,
we shall denote by $\langle a,b\rangle$ the $\order_\lL$-module
$a\order_\lL\oplus b\order_\lL$.

\subsection{In-depth description of the algorithm}
\subsubsection{Outer iteration}
\label{sec:algorithm_LL_outer_reduction}

To reduce the module $\module$ we adopt an iterative
strategy to progressively modify the basis:
for $\rho$ steps a reduction pass over the current basis is
performed, $\rho$ being a parameter whose value is computed
to optimize the complexity of the whole algorithm while still ensuring the
reduceness of the basis; we defer the precise computation of this
constant to \cref{sec:algorithm_complexity}. As in the \LLL~algorithm a
size-reduction operation is conducted to control the size of the
coefficients of the basis and ensure that the running time of the
reduction is polynomial. Note that for number fields this subroutine
needs to be adapted to deal with units of $\order_{\KK_h}$ when
rounding. The specificities of this size-reduction are the matter
of~\cref{sec:algorithm_LL_size_reduction}.

\subsubsection{Step reduction subroutine}
\label{sec:algorithm_LL_step_reduction}

We now take a look at the step reduction pass, once the size-reduction
has occurred. As observed in~\cref{sec:textbook_LLL}, the textbook
\LLL~algorithm epitomizes a natural idea: make the reduction process
boiling down to the treatment of rank two modules and more precisely
to iteratively reduce \emph{orthogonally projected} rank two modules.
We are using the same paradigm here and this step reduction pass over
the current basis is a sequence of reduction of projected rank 2
$\order_{\KK_h}-$modules.  However on the contrary to the
\LLL~algorithm we do not proceed progressively along the basis, but
reduce $\lfloor d/2\rfloor$ independent rank 2 modules at each
step.
This design enables an efficient parallel implementation which
reduces submodules simultaneously, in the same way that the classical
\LLL{} algorithm can be parallelized~\cite{ISSAC:Villard92,heckler1998complexity}.\medskip

Formally, given the basis of $\module$ collected in the matrix
$M$, let us denote by $r_j$ the vector $\left(R_{j,j},
R_{j+1,j}=0\right)$, and $r'_{j}$ the vector $\left(R_{j+1,j},
R_{j+1,j+1}\right)$ where $R$ is the $R$-part of the QR-decomposition
of $M$. The module $\mathcal{R}_{i}$ encodes exactly the
projection of $\module_i = \big\langle m_{i-1}, m_{i} \big\rangle$
over the orthogonal space to the first $i-1$ vectors $(m_1, \ldots,
m_{i-1})$. In order to recursively call the reduction algorithm on
$\mathcal{R}_i$ we need to \emph{descend} it to the subfield
$\KK_{h-1}$.

\subsubsection{Interlude: descending to cyclotomic subfields}
\label{sec:descent}
Remark now that since $\KK_h$ is a cyclotomic extension of
the cyclotomic field $\KK_{h-1}$, there exists a root of unity
$\xi$ such that
\[
  \order_{\KK_h} = \order_{\KK_{h-1}} \oplus \xi \order_{\KK_{h-1}}
  \oplus \cdots \oplus \xi^{q_h-1} \order_{\KK_{h-1}}.
\]
for $q_h=n_h/n_{h-1}$ being the relative degree of $\KK_h$ over
$\KK_{h-1}$. As a consequence, the module $\mathcal{R}_i$ decomposes over
$\order_{\KK_{h-1}}$ as:
\[
  \begin{aligned}
    \mathcal{R}_i   =~ &r_i\order_{\KK_h} \oplus r_{i+1}'\order_{\KK_h}
    \\
                    =~ &r_i\order_{\KK_{h-1}} \oplus \xi r_i
    \order_{\KK_{h-1}} \oplus \cdots \oplus \xi^{q_h-1} r_i
    \order_{\KK_{h-1}} \oplus \\
    &r'_{i+1}\order_{\KK_{h-1}} \oplus \xi
    r_{i+1}'
    \order_{\KK_{h-1}} \oplus \cdots \oplus \xi^{q_h-1} r'_{i+1}
    \order_{\KK_{h-1}},
  \end{aligned}
\]
yielding a basis of $\mathcal{R}_i$ viewed as a free
$\order_{\KK_{h-1}}$-module of rank $2\times q_h$.
This module can then recursively reduced, this time over a tower of
height $h-1$. This conversion from an $\order_{\KK_h}$-module to an
$\order_{\KK_{h-1}}$ module is referred as the function
\algName{Descend}. Conversely, any vector $u\in \order_{\KK_{h-1}}^{2q_h}$
can be seen with this decomposition as a vector of $\order_{\KK_h}^2$ by
grouping the coefficients as $\left(\sum_{i=1}^{q_h} u[i]\xi^i ,\sum_{i=1}^{q_h}
u[q_h+1+i]\xi^i\right)$. We denote by \algName{Ascend}~this conversion.

\subsubsection{Back on the step reduction}
\label{sec:two_by_two_reduction}

As mentioned in \cref{sec:algorithm_LL_step_reduction},
we start by reducing---with a recursive call after descending---all the
modules $\mathcal{R}_{2i} = \big\langle r_{2i-1}, r'_{2i} \big\rangle$
for $1\leq i \leq \lfloor d/2 \rfloor$, so that each of these
reductions yields a small element of the submodule $\module_{2i} =
\big\langle m_{2i-1}, m_{2i} \big\rangle$; which is then
\emph{completed}\footnote{The precise definition of this completion
and lifting is given in \cref{sec:algorithm_LL_lifting}.} in a
basis of $\module_{2i}$.   But on the contrary of the classical
\LLL~reduction, this sequence of pairwise independent reductions does
not make interact the elements $m_{2i}$ and $m_{2i+1}$, in the sense
that no reduction of the module projected from $\langle m_{2i},
m_{2i+1}\rangle$ is performed.  To do so, we  then perform the same
sequence of pairwise reductions but with all indices shifted by 1: we
reduce the planes $\big\langle r_{2i}, r'_{2i+1} \big\rangle$ for each
$1\leq i \leq \lfloor d/2 \rfloor$, as depicted in the following
diagram:
\medskip

\begin{center}
  \begin{tikzpicture}
    \tikzstyle{every path}=[thin]
    \edef\sizetape{0.7cm}
    \tikzstyle{tape}=[fill=black!5!white, draw, minimum size=\sizetape]
    \tikzstyle{tape1}=[fill=white, draw, minimum size=\sizetape]

    \begin{scope}[start chain=1 going right,node distance=+0.20mm]
      \node [on chain=1,tape] (anchor1) {$\,\,\,m_1\,\,\,$};
      \node [on chain=1,tape1] {$\,\,\,m_2\,\,\,$};
      \node [on chain=1,tape] {$\,\,\,m_3\,\,\,$};
      \node [on chain=1,tape1] {$\,\,\,m_4\,\,\,$};
      \node [on chain=1,tape,draw=none, fill=none] {$\ldots$};
      \node [on chain=1,tape] {$m_{i-1}$};
      \node [on chain=1,tape1] {$\,\,\,m_i\,\,\,$};
      \node [on chain=1,tape] {$m_{i+1}$};
      \node [on chain=1,tape,draw=none, fill=none] {$\ldots$};
      \node [on chain=1,tape1] {$m_{n-1}$};
      \node [on chain=1,tape] {$\,\,\,m_n\,\,\,$};
      \node [on chain=1,tape,draw=none, fill=none] {Basis};

      \node [on chain=1,tape, yshift=-1cm, xshift=-0.55cm] at (anchor1)
        {$\hspace{0.33165cm}\big\langle
        r_1\,,\,r'_{2}\big\rangle\hspace{0.3316cm}$};
      \node [on chain=1,tape1]
        {$\hspace{0.3317cm}\big\langle
        r_3\,,\,r'_{4}\big\rangle\hspace{0.3317cm}$};
      \node [on chain=1,tape,draw=none, fill=none] {$\ldots$};
      \node [on chain=1,tape]
        {$\hspace{1.60mm}\big\langle
        r_{i-1}\,,\,r'_{i}\big\rangle\hspace{1.58mm}$};
      \node [on chain=1,tape,draw=none, fill=none] {$\ldots$};
      \node [on chain=1,tape1]
        {$\hspace{-0.58mm}\big\langle
        r_{n-2}\,,\,r'_{n-1}\big\rangle\hspace{-.59mm}$};
      \node [on chain=1,tape,draw=none, fill=none]
        {\hspace{1.11cm}Odd steps};

      \node [on chain=1,tape, yshift=-2cm, xshift=+0.55cm] at (anchor1)
        {$\hspace{0.33165cm}\big\langle
        r_2\,,\,r'_{3}\big\rangle\hspace{0.3316cm}$};
      \node [on chain=1,tape1] {$\hspace{0.3317cm}\big\langle
        r_4\,,\,r'_{5}\big\rangle\hspace{0.3316cm}$};
      \node [on chain=1,tape,draw=none, fill=none] {$\ldots$};
      \node [on chain=1,tape]
        {$\hspace{1.58mm}\big\langle
        r_{i}\,,\,r'_{i+1}\big\rangle\hspace{1.60mm}$};
      \node [on chain=1,tape,draw=none, fill=none] {$\ldots$};
      \node [on chain=1,tape1]
        {$\hspace{1.38mm}\big\langle
        r_{n-1}\,,\,r'_{n}\big\rangle\hspace{1.39mm}$};
        \node [on chain=1,tape,draw=none, fill=none] {Even steps};
      \end{scope}
    \end{tikzpicture}
  \end{center}

  \subsubsection{Unit-size-reduction for $\order_{\KK_h}$-modules}
  \label{sec:algorithm_LL_size_reduction}

  As mentioned in~\cref{sec:algorithm_LL_outer_reduction} in order to
  adapt the size-reduction process to the module setting, one needs to
  adjust the rounding function. When $\KK_h=\QQ$, the rounding boils
  down to finding the closest element in $\order_\KK=\ZZ$, which is
  encompassed by the round function $\lceil \cdot \rfloor$. In the
  higher-dimensional context, we need to approximate any element of
  $\KK_h$ by \emph{a} close element of $\order_{\KK_h}$.

  Note that finding \emph{the} closest integral element is not
  efficiently doable. The naive approach to this problem consists in
  reducing the problem to the resolution of the closest integer problem
  in the Euclidean lattice of rank $n_h$ given by $\order_{\KK_h}$ under
  the Archimedean embedding. However, up to our knowledge, no
  exponential speedup exists using its particular structure compared to
  sieving or enumeration in this lattice.

  Nonetheless, finding a target \emph{close enough} to the target
  suffices for our application. As such we simply define the rounding of
  an element $\alpha\in\KK_h$ as the integral rounding on each of its
  coefficients when represented in the power base of $\KK_h$.

  We add here an important and necessary modification: before the actual
  size-reduction occurred, we compute a unit $u$
  using~\cref{thm:unites} close to $R_{i,i}$. This routine is denoted
  by \algName{Unit}. The vector $M_i$ is then divided by $u$. While not
  changing the algebraic norms of the elements, this technicality forces
  the Archimedean embeddings of the coefficients to be balanced and
  helps the reduced matrix to be well-conditioned. This avoids
  a blow-up of the precision required during the computation.  This
  modified size-reduction is fully described
  in \cref{alg:size_reduce}, \algName{Size-Reduce}.

  \begin{boxedAlgorithm}[algotitle={Size-Reduce}, label=alg:size_reduce]
    \begin{algorithm}[H]
      \Input{$R$-factor of the QR-decomposition of
      $M\in\order_{\KK_h}^{d\times d}$}
      \Output{A unimodular transformation $U$ representing the
      size-reduced basis obtained from $M$.}
      \BlankLine
      $U \gets \id_{d,d}$\\
      \For{$i=1$ \To d}
      {
        $D \gets D_i(\textrm{\algName{Unit}}(R_{i,i}))$\tcp{$D_i$ is
        a dilation matrix}
        $(U,R)\gets (U,R)\cdot D^{-1}$ \\
        \For{$j=i-1$ \Downto $1$} {
          $\sum_{\ell=0}^{n-1} r_\ell X^\ell \gets R_{i,j}$
          \tcp{\footnotesize Extraction as a polynomial}
      $\mu \gets \sum_{\ell=0}^{n-1} \lfloor r_\ell\rceil X^\ell$
          \tcp{\footnotesize Approximate rounding of $R_{i,j}$ in
          $\order_{\KK_h}$}
          $(U,R) \gets (U,R)\cdot T_{i,j}(-\mu)$ \tcp{$T_{i,j}$ is a
          shear matrix}
        }
      }
      \Return $U$
    \end{algorithm}
  \end{boxedAlgorithm}

  \subsubsection{Reduction of the leaves}
  \label{sec:algorithm_LL_leaves}

  As the recursive calls descend along the tower of number fields,
  the bottom of the recursion tree requires reducing $\order_{\KK_0} (=
  \order_\QQ = \ZZ)$-modules, that is Euclidean lattices.
  As a consequence, the step reduction performs calls to a reduction
  oracle for plane Euclidean lattices. For the sake of efficiency we
  adapt
  Sch\"onhage's algorithm~\cite{Schonhage91} to reduce these
  lattices, which is faster than the traditional Gauss' reduction.
  This algorithm is an extension to the bidimensional case of the
  half-GCD algorithm, in the same way, that Gauss' algorithm can be seen
  as a bidimensional generalization of the classical GCD computation.

  The original algorithm of Sch\"onhage
  only deals with the reduction of binary quadratic forms, but can be
  straightforwardly adapted to reduce rank 2 Euclidean lattices, and to
  return the corresponding unimodular transformation matrix. In all of
  the following, we denote by $\algName{Schonhage}$ this modified
  procedure.

  \subsubsection{The lifting phase}
  \label{sec:algorithm_LL_lifting}

  As explained in~\cref{sec:algorithm_LL_step_reduction}, we
  recursively call the reduction procedure to reduce the descent of
  projected modules of rank $2$ of the form $\mathcal{R}_i = \langle r_i,
  r'_{i+1}\rangle$, over $\KK_{h-1}$, yielding a unimodular
  transformation $U'\in\order_{\KK_{h-1}}^{2q_h\times 2q_h}$ where $q_h$
  is the relative degree of $\KK_h$ over $\KK_{h-1}$.

  \if 0
  Denote by $u\in\order_{\KK_{h}}^{2}$ the vector
  $U'[1]\in\order_{\KK_{h-1}}^{q_h}$, when viewed in $\order_{\KK_h}$,
  that is the result of the action of \algName{Ascend}~on $U'[1]$.

  Then this vector allows to find a small vector $x=u^T(m_i,
  m_{i+1})^T$,
  as being the lift to the plane $\module_i = \langle m_i,
  m_{i+1}\rangle$ of the operations performed to reduce the first vector
  of the projected module $\mathcal{R}_i$.
  \fi

  From $U'$, we can find random short elements in the module by computing a
  small linear combination of the first columns.
  Applying \algName{Ascend}, we deduce some short $x=m_ia+m_{i+1}b$.
  But then to replace $m_i$ by
  $x$ in the current basis, we need to complete this vector into a basis
  $(x, y)$ of $\module_i$ over $\order_{\KK_h}$.  Doing so boils down to
  complete a vector of $\order_{\KK_h}^2$ into a unimodular
  transformation. Indeed, suppose that such a vector $y$ is found and
  denote by $(a, b)$ and $(v, u)$ the respective coordinates of $x$ and
  $y$ in the basis $(m_i, m_{i+1})$. By preservation of the volume we
  have without loss of generality:
  \[
    1 = \det \begin{pmatrix}
      a& v\\
      b& u\\
    \end{pmatrix} = au-bv.
  \]

  Therefore finding the element $y$ to complete $x$ reduces to
  solving the B\'ezout equation in the unknown $u$ and $v$
  \begin{equation}\label{eq:bezout}
    au - bv = 1
  \end{equation}
  over the ring $\order_{\KK_h}$. Since this ring is in general not
  Euclidean we can not apply directly the Euclidean algorithm to solve
  this equation as an instance of the extended \GCD{} problem. However, we
  can use the algebraic structure of the tower $\KK_h^\uparrow$ to
  recursively reduce the problem to the rational integers. This
  \emph{generalized} Euclidean algorithm works as follows:

  \begin{description}
    \item[If $\KK_h = \QQ$] then the problem is an instance of
      extended GCD search, which can be solved efficiently by
      the binary-GCD algorithm.
    \item[If the tower $\KK_h^\uparrow$ is not trivial]
      we make use of the structure of $\KK_h^\uparrow$ and first
      descend the
      problem to the subfield $\KK_{h-1}$
      by computing the relative norm $\norm_{\KK_h/\KK_{h-1}}$ of
      the elements $a$ and $b$; then by recursively calling
      the algorithm on these elements $\norm_{\KK_h/\KK_{h-1}}(a)$
      and $\norm_{\KK_h/\KK_{h-1}}(b)$, we get two algebraic
      integers $\mu$ and $\nu$ of $\order_{\KK_{h-1}}$ fulfilling the
      equation:
      \begin{equation}\label{eq:algorithm_extend_euclide_rec}
        \mu\norm_{\KK_h/\KK_{h-1}}(a)-\nu
        \norm_{\KK_h/\KK_{h-1}}(b) = 1.
      \end{equation}
      But then remark that for any element $\alpha\in\order_{\KK_h}$
      we have, using the comatrix formula and the definition of the
      norm as a determinant that: $\norm_{\KK_h/\KK_{h-1}}(\alpha)\in
      \alpha\order_{\KK_h}$, so that
      $\alpha^{-1}\norm_{\KK_h/\KK_{h-1}}(\alpha)\in \order_{\KK_h}$.
      Then, from~\cref{eq:algorithm_extend_euclide_rec}:
      \[
        a\cdot
        \underbrace{\mu\,a^{-1}\norm_{\KK_h/\KK_{h-1}}(a)}_{:=u\in\order_{\KK_h}}-
        b\cdot \underbrace{
        \nu\,b^{-1}\norm_{\KK_h/\KK_{h-1}}(b)}_{:=v\in\order_{\KK_h}}
        = 1,
      \]
      as desired.
    \item[Reduction of the size of solutions]
      The elements $u,v$ found by the algorithm are not necessarily
      the smallest possible elements satisfying~\cref{eq:bezout}.
      To avoid a blow-up in the size of the coefficients lifted, we do need to control the size of the solution at each
      step. Since the function \algName{Size-Reduce}~preserves the
      determinant by construction and
      reduces the norm of the coefficients, we can use it to reduce the
      bitsize of $u,v$ to (roughly) the bitsize of $a$ and $b$.
  \end{description}

  The translation of this method in pseudocode is given
  in~\cref{alg:generalized_euclide}, \algName{G-Euclide}.

  \begin{boxedAlgorithm}[algotitle={G-Euclide, Lift},
    label=alg:generalized_euclide]
    \begin{algorithm}[H]
      \BlankLine
      \SetKwProg{Fn}{Function}{:}{}
      \Fn{\algName{G-Euclide}}{
        \Input{Tower of number fields $\KK^{\uparrow}_h$,
        $a, b\in\KK_h$.}
        \Output{$u,v\in \KK_h$, such that $au+bv=1$}
        \BlankLine
        \lIf{$\KK_h = \QQ$}{\Return $\algName{ExGcd}(a,b)$}
        $\mu, \nu \gets
        \algName{G-Euclide}\left(\KK_{h-1}^\uparrow,
          \norm_{\KK_h/\KK_{h-1}}(a),
        \norm_{\KK_h/\KK_{h-1}}(b)\right)$\;
        $\mu', \nu' \gets \mu\,a^{-1}\norm_{\KK_h/\KK_{h-1}}(a),
        \nu\,b^{-1}\norm_{\KK_h/\KK_{h-1}}(b) $\;
        $W \gets \begin{pmatrix}
          a & \nu'\\
          b  & \mu'
        \end{pmatrix}$\;
        $V \gets \algName{Size-Reduce} \left(
        \algName{Orthogonalize}(W) \right)$\;
        \Return $W\cdot V[2]$
      }
      \BlankLine
      \BlankLine
      \Fn{\algName{Lift}}{
        \Input{Tower of number fields $\KK^{\uparrow}_h$, unimodular
        matrix $U'\in\order_{\KK_{h-1}}^{2q_h}$}
        \Output{Unimodular matrix $U\in\order_{\KK_{h}}^{2\times 2}$}
        \BlankLine
        $a,b \gets$\algName{Ascend}$(\KK_h, U[1])$\;
        $\mu, \nu \gets \algName{G-Euclide}\left(\KK_{h-1}^\uparrow,
        a,b \right)$\;
        $U \gets \begin{pmatrix}
          a & \nu\\
          b & \mu
        \end{pmatrix}$\;
        \Return $U$
      }
    \end{algorithm}
  \end{boxedAlgorithm}

  The number of bits needed to represent the relative norms does not
  depend on the subfield, and the size-reduction forces the output
  vector to have the same bitsize as the input one. This remark is
  the crux of the quasilinearity of the \algName{G-Euclide}, as
  stated in \cref{lem:Euclid_complexity}.

  Remark that the algorithm needs $\norm_{\KK_h/\QQ}(a)$ to be prime with
  $\norm_{\KK_h/\QQ}(b)$.
  We assume that we can always find quickly such $a,b$ with a short $x$.
  This will lead to~\cref{heur:lifting_size}, and the validity of
  this assumption is discussed in~\cref{sec:lifting_problems}.

  \subsection{Wrapping-up}

  The full outline of the reduction is given in \cref{alg:fast_lll}
  and a schematic overview of the recursive steps is provided in
  the diagram of \cref{fig:scheme}.

  \begin{boxedAlgorithm}[algotitle=Reduce, label=alg:fast_lll]
    \begin{algorithm}[H]
      \Input{Tower of cyclotomic fields $\KK_h^\uparrow$,
        Basis $M\in\order_{\KK_h}^{d\times d}$ of the $\order_{\KK_h}-$module
      $\module$}
      \Output{A unimodular transformation
        $U\in\order_{\KK_h}^{d\times d}$
      representing a reduced basis of $\module$.}
      \BlankLine
      \lIf{$d=2$ \And $\KK_h=\QQ$} { \Return \algName{~Schonhage}$(M)$ }
      \For{$i=1$ \To $\rho$}
      {
        $R \gets $ \algName{Orthogonalize}$(M)$\;
        $U_i \gets $ \algName{Size-Reduce}$(R)$\;
        $(M,R) \gets (M,R)\cdot U_i$ \;
        \For{$j= 1+(i \mod 2)$ \To $d$ \Bystep 2}
        {
	  \If{$\norm_{\KK_h/\QQ}(R_{j,j}) \leq 2^{2(1+\varepsilon)\alpha n_h^2} \norm_{\KK_h/\QQ}(R_{j+1,j+1})$	}
	    {
          $M'\gets $ \algName{Descend}$(\KK_{h-1}^\uparrow, R[j:j+1,
          j:j+1])$ \;
          $U' \gets $ \algName{Reduce}$(\KK_{h-1}^\uparrow, M')$ \;
          $(U_i, M) \gets (U_i, M)\cdot \textrm{\algName{Lift}}(U')$ \;
	    }
        }
      }
      \Return $\prod_{i=1}^\rho U_i$
    \end{algorithm}
  \end{boxedAlgorithm}

  \makeatletter
  \newcommand*{\centerfloat}{
    \parindent \z@
    \leftskip \z@ \@plus 1fil \@minus \textwidth
    \rightskip\leftskip
  \parfillskip \z@skip}
  \makeatother
  \begin{landscape}
  \begin{figure}
    \centerfloat
    \begin{tikzpicture}[baseline= (a).base]
      \node[scale=.7] (a) at (0,0){
          \begin{tikzcd}
            \KK_h &  & \order_{\KK_h} \arrow[ll, hook'] & M =
            m_1\order_{\KK_h}\oplus \cdots\oplus m_d\order_{\KK_h}
            \arrow[rrr, "\textrm{reduce rk 2 projected modules}" description, harpoon, dashed]
            &  &  & \alpha_1\order_{\KK_h}\oplus\alpha_2\order_{\KK_h} {}
            \arrow[lllddd, "\textrm{descend}" description, two heads, dashed,
            hook', bend right] \arrow[rr, "\cong" description, no head] &  &
            \upsilon\order_{\KK_h}\oplus\omega\order_{\KK_h} \\
            &  &  &  &  &  &  & \upsilon\in M  \textrm{ (short)}  \arrow[ru,
            "\textrm{Lift}" description, dashed, bend right] &  \\
                                                             &  &  &  &  &  &  &  &  \\
            \KK_{h-1} \arrow[uuu, hook'] &  & \order_{\KK_{h-1}}
            \arrow[uuu, "r"', hook'] \arrow[ll, hook'] & M' =
            m'_1\order_{\KK_{h-1}}\oplus \cdots\oplus
            m'_{2m_h}\order_{\KK_{h-1}} \arrow[rrr, "\textrm{recurse to rank 2}"
            description, no head, dashed] \arrow[rrrruu] &  &  &
            \alpha_1'\order_{\KK_{h-1}}\oplus\alpha_2'\order_{\KK_{h-1}} \arrow[llld,
            "\textrm{descend}" description, two heads, dotted, hook'] &  &
            \upsilon'\order_{\KK_{h-1}}\oplus\omega'\order_{\KK_{h-1}} \arrow[ll,
            "\cong" description, no head] \\
            \vdots \arrow[u, hook'] &  & \vdots \arrow[u, hook'] & \vdots
            \arrow[rrr, no head, dotted] &  &  & \vdots \arrow[llldd,
            "\textrm{descend}" description, dotted, hook', bend right] &
            \upsilon'\in M'  \textrm{ (short)}  \arrow[ru, "\textrm{Lift}"
            description, dashed, bend right] &  \\
                                             &  &  &  &  &  &  &  &  \\
            \mathbf{Q} \arrow[uu, hook'] &  & \mathbf{Z} \arrow[ll, hook']
            \arrow[uu, hook'] & M_\mathbf{Z} = \beta_1\mathbf{Z}\oplus
            \cdots\oplus\beta_k\mathbf{Z} \arrow[rrr, "\textrm{recurse to rank 2}"
            description, no head, dashed] \arrow[rrrruu, dashed] &  &  &
            \alpha_1\mathbf{Z}\oplus\alpha_2\mathbf{Z} \arrow[d,
            "\textrm{\textsc{schonhage}}"
            description, dashed] &  & \upsilon\mathbf{Z}\oplus\omega\mathbf{Z}
            \textrm{ (reduced)} \arrow[ll, "\cong" description, no head] \\
            &  &  &  &  &  & \upsilon \textrm{ (short)} \arrow[rru,
            "\textrm{complete}" description, dashed, bend right] &  &
        \end{tikzcd}};
    \end{tikzpicture}
    \caption{Schematic view of the recursive call of reductions.}
    \label{fig:scheme}
  \end{figure}
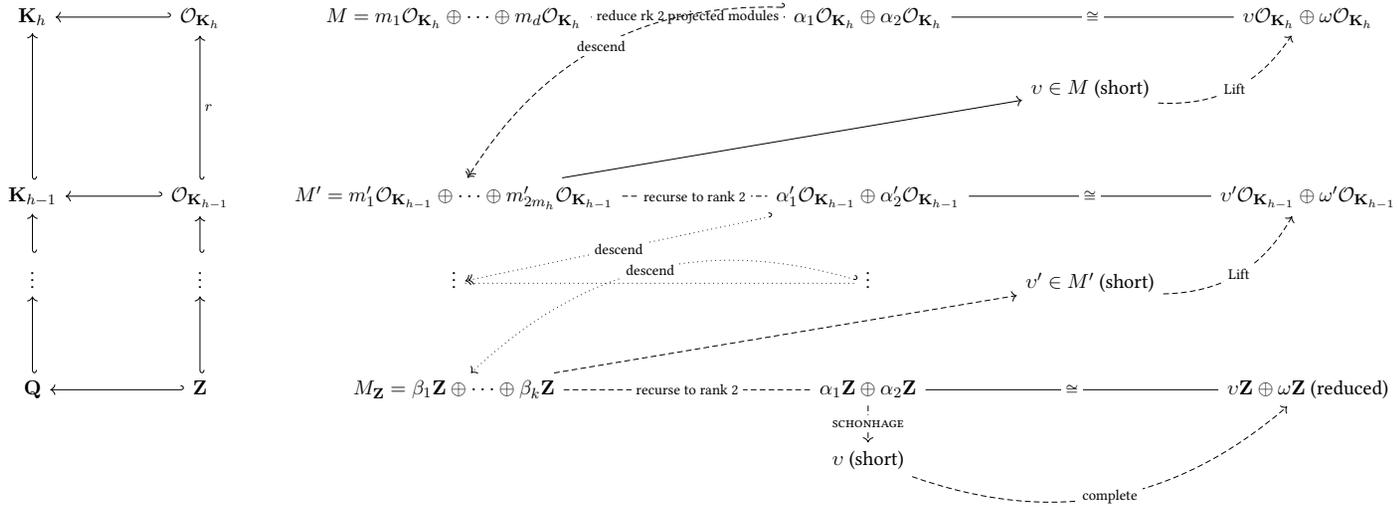
  \end{landscape}

 \section{Complexity analysis}
\label{sec:algorithm_complexity}

In this section, we devise the complexity of the~\cref{alg:fast_lll}
and of its approximation factor. More formally we prove the following
theorem:

\begin{theorem}\label{thm:complexity_fast_LLL}
  Let $f$ be a log-smooth integer. The complexity of
  the algorithm \algName{Reduce}{} on rank two modules over
  $\KK=\QQ[x]/\Phi_f(x)$, represented as a matrix $M$ whose
  number of bits in the input coefficients is uniformly bounded by
  $B>n$, is heuristically a $\bigOtilde{n^2B}$ with
  $n=\varphi(f)$. The first column of the reduced matrix
  has its coefficients uniformly bounded by
  $2^{\bigOtilde{n}}\left(\covol M\right)^{\frac{1}{2n}}$.
\end{theorem}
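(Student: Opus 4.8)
The plan is to analyze Algorithm~\ref{alg:fast_lll} (\algName{Reduce}) restricted to the case $d=2$, unwinding the double recursion — over the height $h$ of the tower and, inside each level, over the rank of the descended module. The skeleton of the argument is a recurrence in $h$: write $C(h,B)$ for the cost of reducing a rank-$2$ module over $\KK_h$ with entries of bitsize $\le B$. One outer pass calls \algName{Orthogonalize}, \algName{Size-Reduce} (which invokes \algName{Unit} via \cref{thm:unites} in quasi-linear time), and then $\lfloor d/2\rfloor = q_h$ recursive calls to \algName{Reduce} on rank-$2$ modules over $\KK_{h-1}$ — obtained by \algName{Descend} on $2\times 2$ projected blocks, which become rank-$2q_h$ over $\KK_{h-1}$, then themselves recurse down to rank $2$. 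The number $\rho$ of outer passes is the parameter to be pinned down: it must be large enough that the potential/volume argument (as in \cref{sec:textbook_LLL} and \cref{prop:norm_bound}, transported to the algebraic setting via \cref{lem:inequality}) guarantees the first column reaches norm $2^{\bigOtilde n}(\covol M)^{1/2n}$, yet small enough that the total work stays $\bigOtilde{n^2B}$. I expect $\rho = \bigOtilde{1}$ (polylogarithmic, or a small constant times $n$ absorbed in the $\tilde O$) once one tracks how much each pass shrinks the relevant Gram--Schmidt ratio.

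\medskip

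\textbf{Key steps, in order.} First, bound the cost of one \emph{leaf} operation: \algName{Schonhage} on a rank-$2$ integer lattice with $B$-bit entries runs in $\bigOtilde B$ (quasi-linear, being a $2$-dimensional half-GCD). Second, bound one outer pass at level $h$ in terms of level $h-1$: \algName{Orthogonalize} and \algName{Size-Reduce} on a $2\times 2$ matrix over $\KK_h$ cost $\bigOtilde{n_h B}$ arithmetic using fast multiplication in $\order_{\KK_h}$ (degree $n_h$, log-smooth conductor so FFT-based multiplication is quasi-linear), plus $q_h$ recursive calls; crucially, by the bitsize-invariance of \algName{G-Euclide} (\cref{lem:Euclid_complexity}, cited) and the size-reduction inside \algName{Lift}, the entries handed to the recursive call still have bitsize $\bigOtilde B$, not $q_h\cdot B$. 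Third, solve the recursion: the descended module at level $h-1$ has rank $2q_h$, and reducing it to rank $2$ costs $\bigOtilde{q_h}$ times the rank-$2$ cost at that level times the number of sub-passes, so $C(h,B) = \rho\bigl(\bigOtilde{n_h B} + \bigOtilde{q_h}\cdot C(h-1,\bigOtilde B)\bigr)$. Telescoping over $h = \bigOtilde 1$ levels (the conductor is log-smooth, so the tower has $\bigOtilde 1$ levels and each $q_h = \bigOtilde 1$), and using $\prod_h n_h$-type bookkeeping that collapses to $n = n_h$, this yields $\bigOtilde{n^2 B}$ — the extra factor $n$ over the "linear" $\bigOtilde{nB}$ coming from the $\bigO{n}$-many projected rank-$2$ blocks that a single reduction spawns across the full descent. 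Fourth, the approximation-factor claim: after $\rho$ passes the volume inequality forces $\|R_{1,1}\| \le 2^{\bigOtilde n}(\covol M)^{1/2n}$ on the diagonal, and then \algName{Size-Reduce} together with the \algName{Unit}-rounding of \cref{thm:unites} (which balances the Archimedean embeddings within a $2^{\bigO{\sqrt{n\log n\log\log n}}}$ factor, hence absorbed into $2^{\bigOtilde n}$) transfers this bound to the actual coefficients of the first column $M_1$.

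\medskip

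\textbf{The main obstacle} is the precision management, which is exactly what makes the bound \emph{heuristic}. One must show that carrying the QR-data and the size-reduction with $\bigOtilde B$ bits of precision — rather than exact rational arithmetic, which would blow the bitsize up by factors of $n$ or $d$ — still yields a genuinely reduced basis. This is where \cref{heur:lifting_size} enters: it is assumed that a short lift $x = m_i a + m_{i+1} b$ with $\gcd(\norm(a),\norm(b))=1$ can always be found quickly, so that \algName{G-Euclide} does not encounter a non-coprime instance and the lifted unimodular matrix has controlled size. Rigorously, one would need to (i) propagate a condition-number bound along the recursion — showing the intermediate bases stay well-conditioned, bounded by $2^{\bigOtilde n}$, so that $\bigOtilde B$ bits suffice — and (ii) verify that the approximate rounding in \algName{Size-Reduce} (coefficient-wise in the power basis, not a true closest-vector computation) loses only a benign factor. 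Both are handled by the precision analysis deferred to \cref{app:precision}; in the proof I would cite those bounds and concentrate on the recursion bookkeeping above, flagging that the heuristic content is confined to the existence of good lifts and the empirically-observed reduceness of the output.
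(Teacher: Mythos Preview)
Your outline captures the recursive architecture correctly, but the complexity bookkeeping has a genuine gap that loses a factor of $n$.

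\textbf{The branching count is off.} You write $C(h,B) = \rho\bigl(\bigOtilde{n_h B} + \bigOtilde{q_h}\cdot C(h-1,\bigOtilde B)\bigr)$ and claim the descended rank-$2q_h$ module at level $h-1$ costs ``$\bigOtilde{q_h}$ times the rank-$2$ cost''. But reducing a rank-$d'$ module via this algorithm requires $\rho' = \bigO{(d')^2\log p}$ rounds, not $\bigOtilde 1$ rounds --- this is \cref{cor:approximation_factor}, proved via the eigenvalue analysis of the step operator $\Delta_o\circ\Delta_e$ in \cref{prop:bound_spectral_omega}. So each level-$i$ reduction on a rank-$2q_{i+1}$ module spawns $\rho'\cdot q_{i+1} = \bigO{q_{i+1}^3\log p}$ recursive calls, not $\bigOtilde{q_{i+1}}$. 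Telescoping the cubes gives $\prod_i q_i^3 = n^3$ leaf calls; if each uses precision $\bigOtilde B$ as you assume, the total is $\bigOtilde{n^3 B}$, not $\bigOtilde{n^2 B}$.

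\textbf{What the paper does instead.} The missing idea is \cref{lem:sum_of_precisions}: the precision passed to each recursive call is not $B$ but only $p_{i,j} \approx \log\bigl(\norm(R_{j,j})/\norm(R_{j+1,j+1})\bigr)/n_h$, and the \emph{sum} of these precisions over all calls at one level is $\bigO{d^2 p}$, not $\bigO{d^3 p\log p}$. The mechanism is a potential argument: each call with precision $p_{i,j}$ forces the potential $\Pi = \sum_j (d-j)\log\norm(R_{j,j})$ to drop by $\Omega(n_h p_{i,j})$ (this is where \cref{heur:lifting_size} and the threshold test in line~7 of \cref{alg:fast_lll} enter), and $\Pi$ starts at $\bigO{n_h d^2 p}$, so $\sum_{i,j} p_{i,j} = \bigO{d^2 p}$. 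Telescoping this $d^2$ factor down the tower gives precision-sum $\bigO{B\cdot\prod_i q_i^2} = \bigO{Bn^2}$ at the leaves, which is exactly the claimed bound. Your statement that ``the entries handed to the recursive call still have bitsize $\bigOtilde B$, not $q_h\cdot B$'' is looking in the wrong direction: the point is not that precision doesn't \emph{grow}, but that it \emph{shrinks} in aggregate.

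A secondary point: the paper also chooses the tower so that $r_i \approx n_{i+1}^{1/5}$, in order to balance the $\bigO{d^5}$ per-level overhead of \cref{prop:complexity_toplevel} against the recursion; with merely ``$q_h = \bigOtilde 1$'' you have not controlled that overhead. And the approximation-factor claim needs the same $\rho = \bigO{d^2\log p}$ bound together with the exponential decay of the profile vector $\mu^{(i)}$, not just a generic volume inequality.
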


\subsection{Setting}
Let $h>0$ be a non-negative integer. In the following of this section we
fix a tower of cyclotomic fields $\KK_h^\uparrow = \left(\QQ
= \KK_0 \subset \KK_1 \subset \cdots \subset \KK_h\right)$ with
log-smooth conductors and denote by
$1 = n_0 < n_1 < \cdots < n_h$ their respective degrees over $\QQ$. We
consider a free module $\module$ of rank $d$ over the upper field
$\KK_h$, given by one of its basis, which is represented
as a matrix $M\in\order_{\KK_h}^{d\times d}$. In all of the following,
for any matrix $A$ with coefficients in $\KK_h$ we denote by $\|A\|$ the
2-norm for matrices.

We aim at studying the behavior of the reduction process given in
\cref{alg:fast_lll} on the module $\module$; as such we denote
generically by $X^{(\tau)}$ the value taken by any variable $X$
appearing in the algorithm at \emph{the beginning} of the step $i=\tau$, for
$1\leq\tau\leq \rho+1$. For instance $R^{(1)}$ denotes the $R$-part
of the orthogonalization of $M$ and $M^{(\rho+1)}$ represents the
reduced basis at the end of the algorithm.

Since the implementation of the algorithm is done using floating-point
arithmetic, we need to set a precision which is sufficient to handle
the internal values during the computation.
To do so we set:
\[
  p=\log\frac{\max_{\sigma:\KK_h\rightarrow\CC, R_{i,i}\in R}
  \sigma (R_{i,i})}{\min_{\sigma:\KK_h\rightarrow\CC, R_{i,i}\in R} \sigma (R_{i,i})},
\]
where the $\sigma$ runs over the possible field embeddings and the $R_{i,i}$
are the diagonal values of the $R$ part of the $QR$-decomposition of the input
matrix of the reduction procedure. We will prove as a byproduct of the
complexity analysis that taking a precision of $\bigO{p}$ suffices.

For technical reasons which will appear in the subsequent
proofs, we introduce a constant $\alpha>0$ which will be optimized at
the end of our analysis. It essentially encodes the approximation factor
of the reduction.
Eventually, we set the variable $\varepsilon$ to be equal to $1/2$. This
apparently odd choice allows us to state our theorems with sufficient
generality to reuse them in the enhanced proof of the reduction
algorithm with symplectic symmetries, as detailed in
\cref{sec:symplectic}, with a different value.

The whole set of notations used in the analysis is recalled in
\cref{table:notations}.
\begin{table}[]
  \small
  \sffamily
  \setlength{\tabcolsep}{6pt}
  \vspace{4pt}
  \centering
  \renewcommand*{\arraystretch}{1.5}
  \begin{tabular}{cc}
    \midrule
    \rowcolor{black!10}
    $h$ & Height of the tower \\
    $n_h$ & Absolute height $[\KK_h:\QQ]$ \\
    \rowcolor{black!10}
    $p$ & bound on the precision used by the reduction\\
    $\varepsilon$ & 1/2 \\
    \rowcolor{black!10}
    $i$ & Current outmost loop number ($1\leq i\leq \rho$)
    iteration \\
    $\alpha$ & Constant to be optimized\\
    \bottomrule
  \end{tabular}
  \vspace{2em}
  \caption{Notations used in the complexity analysis.
    $p$ is of course set to be larger than the bitsize
  of the input matrix.}
  \label{table:notations}
\end{table}

\subsection{Overview of the proof}
Before going into the details of the proof, we lay its blueprint.
We start by estimating the approximation factor of the reduction and
deduce a bound in $\bigO{d^2\log p}$ on the number of rounds $\rho$
required to achieve the reduction the module $\module$, where $p$ is
the precision needed to handle the full computation. We then prove that
the limiting factor for the precision is to be sufficiently large to
represent the shortest Archimedean embedding of the norm of the
Gram-Schmidt orthogonalization of the initial basis. We then
devise a bound by looking at the sum of all the bit sizes used in the
recursive calls and concludes on the complexity.
The critical part of the proof is to use the potential to show that
dividing the degrees by $\frac{d}{2}$ leads to a multiplication by a
factor at most in $\bigO{d^2}$ of the sum of all the precisions in the
recursive calls, instead of the obvious $\bigO{d^3\log p}$.

\subsection{A bound on the number of rounds and the approximation factor
of the reduction}

We define here a set of tools to study the approximation factor
of the reduction, by approximating it by an iterative linear operator
on the family of volumes of the submodules $\module_i = m_1\ZZ\oplus
\cdots\oplus m_i\ZZ$ for $1\leq i \leq d$. This method is quite similar to the
one used by Hanrot \emph{et al.}\ in~\cite{C:HanPujSte11} to analyze the
\BKZ{} algorithm by studying a dynamical system.

To ease the computation of the number of rounds,
we can without loss of generality, scale the input
matrix and suppose that:
\[
\covol\module = |\norm_{\KK_h/\QQ}(\det M)|^{\frac1d}=
2^{-(d+1)(1+\varepsilon)\alpha  n_h^2}.
\]
We only do so for this subsection.

\subsubsection{Potential and volumes of flags}

A \emph{global} measure of reduceness of a Euclidean lattice is its
potential. An $\order_{\KK_h}$-analog of this constant can be defined in
a similar manner by using the algebraic norm to replace the Euclidean
norm over $\RR^n$.

\begin{definition}[Potential]
\label{potential}
  Let $(m_1, \ldots, m_d)$ be a basis of the module $\module$ given as
  the columns of a matrix $M\in\order_{\KK_h}^{d\times d}$, and let
  $R$ be the $R$-part of its QR-decomposition.
  Its \emph{log-potential} is defined as:
  \[ \Pi(M) = \sum_{i=1}^d \log \covol{\module_i} = \sum_{i=1}^d
  (d-i)\log\norm_{\KK_h/\QQ}\left(R_{i,i}\right).\]
\end{definition}

As in the Euclidean case, a \emph{local} tool to analyze the evolution
of a basis $(m_1, \ldots, m_d)$ of a lattice $\Lat$, through a
reduction, is the \emph{profile} of the volumes associated with the flag
of a basis, namely the family:
\[
  \covol \left(\module_1\right), \ldots,
  \covol\left(\module_i\right), \ldots,
  \covol\Lat.
\]
As for the potential, we define the profile of the flag in a similar
way with the algebraic norm on $\KK_h$, but for technical reasons,
we quadratically twist it with the constant $\alpha>0$.

\begin{definition}[Flag profile]
  \label{def:flag_profile}
  Let $(m_1, \ldots, m_d)$ be a basis of the module $\module$ given as
  the columns of a matrix $M\in\order_{\KK_h}^{d\times d}$, and let
  $R$ be the $R$-part of its QR-decomposition.
  Its \emph{profile} is the vector $\profile(M) \in \RR^d$ defined by:
  \[
    \profile(M)_j=\sum_{k = 1}^j
    \left(\log\norm_{\KK_h/\QQ}\left(R_{k,k}\right)+
	2k(1+\varepsilon)\alpha n_h^2\right), \qquad \textrm{for~} 1\leq j \leq
    d.
  \]
\end{definition}

The following lemma gives an estimate of the norm of the
profile in terms of the parameters of the algorithm and
of the input bitsize.

\begin{lemma}
  \label{lem:bound_omega}
  With the same notations as in~\cref{def:flag_profile}, we have:
  \[
    \|\profile(M)\|_2 \leq
  (2+\varepsilon) \alpha d^{2}n_hp
  \]
\end{lemma}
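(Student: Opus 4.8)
The plan is to bound $\|\profile(M)\|_2$ componentwise. First I would unpack the definition: writing $\ell_k := \log\norm_{\KK_h/\QQ}(R_{k,k})$, we have $\profile(M)_j = \sum_{k=1}^j \ell_k + 2(1+\varepsilon)\alpha n_h^2 \sum_{k=1}^j k$. The second sum is the triangular number $\binom{j+1}{2} = \tfrac{j(j+1)}{2} \le \tfrac{(d+1)^2}{2}$, contributing at most $(1+\varepsilon)\alpha n_h^2 (d+1)^2 = \bigO{\alpha d^2 n_h^2}$ to each coordinate; that term alone, summed in the $\ell_2$ sense over the $d$ coordinates, already gives a contribution of order $(1+\varepsilon)\alpha d^{5/2} n_h^2$, so evidently the intended bound must be read with $p$ large enough to absorb this (recall the table remark that $p$ exceeds the input bitsize, hence $p \gtrsim n_h$, and here $d$ is $O(1)$ or at least subsumed — in the target application $d=2$). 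So the cleaner route is to treat $d$ as a fixed small constant and show each coordinate is $\bigO{\alpha n_h^2 p}$ up to the fixed polynomial-in-$d$ factor that matches $(2+\varepsilon)\alpha d^2 n_h p$ after noting $n_h \le p$.

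Next I would control the partial sums $\sum_{k=1}^j \ell_k$. These are exactly $\log\covol\module_j$ by \cref{potential}, and after the normalization imposed in this subsection, $\covol\module = 2^{-(d+1)(1+\varepsilon)\alpha n_h^2}$, i.e. $\sum_{k=1}^d \ell_k = -d(d+1)(1+\varepsilon)\alpha n_h^2$. For the individual partial sums I would use the standard flag inequality: $\covol\module_j \le \prod_{k=1}^j \|r^*_k\|$ trivially, and more usefully a two-sided bound coming from the fact that each $\covol\module_j / \covol\module_{j-1} = \norm_{\KK_h/\QQ}(R_{j,j})$, together with the size-reduced / controlled-precision hypothesis which bounds each $\|R_{j,j}\|$ (hence each $\ell_j$) in absolute value by something like $\bigO{n_h p}$ — this is where the definition of $p$ as the log-ratio of extreme Archimedean embeddings of the $R_{i,i}$ enters, giving $|\ell_j| = |\log\norm_{\KK_h/\QQ}(R_{j,j})| \le n_h \cdot \bigO{p}$ since the norm is a product of $n_h$ embeddings each within a factor $2^{\bigO{p}}$ of the geometric mean, which is itself pinned down by the volume normalization. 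Summing, $|\sum_{k=1}^j \ell_k| \le j \cdot \bigO{n_h p} = \bigO{d\, n_h p}$.

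Combining the two contributions coordinatewise: $|\profile(M)_j| \le \bigO{d\, n_h p} + (1+\varepsilon)\alpha n_h^2 (d+1)^2$. Using $n_h \le p$ and $\alpha \ge$ some fixed constant (it encodes the approximation factor and is chosen $\ge 1$ in the end), the second term dominates and is $\bigO{\alpha d^2 n_h^2} \le \bigO{\alpha d^2 n_h p}$; the first is also $\bigO{\alpha d^2 n_h p}$. Hence each coordinate is $\le (2+\varepsilon)\alpha d^2 n_h p$ (tracking constants carefully: the triangular sum gives coefficient $(1+\varepsilon)$, the $\ell$-sum and the slack give the extra $1$), and then $\|\profile(M)\|_2 \le \sqrt{d} \cdot \max_j |\profile(M)_j|$, with the $\sqrt d$ absorbed into the stated $d^2$ factor. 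The main obstacle I anticipate is pinning down the bound $|\ell_j| = \bigO{n_h p}$ rigorously: one must argue that after the volume normalization of this subsection, the diagonal entries $R_{j,j}$ cannot be wildly unbalanced in log-scale beyond what $p$ measures — this should follow from the definition of $p$ as precisely the spread of the $\sigma(R_{i,i})$, combined with $\sum_j \ell_j$ being fixed, but it requires being careful that "the $R_{i,i}$ in $R$" in the definition of $p$ refers to the relevant decomposition; the rest is bookkeeping with triangular numbers and the constant $\varepsilon = 1/2$.
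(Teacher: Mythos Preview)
Your approach is essentially the same as the paper's: bound each coordinate $\profile(M)_j$ by controlling $\sum_{k\le j}\log\norm_{\KK_h/\QQ}(R_{k,k})$ together with the triangular sum $\sum_{k\le j}2k(1+\varepsilon)\alpha n_h^2$, then pass from $\ell_\infty$ to $\ell_2$ via the factor $\sqrt d$. The one place where you hesitate---establishing $|\ell_k|=O(n_hp)$---is exactly what the paper dispatches in its first line: the volume normalization forces $|\norm_{\KK_h/\QQ}(\det M)|\le 1$, so some $|\sigma(R_{i,i})|\le 1$, and then the very definition of $p$ as the log-spread of the embeddings gives $|\sigma(R_{i,i})|\le 2^p$ for all $i,\sigma$, hence $\log\norm_{\KK_h/\QQ}(R_{k,k})\le n_hp$; no further argument is needed.
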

\begin{proof}
  We have $|\norm_{\KK/\QQ}(\det(M))|\leq 1$ so for each $i$, and each embedding
  $\sigma$, we have that $|\sigma(R_{i,i})|\leq 2^p$.
  Now we compute:
  \[
    \begin{aligned}
      \frac{\|\profile(M)\|_2^2}{d} &\leq \max_{j=1,\ldots, d-1}
      \left\{\sum_{k\leq j}
        \left(\log\norm_{\KK_h/\QQ}\left(R_{k,k}\right)+
      2k(1+\varepsilon)\alpha n_h^2\right) \right\}\\
      &\leq dn_hp + d(d-1)(1+\varepsilon)\alpha n_h^2\\
\end{aligned}
 \]
 which implies the result.
\end{proof}

\subsubsection{A family of step operators}

To study the reduction steps, we define the following linear operators
\begin{equation}
  \delta_j: \left|
  \begin{array}{rcl}
    \RR^d & \longrightarrow & \RR^d \\
    v & \longmapsto & (w_\ell)_\ell =
    \left\{\begin{array}{cl}
        \frac{v_{j-1}+v_{j+1}}{2} & \textrm{if~} \ell = j\\
        v_{j} & \textrm{if~} \ell = j+1\\
        v_\ell & \textrm{else} \\
    \end{array}\right.
  \end{array}
\right.,
 \end{equation}
 for each $1\leq j\leq d$.
 These operators provide an upper bound on the profile of a
 basis after a reduction at index $j$. To encode the behavior
 of a full round of reduction we define the operators:
 \[
   \Delta_o = \prod_{i=1\,|\, i \textrm{~odd}} \delta_i, \quad
   \textrm{and}
   \quad
   \Delta_e = \prod_{i=2\,|\, i \textrm{~even}} \delta_i,
 \]
 to define inductively the sequence:
 \[
   \begin{aligned}
     \profile^{(1)} &= \profile(M^{(1)})\\
     \profile^{(i)} &= \Delta_o \left(\profile^{(i-1)}\right)
     \quad\textrm{~if~} i = 0 \pmod 2
     \quad \textrm{~else~}\quad \Delta_e \left(\profile^{(i-1)}\right)
   \end{aligned}
 \]

 \begin{remark}
   \label{rem:increase_alpha}
   By the constraint we set on the volume of $\module$ to be equal to
   $2^{-d(d+1)(1+\varepsilon)\alpha n_h^2}$, we have for all $1\leq
   i\leq \rho$, that $\profile^{(i)}_d=0$.
 \end{remark}

 \begin{proposition}[Exponential decay of $\|\profile^{(i)}\|_2$]
   \label{prop:bound_spectral_omega}
   For all odd $i$, we have,
   \[
      \left|\profile^{(i)}_1\right|\leq
  \me^{-\frac{\pi^2(i-1)}{2d^2}}\|\profile^{(1)}\|_2
   \]
   and
   \[
      \|\profile^{(i+1)}\|_2\leq
  2\me^{-\frac{\pi^2(i-1)}{2d^2}}\|\profile^{(1)}\|_2.
   \]
 \end{proposition}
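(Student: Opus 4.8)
The statement asserts exponential decay of the profile vector under iterated application of the averaging operators $\Delta_o$ and $\Delta_e$. The natural approach is spectral: view $\Delta_o$ and $\Delta_e$ as (non-symmetric, but close to symmetric) linear operators on $\RR^d$ and bound the norm of a composite round $\Delta_e\Delta_o$ acting on the relevant subspace. First I would record the key structural fact from \cref{rem:increase_alpha}: every iterate $\profile^{(i)}$ has vanishing last coordinate, $\profile^{(i)}_d = 0$. Hence the whole dynamics lives on the hyperplane $H = \{v \in \RR^d : v_d = 0\} \cong \RR^{d-1}$, and on this hyperplane the two operators are conjugate (by index reversal / the shift between odd and even steps) to symmetric positive-semidefinite contractions whose eigenvalues are of the form $\cos(k\pi/d)$-type quantities coming from the path-graph Laplacian. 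The operator $\delta_j$ replaces $v_j$ by $(v_{j-1}+v_{j+1})/2$ and shifts $v_j$ into slot $j+1$; composing the odd ones, then the even ones, gives exactly one sweep of a Jacobi-type smoothing on a path of length $d$, whose spectral radius on the mean-zero (here: last-coordinate-zero) subspace is $\cos(\pi/d) \le \me^{-\pi^2/(2d^2)}$ by the standard estimate $\cos x \le \me^{-x^2/2}$.

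The plan is then: (1) show that $\Delta_o$ and $\Delta_e$ each map $H$ to $H$ and are contractions in the $\ell_2$-norm, with the composite $\Delta_e \Delta_o$ having operator norm on $H$ bounded by $\cos(\pi/d)^2 \le \me^{-\pi^2/d^2}$; (2) bootstrap: after $i-1$ steps (with $i$ odd, so $(i-1)/2$ complete even-odd rounds) we get $\|\profile^{(i)}\|_2 \le \cos(\pi/d)^{i-1}\|\profile^{(1)}\|_2 \le \me^{-\pi^2(i-1)/(2d^2)}\|\profile^{(1)}\|_2$, which immediately gives the bound on $|\profile^{(i)}_1|$ since a single coordinate is dominated by the $\ell_2$-norm; (3) for the second inequality, apply one more half-round $\Delta_e$ to $\profile^{(i)}$; since $\delta_j$ individually has operator norm at most $\sqrt{2}$ (the new $j$-th entry is an average, bounded by $\max(|v_{j-1}|,|v_{j+1}|)$, and the old $v_j$ merely moves), and the even $\delta_j$'s act on disjoint coordinate triples after the first pass — actually they overlap, so more care is needed — one concludes $\|\Delta_e\| \le 2$ as a crude bound, yielding $\|\profile^{(i+1)}\|_2 \le 2\me^{-\pi^2(i-1)/(2d^2)}\|\profile^{(1)}\|_2$.

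The genuinely delicate point is step (1): the operators $\delta_j$ are \emph{not} symmetric (they overwrite coordinate $j+1$ with the old $v_j$), so $\Delta_o$ and $\Delta_e$ are not self-adjoint and one cannot naively invoke the spectral theorem. The right move is to observe that, after a full sweep, $\Delta_e\Delta_o$ is \emph{similar} to a genuinely symmetric averaging operator — or, more robustly, to bound $\|\Delta_e\Delta_o\|_{H\to H}$ directly by exhibiting it as (a submatrix of) the transition matrix of a reversible Markov chain on the path, whose second eigenvalue is exactly the classical $\cos(\pi/d)$. Concretely I would write $\Delta_e\Delta_o$ explicitly as a $d\times d$ (really $(d-1)\times(d-1)$ on $H$) matrix, identify it up to conjugation by a diagonal matrix with the symmetric tridiagonal "averaging on a path" matrix $\tfrac12(I + P)$ where $P$ is the simple-random-walk operator on the path graph with appropriate boundary reflection, and then quote the known eigenvalue $\cos(\pi/d)$ for the spectral gap. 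Once the conjugation is set up, the inequality $\cos(\pi/d) \le \me^{-\pi^2/(2d^2)}$ finishes it. I expect the bulk of the effort — and the only real obstacle — to be making this symmetrization/conjugation argument clean, i.e.\ handling the non-self-adjointness of the raw $\delta_j$'s and getting the boundary behaviour at the index-$d$ coordinate exactly right so that the eigenvalue is genuinely $\cos(\pi/d)$ and not something weaker.
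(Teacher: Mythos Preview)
Your spectral approach is the right idea and matches the paper's in spirit, but you are making the symmetrization step harder than it needs to be. The paper's key observation, which you miss, is that the composite $\Delta_o\circ\Delta_e$ \emph{depends only on the odd coordinates}: the odd-indexed entries of $(\Delta_o\circ\Delta_e)(v)$ are functions only of the odd-indexed entries of $v$. Restricting domain and codomain to these odd coordinates gives an operator $\Delta$ on $\RR^{\lceil(d-1)/2\rceil}$ which is genuinely symmetric, with the explicit discrete-sine eigenvectors
\[
\left(\sin\!\Big(\frac{(2j-1)k\pi}{2\lfloor d/2\rfloor}\Big)\right)_j,\qquad
\text{eigenvalue } \cos^2\!\Big(\frac{k\pi}{2\lfloor d/2\rfloor}\Big).
\]
This immediately gives $\sum_{k\text{ odd}}(\profile^{(i)}_k)^2 \le \cos(\pi/d)^{2(i-1)}\|\profile^{(1)}\|_2^2$, and the first inequality follows since $|\profile^{(i)}_1|$ is one of the odd terms. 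So the ``only real obstacle'' you flag --- conjugating the non-self-adjoint full operator to something symmetric on the hyperplane $\{v_d=0\}$ --- is bypassed entirely: you never work on the full hyperplane, only on the odd sublattice.

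For the second inequality your crude $\|\Delta_e\|\le 2$ bound would work, but the paper's argument is again cleaner: after one more half-step the even coordinates of $\profile^{(i+1)}$ are averages of adjacent odd coordinates of $\profile^{(i)}$, so $\sum_{k\text{ even}}(\profile^{(i+1)}_k)^2 \le \sum_{k\text{ odd}}(\profile^{(i)}_k)^2$, and combining with the bound on the odd part gives the factor~$2$.
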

 \begin{proof}
   Note that $\Delta_o\circ\Delta_e$ depends only on the
   odd coordinates, so let $\Delta$ be its restriction to them in the domain
   and codomain.
   Remark that for all $1\leq k \leq \lceil \frac{d-1}{2} \rceil $ the vector
 \[ \left(\sin\left(\frac{(2j-1)k\pi}{2\lfloor d/2 \rfloor}\right)\right)_j \] is
   an
   eigenvector of $\Delta$ of associated eigenvalue
   $\cos\left(\frac{k\pi}{2\lfloor d/2 \rfloor}\right)^2$. A direct
   computation ensures that the eigenvectors are orthogonal. Since $2\lfloor
   d/2\rfloor \leq d$, we use the trivial
   bound $\left|\cos\left(\frac{k\pi}{d}\right)\right|\leq
   \cos\left(\frac{\pi}{d}\right)$ in addition to
   the convexity bound \[\ln(\cos(\pi/d))<-\frac{\pi^2}{2d^2}\]
   to obtain:
   \[ \sum_{k=1\textrm{~odd}} \left(\profile^{(i)}\right)_k^2\leq
   \me^{-\frac{\pi^2(i-1)}{2d^2}}\|\profile^{(1)}\|_2^2. \]
   This implies the first statement and
   \[ \sum_{k=2\textrm{~even}} \left(\profile^{(i+1)}\right)_k^2\leq
   \sum_{k=1\textrm{~odd}} \left(\profile^{(i)}\right)_k^2 \]
   implies the second.
 \end{proof}
 \begin{remark}[A ``physical'' interpretation of $\Delta$] The operator
   $\Delta$ introduced in the proof of
   \cref{prop:bound_spectral_omega} acts as a \emph{discretized}
   Laplacian operator on the discrete space indexed by $\{1,\ldots,
   d\}$, for a metric where two consecutive integers are at distance
   $1$. Then, the action of $\Delta$ through the iterations $1\leq
   i\leq \rho$ are reminiscent of the diffusion property of the
   solution of the \emph{heat equation} ($\frac{\partial u}{\partial t}=
   \alpha\Delta u$), whose characteristic time is quadratic
   in the diameter of the space.
 \end{remark}

 \subsubsection{A computational heuristic}
 \label{sec:heuristic}
 We now relate the behavior of the sequences of $\profile$ to the
 values taken by $R^{(i)}$. In order to do so, we introduce a
 computational heuristic on the behavior of the \algName{Lift}
 function, asserting that the lifting phase does not blow up
 the size of the reduced vectors.

 \begin{heuristic}[Size of lifting]
   \label{heur:lifting_size}
   For any $1\leq i\leq\rho$ and any $1\leq j \leq d$ where a call to
   \algName{Lift} happened:
   \[
     \norm_{\KK_h/\QQ}\left(R^{(i+1)}_{j,j}\right) \leq
     \min\left(2^{\alpha n_h^2}
       \sqrt{\norm_{\KK_h/\QQ}
       \left(R^{(i)}_{j,j}R^{(i)}_{j+1,j+1}\right)},
     \norm_{\KK_h/\QQ}\left(R^{(i)}_{j,j}\right)\right).
   \]
 \end{heuristic}

 A discussion on the validity of this heuristic is done in
 \cref{sec:lifting_problems}. However, we recall that we {\em do not} perform a
 local reduction if the following condition is fulfilled, up to the
 approximation error due to the representation at finite
 precision\footnote{More precisely, if the precision used when
   performing this testing is $p$, then if we are certain that \[
   \norm_{\KK_h/\QQ}(R_{j,j}^{(i)})\geq \min\left(2^{(1+\varepsilon)\alpha n_h^2}
       \sqrt{\norm_{\KK_h/\QQ}
       \left(R^{(i)}_{j,j}R^{(i)}_{j+1,j+1}\right)},
     \norm_{\KK_h/\QQ}\left(R^{(i)}_{j,j}\right)\right)
,\] no local reduction is called,
   else we have \[
   \norm_{\KK_h/\QQ}(R_{j,j}^{(i)})\geq
       \min\left(2^{(1+\varepsilon)\alpha n_h^2}
       \sqrt{\norm_{\KK_h/\QQ}
       \left(R^{(i)}_{j,j}R^{(i)}_{j+1,j+1}\right)},
     \norm_{\KK_h/\QQ}\left(R^{(i)}_{j,j}\right)\right) (1-2^{-\Omega(p)})
   \] and a recursive local reduction is called, the multiplicative error
 term coming from the approximation error committed by the approximation
 of the values $R_{*,*}$ at precision $p$.}:
 \[
   \norm_{\KK_h/\QQ}(R^{(i)}_{j,j}) \leq
   \min\left(2^{(1+\varepsilon)\alpha n_h^2}
       \sqrt{\norm_{\KK_h/\QQ}
       \left(R^{(i)}_{j,j}R^{(i)}_{j+1,j+1}\right)},
     \norm_{\KK_h/\QQ}\left(R^{(i)}_{j,j}\right)\right).
 \]

 From \cref{heur:lifting_size} we can show by a direct induction on
 $i$ that the sequence of $\profile^{(i)}$ is an over-approximation of
 the flag profile at step $i$. More precisely we have:

 \begin{lemma}
   \label{lem:inequality_omega_kappa}
   Under \cref{heur:lifting_size}, for any $1\leq i\leq\rho$:
   \[
 \profile\left(M^{(i)}\right) \leq  \profile^{(i)},
   \]
   where the comparison on vectors is taken coefficient-wise.
 \end{lemma}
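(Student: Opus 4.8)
I would argue by induction on $i$, exactly as announced. The base case $i=1$ is immediate, since $\profile^{(1)}$ is defined to be $\profile(M^{(1)})$. For the inductive step, fix $i<\rho$ and let $\Delta\in\{\Delta_o,\Delta_e\}$ be the step operator attached to round $i$ by the parity convention, so $\profile^{(i+1)}=\Delta(\profile^{(i)})$. The first observation is that each $\delta_j$ is coordinatewise non-decreasing in its input — its three branches $v\mapsto\tfrac12(v_{j-1}+v_{j+1})$, $v\mapsto v_j$ and $v\mapsto v_\ell$ are all monotone — hence so is any product of them, in particular $\Delta$. It therefore suffices to establish the purely algorithmic estimate $\profile(M^{(i+1)})\le\Delta(\profile(M^{(i)}))$; combining it with monotonicity of $\Delta$ and the induction hypothesis $\profile(M^{(i)})\le\profile^{(i)}$ yields $\profile(M^{(i+1)})\le\Delta(\profile^{(i)})=\profile^{(i+1)}$.

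To get that one-round estimate I would track how a single pass of the outer loop acts on the profile. Since $\profile(M)$ depends on $M$ only through the numbers $\norm_{\KK_h/\QQ}(R_{k,k})$, I only need the operations touching the diagonal of $R$: the transvections $T_{i,j}(-\mu)$ of \algName{Size-Reduce} are upper-triangular column operations and leave that diagonal fixed; the dilations by $\algName{Unit}(R_{i,i})$ rescale a column by a unit, hence multiply $R_{i,i}$ by an element of algebraic norm $\pm1$; and the calls to \algName{Lift} at the admissible indices $j$ replace $(M_j,M_{j+1})$ by another $\order_{\KK_h}$-basis of $\module_j=\langle m_j,m_{j+1}\rangle$. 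Only the last operation moves the profile, and since it leaves $\covol\module_{j-1}$ and $\covol\module_{j+1}$ unchanged it only changes the single coordinate $j$; as the visited indices $j$ in a round are pairwise non-adjacent, these one-coordinate updates do not interact and compose to $\prod_j\delta_j=\Delta$ on the visited coordinates.

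The core inequality to check is then, for each visited $j$, that $\profile(M^{(i+1)})_j\le\delta_j(\profile(M^{(i)}))_j=\tfrac12(\profile(M^{(i)})_{j-1}+\profile(M^{(i)})_{j+1})$. Expanding the three relevant coordinates from \cref{def:flag_profile} and using that \algName{Lift} preserves $\norm_{\KK_h/\QQ}(R_{j,j}R_{j+1,j+1})$, this reduces to a one-line condition on $\norm_{\KK_h/\QQ}(R^{(i+1)}_{j,j})$. If the reduction test of \cref{alg:fast_lll} fires at $j$, that condition follows from \cref{heur:lifting_size} — in fact with slack $\varepsilon\alpha n_h^2$, since the heuristic's exponent $\alpha n_h^2$ is smaller than the $(1+\varepsilon)\alpha n_h^2$ that the identity could absorb. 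If the test does not fire, $\norm_{\KK_h/\QQ}(R_{j,j})$ is unchanged at this coordinate and the failure of the test is \emph{exactly} the needed condition $\norm_{\KK_h/\QQ}(R^{(i)}_{j,j})\le 2^{2(1+\varepsilon)\alpha n_h^2}\norm_{\KK_h/\QQ}(R^{(i)}_{j+1,j+1})$; this is precisely what the exponent $2(1+\varepsilon)\alpha n_h^2$ in the test is calibrated for. Boundary indices are absorbed by the conventions $\profile_0:=0$ and (for $d$ odd) by leaving the last coordinate untouched.

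The main obstacle, I expect, is not any of the above but the index bookkeeping at the coordinates $j+1$: $\delta_j$ additionally overwrites coordinate $j+1$ with the old coordinate $j$, whereas an honest round leaves $\profile(M)_{j+1}$ unchanged. Closing the induction there requires, in addition to the induction hypothesis, that the over-approximation sequence $(\profile^{(i)})_i$ is non-increasing in its index — which holds at $i=1$ by the normalization $\covol\module=2^{-d(d+1)(1+\varepsilon)\alpha n_h^2}$ together with $\profile^{(i)}_d=0$ from \cref{rem:increase_alpha}, and should propagate through the iterations because $\Delta_o$ and $\Delta_e$ send monotone vectors to monotone vectors (a short computation in the sine eigenbasis of $\Delta_o\circ\Delta_e$ used in \cref{prop:bound_spectral_omega}). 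Granting this, $\delta_j(\profile^{(i)})_{j+1}=\profile^{(i)}_j\ge\profile^{(i)}_{j+1}\ge\profile(M^{(i)})_{j+1}=\profile(M^{(i+1)})_{j+1}$, every remaining coordinate is dominated directly by the induction hypothesis, and the step closes.
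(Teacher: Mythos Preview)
Your approach is exactly the paper's: a direct induction on $i$, using coordinate-wise monotonicity of the step operators to reduce to a one-round estimate, and then invoking \cref{heur:lifting_size} (when the local reduction fires) or the threshold test in \cref{alg:fast_lll} (when it does not). Your first three paragraphs carry this out correctly and in more detail than the paper, which merely asserts ``direct induction''.

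The obstacle you raise in the last paragraph comes from reading the displayed definition of $\delta_j$ literally. The branch ``$w_{j+1}=v_j$'' is almost certainly a misprint: with it the $\delta_j$ of fixed parity no longer commute (check that $\delta_3\circ\delta_1$ and $\delta_1\circ\delta_3$ differ at coordinate~$3$), so the products $\Delta_o,\Delta_e$ would be order-dependent and the spectral computation in \cref{prop:bound_spectral_omega} would collapse. The intended operator modifies only coordinate~$j$; with it there is nothing to check at $j{+}1$ and your argument is already complete after paragraph three. For the record, your proposed patch would not close the gap anyway: what you actually need is $\profile^{(i)}_{j+1}\le\profile^{(i)}_j$, i.e.\ coordinate-monotonicity of the vector $\profile^{(i)}$, not monotonicity of the sequence in $i$, and at $i=1$ this is a statement about the input basis that the volume normalisation does not provide---it pins down only $\profile^{(1)}_d=0$, not the signs of the increments $\log\norm_{\KK_h/\QQ}(R_{k,k})+2k(1+\varepsilon)\alpha n_h^2$.
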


 \subsubsection{A bound on the approximation factor and number of rounds}
 We can now conclude this paragraph by giving a quasiquadratic bound on the
 number of rounds:

 \begin{theorem}
   \label{thm:approximation_factor}
   Assuming that $\rho$ is even and $\rho >
	 \frac{2d^2}{\pi^2}\ln((2+\epsilon)\alpha d^2n_hp)$, we have that
   \[ \norm_{\KK_h/\QQ}(R^{(\rho+1)}_{1,1})\leq
     2^{(d-1)(1+\varepsilon)\alpha
   n_h^2+1}|\norm_{\KK_h/\QQ}(\det M)|^\frac1d.\]
 \end{theorem}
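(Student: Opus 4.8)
The statement is essentially a corollary of \cref{lem:inequality_omega_kappa}, \cref{prop:bound_spectral_omega} and \cref{lem:bound_omega}, glued together and then read back through the definition of the profile; the plan is simply to carry out that gluing. We work under the normalization $|\norm_{\KK_h/\QQ}(\det M)|^{1/d}=2^{-(d+1)(1+\varepsilon)\alpha n_h^2}$ fixed at the start of the subsection --- a loss of generality only in appearance, since both sides of the asserted inequality scale by the same power under a rescaling of the ambient Hermitian form --- so that, by \cref{rem:increase_alpha}, $\profile^{(i)}_d=0$ for every $i$, which is the standing hypothesis behind \cref{prop:bound_spectral_omega}.

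First I would bound the first coordinate of the model profile $\profile^{(\rho+1)}$. Because $\rho$ is assumed even, $\rho+1$ is odd, so \cref{prop:bound_spectral_omega} applies at the index $i=\rho+1$ and gives $|\profile^{(\rho+1)}_1|\le \me^{-\pi^2\rho/(2d^2)}\,\|\profile^{(1)}\|_2$; feeding in $\|\profile^{(1)}\|_2\le(2+\varepsilon)\alpha d^2 n_h p$ from \cref{lem:bound_omega} yields $|\profile^{(\rho+1)}_1|\le \me^{-\pi^2\rho/(2d^2)}(2+\varepsilon)\alpha d^2 n_h p$. The hypothesis $\rho>\tfrac{2d^2}{\pi^2}\ln\!\big((2+\varepsilon)\alpha d^2 n_h p\big)$ is precisely equivalent to the right-hand side being $<1$, so $\profile^{(\rho+1)}_1<1$. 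On the other hand \cref{lem:inequality_omega_kappa} --- whose induction carries over to the terminal index $\rho+1$, since \cref{heur:lifting_size} is assumed at every round $1\le i\le\rho$ and the step operators $\Delta_o,\Delta_e$ are monotone for the coordinatewise order --- gives $\profile(M^{(\rho+1)})_1\le\profile^{(\rho+1)}_1<1$.

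It then remains to translate this into a bound on $\norm_{\KK_h/\QQ}(R^{(\rho+1)}_{1,1})$. By \cref{def:flag_profile} taken at $j=1$ one has $\profile(M^{(\rho+1)})_1=\log\norm_{\KK_h/\QQ}(R^{(\rho+1)}_{1,1})+2(1+\varepsilon)\alpha n_h^2$, whence $\log\norm_{\KK_h/\QQ}(R^{(\rho+1)}_{1,1})<1-2(1+\varepsilon)\alpha n_h^2$; rewriting $1-2(1+\varepsilon)\alpha n_h^2=(d-1)(1+\varepsilon)\alpha n_h^2+1+\log|\norm_{\KK_h/\QQ}(\det M)|^{1/d}$ via the normalization and exponentiating produces exactly the claimed inequality, which, being homogeneous, then holds for an arbitrary (non-normalized) input as well. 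I do not expect a genuine obstacle here: all of the analytic content --- the heat-kernel-type decay of the profile and the control of the lifting phase through \cref{heur:lifting_size} --- is already packaged into the cited statements, so the only points needing care are that the normalization is truly without loss of generality, that the parity assumption on $\rho$ is exactly what permits invoking \cref{prop:bound_spectral_omega} at $i=\rho+1$, and that \cref{lem:inequality_omega_kappa} is being applied one index beyond where it is literally stated.
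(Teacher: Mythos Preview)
Your proposal is correct and follows essentially the same route as the paper: bound $\profile^{(\rho+1)}_1$ via \cref{prop:bound_spectral_omega} and \cref{lem:bound_omega}, transfer to $\profile(M^{(\rho+1)})_1$ via \cref{lem:inequality_omega_kappa}, and then unwind \cref{def:flag_profile} at $j=1$ together with the volume normalization. You are in fact slightly more careful than the paper on the bookkeeping (the parity of $\rho$, the homogeneity argument justifying the normalization, and the extension of \cref{lem:inequality_omega_kappa} to the terminal index $\rho+1$).
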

 \begin{proof}
   By taking the exponential of both sides of the inequality
   of~\cref{lem:inequality_omega_kappa}, we have:
   \[
     \norm_{\KK_h/\QQ}(R^{(\rho+1)}_{1,1})
     \leq 2^{\profile^{(\rho+1)}_1-2(1+\varepsilon)\alpha}.
   \]
   Recall that we forced $\left|\norm_{\KK_h/\QQ}(\det
   M)\right|^{\frac1d}=2^{-(d+1)\alpha n_h^2(1+\varepsilon)}$, so that:
   \[
     \norm_{\KK_h/\QQ}(R^{(\rho+1)}_{1,1})
     \leq 2^{(d-1)(1+\varepsilon)\alpha n_h^2+\profile^{(\rho+1)}_1}
     \left|\norm_{\KK_h/\QQ}(\det
     M)\right|^{\frac1d}.
   \]
   By~\cref{prop:bound_spectral_omega}, we know that
   $\profile^{(\rho+1)}_1\leq
   e^{-\frac{\pi^2\rho}{2d^2}}\|\profile^{(1)}\|_2.$
   Since we have:
   \[
     \begin{aligned}
       \ln|\profile^{(\rho+1)}_1|
	 \leq & \ln\|\profile^{(1)}\|_2
       -\frac{\rho\pi^2}{2d^2}\\
	     \leq& \ln((2+\epsilon)\alpha d^2n_hp)
       -\frac{\rho\pi^2}{2d^2} \leq & 0,
     \end{aligned}
   \]
   using~\cref{lem:bound_omega} and the hypothesis on $\rho$
   together with the fact that $d>1$. All in all
   $|\profile^{(\rho)}_1|\leq 1$ and which entails the desired
   inequality.
 \end{proof}

 With mild assumptions on the relative size of the parameters $\alpha,
 n_h, d$ and $p$ we have the following rewriting of
 \cref{thm:approximation_factor}.
 \begin{corollary}
   \label{cor:approximation_factor}
   Suppose that $\alpha = {\log^{\bigO{1}}(n_h)}$ and that
   $p>n_h+d$, then taking
   $\rho = \bigO{d^2\log(p)}$ is sufficient to reduce the module
   $\module$ and such that the algebraic norm of the first vector is
   bounded by a
 \[2^{\bigOtilde{dn_h^2}}|\norm_{\KK_h/\QQ}(\det
   M)|^\frac1d.\]
 \end{corollary}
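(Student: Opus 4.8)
The plan is to read off \cref{cor:approximation_factor} from \cref{thm:approximation_factor}: the only real work is to verify that the proposed value of $\rho$ meets the hypothesis of that theorem and to simplify the resulting exponent using the assumptions $\alpha = \log^{\bigO{1}}(n_h)$ and $p > n_h + d$; the normalization $\covol\module = 2^{-(d+1)(1+\varepsilon)\alpha n_h^2}$ used throughout this subsection is, as already noted there, without loss of generality, and the final bound is visibly scale-invariant, so it transfers to arbitrary $M$.

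\textbf{Verifying the hypothesis on $\rho$.} Since $\alpha = \log^{\bigO{1}}(n_h)$ we have $\ln\alpha = \bigO{\log\log n_h}$, and since $p > n_h + d$ both $\ln n_h$ and $\ln d$ are bounded by $\ln p$. Hence
\[
  \ln\bigl((2+\varepsilon)\alpha d^2 n_h p\bigr) = \bigO{\log p},
\]
so there is an absolute constant $C$ with $\frac{2d^2}{\pi^2}\ln\bigl((2+\varepsilon)\alpha d^2 n_h p\bigr) \le C d^2 \log p$. Choosing $\rho$ to be the least even integer exceeding $C d^2 \log p$ gives $\rho = \bigO{d^2\log p}$ and satisfies the full hypothesis of \cref{thm:approximation_factor} (in particular $\rho$ is even).

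\textbf{Simplifying the bound.} With $\varepsilon = 1/2$, the exponent $(d-1)(1+\varepsilon)\alpha n_h^2 + 1$ equals $\tfrac32(d-1)\alpha n_h^2 + 1$, which under $\alpha = \log^{\bigO{1}}(n_h)$ is $\bigOtilde{d n_h^2}$. Therefore \cref{thm:approximation_factor} gives
\[
  \norm_{\KK_h/\QQ}\bigl(R^{(\rho+1)}_{1,1}\bigr) \le 2^{\bigOtilde{d n_h^2}}\,\bigl|\norm_{\KK_h/\QQ}(\det M)\bigr|^{1/d}.
\]
As $R^{(\rho+1)}_{1,1}$ is the diagonal coefficient of the QR-decomposition attached to the first vector of the output basis $M^{(\rho+1)}$, this is exactly the claimed bound on the algebraic norm of that first vector, i.e., $\rho$ rounds suffice to reduce $\module$ in the sense required by the statement.

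\textbf{Expected obstacle.} There is no conceptual difficulty; the only care required is bookkeeping — absorbing the $\log^{\bigO{1}}$ factor of $\alpha$ and the hidden constants consistently into the $\bigO{\cdot}$ for $\rho$ and into the $\bigOtilde{\cdot}$ in the exponent, and observing that the multiplicative $1 \pm 2^{-\Omega(p)}$ slack in the finite-precision Lov\'asz-type test (the footnote after \cref{heur:lifting_size}) is swallowed by these constants and leaves the asymptotics untouched.
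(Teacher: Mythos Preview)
Your proof is correct and matches the paper's intent: the paper does not give a separate proof of \cref{cor:approximation_factor}, presenting it simply as a ``rewriting of \cref{thm:approximation_factor}'' under the stated mild assumptions, and your argument is precisely the bookkeeping that justifies this rewriting.
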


 \begin{remark}
	 If the caller makes a similar heuristic with a $\alpha'$, then
	 we need $\alpha'>\alpha\cdot 2(1+\epsilon)\frac{d-1}{d}$ and any
	 such value is plausible for large $n_h$.
 \end{remark}

 \subsection{Time complexity of the toplevel reduction}
 Now that we have an estimate of the number of rounds, we can
 aim at bounding the complexity of each round, without counting the
 recursive calls, in a first time. To do so we will look independently
 at each of the part of a round, namely at the complexity of
 \algName{Orthogonalize}, \algName{Reduce} and \algName{Lift}. Since
 the lifting algorithm performs a size-reduction, we first give a
 fine-grained look at the \algName{Size-Reduce} function.

 \subsubsection{Complexity and quality of \algName{Size-Reduce}}

 The quantitative behavior of the \algName{Size-Reduce}{} procedure is
 encoded by the following theorem, given in all generality for
 arbitrary matrices over a cyclotomic field.
 \begin{theorem}
   \label{thm:size_reduce}
   Let $A$ be a matrix of dimension $d$ whose coefficients lie in the
   cyclotomic field $\KK=\QQ[\zeta_f]$, and $n=\varphi(f)$.
   We are given a non-negative integer $p>0$, where
   $\|A\|,\|A^{-1}\|\leq 2^p$ and such that $\sqrt{n\log n\log\log n}+d\log n<p$.
   By calling the algorithm \algName{Orthogonalize}~and
   \algName{Size-Reduce}, we can find in time
   \[\bigO{d^2np\left(1+\frac{d}{\log p}\right)}\]
   an integral triangular matrix $U\in
   \left(\order_\KK^\times\right)^{n\times n}$, such that $\|U\|\leq
   2^{\bigO{p}}$, and a matrix $R+E$, such that $\|E\|\leq 2^{-p}$,
   with $R$ being the $R$-factor of the QR decomposition of $AU$ and
   \[
     \kappa(AU)\leq \left(\frac{\max_i
         \norm_{\KK/\QQ}(R_{i,i})}{\min_i
     \norm_{\KK/\QQ}(R_{i,i})}\right)^{\frac1n}
     2^{\bigO{\sqrt{n\log n\log\log n}+d\log n}},
   \]
   for $\kappa(X)=\|X\|\|X^{-1}\|$ being the condition number of $X$.
 \end{theorem}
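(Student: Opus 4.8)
The plan is to analyze the two subroutines \algName{Orthogonalize} and \algName{Size-Reduce} sequentially, tracking the bitsize of the quantities involved and the effect on the condition number. First I would bound the cost of \algName{Orthogonalize}: computing the $R$-factor amounts to $\bigO{d^2}$ arithmetic operations over $\KK$, each of which (for multiplication of elements with $\bigO{np}$ bits, using FFT-based polynomial arithmetic modulo $\Phi_f$) costs $\bigOtilde{np}$ bit operations, giving a preliminary $\bigOtilde{d^2np}$; the refinement to $\bigO{d^2np(1+d/\log p)}$ comes from the fact that the entries of $R$ grow in bitsize as one moves down the diagonal, and summing the true bitsizes over the triangular pattern of operations (rather than using the worst-case bound $dnp$ for every entry) absorbs the extra factor of $d$ only when $d\gg\log p$ — this is the Neumaier–Stehlé-style bookkeeping. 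The error analysis of the floating-point QR at precision $\bigO{p}$ is standard: since $\|A\|,\|A^{-1}\|\le 2^p$, the accumulated rounding error on $R$ is bounded by $2^{-p}$ provided the working precision exceeds $p$ by a constant times $\log(dn)$, which is swallowed by the hypothesis $\sqrt{n\log n\log\log n}+d\log n<p$; this yields the claimed $R+E$ with $\|E\|\le 2^{-p}$.

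Next I would handle \algName{Size-Reduce} itself. The loop runs over the $\bigO{d^2}$ pairs $(i,j)$; for each one we extract $R_{i,j}$ as a polynomial, round its coefficients in the power basis, and apply a shear $T_{i,j}(-\mu)$, together with a single dilation $D_i(\algName{Unit}(R_{i,i}))$ per column. The arithmetic cost per step is again $\bigOtilde{np}$, and the unit-rounding call costs quasilinear time by \cref{thm:unites}, so the total is of the same order as \algName{Orthogonalize}. Two things must be checked: (i) the size of the output transform $U$, and (ii) the resulting condition number. For (i), the shear parameters $\mu$ are bounded coefficient-wise by (roughly) the size of the off-diagonal $R_{i,j}$ divided by the diagonal, which after size-reduction is $O(1)$ in each coefficient; the dilations contribute units whose Archimedean embeddings are controlled by \cref{thm:unites} to be $2^{\bigO{\sqrt{n\log n\log\log n}}}$ times $\norm_{\KK/\QQ}(R_{i,i})^{1/n}$, so composing $d$ of them and the triangular shear part keeps $\|U\|\le 2^{\bigO{p}}$ since $p$ dominates all these contributions and $U$ is triangular (so no catastrophic cancellation). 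Here I would invoke the \cref{lem:inequality} bound to relate algebraic norms to canonical norms of the units when estimating $\|D_i\|$.

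For the condition-number bound — which I expect to be the main obstacle — the key observation is that after size-reduction and unit-balancing, the matrix $AU$ is \emph{quasi-orthogonal}: its Gram–Schmidt vectors $R_{i,i}$ have, by construction via \algName{Unit}, all their Archimedean embeddings pinned to the single scale $\norm_{\KK/\QQ}(R_{i,i})^{1/n}$ up to the factor $2^{\bigO{\sqrt{n\log n\log\log n}}}$, and the off-diagonal entries are small relative to the diagonal. Writing $AU=QR$, one has $\kappa(AU)=\kappa(R)$ since $Q$ is orthogonal, and for a size-reduced triangular $R$ one bounds $\|R\|$ and $\|R^{-1}\|$ by the ratio of the largest to the smallest diagonal entry times a factor accounting for the $d$ off-diagonal shears (each contributing at most a constant, so $2^{\bigO{d\log n}}$ overall after also paying for the ambient dimension $n$ of the coefficient embedding). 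Combining the per-embedding diagonal ratio — whose $n$-th root is exactly $\left(\max_i\norm_{\KK/\QQ}(R_{i,i})/\min_i\norm_{\KK/\QQ}(R_{i,i})\right)^{1/n}$ — with the unit-balancing slack $2^{\bigO{\sqrt{n\log n\log\log n}}}$ and the triangular slack $2^{\bigO{d\log n}}$ gives precisely the stated bound on $\kappa(AU)$. The delicate point is ensuring that the norm of the \emph{inverse} triangular matrix does not blow up super-polynomially in $d$; this is where one uses that the shears have been applied in the correct order and that size-reduction guarantees each $|R_{i,j}|/|R_{i,i}|$ is bounded, so that a telescoping/Neumann-series estimate on $R^{-1}$ contributes only the claimed $2^{\bigO{d\log n}}$ factor rather than something exponential in $d$.
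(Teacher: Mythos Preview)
Your condition-number argument is essentially the paper's: factor $\kappa(AU)=\kappa(R)\le\kappa(D)\,\kappa(D^{-1}R)$ with $D$ the diagonal of $R$, control $\kappa(D)$ via the unit-rounding of \cref{thm:unites} (contributing the $2^{\bigO{\sqrt{n\log n\log\log n}}}$ and the $(\max/\min\norm_{\KK/\QQ}(R_{i,i}))^{1/n}$), and bound $\kappa(D^{-1}R)$ for a size-reduced unitriangular matrix by $(2n)^{\bigO{d}}$ via the triangular back-substitution estimate you sketch. The paper isolates this last step as a separate lemma, but your ``telescoping/Neumann-series'' description is the same computation. So that half of the theorem is fine.

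The gap is entirely in your complexity analysis. First, orthogonalizing a $d\times d$ matrix is $\bigO{d^3}$ field operations, not $\bigO{d^2}$; with your accounting you would get $\bigOtilde{d^3np}$, which is too large. Second, and more importantly, the factor $1/\log p$ has nothing to do with ``Neumaier--Stehl\'e bookkeeping'' on growing bitsizes. The paper obtains it by passing to the \emph{canonical (Archimedean) basis}: in the embedding $\KK\hookrightarrow\CC^{n/2}$, each entry of $A$ becomes a diagonal $n/2\times n/2$ block, so the $QR$-decomposition of $A$ over $\KK$ splits into $n/2$ independent complex $QR$-decompositions of $d\times d$ matrices. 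Each of those costs $\bigO{d^3p/w+d^2p}$ in the word-RAM model with word size $w$ (Householder with $p$-bit fixed-point arithmetic), and with $w=\Theta(\log p)$ this yields $\bigO{d^3np/\log p+d^2np}$ after paying $\bigO{d^2np}$ for the transforms into and out of the embedding. Without this diagonalization trick you are doing $\bigO{d^3}$ multiplications of $\KK$-elements at $\bigOtilde{np}$ each, and there is no mechanism by which a bitsize-summation argument recovers the missing $\log p$; the entries of $R$ all live at precision $\Theta(p)$ once $\|A\|,\|A^{-1}\|\le 2^p$, so there is no nontrivial gradient to exploit.
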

 \begin{proof}
   See \cref{app:precision}.
 \end{proof}

 \begin{corollary}
   \label{cor:size_reduce}
   Suppose that: \[ \|M^{(0)}\|, \|{M^{(0)}}^{-1}\|\leq 2^p
   \qquad\textrm{and}\qquad d\log
   n_h+\sqrt{n_h\log n_h\log\log n_h}<p.\]
   Then, we have the following bound on the condition number of $M^{(i)}$,
   valid for any loop index $1\leq i \leq \rho$:
   \[
     \kappa\left(M^{(i)}\right) \leq
     2^{2p+\bigO{\sqrt{n_h\log n_h\log\log n_h}+d\log n_h}},
   \]
   and the call of the procedure $\algName{Size-Reduce}$ at this $i$-th
   round has complexity \[\bigO{d^2n_hp\left(1+\frac{d}{\log
   p}\right)}\]
   and requires a $\bigO{p}$ of precision
 \end{corollary}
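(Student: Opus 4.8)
The plan is to argue by induction on the round index $i$, carrying along the inductive invariant that both $\|M^{(i)}\|$ and $\|{M^{(i)}}^{-1}\|$ are at most $2^{\bigO{p}}$. Granting this invariant, the last two assertions of the corollary fall out immediately: at round $i$ one applies \cref{thm:size_reduce} to $A=M^{(i)}$ with precision parameter $p'=\bigO{p}$ --- this is legitimate because the hypothesis $d\log n_h+\sqrt{n_h\log n_h\log\log n_h}<p$ of the corollary forces the hypothesis $\sqrt{n_h\log n_h\log\log n_h}+d\log n_h<p'$ of the theorem --- and the theorem then produces the size-reduction transform $U_i$ in time $\bigO{d^2 n_h p'(1+d/\log p')}=\bigO{d^2 n_h p(1+d/\log p)}$ using $\bigO{p'}=\bigO{p}$ bits of precision. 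So the whole content of the corollary reduces to propagating the conditioning bound, i.e.\ the invariant above together with $\kappa(M^{(i)})\le 2^{2p+\bigO{\sqrt{n_h\log n_h\log\log n_h}+d\log n_h}}$.

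For the conditioning I would first record two elementary facts about the Gram--Schmidt diagonal. One: the algebraic norms $\norm_{\KK_h/\QQ}(R_{k,k})$ are ratios of algebraic norms of determinants of leading principal submatrices, hence are \emph{unchanged} by right-multiplication by a triangular element of $\Gl_d(\order_{\KK_h})$ --- in particular the size-reduction step $M^{(i)}\rightsquigarrow M^{(i)}U_i$ does not move them. Two: for any matrix $A$ over $\KK_h$ and any embedding $\sigma$, the modulus $|\sigma(R_{k,k})|$ of a diagonal Gram--Schmidt entry lies between the least and the greatest singular value of $\sigma(A)$, so that each $\norm_{\KK_h/\QQ}(R_{k,k})$ lies in $[\,\|A^{-1}\|^{-n_h},\ \|A\|^{n_h}\,]$, whence
\[
  \left(\frac{\max_k \norm_{\KK_h/\QQ}(R_{k,k})}{\min_k \norm_{\KK_h/\QQ}(R_{k,k})}\right)^{1/n_h}\ \le\ \kappa(A).
\]
Plugging the latter into the conditioning estimate of \cref{thm:size_reduce} already gives $\kappa(M^{(i)}U_i)\le\kappa(M^{(i)})\cdot 2^{\bigO{\sqrt{n_h\log n_h\log\log n_h}+d\log n_h}}$. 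This is, however, too lossy to iterate over the $\rho=\bigO{d^2\log p}$ rounds, and the real point is that the quantity actually governing \cref{thm:size_reduce} --- the spread of the \emph{diagonal algebraic norms} --- is far more robust than $\kappa$ itself: it is at most $2^{2n_h p}$ on the input $M^{(0)}$ (by fact two and the hypothesis $\|M^{(0)}\|,\|{M^{(0)}}^{-1}\|\le 2^p$), it is untouched by size-reduction (fact one), and under a \algName{Lift} phase it can only \emph{shrink} each $\norm_{\KK_h/\QQ}(R_{j,j})$ by the first inequality of \cref{heur:lifting_size}, the compensating growth of the partner entry $\norm_{\KK_h/\QQ}(R_{j+1,j+1})$ being quantitatively controlled by the flag-profile estimates of \cref{lem:bound_omega,lem:inequality_omega_kappa} (together with the exponential contraction of $\Delta_o,\Delta_e$ in \cref{prop:bound_spectral_omega}). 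I would assemble these into the uniform statement: at every round $i$ the bracketed spread above stays at most $2^{2n_h p}$, so \cref{thm:size_reduce} delivers $\kappa(M^{(i)}U_i)\le 2^{2p+\bigO{\sqrt{n_h\log n_h\log\log n_h}+d\log n_h}}$.

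It then remains to pass from $M^{(i)}U_i$ to $M^{(i+1)}$, which differ by the $2\times 2$ \algName{Lift} transforms. Here I would use that \algName{G-Euclide} performs a final \algName{Size-Reduce}, so each lift matrix --- and, via the $2\times 2$ cofactor formula, its inverse --- has entries of bitsize $\bigO{p}$; combined once more with \cref{heur:lifting_size} this keeps $\|M^{(i+1)}\|,\|{M^{(i+1)}}^{-1}\|\le 2^{\bigO{p}}$ and $\kappa(M^{(i+1)})$ inside the stated bound, closing the induction (the base case $i=1$, where $M^{(1)}=M^{(0)}$, being exactly the corollary's hypothesis). The main obstacle is precisely this non-accumulation of the conditioning across the $\bigO{d^2\log p}$ rounds: one must resist the naive composition of the per-round factor $2^{\bigO{\sqrt{n_h\log n_h\log\log n_h}+d\log n_h}}$ coming out of \cref{thm:size_reduce}, and instead isolate the diagonal-algebraic-norm spread as the genuine conserved quantity, making the perturbation induced by \algName{Lift} uniformly small in $i$ through \cref{heur:lifting_size} and \cref{prop:bound_spectral_omega}. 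This is where the corollary's hypothesis $d\log n_h+\sqrt{n_h\log n_h\log\log n_h}<p$ and the standing assumption $\alpha=\log^{\bigO{1}}(n_h)$ are used, to absorb all the low-order terms into the $\bigO{\cdot}$.
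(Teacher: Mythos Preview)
Your overall structure is right: you correctly isolate the spread $\max_j\norm_{\KK_h/\QQ}(R_{j,j})/\min_j\norm_{\KK_h/\QQ}(R_{j,j})$ as the quantity that must not accumulate across rounds, and you correctly note that naively composing the per-round factor from \cref{thm:size_reduce} would blow up. But the mechanism you propose for controlling that spread under a \algName{Lift} is the wrong tool and leaves a gap. The flag-profile results you invoke (\cref{lem:bound_omega}, \cref{lem:inequality_omega_kappa}, \cref{prop:bound_spectral_omega}) bound only the \emph{partial sums} $\mu(M^{(i)})_j=\sum_{k\le j}(\log\norm(R_{k,k})+\cdots)$, and only from one side; they do not yield the two-sided pointwise control on an individual $\norm(R_{j+1,j+1})$ that your argument needs, and trying to squeeze such a bound out of them would drag in extraneous $d$- and $\alpha$-dependent terms that do not belong in the final estimate.

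The paper's proof replaces all of this with a one-line monotonicity remark: $i\mapsto\max_j\norm_{\KK_h/\QQ}(R^{(i)}_{j,j})$ is non-increasing and, by volume preservation, $i\mapsto\min_j\norm_{\KK_h/\QQ}(R^{(i)}_{j,j})$ is non-decreasing. The underlying reason is the module analogue of the familiar fact $\lambda_1\ge\min_k R_{k,k}$: for any nonzero $v=a\,m_j+b\,m_{j+1}$ with $a,b\in\order_{\KK_h}$ one has embedding-wise $|\sigma(v)|\ge|\sigma(b)|\,\sigma(R_{j+1,j+1})$ (or $\ge|\sigma(a)|\,\sigma(R_{j,j})$ if $b=0$), and taking the product over $\sigma$ with $|\norm(a)|,|\norm(b)|\ge 1$ gives $\norm(\|v\|)\ge\min(\norm(R_{j,j}),\norm(R_{j+1,j+1}))$. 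Hence after a local reduction the new pair of diagonal norms is sandwiched between the old min and old max, pinning the spread at its initial value $\le 2^{2n_hp}$ uniformly in $i$, and \cref{thm:size_reduce} finishes directly. A second, smaller omission: you never invoke \cref{thm:unites}. The paper uses it explicitly to argue that, after the unit step inside \algName{Size-Reduce}, the Archimedean embeddings $|\sigma(R_{j,j})|$ are balanced around $\norm(R_{j,j})^{1/n_h}$; this is what justifies the $\bigO{p}$ precision claim at the embedding level, where the floating-point arithmetic actually lives.
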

 \begin{proof}
   We first remark that for any $1\leq j \leq d$, the map
   $i \mapsto \max_j \norm_{\KK_h/\QQ}\left(R^{(i)}_{j,j}\right)$ is
   non-increasing, and therefore that $i \mapsto \min_j
   \norm_{\KK_h/\QQ}\left(R^{(i)}_{j,j}\right)$ is
   non-decreasing.

   Now, \cref{thm:unites} implies that the Archimedean
   embeddings are balanced so that we have for all $i$:
   \[
     \frac{\max_{\sigma:\KK_h\rightarrow\CC, R^{(i)}_{j,j}\in R^{(i)}}
       \left|\sigma\left(R^{(i)}_{j,j}\right)\right|}{\min_{\sigma:\KK_h\rightarrow\CC,
       R^{(i)}_{j,j}\in R^{(i)}}
     \left|\sigma\left(R^{(i)}_{j,j}\right)\right|}
     \leq 2^{2p+\bigO{\sqrt{n_h\log n_h\log\log n_h}}},
   \]
   and so that
   \[
     \frac{\max_j
       \norm_{\KK_h/\QQ}(R_{j,j})}{\min_i
       \norm_{\KK_h/\QQ}(R_{j,j})} = 2^{n_h\left(2p+\bigO{\sqrt{n_h\log
     n_h\log\log n_h}}\right)}.
   \]
   Therefore, by combining this bound with the result of
   \cref{thm:size_reduce}, after the call to
   \algName{Size-Reduce}, the condition number of $M^{(i)}$ is bounded
   by \[2^{2p+\bigO{\sqrt{n_h\log n_h\log\log n_h}+d\log n_h}}\] and the computation
   requires a $\bigO{p}$ bits of precision, with error bounded by
   $2^{-p}$.
 \end{proof}

 \subsubsection{Complexity of the \algName{Lift}~procedure}
 With the bounds given by~\cref{thm:size_reduce} we are now able to
 bound the complexity of the lift procedure described in~\cref{alg:generalized_euclide}.

 \begin{lemma}[Quasilinearity of \algName{Lift}]
   \label{lem:Euclid_complexity}
   Let $\KK$ be the cyclotomic field of conductor $f>0$,
   of dimension $n = \varphi(f)$. Denote by $r$ the largest prime factor of
   $f$. Let $a,b\in\order_\KK$ and suppose that:
   \[ \gcd(\norm_{\KK/\QQ}(a),\norm_{\KK/\QQ}(b))=1 \qquad \textrm{and} \qquad
   \|a\|+\|b\| \leq 2^{p}.\]
   Then, the time complexity
    of the algorithm \algName{G-Euclide}~on the inuput $(a,b)$ is a
	 \[\bigO{r\log(r)np\log p}\] for
   $p\geq \sqrt{n\log n\log\log n}$. Consequently, it is
   quasilinear for $r\leq \log n$.
   The output $(u,v)$ verify:
   \[
     au+bv=1 \qquad \textrm{and} \qquad \|u\|+\|v\|\leq 2^{p+\bigO{\sqrt{n\log
     n\log\log n}}}.
   \]
 \end{lemma}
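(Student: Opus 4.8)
The plan is to analyze the recursion of \algName{G-Euclide} along the tower $\KK_h^\uparrow$, tracking simultaneously the bitsize of the data at each level and the arithmetic cost of the operations performed there. The key structural observation — which I would isolate first — is that the relative norm $\norm_{\KK_h/\KK_{h-1}}$ does \emph{not} increase the number of bits needed to represent an element: if $a\in\order_{\KK_h}$ has Archimedean embeddings of size roughly $2^{p/n_h}$ per coordinate (so $\|a\|\le 2^p$), then $\norm_{\KK_h/\KK_{h-1}}(a)$ has $n_h/n_{h-1}$ embeddings each a product of $q_h=n_h/n_{h-1}$ conjugates, so its embeddings are of size $\approx 2^{q_h\cdot p/n_h}=2^{p/n_{h-1}}$, i.e.\ it again satisfies $\|\norm_{\KK_h/\KK_{h-1}}(a)\|\le 2^{p+\bigO{\cdots}}$ as an element of $\order_{\KK_{h-1}}$. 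Hence the parameter $p$ propagates essentially unchanged down the recursion, up to the additive $\bigO{\sqrt{n\log n\log\log n}}$ slack coming from balancing via \cref{thm:unites} and from the \algName{Size-Reduce} step. I would prove this as a short preliminary claim, using the product formula for relative norms and the bound $\frac{f}{\varphi(f)}=\bigO{\log\log f}$.

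Next I would set up the cost recursion. At level $k$ of the tower the algorithm performs: one recursive call to \algName{G-Euclide} on the relative norms (at level $k-1$, on inputs of bitsize $p+\bigO{\sqrt{n\log n\log\log n}}$); a constant number of multiplications, inversions and relative-norm computations in $\order_{\KK_k}$; one call to \algName{Orthogonalize} on a $2\times 2$ matrix; and one call to \algName{Size-Reduce}, whose cost and output guarantees are given by \cref{thm:size_reduce} (with $d=2$, giving $\bigO{np\log}$-type cost and $\|U\|\le 2^{\bigO p}$, which is what forces the output bound $\|u\|+\|v\|\le 2^{p+\bigO{\sqrt{n\log n\log\log n}}}$). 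Arithmetic in $\order_{\KK_k}\cong\ZZ[X]/\Phi_{f_k}$ on elements of bitsize $p$ costs $\bigOtilde{n_k p}$ via fast polynomial multiplication (FFT), and since $n_k\le n$ and there are only $h=\bigO{\log n}$ levels, the non-recursive cost at every level is $\bigOtilde{np}$. Because the bitsize is (nearly) invariant down the tower, unrolling the recursion gives total cost $h\cdot\bigOtilde{np}=\bigOtilde{np}$ — but this only accounts for the levels of the cyclotomic tower, not for the behavior \emph{at the bottom}, which is where the factor $r\log r$ enters.

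The main subtlety — and the step I expect to be the real obstacle — is the base case $\KK=\QQ$, i.e.\ the call to \algName{ExGcd}, and more precisely the interface between the tower recursion and the integers. The tower $\KK_h^\uparrow$ has log-smooth conductor, so consecutive relative degrees $q_k$ are bounded by $\log n$, \emph{except} possibly that the very last extension down to $\QQ$ (or the granularity of the available subfields) may force a jump of size $r$, the largest prime factor of $f$: descending from a field of degree $\approx r$ to $\QQ$ multiplies the rank of the $2\times2$ module by $r$, so one must run \algName{ExGcd} / the size-reduction on an $\bigO{r}\times\bigO{r}$ integer matrix. Invoking \cref{thm:size_reduce} (or the half-GCD-based \algName{ExGcd}) in dimension $\bigO r$ with entries of bitsize $p$ costs $\bigO{r^2\cdot p/\log p+\ldots}$ — this is where the $r\log r$ (and, being careful, actually $r\log r$ after accounting for the FFT-length $n$ inside the ring arithmetic, giving $\bigO{r\log(r)\,n\,p\log p}$) factor in the stated complexity comes from. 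So the plan for this step is: (i) bound the relative degrees $q_k\le\bigO{\log n}$ for $k\ge 1$ using log-smoothness, absorbing them into polylog factors; (ii) handle the single coarse bottom extension separately, paying $\bigO{r\log r}$ there; (iii) confirm that the gcd hypothesis $\gcd(\norm(a),\norm(b))=1$ is exactly what guarantees the recursive norms stay coprime at every level so that \algName{ExGcd} at the bottom succeeds; and (iv) assemble: $h=\bigO{\log n}$ levels each costing $\bigOtilde{np}$, one bottom level costing $\bigO{r\log(r)np\log p}$, total $\bigO{r\log(r)np\log p}$, which is quasilinear when $r\le\log n$. The output bounds then follow by propagating the $\|U\|\le 2^{\bigO p}$ bound of \cref{thm:size_reduce} back up through the $h$ \algName{Ascend} steps, each of which adds only $\bigO{\sqrt{n\log n\log\log n}}$ to the exponent.
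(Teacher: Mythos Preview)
Your proposal has two genuine gaps that stem from a single misreading of what \algName{G-Euclide} actually does at each recursive step.

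First, the bitsize accounting is inverted. The hypothesis $\|a\|\le 2^p$ means each Archimedean embedding of $a$ has magnitude $\approx 2^p$, not $2^{p/n_h}$; representing $a$ takes $\approx n_h p$ bits total. When you pass to $\norm_{\KK_h/\KK_{h-1}}(a)$, each of its $n_{h-1}$ embeddings is a product of $q_h$ embeddings of $a$, hence of magnitude $\approx 2^{q_h p}$. So the per-coordinate bitsize \emph{grows} by the relative degree at every descent; the correct invariant is $n_i p_i = \bigO{np}$, not $p_i\approx p$. In particular, at the bottom the two integers $\norm_{\KK/\QQ}(a),\norm_{\KK/\QQ}(b)$ have $\approx np$ bits, and \algName{ExGcd} costs $\bigO{np\log(np)}$.

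Second, and more seriously, you have conflated the recursion of \algName{G-Euclide} with the module-\algName{Descend} of \cref{alg:fast_lll}. \algName{G-Euclide} never ``multiplies the rank of the $2\times2$ module by $r$'': at every level it holds exactly two scalars in the current field, takes their relative norms, and recurses on those two scalars. The base case is a plain integer extended-gcd on two integers, not a size-reduction of an $\bigO{r}\times\bigO{r}$ matrix. Consequently the $r\log r$ factor does not arise from ``one coarse bottom extension''. In the paper's argument it arises at \emph{every} level, from computing the relative norm $\prod_{j=1}^{q_i}\sigma_j(a)$ via a product tree: with $q_i\le r$ leaves this costs $\bigO{q_i\log q_i\cdot n_ip_i}\le\bigO{r\log r\cdot np}$. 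Summed over the $\bigO{\log n}$ levels of a tower chosen with all relative degrees $\le r$ (the paper constructs this tower separately; it is \emph{not} the tower $\KK_h^\uparrow$ of the outer reduction, and no log-smoothness is assumed in the lemma), this gives the stated $\bigO{r\log(r)\,np\log p}$. Your plan would need to be reworked around these two points; the output bound then follows because \algName{Size-Reduce} at level $i$ keeps the coefficients at $\bigO{n_h p/n_i}$ bits, which at the top level is the claimed $2^{p+\bigO{\sqrt{n\log n\log\log n}}}$.
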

 \begin{proof}
   We use a tower of
   number fields\footnote{Note that this tower is not
   same as the one used in the whole reduction process. The
     two towers are indeed constructed independently to optimize the
   global running time.} $\lL^\uparrow_h$,
   where $\lL_{i}=\QQ[x]/\Phi_{f_i}(x)$ and $f_i/f_{i+1}\leq
   r$. By trivial induction and multiplicativity of the relative norm
   map,
   we know that the input of the recursive call at level $i$, that is,
   in $\lL_i$ is $\norm_{\lL_h/\lL_i}(a),\norm_{\lL_h/\lL_i}(b)$. As
   such, with $p_i$ being the number of bits of the coefficients of the
   input at level $i$ of the recursion, we have $n_ip_i=\bigO{n_hp}$.
   Since computing the automorphisms corresponds to permutation of
   evaluation of a polynomial, each norm can be computed in time
  $\bigO{r\log(r)n_ip_i}$ using a product tree~\cite{moenck1972fast}.
   \medskip

   Now, we have by induction that $1=\det W=\det V$.
   With $R$ being the $R$-part of the $QR$-decomposition of $V$ we have
   at any level $i$ in the tower $\lL_h^\uparrow$: \[
   \|R_{2,2}\|=\|1/R_{1,1}\|\leq 2^{\bigO{\sqrt{n_i\log n_i\log \log n_i}}}, \] so
   that the size-reduction implies that
   \[
     \begin{aligned}
       \| M\| &\leq
       \norm_{\lL_i/\QQ}(R_{1,1})^{\frac{1}{n_i}}2^{\bigO{\sqrt{n_i\log
       n_i\log \log n_i}}}\\
              & =(n_h\|a\|+n_h\|b\|)^{\frac{n_h}{n_i}}2^{\bigO{\sqrt{n_i\log
              n_i\log \log n_i}}}.
     \end{aligned}
   \]
 Hence, the output coefficients are also stored using
 $\bigO{n_hp/n_i}$ bits. The complexity when $n_0=1$, i.e.\ the
 \algName{ExGcd}~base case, is classically in $\bigO{p_0\log p_0}$.
 Summing along all complexities gives:
 \[ \bigO{n_hp\log(n_hp)+\sum_{i=1}^h r\log(r)n_ip}=\bigO{n_hp\log p+r\log(r)n_hp\log
 n_h}\]
 which simplifies to a $\bigO{r\log(r)np\log p}$.
 \end{proof}

 \subsubsection{Complexity of the top-level}

 Now that we have analyzed the complexity and the output quality of
 each ``atomic'' parts, we can examine the complexity of the top-level
 of the algorithm \algName{Reduce}---that is to say its complexity
 without counting the recursive calls.

 \begin{proposition}
   \label{prop:complexity_toplevel}
   Suppose that the following conditions are fulfilled:
   \[
     \min_{\sigma:\KK_h\rightarrow\CC,
     R^{(1)}_{i,i}\in R^{(1)}}\left|\sigma(R^{(1)}_{i,i})\right| \geq 2^{-p}, \quad
     \alpha = {\log^{\bigO{1}}(n_h)}
   \]
   \[
     d\log n_h+\sqrt{n_h\log n_h\log\log n_h}<p.
   \]
   Then, the complexity at the top-level of the algorithm is
   a $\bigO{d^5n_hp\log p}$.
 \end{proposition}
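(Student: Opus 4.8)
The plan is to charge the cost of one round of \cref{alg:fast_lll} and multiply by the number of rounds. First I would check that the standing hypotheses make the two corollaries of the preceding subsections available: under $\alpha=\log^{\bigO{1}}(n_h)$ and $d\log n_h+\sqrt{n_h\log n_h\log\log n_h}<p$, \cref{cor:approximation_factor} gives that $\rho=\bigO{d^2\log p}$ rounds suffice, and (after checking that the assumption $\min_\sigma|\sigma(R^{(1)}_{i,i})|\ge 2^{-p}$ together with the scaling normalization of $\module$ bounds $\|{M^{(1)}}^{-1}\|$ by $2^{\bigO{p}}$) \cref{cor:size_reduce} gives that throughout the whole execution the condition number of $M^{(i)}$ stays $2^{\bigO{p}}$. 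The upshot is that a working precision of $\bigO{p}$ is enough for every internal computation, so that each arithmetic operation in $\KK_h$ encountered along the way costs $\bigOtilde{n_h p}$ bit operations; this is the single fact that makes the per-round bounds uniform across rounds.

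Next I would bound round $i$ component by component. The calls to \algName{Orthogonalize} and \algName{Size-Reduce}, together with the update $(M,R)\gets(M,R)\cdot U_i$ that re-expresses the basis in the size-reduced frame, cost $\bigO{d^2 n_h p\,(1+d/\log p)}$ by \cref{thm:size_reduce} and \cref{cor:size_reduce}; here it is important that this re-expression be performed \emph{within} the size-reduction (processing $\Theta(\log p)$ bits at a time, à la Neumaier--Stehlé), since $U_i$ has entries of magnitude $2^{\bigO{p}}$ and a naive matrix product would cost a factor $\log p$ more. The inner stride-$2$ loop runs at most $\lfloor d/2\rfloor$ times; in each iteration we (i) test the Lovász-type inequality, i.e.\ evaluate and compare $\norm_{\KK_h/\QQ}$ of two diagonal entries, at cost $\bigOtilde{n_h p}$ via an FFT over the $f$-th roots of unity; (ii) run \algName{Descend}, which merely regroups coefficients, at cost $\bigO{n_h p}$; (iii) \emph{skip} the recursive call to \algName{Reduce}, which is not counted; (iv) run \algName{Lift}, whose cost is that of \algName{G-Euclide} over the height-$(h-1)$ tower and is $\bigO{r\log(r)\,n_h p\log p}$ by \cref{lem:Euclid_complexity}, hence $\bigOtilde{n_h p}$ because $f$ is log-smooth, so its largest prime factor satisfies $r=\bigO{\log n_h}$; and (v) apply the resulting $2\times2$ unimodular matrix to the two affected columns of $(U_i,M)$, at cost $\bigO{d}$ operations in $\KK_h$, i.e.\ $\bigOtilde{d\,n_h p}$. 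Summing over the $\bigO{d}$ iterations, the stride-$2$ loop contributes $\bigOtilde{d^2 n_h p}$, so a whole round costs $\bigO{d^2 n_h p\,(1+d/\log p)}+\bigOtilde{d^2 n_h p}=\bigO{d^3 n_h p/\log p}+\bigOtilde{d^2 n_h p}$.

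Finally I would multiply by $\rho=\bigO{d^2\log p}$: the first term contributes $\bigO{d^5 n_h p}$ — this is exactly the point of the complexity statement, the $\log p$ savings of the fast size-reduction cancelling the $\log p$ coming from the number of rounds — and the second contributes $\bigOtilde{d^4 n_h p\log p}$; both are $\bigO{d^5 n_h p\log p}$, which is the claimed bound. I expect two places to require real care. The first is the precision bookkeeping: one must argue that \emph{no} internal computation (orthogonalization, size-reduction, norm evaluations, the Euclidean-like descent inside \algName{Lift}) ever needs more than $\bigO{p}$ bits, and this is precisely why the well-conditioning of \cref{cor:size_reduce} is invoked everywhere, and why the block in step (ii) of round $i$ must be performed in a way whose cost matches $\bigO{d^2 n_h p(1+d/\log p)}$ rather than the naive cubic bound. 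The second is the control of \algName{Lift}: one needs both that it does not inflate the bitsize of the lifted vectors — ensured by the \algName{Size-Reduce} call built into \algName{G-Euclide} and by the fact that relative norms have bitsize independent of the subfield — and that it runs in quasilinear time, which is \cref{lem:Euclid_complexity} and crucially relies on the conductor $f$ being log-smooth. Everything else is routine accounting.
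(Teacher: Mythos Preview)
Your proposal is correct and follows essentially the same approach as the paper: bound the number of rounds via \cref{cor:approximation_factor}, bound the per-round cost by the orthogonalization/size-reduction cost from \cref{thm:size_reduce} (which dominates the quasilinear \algName{Lift} of \cref{lem:Euclid_complexity}), and multiply, observing that the $\log p$ saving in the size-reduction cancels the $\log p$ from~$\rho$. Your write-up is considerably more detailed than the paper's---you itemize the inner loop and the precision bookkeeping via \cref{cor:size_reduce}---but the structure and the cited lemmas are identical; the only thing the paper mentions that you do not is the trivial base case $\KK_h=\QQ$, $d=2$, handled by Sch\"onhage.
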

 \begin{proof}
   \begin{description}
     \item[Base case: $\KK_h = \QQ$]
       This is a consequence of the analysis of Schönhage's
       fast reduction~\cite{Schonhage91}.

     \item[General case]
       Using \cref{cor:approximation_factor}, the number of rounds
       is $\rho = \bigO{d^2\log p}$. By
       \cref{lem:Euclid_complexity} the complexity of
       \algName{Lift} is quasilinear. Thus, the
       complexity of each round is dominated by the computation of the
       $QR$ decomposition and the size-reduction. By
       \cref{thm:size_reduce}, this complexity is a
       $\bigO{d^3n_hp/\log p+d^2n_hp}$, yielding a global complexity of
       $\bigO{d^5n_hp+d^4n_hp\log p}=\bigO{d^5n_hp\log p}$.
   \end{description}
 \end{proof}

 \subsubsection{Bounding the precision at each level}

 We now bound the precision used in the recursive calls at the
 top-level of the \algName{Reduce} algorithm:
 \begin{lemma}
   \label{lem:sum_of_precisions}
   The sum of all bit sizes used in the recursive calls at the
   top-level is $\bigO{d^2p}$, when subjected to the conditions:
   \[
      \min_{\sigma:\KK_h\rightarrow\CC,
     R^{(1)}_{i,i}\in R^{(1)}}\left|\sigma(R^{(1)}_{i,i})\right| \geq 2^{-p}
     \qquad d\log n_h+\sqrt{n_h\log n_h\log\log n_h}<p.
   \]
 \end{lemma}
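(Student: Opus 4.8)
The plan is to charge the bit size of each recursive call against the decrease of the partial log-volume of the flag at the index where that call is made. The first step is to pin down, for the call launched at index $j$ during round $i$, the precision $p_j^{(i)}$ it uses. That call reduces the projected rank-two module $\mathcal R_j^{(i)}=\langle r_j,r'_{j+1}\rangle$ descended to $\KK_{h-1}$, and since \algName{Descend} only rescales the two Gram--Schmidt vectors by roots of unity, the descended $2q_h\times 2q_h$ matrix has condition number $\bigl(\norm_{\KK_h/\QQ}(R^{(i)}_{j,j})/\norm_{\KK_h/\QQ}(R^{(i)}_{j+1,j+1})\bigr)^{1/n_h}$ --- here one uses crucially that the \algName{Unit}-rounding inside \algName{Size-Reduce} has balanced the Archimedean embeddings of each $R^{(i)}_{k,k}$ (\cref{thm:unites}), so a common scale can be factored out uniformly across embeddings. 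By the recursive call's own invocation of \cref{thm:size_reduce}, and because $2q_h\le 2\log n_h$ is log-smooth, this gives
\[
  p_j^{(i)}=\bigO{\tfrac1{n_h}\log\!\frac{\norm_{\KK_h/\QQ}(R^{(i)}_{j,j})}{\norm_{\KK_h/\QQ}(R^{(i)}_{j+1,j+1})}+\sqrt{n_h\log n_h\log\log n_h}} .
\]
Moreover a call is launched at $j$ only when $\norm_{\KK_h/\QQ}(R^{(i)}_{j,j})>2^{2(1+\varepsilon)\alpha n_h^2}\norm_{\KK_h/\QQ}(R^{(i)}_{j+1,j+1})$, the negation of the skipping condition recalled after \cref{heur:lifting_size}, so the logarithm above is positive and in fact at least $2(1+\varepsilon)\alpha n_h^2$.

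The second step turns this progress into a potential drop. Set $\Pi_j(M)=\log\covol\module_j=\sum_{k\le j}\log\norm_{\KK_h/\QQ}(R_{k,k})$, the untwisted analogue of $\profile(M)_j$. A call at $j$ leaves $R_{k,k}$ untouched for $k\ne j$ and, by \cref{heur:lifting_size}, replaces $\norm_{\KK_h/\QQ}(R_{j,j})$ by at most $2^{\alpha n_h^2}\sqrt{\norm_{\KK_h/\QQ}(R^{(i)}_{j,j}R^{(i)}_{j+1,j+1})}$; combining this with the triggering threshold --- and here the choice $\varepsilon=\tfrac12$, so that $2(1+\varepsilon)=3$, leaves a genuine constant multiplicative slack --- yields
\[
  \Pi_j(M^{(i)})-\Pi_j(M^{(i+1)})\ \ge\ \tfrac16\log\!\frac{\norm_{\KK_h/\QQ}(R^{(i)}_{j,j})}{\norm_{\KK_h/\QQ}(R^{(i)}_{j+1,j+1})}\ \ge\ \tfrac12\,\alpha n_h^2 .
\]
The first inequality says the drop of $\Pi_j$ in this round is $\Omega\bigl(n_h\,p_j^{(i)}\bigr)$ up to the lower-order term; the second says every call drops some $\Pi_j$ by at least $\tfrac12\alpha n_h^2$.

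For the third step one assembles these. Since reductions and the unit-rounded size-reduction never enlarge flag volumes, $i\mapsto\Pi_j(M^{(i)})$ is non-increasing for each fixed $j$, so the per-round drops telescope; and each $\log\norm_{\KK_h/\QQ}(R^{(i)}_{k,k})$ stays in $[-n_h p,\bigO{n_h p}]$ (by the hypothesis $\min_\sigma|\sigma(R^{(1)}_{i,i})|\ge 2^{-p}$, by \cref{cor:size_reduce}, and by $\|M^{(0)}\|,\|{M^{(0)}}^{-1}\|\le 2^p$), whence $|\Pi_j|\le\bigO{d\,n_h p}$ coordinatewise. Therefore the total number of calls launched from the top level is $\bigO{\sum_j(\Pi_j(M^{(1)})-\Pi_j(M^{(\rho+1)}))/(\alpha n_h^2)}=\bigO{d^2p/(\alpha n_h)}$, and
\[
  \sum_{i,j}p_j^{(i)}\ \le\ \frac{C}{n_h}\sum_{j=1}^{d}\bigl(\Pi_j(M^{(1)})-\Pi_j(M^{(\rho+1)})\bigr)\;+\;C\,(\#\text{calls})\cdot\sqrt{n_h\log n_h\log\log n_h}\ =\ \bigO{d^2p},
\]
the first term being $\bigO{d^2p}$ by the telescoping bound and the second $\bigO{d^2p\cdot\sqrt{\log n_h\log\log n_h}/(\alpha\sqrt{n_h})}=\bigO{d^2p}$ since $\alpha=\log^{\bigO{1}}(n_h)$.

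The main obstacle is the very first step. One must show that the precision actually \emph{used} by a recursive call is governed by the \emph{local} ratio $\log\bigl(\norm_{\KK_h/\QQ}(R^{(i)}_{j,j})/\norm_{\KK_h/\QQ}(R^{(i)}_{j+1,j+1})\bigr)/n_h$ and not by the bit length of the entries of the descended matrix, which can be as large as $\bigO{p}$; this is precisely where the balancing of the Archimedean embeddings furnished by \algName{Unit} (\cref{thm:unites}) must be combined with a careful bookkeeping of \algName{Descend}, and where the constants from \cref{heur:lifting_size} and from the triggering threshold have to be made to line up (the slack from $\varepsilon=\tfrac12$ being what makes both the $\Omega(n_h p_j^{(i)})$ and the $\Omega(\alpha n_h^2)$ per-call drops hold simultaneously).
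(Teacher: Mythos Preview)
Your proposal is correct and follows essentially the same approach as the paper: bound the precision of each call by the local log-ratio $L=\tfrac{1}{n_h}\log\bigl(\norm_{\KK_h/\QQ}(R^{(i)}_{j,j})/\norm_{\KK_h/\QQ}(R^{(i)}_{j+1,j+1})\bigr)$ plus a $\bigO{\sqrt{n_h\log n_h\log\log n_h}}$ term (via \cref{thm:unites}), use \cref{heur:lifting_size} together with the triggering threshold to show each call forces a potential drop of order $n_h L$, and telescope. The only cosmetic difference is that the paper works directly with the aggregate potential $\Pi=\sum_j(d-j)\log\norm_{\KK_h/\QQ}(R_{j,j})$ and bounds the ratio $p_{i,j}/\delta_{i,j}$, whereas you decompose $\Pi=\sum_j\Pi_j$ into the partial flag volumes, telescope each $\Pi_j$ separately, and treat the $L$- and $\sqrt{n_h}$-contributions to the precision in two separate sums; since a reduction at index $j$ changes only $\Pi_j$, the two bookkeeping schemes are equivalent.
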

 \begin{proof}
   Recall that the potential of the basis is defined
   as \[\Pi=\sum_{j=1}^d (d-j)\log(\norm_{\KK_h/\QQ}(R_{j,j})),\]
   which is in $\bigO{n_hd^2p}$ by assumption on $p$. Let $1\leq j\leq d$,
   then the
   reduction algorithm is about to perform a local reduction of the
   projected sublattice $(r_j, r'_{j+1})$, as presented in
   \cref{sec:two_by_two_reduction}, two cases can occur:
   \begin{itemize}
     \item Either $\norm_{\KK_h/\QQ}(R^{(i)}_{j,j})\leq \min\left(2^{\alpha
         n_h^2}
       \sqrt{\norm_{\KK_h/\QQ}
       \left(R^{(i)}_{j,j}R^{(i)}_{j+1,j+1}\right)},
     \norm_{\KK_h/\QQ}\left(R^{(i)}_{j,j}\right)\right)$, and as
     mentioned in
       \cref{sec:heuristic} the local reduction is not performed.
       We can consider that we use here a zero precision call.
     \item Either a local reduction is actually performed and by
       the result of \cref{sec:well_conditioned}, we can
       use a precision in $\bigO{p_{i,j}}$ with:
       \[
         p_i={\log\left(\frac{\max_k \sigma_k (R^{(i)}_{j,j})}{\min_k
               \sigma_k (R^{(i)}_{j+1,j+1})}\right)}
       \]
       to represent the projected lattice. Let now set
  \[ L = \frac{\log(\norm_{\KK_h/\QQ}
  (R^{(i)}_{j,j}/R^{(i)}_{j+1,j+1}))}{n_h}.\]
  The precision $p_{i,j}$ is, thanks  to
       the unit rounding~\cref{thm:unites} a
       \[\bigO{L+\sqrt{n_h\log n_h\log\log n_h}} = \bigO{L},\]
       by hypothesis. The reduction of
       this truncated matrix yields a unimodular
       transformation, represented with precision $\bigO{p_{i,j}}$, which
       when applied to the actual basis matrix implies that $\Pi$
       decreases by a term at least:
       \[
       \delta_{i,j} = n_h\left[\frac{L}{2}-\alpha
     n_h\right]-{2^{-\Omega(p)}}
       \]
       by \cref{heur:lifting_size} and
       \cref{thm:well_conditioned}.
     Let us bound the ratio $p_{i,j}/\delta_{i,j}$:
       \[
         \frac{p_i}{\delta_i} =
	   \frac{L+\bigO{\sqrt{n_h\log n_h\log\log
	   n_h}}}{\left(\frac{L}{2}-\alpha
		   n_h\right)n_h-2^{-\Omega(p_{i,j})}}
		   =\frac{1+\bigO{\frac{\sqrt{n_h\log n_h\log\log n_h}}{L}}}
         {\frac{n_h}{2}-\frac{\alpha
         n_h^2}{L}-\frac{2^{-\Omega(p_{i,j})}}{2L}}.
       \]
       Now recall that
       $\norm_{\KK_h/\QQ}(R^{(i)}_{j,j})\geq
       2^{2(1+\varepsilon)\alpha
       n_h^2}\norm_{\KK_h/\QQ}(R^{(i)}_{j+1,j+1})(1-2^{-\Omega(p_{i,j})})$,
       the
       multiplicative error term coming from the precision at which the
       values of the $R^{(i)}_{j,j}$ and $R^{(i)}_{j+1,j+1}$ are approximated
       at
       runtime. Thus, we have:
        \[
          \sqrt{n_h\log n_h\log\log n_h}/L = \bigO{\sqrt{\frac{\log
          n_h\log\log n_h}{n_h}}},
        \]
        and
        \[
  \alpha n_h^2/L \leq \frac{n_h}{2(1+\varepsilon)}.
        \]
        As such we have:
       \[
         \frac{p_{i,j}}{\delta_{i,j}} \leq
         \frac{1+\bigO{\sqrt{\frac{\log
         n_h\log\log
 n_h}{n_h}}}}{\frac{n_h\varepsilon}{1+\varepsilon}+\littleO{1}}.
       \]
       But then, $\delta_{i,j} = \Omega(n_h\varepsilon p_{i,j})$.
  \end{itemize}
   The potential is always a sum of non-negative terms, so $\sum_{i,j}
   \delta_{i,j}\leq \Pi$.
   The sum of the precision for the calls can thus be bounded
   by $\bigO{\frac{\varepsilon}{(1+\epsilon)} \frac{\Pi}{n_h}}
   =\bigO{d^2p}$, since $\varepsilon=\frac12$,
   which
   concludes the proof.
 \end{proof}

 Eventually we can prove the general complexity of the algorithm:
 \begin{proof}[Proof of \cref{thm:complexity_fast_LLL}]
   The first step of the proof consists in selecting a suitable tower
   of subfields, for which the relative degrees are chosen to optimize
   the complexity of the whole reduction. We choose a tower of
   cyclotomic subfields $\KK_h^\uparrow = \left(\QQ = \KK_0 \subset
   \KK_1 \subset \cdots \subset \KK_h\right)$ with $[\KK_i:\QQ]=n_i$
   and $n_{i+1}/n_i=r_i$ which satisfies $r_i/n_{i+1}^{1/5}\in [1;\log f]$, so
   that $h=\bigO{\log\log n}$.  This always exists as $f$ is log-smooth.  We
   can set $\alpha_i=4^{h-i+1}$ to satisfy
   the conditions of \cref{lem:sum_of_precisions} while
   making~\cref{heur:lifting_size} practically possible.
   By definition of the value set for $p$ we have $p=\bigO{B}$. And it of
   course satisfies the requirements of \cref{prop:complexity_toplevel}.
   Note that by the choices of local precision made in the proof
   \cref{lem:sum_of_precisions}, a simple induction shows that at each
   level of the recursion the local precision fulfills the condition of
   \cref{lem:sum_of_precisions}, by the exact choice of the $p_{i,j}$'s.
   A by product of this induction asserts that the sum of the precision used
   in \emph{all} the recursive calls needed to reduce a projected lattice at
   level $i$ is a
   \[
   \bigO{p\prod_{j=1}^{i-1}
     \bigO{r_j^2}}=2^{\bigO{i}} B\left(\frac{n}{n_i}\right)^2.
   \]

   Then, since by \cref{prop:complexity_toplevel} the complexity of the
   top-level call at level $i$ is a
   $\bigO{r_i^5n_ip\log(p)}=\bigO{r_i^5n_iB\log(B)}$.
   Hence the total complexity at level $i$ is $r_i^5/m_i\cdot
   n^2B\log(Bn)2^{O(i)}=n^2B\log(B)\log^{\bigO{1}} n$.
   Summing over all the levels retrieves the announced result.
 \end{proof}

 An important point is that all recursive calls can be computed in
 parallel, and as most of the complexity is in the leaves, this leads
 to an important practical speed-up.
 We conjecture that when the number of processors is at most $n/\log^{\bigO{1}} n$,
 the speed-up is linear.
 \section{A fast reduction algorithm for high-rank lattices}
\label{sec:fastlll}

While the previous reduction was tailored to reduce small (typically rank 2)
rank lattices over cyclotomic fields, we now turn to the reduction of high
rank lattices.  It runs roughly in a constant number of matrix
multiplications.  It can also be used in the previous algorithm at each step to
reduce the hidden logarithmic powers; but is of course interesting on its own for
reducing rational lattices.

A bottleneck with \cref{alg:fast_lll} is that each round needs a
matrix multiplication, and there are at least $d^2$ rounds.
However, one can notice that each round only make local modifications.
As a result, we propose to use a small number $D$ of blocks,
typically 4 or 8 suffices, and a round will (recursively) reduce
consecutive pairs of dimension $d/D$.  The resulting number of rounds is
again $\bigO{D^2\log B}$, giving a top-level complexity of
$\bigO{D^2}$ (equivalent) multiplications. The corresponding algorithm
is given in \cref{alg:ultra_fast_lll}. In addition, the naive
\algName{Size-Reduce} procedure is replaced by a variant of Seysen
reduction, which is detailed in \cref{app:ultra_fast}. The complexity
analysis is exactly the same as in the previous section, with the flag
profile defined with respect to the volume of the blocks instead of
simply the vectors, that is:
\[
\profile(M)_j=\sum_{k = 1}^{jd/D}
\left(\log|\norm_{\KK/\QQ}\left(R_{k,k}\right)|+
2k(1+\varepsilon)\alpha n_h^2\right), \qquad \textrm{for~} 1\leq j \leq
d.
\]

We describe the algorithm with respect to an oracle \algName{Oracle}{}
which computes the base case. One can either use \algName{Schonhage}{},
the algorithms in the previous or current section, or a recursive call.

\begin{boxedAlgorithm}[algotitle=Reduce, label=alg:ultra_fast_lll]
  \begin{algorithm}[H]
    \Input{
      Basis $M\in\order_{\KK}^{d\times d}$ of the
      $\order_{\KK}-$module
    $\module$}
    \Output{A unimodular transformation $U\in\order_{\KK}^{d\times
      d}$
    representing a reduced basis of $\module$.}
    \BlankLine
    \lIf{$d=2$} { \Return \algName{~Oracle}$(M)$ }
    \For{$i=1$ \To $\rho$}
    {
      $R \gets $ \algName{Orthogonalize}$(M)$\;
      $U_i \gets $ \algName{Seysen-Size-Reduce}$(R)$\;
      $(M,R) \gets (M,R)\cdot U_i$ \;
      \For{$j=1+(i \mod 2)$ \To $d$ \Bystep $2d/D$}
      {
	      $V_1 \gets \covol R[j:j+d/D-1,j:j+d/D-1]$ \;
	      $V_2 \gets \covol R[j+d/D:j+2d/D-1,j+d/D:j+2d/D-1] $
	      \If{$ V_1 \leq 2^{2(1+\varepsilon)\alpha n_h^2 d/D} V_2$ }{
	$U' \gets $ \algName{Reduce}$(
	R[j:j+2d/D-1, j:j+2d/D-1] )$\;
	$(U_i, M) \gets (U_i,
	M)\cdot\textrm{Diag}\left(\Id_{j},
	  U', \Id_{2d-j-2}\right)$ \;
      }
}
    }
    \Return $\prod_{i=1}^\rho U_i$ \tcp{The product is computed from the
      end}
  \end{algorithm}
\end{boxedAlgorithm}

The analysis by Neumaier-Stehlé~\cite{ISSAC:NeuSte16} only bounded the
number of rounds, and as a result the complexity is
$d^{3}B^{1+\littleO{1}}$.  One can remark that even the simple
algorithm uses $\Omega(d^3\log B)$ local reductions, so that
significantly decreasing their complexity can only come
from a reduced precision in this local operation.

We, on the other hand, make the following heuristic:
\begin{heuristic}\label{heur:heurdeux}
  At any point in the
recursion, when reducing a lattice of rank $d$, if we use a precision of
$p\geq (1+\epsilon)\alpha dn$ then we
decrease the potential $\Pi$ by $\Omega(d^2p)$.
\end{heuristic}

It is justified by the fact that the $\norm_{\KK/\QQ}(R_{i,i})$
usually decrease roughly exponentially in $i$ both in the input and the
output matrices.

We need one last heuristic, which removes a $\log B$ factor:
\begin{heuristic}\label{heur:heurtrois}
  The number of bits needed decreases exponentially quickly, at the same
  speed as the
$\mu$ vector.
\end{heuristic}
Indeed, a standard assumption for random lattices is
that the upper-bound in~\cref{heur:lifting_size} is in fact an
approximation.  As a result, we expect
that~\cref{lem:inequality_omega_kappa} holds with the vectors
replaced by their forward differences, which implies this heuristic. The
same property also implies the previous heuristic, as the forward
difference of the eigenvector corresponding to the largest eigenvalue is
a cosine.

\begin{theorem}
  Let $A$ be a matrix of dimension $d$ with entries in $\KK$, with
  $\kappa(A)\leq 2^B$ such that $B\geq \sqrt{n\log n\log \log n}+\log n\log d$, $n$
  being the degree of $\KK$ over $\QQ$.
  Given $A$ and an oracle which obeys~\cref{heur:lifting_size}, our
  reduction algorithm finds an integer vector $x$ with
 \[ \|Ax\| \leq 2^{2(1+\epsilon+\littleO{1})\alpha dn}\covol^{1/nd} A, \]
  with $\alpha$ and $\epsilon$ defined as in the
  \cref{heur:heurdeux}.
  Further, the sum of the precision used in the oracle calls is
  $\bigO{d^2p}$ and the {\em heuristic} running time is
  \[ \bigO{\frac{d^\omega}{(\omega-2)^2} n\cdot B/\log B+d^2nB\log^2 d}
  \]
  for any constant $\epsilon$.
\end{theorem}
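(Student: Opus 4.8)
The plan is to transplant the three-part analysis of \cref{sec:algorithm_complexity} to the block setting of \cref{alg:ultra_fast_lll}, working with the block flag profile $\profile(M)$ displayed above (whose $j$-th coordinate sums the twisted log-norms over the first $jd/D$ diagonal entries of $R$) and feeding in \cref{heur:lifting_size,heur:heurdeux,heur:heurtrois} where the tower setting provided exact per-level computations. First I would reduce to a normalized instance: scaling $A$ lets us fix $\covol A$ to the prescribed power of two, and the hypothesis $\kappa(A)\le 2^{B}$ together with the Archimedean balancing of \cref{thm:unites} (exactly as in \cref{cor:size_reduce}) keeps the working precision $p$ — the logarithm of the ratio between the largest and smallest embedding of a diagonal entry of the $R$-factor — in $\bigO{B}$ throughout; the hypothesis $B\ge\sqrt{n\log n\log\log n}+\log n\log d$ is precisely what is needed for \cref{thm:size_reduce} and \cref{cor:size_reduce} to apply at \emph{every} node of the recursion, where the rank is $\le d$ and the local precision is $\le p$.

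For the approximation factor and the number of rounds I would replay \cref{prop:bound_spectral_omega} through \cref{cor:approximation_factor} with the $D$-dimensional space of block log-volumes in place of the $d$-dimensional space of vector log-norms: the operators $\delta_j$ now act on $D$ coordinates, $\Delta_o\circ\Delta_e$ is again a discrete Laplacian — this time on $\{1,\dots,D\}$ — and its spectral radius $1-\Omega(1/D^{2})$ yields $\|\profile^{(i)}\|_2\le e^{-\Omega(i/D^{2})}\|\profile^{(1)}\|_2$, once \cref{heur:lifting_size} is used (as in \cref{lem:inequality_omega_kappa}) to certify that $\profile^{(i)}$ over-approximates the true block profile after round $i$ and the analogue of \cref{lem:bound_omega} gives $\|\profile^{(1)}\|_2=\bigO{\mathrm{poly}(d,n,p)}$. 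Since $D$ is an absolute constant, $\rho=\bigO{\log B}$ rounds suffice to bring $\profile^{(\rho+1)}_1$ below $1$; at that point the block-Lovász test of \cref{alg:ultra_fast_lll} holds for every consecutive pair of blocks, so telescoping $\covol R[j]\le 2^{2(1+\varepsilon)\alpha n^{2}d/D}\covol R[j{+}1]$ across the $D$ blocks, then down the $\bigO{\log d}$ recursion levels (where the block ranks shrink geometrically by the factor $2/D<1$), and finally invoking the oracle guarantee on the first block, yields $\|Ax\|\le 2^{2(1+\varepsilon+\littleO{1})\alpha dn}\covol^{1/nd}A$; the $\littleO{1}$ absorbs the $(D-1)/D$ factor, the geometric-series constant over the levels, and the oracle's own approximation factor.

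For the precision budget and the running time I would argue as follows. The log-potential $\Pi$ starts in $\bigO{nd^{2}p}$, and \cref{heur:heurdeux} applied to each oracle call that is actually performed — rank $2d/D$, precision $p'$ — shows $p'$ is within a $\bigO{1/(n\varepsilon)}$ factor of the resulting decrease of $\Pi$; since $\Pi$ is a sum of non-negative terms, summing over all performed oracle calls gives $\sum p'=\bigO{\Pi/(n\varepsilon)}=\bigO{d^{2}p}$, and a straightforward induction (as at the end of the proof of \cref{thm:complexity_fast_LLL}, with $\alpha$ increased geometrically down the tree) shows the local precisions chosen inside each recursive call still meet the hypotheses one level down. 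Each round consists of one \algName{Orthogonalize} and one \algName{Seysen-Size-Reduce} of the $d\times d$ matrix at precision $p_r$, plus oracle calls charged recursively; the Seysen variant of \cref{app:ultra_fast} lets this linear algebra be done by fast matrix multiplication, and organizing it in bit-blocks of size $\bigO{\log p_r}$ (Lehmer-style) costs $\bigO{\tfrac{d^\omega}{(\omega-2)^{2}}\,n\,p_r/\log p_r+d^{2}np_r}$ per round — the $(\omega-2)^{-2}$ coming, exactly as in the Euclidean case of \cref{sec:fastlll}, one factor from the doubly-recursive matrix-multiplication routine and one from summing the geometrically-shrinking block sizes over the recursion. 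By \cref{heur:heurtrois} the $p_r$ decay geometrically, so $\sum_r p_r=\bigO{p}=\bigO{B}$ and $\sum_r p_r/\log p_r=\bigO{B/\log B}$; hence the top-level cost is $\bigO{\tfrac{d^\omega}{(\omega-2)^{2}}nB/\log B+d^{2}nB}$, and summing over the $\bigO{\log d}$ recursion levels — the first term decaying geometrically while the second picks up a $\log d$ for the depth and another $\log d$ from the internal Seysen recursion — gives the claimed $\bigO{\tfrac{d^\omega}{(\omega-2)^{2}}nB/\log B+d^{2}nB\log^{2}d}$.

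The main obstacle is the interaction between reduced-precision arithmetic and fast matrix multiplication inside a round: one must certify that running \algName{Orthogonalize} and \algName{Seysen-Size-Reduce} in small bit-blocks still returns a unimodular transform accurate enough that the potential drops as \cref{heur:heurdeux} predicts — that the rounding errors of \cref{thm:size_reduce} and the error analysis of \cref{app:precision}, invoked uniformly over all nodes of the recursion, do not swamp the progress — and, secondarily, verifying that the three heuristics compose cleanly down the $\bigO{\log d}$ levels so that the geometric decay of precision claimed by \cref{heur:heurtrois} telescopes to $\bigO{B}$ rather than $\bigO{B\log d}$.
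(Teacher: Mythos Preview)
Your plan matches the paper's proof: block-profile spectral bound for the round count, \cref{heur:heurdeux} for the $\bigO{d^2p}$ precision budget, the fast QR/Seysen routines of \cref{app:ultra_fast} with \cref{heur:heurtrois} to collapse the round sum, and telescoping over levels for the approximation factor. One accounting point is off. The per-round QR/Seysen cost from \cref{app:ultra_fast} carries only \emph{one} factor of $(\omega-2)^{-1}$; the second factor does not live in a single round but appears when you sum over recursion levels. At level $i$ (rank $r_i$) the total precision is $\bigO{(d/r_i)^2B}$, not $\bigO{B}$, so the level-$i$ leading term is $d^2r_i^{\omega-2}(\omega-2)^{-1}nB/\log B$, and it is $\sum_i r_i^{\omega-2}=\bigO{d^{\omega-2}/(\omega-2)}$ that supplies the second $(\omega-2)^{-1}$. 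Likewise the $\log^2 d$ is $\sum_i \log r_i$, an arithmetic sum over levels, not ``depth times an internal Seysen log''. Your final bound comes out right, but only because the spurious $(\omega-2)^{-1}$ you place in the per-round cost cancels against the $(d/r_i)^2$ precision growth you omit when passing to lower levels.
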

\begin{proof}
  Let $r_i$ be the rank of the matrix at the $i$-th recursive level (one
  is the top).
  We use $w=\lfloor \log(B) \rfloor$.
  Then, using our heuristic on the potential, the sum of the precision
  $p$ used in this level is $\bigO{(d/r_i)^2B}$.
  Using the complexity results presented in \cref{app:ultra_fast}
  each call with precision $p\geq \log B$ has a running-time of
  \[ \bigO{\left(r_{i+1}/r_i\right)^2\left(\frac{r_i^\omega}{\omega-2} n\cdot p/\log
  B+r_i^2 np\log r_i\right)} \]
  using \cref{heur:heurtrois} on the exponential decrease of the precision
  used.
  Thus, the complexity of the $i$-th level is
  \[ \bigO{(r_{i+1}/r_i)^2\left(d^2r_i^{\omega-2} \frac{n}{\omega-2}\cdot
  B/\log B+d^2 nB\log r_i\right)}. \]
  If $p<\log B$, then $r_in=\bigO{\log B}$ and the cost is bounded by
  \[ \bigO{(r_{i+1}/r_i)^2\left(\frac{r_i^\omega}{\omega-2} n+r_i^2 np\log
  r_i\right)} \]
  which in total is at most
  \[ \bigO{(r_{i+1}/r_i)^2\left(d^2r_i^{\omega-2} \frac{n}{\omega-2}\cdot
  \frac{B}{r_in}+d^2 nB\log r_i\right)}. \]
  As $r_i^{\omega(r_i)-3}/(\omega(r_i)-2)$ is bounded, this is always
  negligible.

	One possible instantiation is $r_{i}/r_{i+1}$ bounded.
	We then get $\sum_i d^2r_i^{\omega(r_i)-2}=\bigO{\frac{d^\omega}{\omega-2}}$.

	This gives an algorithm which finds a transition matrix such that the first block has a low volume:
  \begin{equation}\label{eq:final_cp}
  \prod_{i=1}^{r_1} |\norm_{\KK/\QQ}(R_{i,i})|^{1/r_1}\leq
  2^{2(1+\epsilon)\alpha (d-r_1)n^2}\covol^{1/d} A
\end{equation}
	One then recurses on the first block, which corresponds to taking the
  product of a family of formulas of the same shape as
  \cref{eq:final_cp} for which the $(d-r_1)$ is replaced by a
  $(r_{i}-r_{i+1})$. The results derives directly from a telescopic
  summation over the exponents. This recursion is done for a fraction of
  the global complexity.
\end{proof}
We emphasize that {\em in practice}, the entire basis is reduced at the
end of the algorithm.

If we instantiate on rational lattices, this gives:
\begin{corollary}
\label{cor:cor3}
  Let $A$ be a matrix of dimension $d$ with entries in $\ZZ$, with
  $\kappa(A)\leq 2^B$ such that $B\geq d$.
  Given $A$ and an oracle which obeys~\cref{heur:lifting_size}, our
  reduction algorithm finds an integer vector $x$ with
  \[ \|Ax\| \leq 2^{d/2}|\det A|^{1/d}. \]
  Further, the {\em heuristic} running time is
  \[ \bigO{\frac{d^\omega}{(\omega-2)^2} \cdot B/\log B+d^2B\log B}. \]
\end{corollary}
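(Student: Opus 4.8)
The plan is to obtain \cref{cor:cor3} as the $n=1$ specialisation of the preceding theorem, with the tower degenerating to $\QQ=\KK$. At $n=1$ the relative norm $\norm_{\KK/\QQ}$ is the ordinary absolute value, so the diagonal algebraic norms $\norm_{\KK/\QQ}(R_{i,i})$ are just $|R_{i,i}|$, and the covolume of the $\order_\KK$-lattice spanned by the columns of $A$ equals $|\det A|$; hence $\covol^{1/nd} A = |\det A|^{1/d}$. The oracle required by the theorem is instantiated, in the rational case, by \algName{Schonhage}: on a rank-two lattice it returns the globally shortest vector $v_1$, so $\|v_1\|\le |R^{\mathrm{in}}_{1,1}|$ and, by the rank-two Hermite bound, $\|v_1\|\le(2/\sqrt3)^{1/2}\sqrt{|R^{\mathrm{in}}_{1,1}R^{\mathrm{in}}_{2,2}|}\le 2^{\alpha}\sqrt{|R^{\mathrm{in}}_{1,1}R^{\mathrm{in}}_{2,2}|}$ for any $\alpha\ge 1/4$ — in particular for the $\alpha$ chosen below — so \cref{heur:lifting_size} is met unconditionally by this oracle; the remaining heuristics \cref{heur:heurdeux} and \cref{heur:heurtrois} underlying the theorem are carried over unchanged.

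Next I would check the numerical hypotheses. The condition $B\ge\sqrt{n\log n\log\log n}+\log n\log d$ of the theorem is vacuous at $n=1$, so only $\kappa(A)\le 2^{B}$ is needed, which is given. The extra assumption $B\ge d$ of the corollary is precisely what guarantees the precision demanded by \cref{heur:heurdeux}: every local reduction is run at precision $p=\bigO{B}$, which we may always take to be at least $(1+\epsilon)\alpha dn=(1+\epsilon)\alpha d$ (this is $\le d\le B$ once $\epsilon$ is small and $\alpha<1/4$), so the hypothesis of \cref{heur:heurdeux} holds at every recursion level. I would then fix a small constant $\epsilon>0$ and set $\alpha$ just below $1/4$, e.g.\ $\alpha=\tfrac{1}{4(1+2\epsilon)}$, so that the theorem's approximation factor becomes $2^{2(1+\epsilon+\littleO{1})\alpha dn}\le 2^{d/2}$, the margin $1+2\epsilon$ absorbing both $\epsilon$ and the $\littleO{1}$. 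Feeding these choices into the theorem produces an integer vector $x$ with $\|Ax\|\le 2^{2(1+\epsilon+\littleO{1})\alpha d}|\det A|^{1/d}\le 2^{d/2}|\det A|^{1/d}$.

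For the running time I would substitute $n=1$ into the theorem's estimate $\bigO{\tfrac{d^{\omega}}{(\omega-2)^2}\,n\cdot B/\log B+d^{2}nB\log^{2}d}$, giving $\bigO{\tfrac{d^{\omega}}{(\omega-2)^2}\,B/\log B+d^{2}B\log^{2}d}$, and then simplify the complexity estimate using $d\le B$ — which subsumes the residual polylogarithmic-in-$d$ overhead into the $\log B$ already present — to reach the announced $\bigO{\tfrac{d^{\omega}}{(\omega-2)^2}\cdot B/\log B+d^{2}B\log B}$. The only genuinely non-mechanical point is the calibration of $(\alpha,\epsilon)$: it must simultaneously push the approximation factor down to $2^{d/2}$ and keep the precision $(1+\epsilon)\alpha dn$ required by \cref{heur:heurdeux} affordable from the hypothesis $B\ge d$ alone, and recognising that $B\ge d$ is exactly the threshold making both possible is the crux; everything else is routine bookkeeping layered on top of the already-established theorem.
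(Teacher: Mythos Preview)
Your approach is exactly the paper's: the corollary is obtained by instantiating the preceding theorem at $n=1$ (the paper merely writes ``If we instantiate on rational lattices, this gives:'' and states the corollary). Your handling of the approximation factor via a choice of $\alpha$ just below $1/4$ and small constant $\epsilon$ is the right way to fill in the details.

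One small point: the final running-time simplification is not quite justified as you state it. From the theorem at $n=1$ you get a second term $d^2B\log^2 d$, and the assumption $B\ge d$ only gives $\log d\le \log B$, hence $\log^2 d\le (\log B)^2$, not $\log^2 d\le \log B$. So ``$d\le B$ subsumes the residual polylogarithmic-in-$d$ overhead into the $\log B$'' does not literally follow. This is a minor bookkeeping issue rather than a conceptual gap (and the discrepancy between $\log^2 d$ and $\log B$ is already present between the paper's theorem and corollary statements), but you should either drop that claim or note that the bound $d^2B\log B$ holds once $\log d=\bigO{\sqrt{\log B}}$, which is the regime of interest.
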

This is, up to the $1/(\omega-2)$ factor on the first term
the complexity of QR-decomposition, so the
algorithm is essentially optimal.

Almost always, the first term is dominant and one can use $r_{i}/r_{i+1}=(d/r_i)^{1/3}$.
The number of levels is then only $\bigO{\log \log d}$, and the larger $r_i/r_{i+1}$ makes the heuristics more plausible.

Once the matrix is \LLL{} reduced, we can reduce it further with a \BKZ{} algorithm.
We can use the same recursive structure, but when the dimension is less than $\beta \log(\beta)$,
we use a \BKZ{} reduction.
The total number of calls is $\bigO{d^3\log d}$~\cite{C:HanPujSte11}, and we also have an approximation factor
of $\beta^{\bigO{d/\beta}}$.
Hence, we can use $\beta=\Theta(\log(Bd^{\omega-3}))$, which for $\omega$ not too small is $\Omega(\log(d))$, without
increasing the running time.
This implies that we can remove a $\log d/\log \log d$ factor when solving vectorial knapsacks, such as the ones
for polynomial factoring~\cite{van2002factoring}.

We can instantiate this algorithm on roughly triangular matrices, and show that
for random ones, one can get a (heuristic) significant speed-up.
These matrices are widespread, as it corresponds to ``knapsack''
problems or searching integer
relations\footnote{While PSLQ~\cite{ferguson1998polynomial} also solves
  this problem on \emph{real RAM machines}, this model is an extremely
  poor approximation of computers~\cite{schonhage1979power}.
  See~\cite[Section 2]{ferguson1998polynomial} for what can go wrong,
e.g.}.
In particular, one can quickly search a putative minimal polynomial.
\begin{theorem}
\label{thm:thm6}
	Let $A$ be a ``random'' matrix of dimension with $d$ columns,
  $\bigO{d}$ rows and entries in $\order_{\KK}=\ZZ[x]/\phi_f(x)$.
	We define $B\geq d^2n$ such that \[ \|A\|+\norm_{\KK/\QQ}(\covol C)\leq 2^B \] for all matrices $C$ whose columns are a subset of $A$.
	For $R$ the R-factor of the QR-decomposition of $A$, we also assume that  $\|R^{-1}\|\leq 2^{B/d}$ and $\|R_{i,j}\|\leq 2^{B/i}$ for all $i,j$.
	We also require that $A_{i,j}=0$ for $i\geq \bigO{j}$ with a uniform constant.
	We can find an integer vector $x$ with
	\[ \|Ax\| \leq 2^{d\bigOtilde{n}}\covol^{1/nd} A. \]
	The heuristic complexity is
	\[ \bigO{\frac{d^{\omega-1}}{(\omega-2)^2} n\cdot B/\log B+dnB\log^2 d}+d\bigOtilde{n^2B}\]
\end{theorem}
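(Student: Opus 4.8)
The plan is to run \cref{alg:ultra_fast_lll} on (a square completion of) $A$ with a geometrically decreasing block sequence $r_i/r_{i+1}=\Theta(1)$, so that there are $\bigO{\log d}$ recursion levels, and with \algName{Oracle}{} instantiated by the rank-two reduction \algName{Reduce}{} of \cref{thm:complexity_fast_LLL}. The recursion tree then has $\bigO d$ base-case leaves, each a rank-two module over $\KK$ reduced in time $\bigOtilde{n^2B}$, which already accounts for the additive $d\,\bigOtilde{n^2B}$ in the claimed complexity. The bound on the number of rounds per level ($\bigOtilde 1$, via the exponential decay of the block flag profile as in \cref{prop:bound_spectral_omega}), the approximation factor (each level outputs a transition matrix whose leading block has volume at most $2^{2(1+\epsilon+\littleO1)\alpha(d-r_1)n^2}\covol^{1/d}A$, one recurses on that block, and a telescopic summation over the exponents yields $\|Ax\|\leq 2^{d\bigOtilde n}\covol^{1/nd}A$, exactly as for \cref{cor:cor3}), and the splitting of the $B$ input bits into chunks of $\log B$ bits are all carried out verbatim as in \cref{sec:fastlll}.

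The new ingredient, responsible for the extra factor $d$ of speed-up in the linear-algebra part, is that the staircase hypothesis $A_{i,j}=0$ for $i\geq\bigO j$, together with $\|R_{i,j}\|\leq 2^{B/i}$ and $\|R^{-1}\|\leq 2^{B/d}$, forces the log-potential to be small: since $\log\norm_{\KK/\QQ}(R_{j,j})=\bigO{nB/j}$ and $\log\norm_{\KK/\QQ}(R_{j,j})\geq -\bigO{nB/d}$, one gets
\[
\Pi=\sum_{j=1}^d(d-j)\log\norm_{\KK/\QQ}(R_{j,j})=\bigOtilde{dnB},
\]
whereas for a dense matrix of the same bitsize $\Pi=\Theta(d^2nB)$. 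Feeding this into the potential argument of \cref{sec:fastlll} (i.e.\ \cref{heur:heurdeux} and \cref{heur:heurtrois}), the \emph{total} precision consumed across all recursive oracle calls is $\bigOtilde{dB}$ instead of $\bigO{d^2B}$, and the same holds level by level with the local precisions scaled accordingly.

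With a precision budget a factor $d$ smaller, the cost accounting of the general high-rank reduction theorem (the one preceding \cref{cor:cor3}) goes through unchanged except for this saving: the linear-algebra cost, which is proportional to $\frac{d^\omega}{\omega-2}\,n$ times the total precision divided by $\log B$, becomes $\bigO{\frac{d^{\omega-1}}{(\omega-2)^2}\,n\,B/\log B}$, and the lower-order term $d^2nB\log^2 d$ becomes $dnB\log^2 d$; adding the $d\,\bigOtilde{n^2B}$ from the base cases gives the announced running time, while the approximation bound $\|Ax\|\leq 2^{d\bigOtilde n}\covol^{1/nd}A$ is the telescopic-exponent bound mentioned above.

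The step I expect to be the main obstacle is the structural bookkeeping that roughly-triangularity is preserved, with a \emph{uniform} constant in ``$i\geq\bigO j$'', through the $\bigOtilde{\log d}$ successive projections, block extractions, Gram--Schmidt orthogonalizations and Seysen reductions performed by \cref{alg:ultra_fast_lll}: one must check that this constant does not degrade across the recursion, so that the bound $\log\norm_{\KK/\QQ}(R^{(i)}_{j,j})=\bigOtilde{nB/j}$ genuinely persists and really shrinks the per-level potential. Closely related is the need for the ``random'' assumption on $A$ to keep \cref{heur:lifting_size}---hence \cref{heur:heurdeux,heur:heurtrois}---applicable at every node of the recursion tree. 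Once this localization is granted, the remaining accounting is a mechanical re-run of the arguments of \cref{sec:fastlll}.
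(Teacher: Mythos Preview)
Your approach has a genuine gap that the paper's proof specifically circumvents.

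Your potential argument is correct as far as it goes: the staircase hypothesis does give $\Pi=\bigOtilde{dnB}$, and this is exactly what the paper invokes to bound the total oracle cost by $d\,\bigOtilde{n^2B}$. But the small potential only controls, via \cref{heur:heurdeux}, the \emph{precision passed down to recursive calls}. It says nothing about the precision needed for the linear algebra at the current level, which is dictated by the local condition number. At the top of your recursion you still face a single $d\times d$ matrix with $\|A\|\leq 2^B$ and $\|R^{-1}\|\leq 2^{B/d}$, hence $\kappa(A)\approx 2^B$; one QR decomposition plus one Seysen size-reduction at that precision already costs $\Omega\!\bigl(\tfrac{d^\omega}{\omega-2}\,nB/\log B\bigr)$ by the bounds of \cref{app:ultra_fast}. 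This single call exceeds your target $\tfrac{d^{\omega-1}}{(\omega-2)^2}\,nB/\log B$ by a factor $\Theta(d(\omega-2))$. Your claim that ``the same holds level by level with the local precisions scaled accordingly'' is therefore false at level~$1$.

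The paper avoids ever touching a $d\times d$ matrix at precision $B$ by an incremental doubling scheme: it reduces only the first $2^i$ columns for $i=1,2,\dots,\log d$, and before enlarging from $2^i$ to $2^{i+1}$ columns it \emph{pre-reduces} each new column $x$ by successively setting $x\leftarrow x-A_j\lfloor R_j^{-1}\overline{Q_j}^{\,t}x\rceil$ for increasing $j\le i$. The hypotheses $\|R_{i,j}\|\leq 2^{B/i}$ and $A_{i,j}=0$ for $i\geq\bigO{j}$ ensure that after this step the concatenated $2^{i+1}$-column block has condition number only $2^{d\bigOtilde{n}+\bigO{B/2^i}}$, so the lattice-reduction call at stage $i$ runs at precision $\sim B/2^i$ on a $2^i$-dimensional matrix. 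Summing $2^{i\omega}\cdot B/2^i$ over $i$ gives the $d^{\omega-1}B$ target. In particular, your concern about rough triangularity being preserved through the recursion is misplaced: the paper consumes that structure exactly once, in the pre-reduction, and never relies on it again.
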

\begin{proof}
	The algorithm consists in reducing the first $k=2^i$ columns of $A$ for successive powers of two until $d$.
	The result is stored in $A_i$.
	The volume of $A_i$ is bounded by $2^B$ so heuristically we expect, and will assume that
	\[ \|A_i\|,\|A_i\|^{-1}=2^{d\bigOtilde{n}+\bigO{B/2^i}} \]
	for all $i$.
	We also store $Q_iR_i$, the QR-decomposition of $A_i$, and $R_i^{-1}$.
	We now explain how to compute $A_{i+1}$.
	Let $x$ be a column of $A$ which is not in the span of $A_i$.
	In order to reduce its bit size, we replace it by $x-A_j\lfloor R_j^{-1} \conj{Q_j}^t x \rceil$ for increasing $j$.
	This reduces the size of the projection of $x$ orthogonally to $A_j$ to $2^{d\bigOtilde{n}+\bigO{B/2^j}}$; and by assumption on the input matrix, this is also true for the part orthogonal to $A_j$.
	At the end of this process, the length of $x$ is therefore at most $2^{d\bigOtilde{n}+\bigO{B/2^i}}$.
	For efficiency, this reduction is computed on all $d$ vectors at the same time.
	Now we concatenate to $A_i$ all the reductions of the needed vectors, and use our lattice reduction algorithm on the R-factor of the QR-decomposition of this matrix.

	We now show that this matrix is well-conditioned.
	This matrix is written as
	\[ \begin{pmatrix} R_i & W \\ 0 & Z \end{pmatrix}. \]
	We remark that the reduction process did not change $Z$, so that $\|Z^{-1}\|\leq \|R^{-1}\|\leq 2^{B/d}$.
	The inverse is
	\[ \begin{pmatrix} R_i^{-1} & -R_i^{-1}WZ^{-1} \\ 0 & Z^{-1} \end{pmatrix} \]
	so that its condition number is bounded by $2^{d\bigOtilde{n}+\bigO{B/2^i}}$.

	The lattice reduction calls cost in total
	\[ \bigO{\sum_{i=1}^{\log d} \frac{2^{\omega(2^i)i}}{(\omega-2)^2} \frac{B}{2^i\log B}+2^{2i}nBi/2^i} \]
	and the cost of the oracles are bounded using the fact that $\Pi=\bigO{dnB}$:
	\[ d\bigOtilde{n^2B}. \]
	The pre-reduction computed by the algorithm has a running time of:
	\[ \bigO{\sum_{i=1}^{\log d} \frac{2^{\omega(2^i)i}}{\omega-2}\cdot \frac{d}{2^i} \frac{B}{2^i\log B}+d2^inB/2^i}. \]
	Summing these complexities leads to the announced result.
\end{proof}

One can check than knapsack matrices, or Hermite Normal Form matrices with decreasing round pivots verify the assumptions with a small $B$.

 \section{Symplectic reduction}
\label{sec:symplectic}

\subsection{On symplectic spaces and symplectic groups}
In the following, we very briefly introduce the linear theory of
symplectic geometry and establish all along this presentation the parallel
between the Euclidean and Symplectic geometries.

\subsubsection{Definitions}
A \emph{symplectic space} is a finite dimensional vector space $E$
endowed it with an antisymmetric bilinear form
$J: E\times E\rightarrow E$. We can define a natural
orthogonality relation between vectors $x,y\in E$ as being $J(x,y)=0$.
The linear transformations of $E$ letting the symplectic structure $J$
invariant is a group, called the $J$-symplectic group (or symplectic
group if the context makes $J$ clear). This group plays a similar role
to the \emph{orthogonal group} for Euclidean spaces.

\subsubsection{Darboux bases}
However on the contrary to Euclidean spaces, a symplectic space does not
possess an orthogonal basis, but instead a basis $e_1, \ldots, e_d,
f_1, \ldots, f_d$, so that for any indices $i < j$ we have
$J(e_i, e_j) = 0, J(f_i, f_j) = 0, J(e_i, f_j) = 0$ and
$J(e_i, f_i) > 0$. It implies in particular that any symplectic space
has even dimension. We demonstrated in \cref{sec:gram_schmidt} that
it is easy to transform any basis of a Euclidean space in an orthogonal
basis. This iterative construction is easily adapted to the symplectic
case.

\subsubsection{Symplectic lattice, size reduction}
We can now easily adapt the definition of a lattice to the symplectic
setting:
\begin{definition}
  A \emph{symplectic lattice} $\Lat$ is a
  finitely generated free $\ZZ$\nobreakdash-module,
  endowed with a symplectic form $J$ on the rational vector space
  $\Lat\otimes_\ZZ \QQ$.
\end{definition}
As mentioned in \cref{sec:algorithm_LL_size_reduction}, an important
tool
to reduce lattices is the \emph{size-reduction} procedure, which
can be viewed as a discretization of the Gram-Schmidt orthogonalization.
It aims at reducing the size and the condition number of the
lattice basis.  When dealing with symplectic symmetries, we can also
discretize the process to obtain a basis which is close to a Darboux
basis.

As we generalized the lattice formalism to $\order_\KK$-modules in
number fields, we can generalize straightforwardly the notions
of symplectic lattices to the algebraic context. Using the work
presented in~\cref{sec:fast_LLL_NF}, we aim
at providing a fast reduction algorithm for  $\order_\KK$-modules
using these symplectic considerations.

\subsubsection{Towards an improved algorithmic size-reduction}

The specificities of the symplectic symmetry and of the evoked
symplectic size-reduction enable a faster algorithm.

Indeed, we will demonstrate that a
local reduction within the first half of the matrix can be applied
directly to the second half. This almost divides by two the overall
complexity {\em at each descent}.

In the rest of this section, we generalize the work
of Gama, Howgrave-Graham and Nguyen~\cite{EC:GamHowNgu06}
on the use of symplectic symmetries lattices within the reduction
process.  In particular, we show that such techniques can be used for
all towers of number fields, and instead of an overall constant factor
improvement, we can gain a constant factor at each floor of the tower
and then cumulate them. Lattice reduction algorithms hinge on the two following
facts:
\begin{description}
  \item[Size reduction] We can control the bit size without changing
    the Gram-Schmidt norms.
  \item[Local reduction] Any two consecutive Gram-Schmidt norms can
    be made similar.
\end{description}
We therefore have to show that these two parts can be done while
preserving the symplectic property.

\subsection{J-Symplectic group and compatibility with extensions}

In all the following we fix an \emph{arbitrary} tower of number
fields \[ \KK_h^\uparrow = \left(\QQ = \KK_0 \subset \KK_1 \subset
\cdots \subset \KK_h\right). \] For any $1\leq i \leq h$ we denote by
$d_h$ the relative degree of $\KK_{h}$ over $\KK_{h-1}$. On any of these
number fields, we can define a simple symplectic form, which derives
from the determinant
form:
\begin{definition}
  Let $\KK$ be a field, and set $J$ to be an antisymmetric bilinear form
  on $\KK^2$. A matrix $M\in\KK^{2\times 2}$ is said to be
  $J$-\emph{symplectic} (or simply symplectic if there is no ambiguity
  on $J$) if it lets the form $J$ invariant, that is if $J\circ M=J$.
\end{definition}

Let us instantiate this definition in one of the fields of the tower
$\KK_h^\uparrow$ on the $2\times2$-determinant form.
Let $J_{h}$ be the antisymmetric bilinear form on $\KK_h^2$ which is given
as the determinant of $2\times 2$ matrices in $\KK_h$, i.e.
\[
  J_{h}\left(\begin{pmatrix} x_0 \\ x_1
      \end{pmatrix},\begin{pmatrix} y_0 \\
    y_1
\end{pmatrix}\right)=x_0y_1-x_1y_0.
\]

\begin{remark}
  In the presented case, $M$ is $J_{h}$-symplectic iff $\det M=1$.
\end{remark}
Notice that we can always scale a basis so that this condition is verified.

We descend the form $J_{h}$ to $\KK_{h-1}$ by composition with a
non-trivial linear form $\KK_h \rightarrow \KK_{h-1}$, for instance by
using the relative trace, that is
$J_{h}'=\tr_{\KK_h/\KK_{h-1}}\!\! \circ\, J_{h}$.  We then
extend the definition of symplectism to $\KK_{h-1}^{2d_h}$ by stating that
a $2d_h\times 2d_h$ matrix $M'$ is symplectic if it preserves
the $J_{h}'$ form, that is if $J'_h \circ M'=J'_h$. This construction is
tailored to be compatible with the descent of a matrix to $\KK_{h-1}$ in
the following sense:

\begin{lemma}
  Let $M$ be a $2\times 2$ matrix over $\KK_h$ which is
  $J_{h}$-symplectic, then its descent $M'\in\KK_{h-1}^{2d_h\times
	2d_h}$ is $J_{h}'$-symplectic.
\end{lemma}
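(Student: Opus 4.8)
The plan is to reduce the statement to two purely formal facts: that ``descending a matrix'' means nothing more than reading the $\KK_{h-1}$-linear endomorphism ``left multiplication by $M$'' in a fixed $\KK_{h-1}$-basis of $\KK_h^2$, and that $J_h'$ is, by construction, $J_h$ post-composed with the $\KK_{h-1}$-linear map $\tr_{\KK_h/\KK_{h-1}}$. First I would fix the $\KK_{h-1}$-basis of $\KK_h$ underlying \algName{Descend} and let $\iota\colon \KK_h^2 \to \KK_{h-1}^{2d_h}$ be the induced $\KK_{h-1}$-linear isomorphism (with the $2d_h$ coordinates ordered as in \algName{Descend}). By the very definition of the descent, $M'$ is the matrix of $v\mapsto Mv$ read through $\iota$, that is $M'\circ\iota = \iota\circ(M\,\cdot\,)$, equivalently $\iota^{-1}\circ M' = (M\,\cdot\,)\circ\iota^{-1}$. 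Likewise, unwinding $J_h' = \tr_{\KK_h/\KK_{h-1}}\circ\, J_h$ through $\iota$, the form it induces on $\KK_{h-1}^{2d_h}$ is $(x,y)\mapsto \tr_{\KK_h/\KK_{h-1}}\!\bigl(J_h(\iota^{-1}x,\iota^{-1}y)\bigr)$.

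The computation is then immediate. For $x,y\in\KK_{h-1}^{2d_h}$, put $w=\iota^{-1}x$ and $z=\iota^{-1}y$ in $\KK_h^2$; then
\[
J_h'(M'x,M'y)=\tr_{\KK_h/\KK_{h-1}}\!\bigl(J_h(\iota^{-1}M'x,\iota^{-1}M'y)\bigr)=\tr_{\KK_h/\KK_{h-1}}\!\bigl(J_h(Mw,Mz)\bigr).
\]
Since $M$ is $J_h$-symplectic we have $J_h(Mw,Mz)=J_h(w,z)$ — concretely $J_h(Mw,Mz)=\det(M)\,J_h(w,z)=J_h(w,z)$, because for a $2\times 2$ matrix the $J_h$-symplectic condition is exactly $\det M = 1$ — and hence $J_h'(M'x,M'y)=\tr_{\KK_h/\KK_{h-1}}\!\bigl(J_h(w,z)\bigr)=J_h'(x,y)$. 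As $x,y$ were arbitrary, $J_h'\circ M' = J_h'$, i.e. $M'$ is $J_h'$-symplectic.

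I do not anticipate a genuine obstacle: the statement is essentially a bookkeeping exercise in restriction of scalars, and the step above that does all the work is the $\KK_{h-1}$-linearity of the trace. The only points that require care are that the coordinate ordering used by \algName{Descend} and the identification $\iota$ used to read $J_h'$ on $\KK_{h-1}^{2d_h}$ are literally the same (any other consistent choice simply conjugates both $M'$ and the Gram matrix of $J_h'$ by one and the same change-of-basis matrix, leaving the relation $J_h'\circ M' = J_h'$ untouched), and that the descent is compatible with the module action, which is precisely the identity $M'\circ\iota=\iota\circ(M\,\cdot\,)$ isolated in the first paragraph. Once these conventions are pinned down, the invariance of $J_h$ under $M$ transfers verbatim to the invariance of $J_h'$ under $M'$, and nothing further is needed.
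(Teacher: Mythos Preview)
Your proof is correct, and in fact the paper states this lemma without proof, so there is nothing to compare against. Your argument is the natural one: the descent is precisely restriction of scalars, and since $J_h'$ factors through $J_h$ via the $\KK_{h-1}$-linear trace, invariance of $J_h$ under $M$ immediately transports to invariance of $J_h'$ under $M'$.
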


\subsection{Towards module transformations compatible with
$J$-symplectism}

Before exposing the transformation matrices in our
size-reduction process of symplectic lattices, we give an insight on
these techniques coming from the Iwasawa decomposition of Lie groups.

\subsubsection{On the Iwasawa decomposition}
The \emph{Iwasawa decomposition} is a factorization of any semisimple Lie
group in three components, which generalizes the decomposition of
$\textrm{GL}(n,\RR)$ in the product $KAN$ where $K = O(n,\RR)$ is the
orthogonal group, $A$ is the group of diagonal matrices with positive
coefficients and $N$ is the unipotent group consisting of upper triangular
matrices with 1s on the diagonal. This decomposition of $\textrm{GL}(n,\RR)$
arises directly from the Gram-Schmidt decomposition of any real matrix and
extracting the diagonal of its $R$ part.
The $J-$symplectic group defined here is a semisimple Lie group and thus is
subject to Iwasawa decomposition. We aim at using an \emph{effective} version
of the Iwasawa decomposition.
In order to compute effectively such a decomposition, we need to
find a generating set of \emph{elementary} transformations over
bases, which generalizes the operators of transvections and swaps in
the general linear case.

We start by treating a simpler case: the Kummer-like extensions.
The general case is covered in~\cref{app:sympallnf}.

\subsubsection{A simple case: Kummer-like extensions
$\KK[X]/(X^{d_h}+a)$}
We define $R_{d_h}$ as the reverse diagonal of $1$ in a square matrix of
dimension $d_h$.

In this section, we use the notation $A^s$ as a shorthand for
$R_{d_h}A^TR_{d_h}$, which corresponds to the reflection across the
antidiagonal, that is exchanging the coefficients $A_{i,j}$ with
$A_{d_h+1-i, d_h+1-j}$.
We proceed here by adapting the work of Sawyer~\cite{SAWYER}.
Suppose that the defining polynomial of $\KK_h/\KK_{h-1}$ is
$X^{d_h}+a$.
Recall that $J_{h}$ is the $2\times2$-determinant form over
$\KK_h^2$. We can compose it by the linear form
\[
  \left|
  \begin{array}{rcl}
    \KK_h \cong \KK_{h-1}[X]/(X^{d_h}+a) & \longrightarrow &
    \KK_{h-1}\\
	  y & \longmapsto & \tr_{\KK_h/\KK_{h-1}}(\frac{Xy}{d_h a})
  \end{array}
\right.,
\]
  to construct the
matrix $J'_{h}$, which now becomes
\[J_{h}' = \begin{pmatrix} 0 & R_{d_h} \\ -R_{d_h} &
0\end{pmatrix}\]
in the power basis. In this particular setting we
retrieve the instantiation of~\cite{EC:GamHowNgu06}.
In particular:
\begin{lemma}
	Fix a basis of the symplectic space where
	the matrix corresponding to $J'_h$ is $\begin{pmatrix} 0 & R_{d_h} \\ -R_{d_h} &
0\end{pmatrix}$.
	Then, for any $M$ a $J'_h$-symplectic matrix and $QR$ its QR decomposition,
	both $Q$ and $R$ are $J'_h$-symplectic.
\end{lemma}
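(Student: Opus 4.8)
The plan is to reformulate the symplectic relation through the ``$J$-adjoint'' involution and then invoke uniqueness of the QR decomposition. Write $J=\begin{pmatrix} 0 & R_{d_h}\\ -R_{d_h} & 0\end{pmatrix}$ for the fixed Gram matrix of $J'_h$; since $R_{d_h}^2=\id$, this $J$ is antisymmetric, orthogonal ($J^TJ=\id$), and satisfies $J^2=-\id$, so that $J^{-1}=J^T=-J$. For a $2d_h\times 2d_h$ matrix $A$, set $\tau(A)=J^{-1}A^TJ$; a one-line computation gives $\tau(AB)=\tau(B)\tau(A)$, and the relation $M^TJM=J$ rewrites as $\tau(M)M=\id$. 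Thus $M$ is $J'_h$-symplectic if and only if $\tau(M)=M^{-1}$, and in particular such an $M$ is invertible.

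First I would establish two stability properties of $\tau$. (i) It preserves the orthogonal group: $\tau$ is transpose followed by conjugation by $J^{-1}$, and both operations preserve orthogonality because $J\in O(2d_h)$; explicitly $\tau(Q)^T\tau(Q)=J^TQ\big((J^{-1})^TJ^{-1}\big)Q^TJ=J^TQQ^TJ=J^TJ=\id$. (ii) It preserves the group of upper-triangular matrices with positive diagonal: writing such a matrix in $d_h\times d_h$ blocks, a direct block computation yields
\[
  \tau\!\begin{pmatrix} A & B\\ 0 & C\end{pmatrix}=\begin{pmatrix} C^s & -B^s\\ 0 & A^s\end{pmatrix},
\]
and the anti-transpose $X\mapsto X^s=R_{d_h}X^TR_{d_h}$ keeps a matrix upper triangular while merely reversing the order of its (positive) diagonal entries.

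The core step would then be a uniqueness argument. Let $M=QR$ be the QR decomposition of a $J'_h$-symplectic matrix. Using $\tau(M)=M^{-1}$ together with $\tau(QR)=\tau(R)\tau(Q)$,
\[
  M^{-1}=\tau(R)\,\tau(Q),\qquad\text{hence}\qquad M=\tau(Q)^{-1}\tau(R)^{-1}.
\]
By (i) the factor $\tau(Q)^{-1}$ is orthogonal, and by (ii) the factor $\tau(R)^{-1}$ is upper triangular with positive diagonal; so $\tau(Q)^{-1}\tau(R)^{-1}$ is \emph{a} QR decomposition of the invertible matrix $M$. Uniqueness of the QR decomposition forces $Q=\tau(Q)^{-1}$ and $R=\tau(R)^{-1}$, i.e.\ $\tau(Q)=Q^{-1}$ and $\tau(R)=R^{-1}$, which is precisely the assertion that $Q$ and $R$ are $J'_h$-symplectic. (Either of the two identities already implies the other, since $R=Q^{-1}M$ is then a product of symplectic matrices.)

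The delicate point, and the one I expect to require the most care, is that ``orthogonal'' in the QR decomposition refers to the Hermitian structure on $\KK_{h-1}$-modules, so $Q$ is really unitary, whereas the symplectic relation is the plain-transpose identity $M^TJM=J$. I would resolve this by working place by place: because $J$ has rational entries it is fixed by every embedding $\sigma\colon\KK_{h-1}\to\CC$, and complex conjugation on $\KK_{h-1}$ intertwines with complex conjugation on $\CC$ through each $\sigma$, so applying $\sigma$ turns the claim into the corresponding statement over $\CC$, where $\tau$ still preserves the unitary group (again since $J$ is real, so conjugation commutes with $A\mapsto J^{-1}A^TJ$) and the argument above runs verbatim. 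A matrix identity valid at every complex place is valid over $\KK_{h-1}\otimes\RR$, which finishes the proof; beyond this, all that remains is the routine verification of the two block identities of step one.
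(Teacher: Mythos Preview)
Your argument is correct and, in fact, more informative than the paper's own proof, which consists of a bare citation to Sawyer's explicit Iwasawa decomposition for the classical groups. Where the paper outsources the work to a structural result on semisimple Lie groups, you give a self-contained elementary proof: the map $\tau(A)=J^{-1}A^TJ$ is an anti-automorphism that stabilises both the unitary group (because $J$ is real orthogonal) and the group of upper-triangular matrices with positive diagonal (because anti-transpose preserves that shape), so applying $\tau(\cdot)^{-1}$ to $M=QR$ produces another QR factorisation and uniqueness finishes. This is exactly the standard ``involution fixes the factors'' trick, and it buys you the result without any Lie theory. The only place to be a little more careful is the Hermitian-versus-transpose issue you flag: your place-by-place reduction works, but it is cleaner to observe directly that since $\overline{J}=J$ and $Q\overline{Q}^T=\id$, one has $\overline{\tau(Q)}^T\tau(Q)=J^T\overline{Q}Q^TJ=\id$, so $\tau$ preserves unitarity globally without passing to embeddings.
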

\begin{proof}
	Direct from the explicit Iwasawa decomposition given by~\cite{SAWYER}.
\end{proof}

\begin{lemma}[Elementary $J_{h}'$-symplectic matrices]
  \label{lem:elementary_computations} \hfill \break
  \begin{itemize}
    \item  For any $A\in\textrm{GL}(d_h,\KK_h)$,
  \[ \begin{pmatrix} A & 0 \\ 0 & A^{-s} \end{pmatrix} \]
  is $J_{h}'$-symplectic.
\item For any $A\in\textrm{GL}(2,\KK_h)$ with $\det A=1$ the block matrix
  \[ \begin{pmatrix} \Id_{d_h-1} & 0 & 0 \\
      0 & A & 0 \\ 0 & 0 &
  \Id_{d_h-1} \end{pmatrix} \]
 is $J_{h}'$ symplectic.
  \end{itemize}
\end{lemma}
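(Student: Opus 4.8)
The plan is a direct verification, in each of the two cases, of the defining identity $M^{T} J_h' M = J_h'$, which is exactly what ``$M$ is $J_h'$-symplectic'' means with the matrix conventions used here (this unwinds the same way as the rank-two remark: a $2\times 2$ matrix is $J_h$-symplectic iff $M^{T}\left(\begin{smallmatrix} 0 & 1 \\ -1 & 0\end{smallmatrix}\right)M = (\det M)\left(\begin{smallmatrix} 0 & 1 \\ -1 & 0\end{smallmatrix}\right) = \left(\begin{smallmatrix} 0 & 1 \\ -1 & 0\end{smallmatrix}\right)$, i.e.\ iff $\det M = 1$). Write $R := R_{d_h}$ and record the two elementary facts $R^{T} = R$ and $R^{2} = \Id_{d_h}$, both read off from $R_{i,j} = [\,i+j = d_h+1\,]$; combined with the definition $A^{s} = R A^{T} R$ they give $A^{-s} := (A^{s})^{-1} = R (A^{-1})^{T} R$. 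Conceptually, both items are the algebraic analogues of the classical elementary generators of the symplectic group attached to the standard form $\left(\begin{smallmatrix} 0 & I \\ -I & 0\end{smallmatrix}\right)$, with the antidiagonal matrix $R_{d_h}$ playing the role of the identity block; but an explicit computation is just as short.

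For the first item, set $M = \left(\begin{smallmatrix} A & 0 \\ 0 & A^{-s}\end{smallmatrix}\right)$ and multiply out, using $J_h' = \left(\begin{smallmatrix} 0 & R \\ -R & 0\end{smallmatrix}\right)$:
\[
  M^{T} J_h' M = \begin{pmatrix} 0 & A^{T} R\, A^{-s} \\ -(A^{-s})^{T} R\, A & 0 \end{pmatrix}.
\]
It therefore suffices to show the upper-right block equals $R$ and the lower-left block equals $-R$. Substituting $A^{-s} = R (A^{-1})^{T} R$ and using $R^{2} = \Id$, the upper-right block becomes $A^{T} (A^{-1})^{T} R = (A^{-1}A)^{T} R = R$; substituting $(A^{-s})^{T} = R A^{-1} R$ (which uses $R^{T} = R$) and $R^{2} = \Id$ again, the lower-left block is $-R A^{-1} R\, R\, A = -R A^{-1} A = -R$. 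Only invertibility of $A$ is used. This settles the first item.

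For the second item, the key step is to rewrite $J_h'$ in the block decomposition $2d_h = (d_h - 1) + 2 + (d_h - 1)$ matching $M = \mathrm{Diag}(\Id_{d_h-1}, A, \Id_{d_h-1})$. Reading $\left(\begin{smallmatrix} 0 & R \\ -R & 0\end{smallmatrix}\right)$ entrywise, its nonzero entries are exactly those indices $i,j$ with $i + j = 2d_h + 1$, of value $+1$ for $i \le d_h$ and $-1$ for $i > d_h$. Consequently: the central $2\times 2$ block (rows and columns $d_h, d_h+1$) is precisely $\left(\begin{smallmatrix} 0 & 1 \\ -1 & 0\end{smallmatrix}\right)$; no nonzero entry couples this central block to either outer block; the $(\mathrm{I}, \mathrm{III})$ and $(\mathrm{III}, \mathrm{I})$ blocks equal $R_{d_h-1}$ and $-R_{d_h-1}$ respectively; and all remaining blocks vanish. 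Now use that for a block-diagonal $M = \mathrm{Diag}(M_1, M_2, M_3)$ with the same block structure one has $(M^{T} J_h' M)_{ab} = M_a^{T} (J_h')_{ab} M_b$: the vanishing blocks stay zero, the $(\mathrm{I},\mathrm{III})$ and $(\mathrm{III},\mathrm{I})$ blocks are sandwiched between identity blocks and hence unchanged, and the central block becomes $A^{T} \left(\begin{smallmatrix} 0 & 1 \\ -1 & 0\end{smallmatrix}\right) A = (\det A)\left(\begin{smallmatrix} 0 & 1 \\ -1 & 0\end{smallmatrix}\right) = \left(\begin{smallmatrix} 0 & 1 \\ -1 & 0\end{smallmatrix}\right)$ by the rank-two identity and $\det A = 1$. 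Hence $M^{T} J_h' M = J_h'$.

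Both arguments are routine linear algebra; the only place that needs some care is the bookkeeping in the second item --- correctly transcribing the block structure of $J_h'$ relative to the $(d_h-1) + 2 + (d_h-1)$ splitting and checking that conjugation by $\mathrm{Diag}(\Id_{d_h-1}, A, \Id_{d_h-1})$ leaves the two off-central blocks untouched --- so there is no genuine obstacle beyond careful index-chasing.
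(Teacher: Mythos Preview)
Your proof is correct and is exactly what the paper has in mind: the paper's own proof is simply ``By direct computation,'' and you have carried out that computation explicitly and carefully in both cases. There is nothing to add.
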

\begin{proof} By direct computation. \end{proof}

 We now turn to the shape of triangular $J_{h}'$ symplectic matrices.

 \begin{lemma}
   \label{lem:triang_struct}
   Block triangular symplectic matrices are exactly the matrices of the
   form
   \[ \begin{pmatrix} A & AU \\ 0 & A^{-s} \end{pmatrix} \]
   where $U=U^s$.
 \end{lemma}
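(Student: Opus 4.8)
The plan is to unwind the definition of $J_h'$-symplectism into block matrix equations and solve them. Write a block upper triangular matrix as $M=\begin{pmatrix} A & B \\ 0 & D\end{pmatrix}$ with $A,B,D$ of size $d_h\times d_h$, and recall that $M$ is $J_h'$-symplectic exactly when $M^{T}J_h'M=J_h'$, where $J_h'=\begin{pmatrix} 0 & R \\ -R & 0\end{pmatrix}$ and we abbreviate $R:=R_{d_h}$. Using the identities $R^{T}=R$ and $R^{2}=\Id_{d_h}$, I would expand $M^{T}J_h'M$ and compare blocks with $J_h'$; this produces exactly three conditions: $A^{T}RD=R$ from the upper-right block, its transpose $D^{T}RA=R$ from the lower-left block (hence redundant), and $B^{T}RD=D^{T}RB$ from the lower-right block, i.e.\ the matrix $B^{T}RD$ must be symmetric.

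From $A^{T}RD=R$ and invertibility of $R$, taking determinants shows both $A$ and $D$ are invertible, and rearranging gives $D=R(A^{T})^{-1}R=A^{-s}$, which is already the asserted lower-right block. Substituting $D=A^{-s}$ into the symmetry condition and setting $U:=A^{-1}B$ (so $B=AU$), a short computation using $R^{2}=\Id_{d_h}$ collapses $B^{T}RD$ to $U^{T}R$; requiring $U^{T}R$ to be symmetric amounts to $U^{T}R=RU$, and left-multiplying by $R$ rewrites this as $U=RU^{T}R=U^{s}$. This settles the ``only if'' direction. For the converse, given any invertible $A$ and any $U$ with $U=U^{s}$, I would plug $B=AU$ and $D=A^{-s}$ back into the three equations and verify, again via $R^{T}=R$ and $R^{2}=\Id_{d_h}$, that they hold, so $\begin{pmatrix} A & AU \\ 0 & A^{-s}\end{pmatrix}$ is indeed $J_h'$-symplectic. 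Combining the two directions yields the stated characterization.

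There is no genuine obstacle; the proof is a short, if bookkeeping-heavy, linear-algebra computation. The two points deserving attention are (i) consistently distinguishing the ordinary transpose from the antidiagonal reflection $(\cdot)^{s}$, keeping at hand the small toolbox $R^{T}=R$, $R^{2}=\Id_{d_h}$, and $(A^{-1})^{s}=(A^{s})^{-1}=A^{-s}$; and (ii) observing that block-triangularity together with the single equation $A^{T}RD=R$ already forces $A$ (and hence $D$) to be invertible, which legitimizes introducing $U=A^{-1}B$. Finally, I would remark that this description dovetails with~\cref{lem:elementary_computations}: the diagonal symplectic matrices $\begin{pmatrix} A & 0 \\ 0 & A^{-s}\end{pmatrix}$ together with the unipotent ones $\begin{pmatrix}\Id_{d_h} & U \\ 0 & \Id_{d_h}\end{pmatrix}$ with $U=U^{s}$ generate all block-triangular symplectic matrices --- the upper-triangular ($AN$) factor in the Iwasawa decomposition discussed above.
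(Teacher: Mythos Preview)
Your proposal is correct and follows essentially the same approach as the paper: expand the block condition $M^{T}J_h'M=J_h'$ and read off the constraints. The only cosmetic difference is that the paper first left-multiplies by the block-diagonal symplectic $\begin{pmatrix}A&0\\0&A^{-s}\end{pmatrix}$ (via \cref{lem:elementary_computations}) to reduce to the case $A=\Id_{d_h}$ before identifying blocks, whereas you carry $A$ through the computation directly; your version also makes the invertibility of $A$ and the converse direction explicit, which the paper leaves implicit.
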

 \begin{proof}
   Let $M=\begin{pmatrix} A & U \\ 0 & B \end{pmatrix}$ a block
   triangular
   matrix. By \cref{lem:elementary_computations}, the action of the
   block diagonal matrices $\begin{pmatrix} A & 0 \\ 0 & A^{-s}
   \end{pmatrix}$ by left multiplication preserves the
   $J_{h}'$-symplectic
   group, so that without loss of generality we can suppose that $A$ is
   the identity matrix.  Identifying the blocks of $M^TJ_{h}'M =
   J_{h}'$
   yields two relations:
   \begin{itemize} \item $R_{d_h}B = R_{d_h}$, entailing
   $B=\Id_{d_h}$,
      \item $B^TR_{d_h}U-U^TR_{d_h}B=0$, so that $R_{d_h}U =
        U^TR_{d_h}$, and as such
        $U = U^s$.
   \end{itemize}
 \end{proof}

\subsubsection{Size-reduction of a $J_{h}'$-symplectic matrix}
\label{sec:size_reduction_JR}
A direct consequence of \cref{lem:elementary_computations} is that
the local reductions occurring during the reduction, that is swaps and
transvections can preserve the $J_{h}'$-symplectism by using the
corresponding previous constructions.

Consider $X$ a $J_{h}'$-symplectic matrix, we want to efficiently
\emph{size-reduce} $X$ using the symmetries existing by symplectism.
Let first take the $R$ part of the QR-decomposition of $X$ and make
appear the factors $A$ and $U$ as in \cref{lem:triang_struct}.

Then we can focus on the left-upper matrix $A$ and size-reducing it into
a matrix $A'$. Each elementary operations performed is also symmetrically
performed on $A^s$ to retrieve $(A')^s$. Eventually the size reduction
is completed by dealing with the upper-right block, which is done by
performing a global multiplication by
\[
  \begin{pmatrix}
    \Id_{d_h} & -\lfloor U \rceil \\
    0 & \Id_{d_h}
  \end{pmatrix}.
\]
The corresponding algorithm is given in \cref{alg:symplectic_size_reduce},
and uses the ``classical'' \algName{Size-Reduce}~procedure as a subroutine.
The recursive reduction algorithm using the symplectic structure is then the
exact same algorithm as \cref{alg:fast_lll}, where the
size-reduction call of line 4 is replaced by \algName{Symplectic-Size-Reduce}.

\begin{boxedAlgorithm}[algotitle={Symplectic-Size-Reduce},
  label=alg:symplectic_size_reduce]
  \begin{algorithm}[H]
    \Input{$R$-factor of the QR decomposition of
    a  $J'_h$-symplectic matrix $M\in\order_{\KK_h}^{d\times d}$}
    \Output{A $J'_h$-symplectic unimodular transformation $U$ representing the size-reduced
    basis obtained from $M$.}
    \BlankLine
    Set $A,U$ such that $\begin{pmatrix} A & AU \\ 0 & A^{-s} \end{pmatrix} =
    R$\;
    $V \gets$ \algName{Size-Reduce}$(A)$\;
    \Return $\begin{pmatrix} V & -V\left\lfloor U\right\rceil \\ 0 & V^{-s} \end{pmatrix}$
  \end{algorithm}
\end{boxedAlgorithm}

The size reduction property on $A'$ implies that both $A'$ and $A'^{-1}$
are small, and therefore it is easy to check that the same is true for
the now reduced $R'$ and of course for the corresponding size reduction
of the matrix $X$ itself.

This approach admits several algorithmic optimizations:
\begin{itemize}
  \item Only the first half of the matrix $R$ is actually needed to
    perform the computation since we can retrieve the other parts.
    Indeed, with the equation $QR=X$, $R$ is upper triangular and it
    only depends on the first half of $Q$.
  \item Further, we compute only the part above the antidiagonal of
    $AU$.  This is actually enough to compute the part above the
    antidiagonal of $A^{-1}(AU)$, which is persymmetric.
  \item An interesting implication is that since we need to compute only
    half of the QR decomposition, we need (roughly) only half the
    precision.
\end{itemize}

\subsection{Improved complexity}

We analyze the algorithm of the previous section with the size-reduction
of \cref{sec:size_reduction_JR}.
Using \cref{lem:elementary_computations}, we can use the transition matrix found
after a reduction in the first half of the matrix to directly reduce the second half
of the matrix.
This means that in symplectic reduction, we have recursive calls only for the first $d_h$
steps of the tour.
These are the only modifications in our algorithm.
It is clear that, during the entire algorithm, the matrix $R$ is symplectic.

The notation used in this section
are the same as in~\cref{sec:algorithm_complexity}, with the notable
exception that we may use here a large $\varepsilon$---recall that it was fixed
to $1/2$ in all of \cref{sec:algorithm_complexity}. We also assume
that $\alpha>\sqrt{\log n_h\log \log n_h}$ for the sake of simplicity.
We use here the modified potential where we consider only the first half of
the matrix:
\[
\Pi=\sum_{i=1}^{d_h}
\left(d_h+1-i\right)\log\norm_{\KK_h/\QQ}(R_{i,i}).
\]

To complete the proof we need an \emph{experimentally validated
heuristic} on the repartition of the potential during the reduction.

\begin{heuristic}
  \label{heur:symplectic}
  The potential $\Pi$ is, at the end of \algName{Reduce}, always
  larger than the potential of an orthogonal matrix with the same
  volume.
\end{heuristic}

\begin{remark}
  This heuristic hinges on the fact the sequence of
  $\norm_{\KK_h/\QQ}(R_{i,i})$ is non-increasing, which is always the case
  in practice for random lattices.
\end{remark}

We now give a better bound on the increase in bit sizes, which is a
refinement of \cref{lem:sum_of_precisions}. The proof is done in the
exact same manner.
\begin{lemma}
\label{lem:lemma10}
  Suppose the input matrix $M$ is a descent of a $2\times 2$ triangular
  matrix $\begin{pmatrix} u & v \\ 0 & w \end{pmatrix}$, where the diagonal
  elements have been balanced in the sense of
  \cref{thm:unites}. Under \cref{heur:symplectic}, the sum of all bit
  sizes used in the
  recursive calls at the top-level is at most
  \[pd_h^2\left(1+\frac{1}{\varepsilon}\right)
  \left(\frac12+\frac{1}{d_h}+
\bigO{\sqrt{\frac{\log n_h\log \log
n_h}{n_h}}}\right),\]
with
\[
p=\log\frac{\max_{\sigma:\KK_h\rightarrow\CC, R_{i,i}\in R}
  \sigma (R_{i,i})}{\min_{\sigma:\KK_h\rightarrow\CC, R_{i,i}\in R} \sigma (R_{i,i})} \geq n_hd_h,
\]
where the $\sigma$ runs over the possible field embeddings and the $R_{i,i}$
are the diagonal values of the $R$ part of the $QR$-decomposition of $M$.
\end{lemma}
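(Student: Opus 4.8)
The plan is to reprise the proof of \cref{lem:sum_of_precisions} almost verbatim, keeping track of the multiplicative constants that the earlier argument only needed up to $\bigO{\cdot}$. The first thing to record are the two structural features peculiar to the symplectic run of \cref{alg:fast_lll} (with line~4 replaced by \algName{Symplectic-Size-Reduce}). By \cref{lem:elementary_computations}, a local reduction performed on the projected plane at an index $j\le d_h$ is transported to the mirror index $2d_h+1-j$ by an explicit symplectic transition matrix, so each round issues recursive calls only at the indices $j=1,\dots,d_h$, and the quantity governing their precision is the truncated potential $\Pi=\sum_{i=1}^{d_h}(d_h+1-i)\log\norm_{\KK_h/\QQ}(R_{i,i})$. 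Moreover, the algorithm keeps $M$, hence its $R$-factor, $J'_h$-symplectic, so \cref{lem:triang_struct} forces $R$ to have the shape $\left(\begin{smallmatrix} A & AU\\ 0 & A^{-s}\end{smallmatrix}\right)$, which yields the \emph{exact} reciprocity $R_{i,i}\,R_{2d_h+1-i,2d_h+1-i}=1$ in $\KK_h$, hence $\sigma(R_{i,i})\sigma(R_{2d_h+1-i,2d_h+1-i})=1$ for every embedding $\sigma$; this reciprocity will be used both to bound $\Pi$ at the start and to pin down its minimum at the end. Finally, $\Pi$ is untouched by \algName{Symplectic-Size-Reduce} and by the unit divisions (which preserve every $\norm_{\KK_h/\QQ}(R_{i,i})$) and drops by $\delta_{i,j}\ge 0$ at each recursive local reduction.

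Second, the per-call bookkeeping is literally that of \cref{lem:sum_of_precisions}. A reduction at $j\le d_h$ in round $i$ is triggered only when $L:=\tfrac1{n_h}\log\norm_{\KK_h/\QQ}(R^{(i)}_{j,j}/R^{(i)}_{j+1,j+1})\ge 2(1+\varepsilon)\alpha n_h$, up to a $2^{-\Omega(p)}$ slack; it can then be carried out with precision $p_{i,j}=\bigO{L+\sqrt{n_h\log n_h\log\log n_h}}=L\bigl(1+\bigO{\sqrt{(\log n_h\log\log n_h)/n_h}}\bigr)$ by \cref{thm:unites} and the precision results of \cref{app:precision} (here the balancing hypothesis is what keeps the local matrices well-conditioned, cf.\ \cref{cor:size_reduce}); and by \cref{heur:lifting_size} it decreases the potential by at least $\delta_{i,j}=n_h\bigl(\tfrac L2-\alpha n_h\bigr)-2^{-\Omega(p)}$. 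Substituting $\alpha n_h^2/L\le n_h/(2(1+\varepsilon))$ gives the ratio $p_{i,j}\le \tfrac{2(1+\varepsilon)}{n_h\varepsilon}\bigl(1+\bigO{\sqrt{(\log n_h\log\log n_h)/n_h}}\bigr)\,\delta_{i,j}$.

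Third I would sum and close. Telescoping, $\sum_{i,j}\delta_{i,j}=\Pi^{(1)}-\Pi^{(\rho+1)}$. At this point \cref{heur:symplectic} enters: after the scaling that makes the $2\times2$ input $J_h$-symplectic, the reciprocity shows that the fully orthogonal symplectic matrix of the same (unit-norm) volume has every $\norm_{\KK_h/\QQ}(R_{i,i})=1$, hence potential $0$, so $\Pi^{(\rho+1)}\ge 0$ and $\sum_{i,j}\delta_{i,j}\le\Pi^{(1)}$. For $\Pi^{(1)}$, the embedding-level reciprocity $\sigma(R_{i,i})\sigma(R_{2d_h+1-i,2d_h+1-i})=1$ forces $\max_{i,\sigma}|\sigma(R_{i,i})|=2^{p/2}$, whence $\log\norm_{\KK_h/\QQ}(R_{i,i})\le n_hp/2$ for the first-half indices and $\Pi^{(1)}\le\tfrac{n_hp}{2}\sum_{i=1}^{d_h}(d_h+1-i)=\tfrac{n_hp\,d_h(d_h+1)}{4}$. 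Combining the three estimates,
\[
\sum_{i,j}p_{i,j}\le \frac{2(1+\varepsilon)}{n_h\varepsilon}\cdot\frac{n_hp\,d_h(d_h+1)}{4}\Bigl(1+\bigO{\sqrt{\tfrac{\log n_h\log\log n_h}{n_h}}}\Bigr)= p\,d_h^2\Bigl(1+\tfrac1\varepsilon\Bigr)\Bigl(\tfrac12+\tfrac1{d_h}+\bigO{\sqrt{\tfrac{\log n_h\log\log n_h}{n_h}}}\Bigr),
\]
after absorbing $\tfrac1{2d_h}$ into $\tfrac1{d_h}$.

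The step I expect to be the crux is isolating the factor $\tfrac12$: this is exactly where the symplectic geometry pays off, and it rests on the \emph{exact} reciprocity $R_{i,i}R_{2d_h+1-i,2d_h+1-i}=1$ (so that the spread parameter $p$, measured across all $2d_h$ pivots, is precisely twice the spread of the first half that $\Pi$ sees) together with \cref{heur:symplectic}, which is needed to forbid $\Pi^{(\rho+1)}$ from drifting below $0$ and thereby inflating $\sum_{i,j}\delta_{i,j}$ beyond $\Pi^{(1)}$. Everything else is the bookkeeping already carried out in \cref{lem:sum_of_precisions}.
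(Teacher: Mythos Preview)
Your argument is correct and tracks the paper's proof closely: bound $\Pi^{(1)}$, invoke \cref{heur:symplectic} for $\Pi^{(\rho+1)}\ge 0$, reuse the per-call ratio $p_{i,j}/\delta_{i,j}\le\tfrac{2(1+\varepsilon)}{n_h\varepsilon}(1+o(1))$ from the proof of \cref{lem:sum_of_precisions}, and telescope.

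The one place you diverge is in how you bound $\Pi^{(1)}$. The paper leans on the balancing hypothesis for $u$: after scaling to $\norm_{\KK/\QQ}\bigl(\prod_i R_{i,i}\bigr)=1$, \cref{thm:unites} yields $\|R_{i,i}\|\le\|u\|=2^{p/2+\bigO{\sqrt{n_hd_h\log(n_hd_h)\log\log(n_hd_h)}}}$, and the extra term is absorbed using $p\ge n_hd_h$. You instead scale all the way to $J_h$-symplecticity and invoke the exact reciprocity $R_{i,i}\,R_{2d_h+1-i,2d_h+1-i}=1$ from \cref{lem:triang_struct} to obtain $\max_{i,\sigma}|\sigma(R_{i,i})|=2^{p/2}$ on the nose. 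This is a slightly cleaner path to the same $\Pi^{(1)}\le\tfrac{n_hp\,d_h(d_h+1)}{4}$, and it makes the provenance of the key factor $\tfrac12$ more transparent: the spread $p$ across all $2d_h$ pivots is exactly twice the spread seen by the half-potential. Both routes land on the same final bound; the paper in fact also arrives at $\tfrac12+\tfrac1{2d_h}$ before relaxing to the stated $\tfrac1{d_h}$.
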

\begin{proof}
  Without loss of generality, up to scaling, we can assume that
  \[\norm_{\KK_{h+1}/\QQ}(u)\norm_{\KK_{h+1}/\QQ}(w) =
  \norm_{\KK_h/\QQ}\left(\prod_i R_{i,i}\right)=1.\]
  Therefore, with our choice of $p$, we have at the beginning
\[ \|R_{i,i}\|\leq \|u\|\in 2^{p/2+\bigO{\sqrt{n_hd_h\log(n_hd_h)\log \log(n_hd_h)}}}. \]
  Thus we have :
\begin{align*}
	\Pi= & \frac{n_hd_h(d_h+1)}{4}\left(p+\bigO{\sqrt{n_hd_h\log(n_hd_h)\log \log(n_hd_h)}}\right) \\
	= &\frac{n_hd_h(d_h+1)}{4}p\left(1+\bigO{\sqrt{\frac{\log n_h\log \log
  n_h}{n_h}}}\right),
\end{align*}
since by hypothesis, $p>n_hd_h$.
  And then by, \cref{heur:symplectic}, we have $\Pi\geq 0$ at the end of
  the calls. When performing local reductions, as in the proof of
  \cref{lem:sum_of_precisions}, two cases can
   occur:
   \begin{itemize}
	   \item Either $\norm_{\KK_h/\QQ}(R_{j,j}^{(i)})\leq 2^{2(1+\varepsilon)\alpha
   n_h^2}\norm_{\KK_h/\QQ}(R_{j+1,j+1}^{(i)})$, and as mentioned in
       \cref{sec:heuristic} the local reduction is not performed,
       so that we can consider that we use here a zero precision
       call.
     \item Either a local reduction is actually performed and by
       the result of \cref{sec:well_conditioned}, we can
	   use a precision in $\bigO{p_{i,j}}$ with:
       \[
        p_{i,j} = \log\left(\frac{\max_{k} \sigma_k(R_{j,j}^{(i)})}{\min_k
         \sigma_k(R_{j+1,j+1}^{(i)})}\right),
        \]
	Let now set
	\[ L = \frac{\log(\norm_{\KK_h/\QQ} (R^{(i)}_{j,j}/R^{(i)}_{j+1,j+1}))}{n_h}.\]
	The value $p_{i,j}$ is, thanks  to
       the unit rounding~\cref{thm:unites} a
	   \[ L+\bigO{\sqrt{n_h\log n_h\log\log n_h}} ,\]
       by hypothesis.
       The reduction of
       this truncated matrix yields a unimodular
       transformation, represented with precision $\bigO{p_{i,j}}$, which
       when applied to the actual basis matrix implies that $\Pi$
       decreases by a term at least:
       \[
       \delta_{i,j} = n_h\left[\frac{L}{2}-\alpha
	   n_h\right]-{2^{-\Omega(p)}}
       \]
       by \cref{heur:lifting_size} and
       \cref{thm:well_conditioned}.
	   Let us bound the ratio $p_{i,j}/\delta_{i,j}$:
       \[
         \frac{p_i}{\delta_i} =
	   \frac{L+\bigO{\sqrt{n_h\log n_h\log\log
	   n_h}}}{\left(\frac{L}{2}-\alpha
		   n_h\right)n_h-2^{-\Omega(p)}}
	   =\frac{1+\frac{\bigO{\sqrt{n_h\log n_h\log\log n_h}}}{L}}
         {\frac{n_h}{2}-\frac{\alpha
         n_h^2}{L}-\frac{2^{-\Omega(p)}}{2L}}.
       \]
       Now recall that
       $\norm_{\KK_h/\QQ}(R^{(i)}_{j,j})\geq
       2^{2(1+\varepsilon)\alpha
       n_h^2}\norm_{\KK_h/\QQ}(R^{(i)}_{j+1,j+1})(1-2^{-\Omega(p)})$, the
       multiplicative error term coming from the precision at which the
       values of the $R^{(i)}_{j,j}$ and $R^{(i)}_{j+1,j+1}$ are approximated at
       runtime. Thus we have:
        \[
          \sqrt{n_h\log n_h\log\log n_h}/L = \bigO{\sqrt{\frac{\log
          n_h\log\log n_h}{n_h}}},
        \]
        and
        \[
  \alpha n_h^2/L \leq \frac{n_h}{2(1+\varepsilon)}.
        \]
        As such we have:
       \[
         \frac{p_{i,j}}{\delta_{i,j}} \leq
         \frac{1+\bigO{\sqrt{\frac{\log
         n_h\log\log
 n_h}{n_h}}}}{\frac{n_h\varepsilon}{1+\varepsilon}+\bigO{1/n_h}}.
       \]
\if 0
       Using the same calculation as \cref{lem:sum_of_precisions}, we
       know that the potential $\Pi$ decreases by a
       \[\frac{n_hp_i\varepsilon}{(1+\varepsilon)}
       \left(1-\bigO{\sqrt{\frac{\log n_h\log \log n_h}{n_h}}}\right).\]
\fi
     \end{itemize}
     The sum of precisions is therefore multiplied by
       \[
         d_h^2\left(1+\frac1\varepsilon\right)
         \left(\frac12+\frac1{2d_h}+\bigO{\sqrt{\frac{\log
         n_h\log \log n_h}{n_h}}}\right),
       \]
       which finishes the proof.
\end{proof}

We can now collect all the calls at each level to compute the global
complexity, for refining \cref{thm:complexity_fast_LLL}:
\begin{theorem}
\label{thm:sym}
Select an integer $f$ a power of $q=\bigO{\log f}$ and let
$n=\varphi(f)$. The complexity for reducing matrices $M$ of dimension
two over $\lL=\QQ[x]/\Phi_f(x)$ with $B$ the number of bits in the
input coefficients is heuristically
\[\tilde{O}\left(n^{2+\frac{\log((1/2+1/2q)(1+1/\varepsilon))}{\log
q}}B\right)\] and the first
column of the reduced matrix has coefficients bounded by
\[
	\exp\left(\bigO{n^{1+\frac{\log((1+\varepsilon)\frac{2q-1}{q})}{\log
	q}}}\right)\left|\norm_{\KK_h/\QQ}(\det
  M)\right|^{\frac{1}{2n}}.
\]
\end{theorem}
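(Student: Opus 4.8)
The plan is to follow the same skeleton as the proof of \cref{thm:complexity_fast_LLL}, replacing the generic bound of \cref{lem:sum_of_precisions} by the symplectic refinement of \cref{lem:lemma10}, and then summing the per-level complexities over the tower. First I would fix the tower $\lL = \KK_h \supset \KK_{h-1} \supset \cdots \supset \KK_0 = \QQ$ of cyclotomic subfields, where each relative degree $d_i = [\KK_i:\KK_{i-1}]$ equals $q$ (this is possible since $f$ is a power of $q$), so that $n_i = q^i$ and the height is $h = \log_q n$. As in the non-symplectic case we also need an auxiliary tower for the \algName{G-Euclide}/\algName{Lift} step, but its cost stays quasilinear by \cref{lem:Euclid_complexity} since $q = \bigO{\log f}$, hence it never dominates.

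The crux of the argument is the recursion on the sum of precisions used across all levels. At the top level we start with precision $p = \bigO{B}$. By \cref{lem:lemma10}, descending one floor of the tower multiplies the total precision used in the recursive calls by a factor $F = d_h^2(1+\tfrac1\varepsilon)\bigl(\tfrac12 + \tfrac1{2d_h} + \littleO1\bigr) = q^2(1+\tfrac1\varepsilon)\bigl(\tfrac12 + \littleO1\bigr)$, where the key gain compared to \cref{lem:sum_of_precisions} is the extra factor $\tfrac12$ coming from the symplectic structure (we only recurse on the first $d_h$ of the $2d_h$ steps). Iterating, the sum of all precisions used to reduce a projected rank-two module sitting at level $i$ is \[ \bigO{B \cdot F^{\,i}} = \bigO{B \cdot q^{2i}\bigl((1/2 + \littleO1)(1+1/\varepsilon)\bigr)^{i}} = \bigO{B \cdot n_i^{2 + \frac{\log((1/2+1/2q)(1+1/\varepsilon))}{\log q}}}. \] Then by \cref{prop:complexity_toplevel} the top-level work of a reduction of a rank-two module over $\KK_i$, with precision $p$, is $\bigOtilde{q^5 n_i p}$; multiplying by the number of such reductions at level $i$ (which is $F^{i}/n_i$ up to polylog, by the same counting as in the proof of \cref{thm:complexity_fast_LLL}) gives a per-level contribution of the form $\bigOtilde{n^{2+\frac{\log((1/2+1/2q)(1+1/\varepsilon))}{\log q}} B}$ times a $\littleO1$-shrinking geometric factor, so summing over $i = 0,\dots,h$ changes only the hidden polylogs. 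This yields the claimed running time; the extra additive $n^{\bigO{\log\log n}}$ of the abstract/introduction statement is absorbed into the $\bigOtilde{\cdot}$ here only when $q$ is genuinely $\log$-sized, which is the standing hypothesis.

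For the bound on the first column, I would track the approximation factor through the levels exactly as in \cref{thm:approximation_factor} and \cref{cor:approximation_factor}, but now with the per-level parameter $\alpha_i$ that must satisfy the compatibility constraint from \cref{heur:symplectic} and the remark after \cref{cor:approximation_factor}: each descent inflates the admissible $\alpha$ by a factor proportional to $(1+\varepsilon)\tfrac{2q-1}{q}$ (the symplectic analogue of the $2(1+\varepsilon)\tfrac{d-1}{d}$ factor, with $d = 2d_h = 2q$). Composing these $h = \log_q n$ inflations starting from the $2^{\bigOtilde1}$ factor at the leaves produces a total approximation exponent of $n^{1 + \frac{\log((1+\varepsilon)(2q-1)/q)}{\log q}}$ up to lower-order terms, which is exactly the stated $\exp\bigl(\bigO{n^{1+\frac{\log((1+\varepsilon)(2q-1)/q)}{\log q}}}\bigr)$ prefactor on $|\norm_{\KK_h/\QQ}(\det M)|^{1/(2n)}$.

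The main obstacle I anticipate is not any single estimate but making the bookkeeping across the two intertwined recursions airtight: one has to check, by the same induction as at the end of the proof of \cref{thm:complexity_fast_LLL}, that at every node of the recursion tree the local precision $p_{i,j}$ chosen in \cref{lem:lemma10} still satisfies the hypotheses $p_{i,j} \geq n_i d_i$ and $d_i\log n_i + \sqrt{n_i\log n_i\log\log n_i} < p_{i,j}$ needed to invoke \cref{thm:size_reduce}, \cref{lem:Euclid_complexity}, and \cref{heur:symplectic} at that node — and simultaneously that the $\alpha_i$ can be chosen large enough for \cref{heur:symplectic} to be plausible yet small enough that the approximation factor exponent telescopes to the claimed value. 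Once that joint induction is set up, the complexity sum and the approximation-factor product are both routine geometric series, and the result follows.
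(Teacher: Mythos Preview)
Your proposal is correct and follows essentially the same approach as the paper: fix the tower with all relative degrees equal to $q$ and height $h=\log_q n$, use \cref{lem:lemma10} to obtain the per-descent precision multiplication factor $q^2(1+1/\varepsilon)(\tfrac12+\tfrac1{2q}+\littleO{1})$, iterate over the levels and sum, and for the approximation factor set $\alpha_i=\bigO{n_i\bigl((1+\varepsilon)\tfrac{2q-1}{q}\bigr)^{i}}$ as dictated by \cref{rem:increase_alpha}. The paper's write-up is terser (it just states the precision product $\prod_{j>i}$ and the choice of $\alpha_i$ without further comment), but your added remarks about verifying the hypotheses $p_{i,j}\geq n_id_i$ and the size-reduction preconditions at each node are exactly the bookkeeping the paper leaves implicit.
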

\begin{proof}
The proof is now exactly the same as for \cref{thm:complexity_fast_LLL}.
  We select a tower of cyclotomic subfields $\KK_h^\uparrow$ with $\KK_0=\QQ$,
  $[\KK_i:\QQ]=n_i$, $n_{i+1}/n_{i}=d_i=q$ for $i<h$ and $\KK_h=\lL$
  with $h=\log f/\log q$.
	According to \cref{rem:increase_alpha}, we can take
	\[ \alpha_i=\bigO{n_i\left((1+\varepsilon )\frac{2q-1}{q}\right)^{i}} \] and all
  our previous assumptions are fulfilled.

  The complexity at the level $i$ is $\bigO{q^5n_ip\log(Bn)}$ for precision
  $p$ but the sum on the precision over all calls is a:
  \[
      \bigO{B\prod_{j>i}
        \left(1+\frac{1}{\varepsilon}\right)\left(\frac12+\frac{1}{2q}+\bigO{\sqrt{
    \frac{\log n_{i}\log \log n_{i}}{n_i}}}\right)d_j^2},\]
    which simplifies in
    \[
      \bigO{B\left(\frac{n}{n_i}\right)^2
	\left(\frac{(1+\frac{1}{\varepsilon})(q+1)}{2q}\right)^{h-i}}.
    \]
  Summing over all $i$ gives the result.
\end{proof}

Selecting $\varepsilon=\log n$, and running the algorithm of~\cref{sec:fast_LLL_NF}
 on the output of the reduction analyzed in
\cref{sec:algorithm_complexity} gives:
\begin{corollary}
  Select an integer $f$ a power of $q=\bigO{\log f}$ and let
  $n=\varphi(f)$.
  The complexity for reducing matrices $M$ of dimension two over
  $\lL=\QQ[x]/\Phi_f(x)$ with $B$ the number of bits in the input
  coefficients is heuristically
  \[\tilde{\textrm{O}}\left(n^{2+\frac{\log(1/2+1/2q)}{\log
        q}}B\right)+n^{\bigO{\log
  \log n}}\] and the first column of the reduced matrix has coefficients
  bounded by
  \[
    2^{\tilde{\textrm{O}}(n)}\left|\norm_{\KK_h/\QQ}(\det
    M)\right|^{\frac{1}{2n}}.
  \]
\end{corollary}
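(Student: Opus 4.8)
The plan is to read off the corollary from \cref{thm:sym} by fixing $\varepsilon=\log n$, and then to post-process the symplectic output with the (non-symplectic) algorithm \algName{Reduce} of \cref{sec:fast_LLL_NF} in order to bring the approximation factor down to $2^{\bigOtilde n}$. The symplectic pass is the one responsible for the $\bigOtilde{n^{2+\log(1/2+1/2q)/\log q}B}$ term; the clean-up pass is responsible for the additive $n^{\bigO{\log\log n}}$ term, and crucially it must be argued to cost no term proportional to $B$.

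First I substitute $\varepsilon=\log n$ into the running-time bound of \cref{thm:sym}. The exponent of $n$ is
\[
  2+\frac{\log\bigl((1/2+1/2q)(1+1/\log n)\bigr)}{\log q}
  =2+\frac{\log(1/2+1/2q)}{\log q}+\frac{\log(1+1/\log n)}{\log q},
\]
and since $\log(1+1/\log n)=\bigO{1/\log n}$ the last summand raises $n$ only to the power $\bigO{1/(\log n\log q)}$, i.e.\ by a factor $2^{\bigO{1/\log q}}=\bigO 1$; so the symplectic pass runs in time $\bigOtilde{n^{2+\log(1/2+1/2q)/\log q}B}$. Substituting $\varepsilon=\log n$ into the output bound of \cref{thm:sym} and using $\tfrac{2q-1}{q}<2$ together with $q\ge 2$, its exponent $1+\frac{\log((1+\varepsilon)\frac{2q-1}{q})}{\log q}$ is at most $1+\frac{\log(4\log n)}{\log q}=1+\bigO{\log\log n/\log q}$; since $f$ is a power of $q$ we have $\log q\ge 1$, so this exponent is $\bigO{\log\log n}$ and the symplectic pass returns a basis of the same module (up to the scaling it performs, which I undo) whose first column has coefficients of bitlength $n^{\bigO{\log\log n}}+\frac1{2n}\log\lvert\norm_{\KK_h/\QQ}(\det M)\rvert$.

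Next I feed this basis to \algName{Reduce}. The point is that the size-reduction and the unit rounding (\cref{thm:unites}) performed at the end of the symplectic pass have balanced all Archimedean embeddings of the diagonal of the $R$-factor, so the covolume contribution cancels in each ratio $\max_\sigma\sigma(R_{i,i})/\min_\sigma\sigma(R_{i,i})$ and the precision parameter $p$ of \cref{sec:algorithm_complexity} attached to this basis is only $n^{\bigO{\log\log n}}$, not $\Theta(B)$. By \cref{thm:size_reduce} and \cref{cor:size_reduce} the cost of \algName{Reduce} on a rank-two module is governed by this $p$ rather than by the raw bitlength of its input, so \cref{thm:complexity_fast_LLL}, read with $p$ in place of $B$, gives a clean-up running time $\bigOtilde{n^2p}=n^{\bigO{\log\log n}}$ (this also absorbs the $2^{\bigO h}$ set-up overheads, since the towers used have height $h=\bigO{\log\log n}$), and it outputs a first column whose coefficients are bounded by $2^{\bigOtilde n}(\covol M')^{1/2n}=2^{\bigOtilde n}\lvert\norm_{\KK_h/\QQ}(\det M)\rvert^{1/2n}$, the covolume being preserved up to the scaling I track. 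Adding the two passes yields the announced running time and approximation factor.

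The step I expect to be the main obstacle is precisely the claim that the clean-up pass is $B$-free: one has to verify that the well-conditioning guaranteed at the end of the symplectic algorithm of \cref{sec:symplectic} feeds correctly into the low-precision machinery of \cref{thm:size_reduce}, i.e.\ that \algName{Reduce} on that input really only needs precision $n^{\bigO{\log\log n}}$ and that the number of rounds it performs there is $\bigOtilde 1$. Once this is granted, the remainder is the routine substitution $\varepsilon=\log n$ into the already-established estimates of \cref{thm:sym} and \cref{thm:complexity_fast_LLL}.
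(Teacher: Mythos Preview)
Your proposal is correct and follows exactly the approach the paper sketches in the single sentence preceding the corollary: take $\varepsilon=\log n$ in \cref{thm:sym} to obtain the first running-time term, then post-process the (now well-conditioned) output with the non-symplectic \algName{Reduce} of \cref{sec:fast_LLL_NF}, whose cost depends only on the precision $p$ rather than on $B$, yielding the additive $n^{\bigO{\log\log n}}$ term and the $2^{\bigOtilde n}$ approximation factor. You have also correctly isolated the only nontrivial step---showing that the clean-up pass is $B$-free because the symplectic output has all $R_{i,i}$ balanced both across indices (by the decay of the flag profile) and across embeddings (by \cref{thm:unites})---which the paper leaves entirely implicit.
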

Clearly, for $B=n^{\omega(1)}$, we can choose $\varepsilon=\omega(1)$ and
get a running time of \[n^{2+\frac{\log(1/2+1/2q)}{\log
q}+\littleO{1}}B.\]

We insist on the fact that for $B=n^x$, the above proof does not give an optimal running time.
This running time is given in~\cref{fig:compl-sympl}.
One can improve on the upper-bound by using a stronger (yet credible)
heuristic on $\Pi$, having only one reduction on each round where
$\norm_{\KK_h/\QQ}(R_{i,i}/R_{i+1,i+1})$ is maximized, an adaptive $\varepsilon$ and two different $q$ used.
Clearly, this algorithm can also be parallelized, but the maximum number of processor used is less than before.

\begin{figure}
  \begin{center}
    \begingroup
  \makeatletter
  \providecommand\color[2][]{
    \GenericError{(gnuplot) \space\space\space\@spaces}{
      Package color not loaded in conjunction with
      terminal option `colourtext'
    }{See the gnuplot documentation for explanation.
    }{Either use 'blacktext' in gnuplot or load the package
      color.sty in LaTeX.}
    \renewcommand\color[2][]{}
  }
  \providecommand\includegraphics[2][]{
    \GenericError{(gnuplot) \space\space\space\@spaces}{
      Package graphicx or graphics not loaded
    }{See the gnuplot documentation for explanation.
    }{The gnuplot epslatex terminal needs graphicx.sty or graphics.sty.}
    \renewcommand\includegraphics[2][]{}
  }
  \providecommand\rotatebox[2]{#2}
  \@ifundefined{ifGPcolor}{
    \newif\ifGPcolor
    \GPcolortrue
  }{}
  \@ifundefined{ifGPblacktext}{
    \newif\ifGPblacktext
    \GPblacktextfalse
  }{}
  \let\gplgaddtomacro\g@addto@macro
  \gdef\gplbacktext{}
  \gdef\gplfronttext{}
  \makeatother
  \ifGPblacktext
    \def\colorrgb#1{}
    \def\colorgray#1{}
  \else
    \ifGPcolor
      \def\colorrgb#1{\color[rgb]{#1}}
      \def\colorgray#1{\color[gray]{#1}}
      \expandafter\def\csname LTw\endcsname{\color{white}}
      \expandafter\def\csname LTb\endcsname{\color{black}}
      \expandafter\def\csname LTa\endcsname{\color{black}}
      \expandafter\def\csname LT0\endcsname{\color[rgb]{1,0,0}}
      \expandafter\def\csname LT1\endcsname{\color[rgb]{0,1,0}}
      \expandafter\def\csname LT2\endcsname{\color[rgb]{0,0,1}}
      \expandafter\def\csname LT3\endcsname{\color[rgb]{1,0,1}}
      \expandafter\def\csname LT4\endcsname{\color[rgb]{0,1,1}}
      \expandafter\def\csname LT5\endcsname{\color[rgb]{1,1,0}}
      \expandafter\def\csname LT6\endcsname{\color[rgb]{0,0,0}}
      \expandafter\def\csname LT7\endcsname{\color[rgb]{1,0.3,0}}
      \expandafter\def\csname LT8\endcsname{\color[rgb]{0.5,0.5,0.5}}
    \else
      \def\colorrgb#1{\color{black}}
      \def\colorgray#1{\color[gray]{#1}}
      \expandafter\def\csname LTw\endcsname{\color{white}}
      \expandafter\def\csname LTb\endcsname{\color{black}}
      \expandafter\def\csname LTa\endcsname{\color{black}}
      \expandafter\def\csname LT0\endcsname{\color{black}}
      \expandafter\def\csname LT1\endcsname{\color{black}}
      \expandafter\def\csname LT2\endcsname{\color{black}}
      \expandafter\def\csname LT3\endcsname{\color{black}}
      \expandafter\def\csname LT4\endcsname{\color{black}}
      \expandafter\def\csname LT5\endcsname{\color{black}}
      \expandafter\def\csname LT6\endcsname{\color{black}}
      \expandafter\def\csname LT7\endcsname{\color{black}}
      \expandafter\def\csname LT8\endcsname{\color{black}}
    \fi
  \fi
    \setlength{\unitlength}{0.0500bp}
    \ifx\gptboxheight\undefined
      \newlength{\gptboxheight}
      \newlength{\gptboxwidth}
      \newsavebox{\gptboxtext}
    \fi
    \setlength{\fboxrule}{0.5pt}
    \setlength{\fboxsep}{1pt}
\begin{picture}(8640.00,5760.00)
    \gplgaddtomacro\gplbacktext{
      \csname LTb\endcsname
      \put(726,704){\makebox(0,0)[r]{\strut{}$1.55$}}
      \put(726,1192){\makebox(0,0)[r]{\strut{}$1.6$}}
      \put(726,1681){\makebox(0,0)[r]{\strut{}$1.65$}}
      \put(726,2169){\makebox(0,0)[r]{\strut{}$1.7$}}
      \put(726,2657){\makebox(0,0)[r]{\strut{}$1.75$}}
      \put(726,3146){\makebox(0,0)[r]{\strut{}$1.8$}}
      \put(726,3634){\makebox(0,0)[r]{\strut{}$1.85$}}
      \put(726,4122){\makebox(0,0)[r]{\strut{}$1.9$}}
      \put(726,4611){\makebox(0,0)[r]{\strut{}$1.95$}}
      \put(726,5099){\makebox(0,0)[r]{\strut{}$2$}}
      \put(858,484){\makebox(0,0){\strut{}$1$}}
      \put(1679,484){\makebox(0,0){\strut{}$2$}}
      \put(2499,484){\makebox(0,0){\strut{}$3$}}
      \put(3320,484){\makebox(0,0){\strut{}$4$}}
      \put(4140,484){\makebox(0,0){\strut{}$5$}}
      \put(4961,484){\makebox(0,0){\strut{}$6$}}
      \put(5781,484){\makebox(0,0){\strut{}$7$}}
      \put(6602,484){\makebox(0,0){\strut{}$8$}}
      \put(7422,484){\makebox(0,0){\strut{}$9$}}
      \put(8243,484){\makebox(0,0){\strut{}$10$}}
    }
    \gplgaddtomacro\gplfronttext{
      \csname LTb\endcsname
      \put(4550,154){\makebox(0,0){\strut{}$B=n^x$}}
      \put(4550,5429){\makebox(0,0){\strut{}Complexity of lattice reduction is $n^y B$}}
      \csname LTb\endcsname
      \put(7256,4926){\makebox(0,0)[r]{\strut{}q=2}}
      \csname LTb\endcsname
      \put(7256,4706){\makebox(0,0)[r]{\strut{}q=4}}
      \csname LTb\endcsname
      \put(7256,4486){\makebox(0,0)[r]{\strut{}q=8}}
      \csname LTb\endcsname
      \put(7256,4266){\makebox(0,0)[r]{\strut{}q=16}}
      \csname LTb\endcsname
      \put(7256,4046){\makebox(0,0)[r]{\strut{}q=32}}
      \csname LTb\endcsname
      \put(7256,3826){\makebox(0,0)[r]{\strut{}q=64}}
      \csname LTb\endcsname
      \put(7256,3606){\makebox(0,0)[r]{\strut{}q=128}}
      \csname LTb\endcsname
      \put(7256,3386){\makebox(0,0)[r]{\strut{}q=256}}
    }
    \gplbacktext
    \put(0,0){\includegraphics{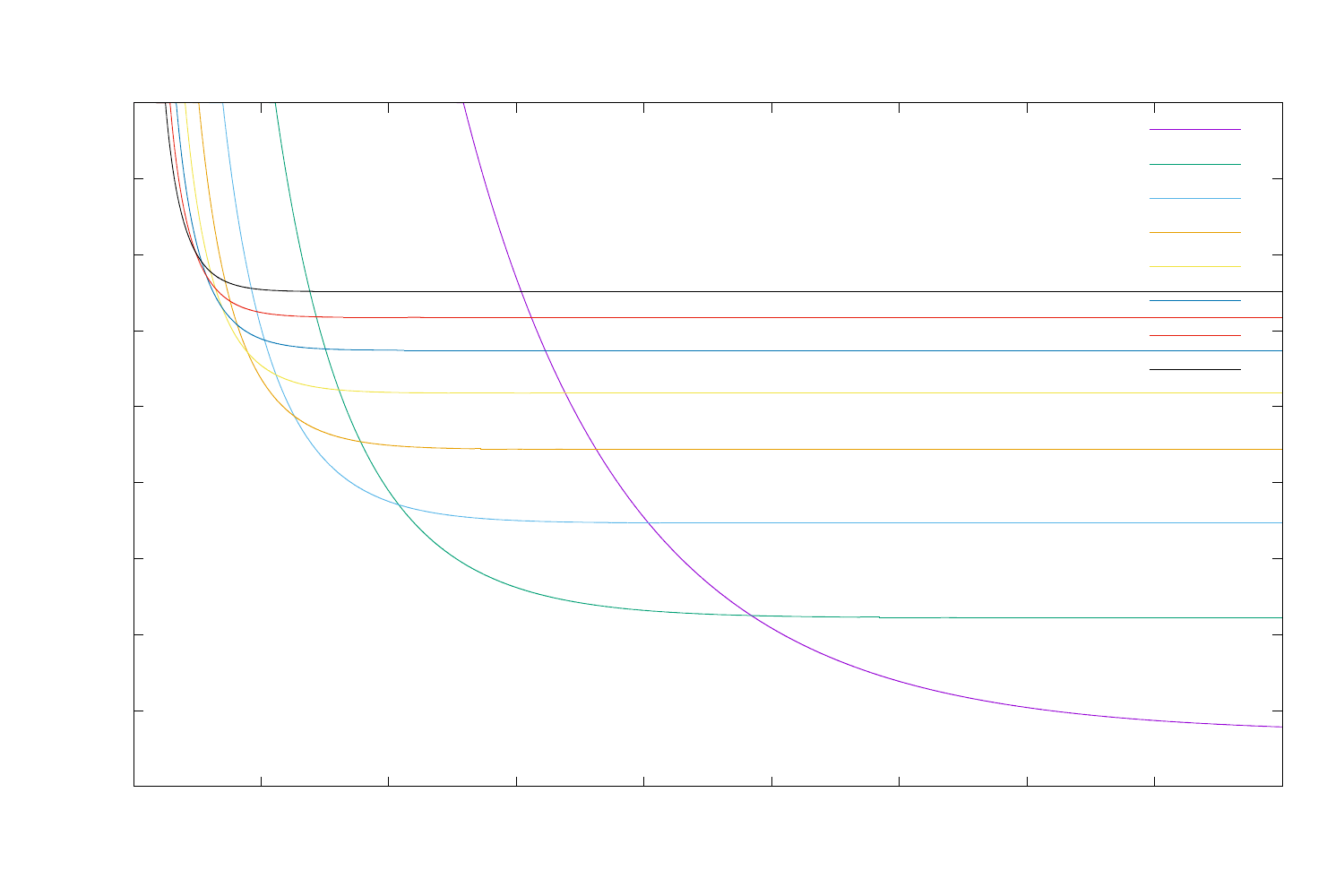}}
    \gplfronttext
  \end{picture}
\endgroup
   \end{center}
  \caption{Upper bound on the complexity of symplectic reduction}
  \label{fig:compl-sympl}
\end{figure}
\if 0
  \begin{gnuplot}[terminal=pdf,terminaloptions=color]
    unset key
set samples 10000
set format '
set output "compl-sympl.tex"
set title "Complexity of lattice reduction"
set xlabel "B=n^x"
set ylabel "Complexity is n^y B"
min(a,b)=a<b?a:b
log2(x)=log(x)/log(2)
f(x,q)=2+min(0,log((.5+.5/q)*(1+1/(2**((x-1)/log2(q))/(2-1./q)-1)))/log(q))
pl [1:10] f(x,2),f(x,4),f(x,8),f(x,16),f(x,32),f(x,64),f(x,128),f(x,256)
\end{gnuplot}
\fi
 \section{Optimizations and Implementation}
\label{sec:implementation}

The algorithms detailed in \cref{sec:fast_LLL_NF},
\cref{sec:fastlll} and \cref{sec:symplectic} have been
implemented and tested. This section details various optimizations and
implementation choices, as well as gives an experimental assessment on the
heuristics used in the complexity proofs.

The first algorithm of~\cref{sec:fastlll} was used in the rational case
to solve NTRU instances in a previous paper by Kirchner and
Fouque~\cite{EC:KirFou17}, and found to perform as expected.

\subsection{On the choice of the base case}

Let $h>0$ be a non-negative integer. The
setting of the reduction is a tower of power-of-two cyclotomic fields
$\KK_h^\uparrow = \left(\QQ = \KK_0 \subset \KK_1 \subset \cdots \subset
\KK_h\right)$.

\subsubsection{Stopping the reduction before hitting $\ZZ$}

As stated in \cref{thm:complexity_fast_LLL}, the approximation factor
increases quickly with the height of the tower. However, if we
know how to perform a reduction over a number field above $\QQ$, say
$\KK_1$ for instance, directly, then there is no need to reduce up to
getting a $\ZZ$\nobreakdash-module and we instead stop at this level. Actually,
the largest the ring, the better the approximation factor becomes and the
more efficient is the whole routine. It is well-known that it is
possible to come up with a \emph{direct} reduction algorithm for an
algebraic lattice  when the underlying ring of integer is
\emph{norm-Euclidean}, as first mentioned by Napias in~\cite{Napias}.
The reduction algorithm over such a ring $\order_\KK$ can be done
exactly as for the classical \LLL{} algorithm, by replacing the norm
over $\QQ$ by the algebraic norm over $\KK$.
Hence a natural choice would be $\ZZ[x]/(x^n+1)$ with
$n\leq 8$ as these rings are proved to be norm-Euclidean.

\subsubsection{The ring $\ZZ[x]/(x^{16}+1)$}

However, it turns out that while $\KK = \ZZ[x]/(x^{16}+1)$ is not
norm-Euclidean, we can still use this as our base case. As such, we need to
slightly change the algorithm in case of failure of the standard algorithm.
Given $a,b$, we use the {\em randomized} unit rounding of $\sqrt{\{\mu\}}$
computed by~\cref{thm:unites} with $\mu=a/b$, which gives a unit $u$ such
that $u^2\{\mu\}$ is round.  We accept the change if \[
\norm_{\KK/\QQ}(a-b(\lfloor \mu \rceil+\lfloor u\{\mu\}\rceil
u^{-1}))<\norm_{\KK/\QQ}(a) \] and restart up to a hundred times if
it fails. \medskip

This algorithm restarts on average $0.7$ times and fails every $50000$
times. On failure, one can for example use a more complicated approach;
but as long as the number of bits is not gigantic, we can simply stop
there since the other reductions around the two Gram-Schmidt norms will
randomize everything and the algorithm can smoothly continue.
The terms $a,b$ tend to slowly accumulate a unit contribution when
$n\geq 4$, and it is therefore needed to rebalance them using randomized
rounding. For $n=16$, this happens on average every $50$ times.

\subsubsection{Comparison between the base fields}
We give in the~\cref{table:basecase} the properties of the various
possible base cases between the dimension 1 over $\QQ$---that is $\QQ$
itself---and 16, as described above.

\begin{table}[]
  \sffamily
  \caption{\small{Lattice reduction with root factor $\alpha$ in dimension $d$
    over $\ZZ$ gives an element of $\Lat$ of norm around $\alpha^{d/2}
    \covol(\Lat)^{1/d}$.
    After $k$ steps in the Euclidean algorithm with norm factor $\beta$,
    the norm of the elements is roughly divided by $\beta^k$.
  Both are for random inputs.}
  }
  \label{table:basecase}
  \setlength{\tabcolsep}{6pt}
  \vspace{4pt}
  \centering
  \renewcommand*{\arraystretch}{1.3}
  \begin{tabular}{ccc}
    Dimension & Root factor & Norm factor \\
    \midrule
    \rowcolor{black!10}
    1 & 1.031 & 4.6 \\
    2 & 1.036 & 7.1 \\
    \rowcolor{black!10}
    4 & 1.037 & 17 \\
    8 & 1.049 & 26 \\
    \rowcolor{black!10}
    16 & 1.11 & 24 \\
    \bottomrule
  \end{tabular}
\end{table}

\begin{remark}
  We need the base case to be (relatively) fast in our implementation.
  We followed the standard divide-and-conquer strategy: we first reduce
  the input matrix with half the precision, apply the transition matrix,
  and reduce the rest with about half the precision.
\end{remark}

\subsection{Decreasing the approximation factor}

In several applications, it is interesting to decrease the approximation factor.
Our technique is, at the lowest level of recursion, and when the number of
bits is low, to use a \LLL-type algorithm. Each time the reduction is finished,
we descend the matrix to a lower level where the approximation factor is lower.

Remark that the unit rounding is, at least theoretically, mandatory.
In particular, a swap when the basis is not reduced with the definition
in~\cite{IMA:KimLee17} may not lead to a reduction in potential
so that the proof of~\cite[Theorem 3]{IMA:KimLee17} is
incorrect.  We also point out that without a bound on the unit
contributions, we have no polynomial bound on the number of bits used in
their algorithm 3.

From a practical point of view, this does not seem to be a problem. If
this is the case, our algorithm
can be used every time we have a reasonable tower of number fields.

\subsection{Lifting a reduction}
\label{sec:lifting_problems}
One might expect that, as soon as the ideal generated by all the
$\norm_{\lL/\KK}(a_i)$ and $\norm_{\lL/\KK}(b_i)$ is $\order_\KK$, that
for most of the small $x\in
\order_\lL$, we
would have
\[
  \norm_{\lL/\KK}(\inner{a}{x})\order_\KK+\norm_{\lL/\KK}({\inner{b}{x})}\order_\KK=\order_\KK.
\]
There is, however, a profusion of counterexamples to this and the
algorithm often stumbles on them. This implies that the lift of
a short vector can actually be quite large, depending on the
norm of the ideal generated by the elements
$\norm_{\lL/\KK}({\inner{a}{x})}$ and $\norm_{\lL/\KK}({\inner{b}{x})}$.
A solution which practically works is to increase
the number of short vectors we consider in the lifting phase: instead of
lifting one vector, we lift multiple of them.  As such, the lift step
never causes problem when we are reducing a random lattice.
In our experiments with random lattices, the average number of lifted
vectors is around $1.5$.

When the lattice is not random, for example with a short planted
element, it sometimes completely fails: at each round in the algorithm,
the lift will return a long vector even if the recursive reduction found
plenty of short ones. While this may not be a problem for some
applications -- finding a short vector in a NTRU lattice implies an
ability to decrypt -- it is an important one for others.  Our proposed
solution to this difficulty is to use a pseudo-basis instead of a basis.
Indeed, it is a standard fact that the first element can be lifted into
a unimodular pseudo-basis~\cite[Corollary 1.3.5]{cohen2012advanced}.
Of course, we need to have a fast ideal arithmetic and to keep the ideals
of small norm, which is neither easy nor fast and will be the subject of
a future work.

\subsection{Other details}

The program was written in the interpreted language
\texttt{Pari/GP}~\cite{batut1998pari}.  It uses the native functions for
multiplying field elements, which is not at all optimal, and even more
so when we multiply matrices.  Only the recursive calls were
parallelized, and not the Gram-Schmidt orthogonalization nor the size
reduction, which limits the speed-up we can achieve in this way.
We used the Householder method for the QR decomposition.  The symplectic
optimization was used at each step, and was not found to change the
quality of the reduction\footnote{
Gama, Howgrave-Graham and Nguyen~\cite{EC:GamHowNgu06} found instead that it gave a
``smoother (better)'' basis, showing a significant difference in their Figure 1.
An other version of the paper does not include this comment, and their
(perplexing) Figure 1 shows no difference in the exponential decrease
of the Gram-Schmidt norms.
}.
We did not use the algorithm of~\cref{sec:fastlll} inside the recursion
of~\cref{sec:fast_LLL_NF}.
We chose a number of rounds of $d^2$ for all but the first level.
 \section{Applications}
\label{sec:applications}

\subsection{Attacks on multilinear maps}

In 2013, a construction for cryptographic multilinear maps was
announced~\cite{EC:GarGenHal13} with a heuristic security claim.
An implementation of an optimization of the scheme was later
published~\cite{AC:ACLL15}; however some of its uses, in particular
involving an encoding of zero, were broken~\cite{EC:HuJia16}.
Subsequently, subfield attacks showed that the previous choice of
parameters was unsafe~\cite{C:AlbBaiDuc16,ANTS:Cheon16,EC:KirFou17}, but
these attacks were only asymptotical due to the extremely
large dimension and length of the integers involved.

The improved scheme~\cite{AC:ACLL15} gives encoding of the form $u_i=e_i/z \Mod q$ where $\|e_i\|$ is around
\[ 28\me N^4 \log(N)^{3/2} \sqrt{\pi\log(8N)} \]
in the ring $\ZZ[x]/(x^N+1)$ with $N$ a power of two.
The attack, attributed to Galbraith, consists in computing $u_1/u_2=e_1/e_2$ and recovering short vectors in
\[ \begin{pmatrix} q & u_1/u_2 \\ 0 & \Id_N \end{pmatrix} \]
which is obviously solving a NTRU-like problem.

The present work revisits the results of the attacks presented
in~\cite{EC:KirFou17}: many instances can be broken even with a high
approximation factor. A simple instance is with $N=2^{16}$ and $q\approx
2^{6675}$, rated at the time at 56 bits of security~\cite[Table
1]{AC:ACLL15}.
We compute the norm of $e_1/e_2$ over $\ZZ[x]/(x^n+1)$ with $n=2^{11}$
and solve the lattice problem over this smaller field.
It took 13 core-days and 4 wall-time days to compute a solution.  There
are few running times of lattice reduction with high approximation
factor on hard instances in the literature.
It was reported in 2016~\cite[Table 6]{C:AlbBaiDuc16} that the same
problem with $n=2^8$ and $q\approx 2^{240}$ takes 120 (single-threaded)
hours with \textsf{fplll}~\cite{EC:NguSte05}.
As the complexity of their implementation is roughly proportional to
$n^4\log(q)^2$ we can estimate a running time of $40000$ years, or
$4000000$ times slower than the algorithm presented in this work.
This is the largest hard instance\footnote{
	There are easy instances with a larger dimension, for example in~\cite{EC:GamHowNgu06}.
	They considered a NTRU instance with degree $317$ and modulus $128$, and reduced it in $519$ seconds.
	The low modulus implies that we only have to reduce the middle dimension $90$ matrix, which \textsf{fplll} reduces in $0.2$ second.
}
of lattice reduction that we found in the literature.

\subsection{Gentry-Szydlo algorithm}

The fast reduction procedure for cyclotomic ideals can be used to build a fast
implementation of the Gentry-Szydlo algorithm~\cite{EC:GenSzy02}.
This algorithm retrieves, in polynomial time, a generator of a
principal ideal $f\order_\KK$ given its relative norm $f\conj{f}$ in
cyclotomic fields, or more generally in CM fields. This algorithm is a
combination of algebraic manipulations of ideals in the field and lattice
reduction.

\subsubsection{Gentry-Szydlo.}
\label{subsec:gs}
In this section, we briefly recall the crux of the Gentry-Szydlo
algorithm~\cite{EC:GenSzy02}. This algorithm aims at solving the following
problem, presented in its whole generality:

\begin{problem}[Principal ideal problem with known relative norm]
  Let $\lL$ be a CM-field, of conjugation $x\mapsto\conj{x}$, and denote by
  $\lL^+$ its maximal totally real subfield. Let $f\in\order_\lL$ and set
  $\mathfrak{f} = f\order_\lL$, the ideal spanned by this algebraic integer.
  \begin{description}
    \item[Input] The relative norm $\norm_{\lL^+/\QQ}(f) = f\conj{f}$
      and a $\ZZ$-basis of the ideal $\mathfrak{f}$.
    \item[Output] The element $f$.
  \end{description}
\end{problem}

We can use the reduction of an ideal as follows:
from $\mathfrak{f}$ and $f\conj{f}$ we start by reducing the
$\order_\lL$-lattice \[ \frac{f\order_\lL}{\sqrt{f\conj{f}}}, \]
of volume $\sqrt{|\Delta_\lL|}$ and find an element of the shape
$fx$ where $x\in \order_\lL$ and is small: $\|x\|=2^{\bigOtilde{n}}$.
Now we have that:
\[ \mathfrak{f}=\frac{f\conj{f}}{\conj{fx}}\cdot \conj{x}\order_\lL \]
We also have $x\conj{x}=\frac{fx\conj{fx}}{f\conj{f}}$ so that we have reduced
the problem to the smaller instance
$\left(\conj{x}\order_\lL, x\conj{x}\right)$.

For the sake of simplicity, we give here the outline of the remaining part of
the algorithm for a cyclotomic field of conductor a power of two. The
algorithm selects an integer $e$ such that $f^e \mod r$ is
known with a large $r$. Binary exponentiation with the above reduction
computes a $x\order_\lL$ with a short $x\in \order_\lL$ and such that
\[ f^e=Px \]
with $P$ known (and invertible) modulo $r$ and $q^k$.
Now we can deduce $x \Mod r$ and since $x$ is small, we know $x$.

The last step is to extract an $e$-th root modulo $q^k$.
We choose $q$ such that $q\order_\lL=\mathfrak{q}\mathfrak{\conj{q}}$ which always exists in power of two cyclotomic fields since $(\ZZ/2n\ZZ)^\times/\{-1,1\}$ is cyclic.
Extracting $e$-th root modulo $\mathfrak{q}$ is easy, as $e$ is smooth.
There are $\gcd(e,q^{n/2}-1)$ such roots, and we can choose $q$ such that for each $p|e$ with $p$ not a Fermat prime, $q^{n/2}\neq 1 \Mod p$.
If we choose $f \Mod \mathfrak{q}$ as a root, then we know $\conj{f} \Mod \conj{\mathfrak{q}}$, and we also know $f\conj{f}$ so we can deduce $f \Mod \conj{\mathfrak{q}}$.
As a result, we know $f \Mod q$ and Hensel lifting leads to $f \Mod q^k$.
For $k$ sufficiently large, we recover $f$.

We choose $e$ to be the smallest multiple of $2n$, such that $r$, the product of primes $p$ such that $2n|p-1|e$, is sufficiently large.
One can show~\cite{EPRINT:Kirchner16} that $\log e=O(\log n \log \log n)$ is
enough and heuristically taking $e$ as the product of $n$ and a primorial reaches
this bound.
\subsubsection{Faster multiplication using lattice reduction.}
The bottleneck of the Gentry-Szydlo algorithm is to accelerate
the ideal arithmetic. We represent ideals with a small family of
elements over the order of a subfield $\order_{\KK}$. One can represent
the product of two ideals using the family of all products of
generators. However, this leads to a blow-up in
the size of the family. A reasonable approach is simply to sample a bit
more than $[\lL:\KK]$ random elements in the product so that with
overwhelming probability the ideal generated by these elements is
the product ideal itself. It then suffices to reduce the corresponding
module to go back to a representation with smaller generators.

An important piece is then the reduction of an ideal itself. Our
practical approach is here to reduce a square matrix of dimension
$[\lL:\KK]$, and every two rounds to add a new random element with a
small Gram-Schmidt norm in the ideal at the last position.
With these techniques, the overall complexity of the Gentry-Szydlo now becomes
a $\tilde{O}(n^3)$.

In our experiment, we reduce up to $1.05^n$ (respectively $1.1^n$) the first ideal
to accelerate the powering with $n\leq 512$ (respectively $n=1024$).
The smallest $e$ such that this approximation works at the end was chosen.
The other reductions are done with an approximation factor of $2^{n/5}$ (respectively $2^{n/3}$).

\begin{table}[]
  \sffamily
  \caption{Implementation results}
  \label{table:implmentation}
  \setlength{\tabcolsep}{6pt}
  \vspace{4pt}
  \centering
  \renewcommand*{\arraystretch}{1.5}
  \begin{tabular}{cccc}
  Dimension & $e$ & Running time & Processor \\
    \midrule
    \rowcolor{black!10}
    256 & 15360 & 30 minutes & Intel i7-8650 (4 cores) \\
    512 & 79872 & 4 hours & Intel i7-8650 (4 cores) \\
    \rowcolor{black!10}
    1024 & 3194880 & 103 hours & Intel E5-2650 (16 cores) \\
    \bottomrule
  \end{tabular}
\end{table}
We emphasize that the implementation hardly used all cores: for example,
the total running time over all cores in the last case was 354 hours.

The runtime of the first implementation published~\cite{EC:BEFGK17} in
dimension 256 was 20 hours.
Assuming it is proportional to $n^{6}$ leads to an estimate of 10 years
for $n=1024$, or $800$ times slower than our algorithm. Our
practical results are compiled in~\cref{table:implmentation}.

There are applications in cryptography of this algorithm, such as when
some lattice-based cryptography has a
leak~\cite{EC:GenSzy02,CCS:EFGT17,C:AlbBaiDuc16},
for finding a generator of an ideal~\cite{EC:BEFGK17}, for solving a
norm equation~\cite{howgrave2004method} and
for solving geometric problems on
ideals~\cite{EC:GarGenHal13,EPRINT:Kirchner16}.

 \section{Conclusion} Through this article, we presented efficient
\LLL~variants to reduce lattices defined over the ring of integers  of
cyclotomic fields,  by exploiting the recursive structure of  tower of
cyclotomic subfields.
Our first algorithm has a complexity close to the number of swaps
$\bigO{n^2\cdot B}$  in \LLL~and the last one also exploits the symplectic
symmetries naturally present in such towers. In this last case, we
show that we can beat the natural lower bound on the number of
swaps required to perform a reduction. One caveat of our algorithms is that their approximation factors are worse than the classical \LLL~approximation factor. However, such algorithms can be useful for
some applications such as breaking graded encoding schemes or manipulating ideals, as in the Gentry-Szydlo algorithm.
We implemented all our algorithms and their performances are close to the complexities that we proved under some mild assumptions.
In particular, our implementation can use large base cases, that is all power of two cyclotomic fields of dimension $\leq 16$.

This work raises several questions.
First of all, on the need to rely on the introduced heuristics to prove the
complexity. It is possible to remove them by using the pseudo-basis
representation
of modules over Dedekind rings, and will be the matter of a subsequent
work.
Second, we can wonder about the actual complexity of the symplectic algorithm for low bitsize and on the eventuality of decreasing the approximation factor: is it possible to recover the original \LLL{} approximation factor while keeping the complexities of our fast variants?
Third, our lattice reduction algorithm suggests that the algorithms for reducing lattices on polynomial rings may not be optimal~\cite{giorgi2003complexity}, and in particular that an efficient algorithm with coarse-grain parallelism exists.
Another interesting research direction is to design a faster reduction for  lattices with a block-Toeplitz structure,  which appear in Coppersmith's algorithm~\cite{JC:Coppersmith97}.

Finally, using the symplectic structure we can remark that we can halve the
complexity of the \DBKZ{} algorithm when the block size is less than $n$.
We leave as an open problem the question of how to use similar techniques for
larger gains.

\subsection*{Acknowledgement} We thank Bill Allombert for his help in the
parallelization of the program.

 \bibliographystyle{abbrv}

\newpage
\appendix
\section{Bounding precision}
\label{app:precision}

In this section, we give details on the precision required in our
algorithms. We first indicate the loss of precision of elementary
operations, then look at the precision and complexity of the $QR$
decomposition, and finally the size-reduction procedure. The last part
indicates how to use fast matrix multiplication to reach the same goal.
We recall that $w$ is the number of bits in the words.

\subsection{Elementary operations}

\subsubsection{Fast computation of primitive roots of unity}
The fast Fourier transform algorithm needs a precise approximation of
the primitive roots of unity to be performed in fixed-point
arithmetic. In order to compute with high precision a primitive $f$-th
root of unity, one can use Newton's method where we start with
$1+6.3i/f$. The following lemma ensures that the convergence, in this
case, is at least quadratic.

\begin{lemma}
  Let $x\in\CC$ such that $|x|\geq 1-\frac{1}{2f}$, then by setting
  $x'=x-\frac{x^f-1}{fx^{f-1}}$ and with $\zeta^f=1$, we have:
  \[ |x'-\zeta|\leq f|x-\zeta|^2 \]
\end{lemma}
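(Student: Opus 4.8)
The plan is to recognize this as a standard quadratic-convergence estimate for Newton's method applied to $p(z)=z^f-1$, and to make it quantitative via the integral form of the Taylor remainder. First I would record that $p'(z)=fz^{f-1}$ and $p''(z)=f(f-1)z^{f-2}$, that $p'(x)\neq 0$ because $|x|\geq 1-\tfrac{1}{2f}>0$, and that the stated $x'$ is exactly the Newton iterate $x'=x-p(x)/p'(x)$. Since $p(\zeta)=0$, the second-order Taylor identity $p(\zeta)=p(x)+p'(x)(\zeta-x)+\int_x^\zeta(\zeta-t)p''(t)\,dt$ rearranges to
\[
  x'-\zeta=\frac{(x-\zeta)p'(x)-p(x)}{p'(x)}=\frac{1}{p'(x)}\int_x^\zeta(\zeta-t)p''(t)\,dt=\frac{f-1}{x^{f-1}}\int_x^\zeta(\zeta-t)\,t^{f-2}\,dt.
\]
Parametrizing the segment from $x$ to $\zeta$ by $t=(1-s)x+s\zeta$, $s\in[0,1]$, turns this into
\[
  x'-\zeta=\frac{(f-1)(x-\zeta)^2}{x^{f-1}}\int_0^1(1-s)\bigl((1-s)x+s\zeta\bigr)^{f-2}\,ds,
\]
which already displays the quadratic factor $(x-\zeta)^2$; it only remains to bound the modulus of the remaining expression by $f$.

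For that bound I would use $|\zeta|=1$, hence $\bigl|(1-s)x+s\zeta\bigr|\leq(1-s)|x|+s$ on $[0,1]$, and split on the size of $|x|$ (note $t^{f-2}$ is an honest polynomial for $f\geq 2$, so the integral is harmless even if the segment passes near $0$; the case $f=1$ is trivial since then $x'=\zeta$). If $|x|\geq 1$, then $(1-s)|x|+s\leq|x|$, so the integral is at most $|x|^{f-2}\!\int_0^1(1-s)\,ds=|x|^{f-2}/2$, whence $|x'-\zeta|\leq\frac{(f-1)|x-\zeta|^2}{2|x|}\leq\frac{f-1}{2}|x-\zeta|^2\leq f|x-\zeta|^2$. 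If $1-\tfrac{1}{2f}\leq|x|<1$, then $(1-s)|x|+s<1$, so the integral is at most $1/2$; combined with Bernoulli's inequality $|x|^{f-1}\geq\bigl(1-\tfrac{1}{2f}\bigr)^{f-1}\geq 1-\tfrac{f-1}{2f}=\tfrac{f+1}{2f}\geq\tfrac12$ (valid since the exponent $f-1\geq 1$), this gives $|x'-\zeta|\leq(f-1)|x-\zeta|^2\leq f|x-\zeta|^2$. In both cases the claimed inequality follows.

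There is no genuine obstacle here; the delicate points are purely bookkeeping. I would need to be careful about the orientation in the Taylor-remainder formula so that the $(x-\zeta)^2$ factor emerges with the right sign, and about choosing an elementary lower bound for $|x|^{f-1}$ sharp enough that the final constant lands on exactly $f$ rather than something marginally larger — this is precisely where the hypothesis $|x|\geq 1-\tfrac{1}{2f}$ is consumed (it is used only in the regime $|x|<1$, to keep $|x|^{f-1}$ bounded below by $\tfrac12$; when $|x|\geq 1$ it is not needed at all). Everything else is a routine estimate of the one-dimensional integral.
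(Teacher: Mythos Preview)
Your proof is correct and follows essentially the same approach as the paper: both factor the Newton error as $(x-\zeta)^2$ times a cofactor and bound that cofactor using the hypothesis $|x|\geq 1-\tfrac{1}{2f}$ to control $|x|^{1-f}$. The only cosmetic difference is that the paper reduces to $\zeta=1$ and obtains the cofactor as the explicit finite sum $\frac{1}{fx^{f-1}}\sum_{k=1}^{f-1}kx^{k-1}$, whereas you obtain the equivalent integral form via the Taylor remainder; your Bernoulli bound $(1-\tfrac{1}{2f})^{f-1}\geq\tfrac12$ plays the same role as the paper's numerical check $(1-\tfrac{1}{2f})^{-f}<2$.
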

\begin{proof}
  Without loss of generality, by dividing everything by $\zeta$, we can assume $\zeta=1$.
  We then have the following equality:
  \[
    \frac{x'-1}{(x-1)^2}=
    \frac{(fx^{f-1}(x-1)-x^f+1)(x-1)^{-2}}{fx^{f-1}}=\frac{\sum_{k=1}^{f-1}
    kx^{k-1}}{fx^{f-1}}
  \]
  Applying the triangular inequality gives:
  \[
    \left|\frac{x'-1}{(x-1)^2}\right|\leq
    \frac{f(f-1)\max(1,|x|^{f-1})}{2f|x|^{f-1}}\leq
    \frac{1}{2}f\max(1,|x|^{1-f}).
  \]
  We can conclude by noticing that $(1-\frac{1}{2f})^{-f}\leq
  (1-1/6)^{-3}<2$.
\end{proof}
For $f\geq 128$, it is now easy to show that the sequence converges
towards $\exp(2i\pi/f)$; the finite number of remaining cases are easily
done by direct computations.

\subsubsection{A bound on the loss when iterating unitary matrices}

We now show the following elementary lemma on the iterations of
matrix-vector computations, which states that the error made when
computing chained matrix-vector multiplications can be controlled.
\begin{lemma}
  \label{lem:prec_composition_matrix}
  Let $A_i$ be a family of $k$ unitary matrices. Suppose that for each
  of these matrices $A_i$ there exists an algorithm $\mathcal{A}_i$ that
  given some vector $x$, outputs $A_ix$ within a certain vector of
  errors $e$ such that $\|e\|\leq \epsilon \|x\|$ with $\epsilon\leq
  \frac{1}{2k}$.  Then, the algorithm which computes $(\prod_i A_i)x$ by
  composing the algorithms $\mathcal{A}_i$ returns $(\prod_i A_i)x$
  within an error vector $e$ such that $\|e\|\leq 2k\epsilon\|x\|$.
\end{lemma}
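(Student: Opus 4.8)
The plan is to carry out a straightforward forward error analysis on the chained evaluation. Fix the order in which the product $\prod_i A_i$ is applied and introduce, for $x$ given, the exact partial evaluations $z_0 = x$ and $z_m = A_{i_m} z_{m-1}$ (where $A_{i_1},\dots,A_{i_k}$ is the sequence of matrices applied, innermost first), so that $z_k = \left(\prod_i A_i\right)x$, together with the computed counterparts $\tilde z_0 = x$ and $\tilde z_m = \mathcal{A}_{i_m}(\tilde z_{m-1})$. By the hypothesis on $\mathcal{A}_{i_m}$ we may write $\tilde z_m = A_{i_m}\tilde z_{m-1} + e_m$ with $\|e_m\|\le \epsilon\,\|\tilde z_{m-1}\|$; note that the error is controlled by the norm of the \emph{computed} input at that stage, which is the quantity we must keep track of.

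First I would control the norms of the computed iterates. Since each $A_i$ is unitary, $\|z_m\| = \|x\|$ for all $m$, while $\|\tilde z_m\| \le \|A_{i_m}\tilde z_{m-1}\| + \|e_m\| \le (1+\epsilon)\|\tilde z_{m-1}\|$, hence $\|\tilde z_m\| \le (1+\epsilon)^m\|x\|$. Next, setting $d_m = \tilde z_m - z_m$ (so $d_0 = 0$), the defining relations give $d_m = A_{i_m} d_{m-1} + e_m$, and unitarity of $A_{i_m}$ together with the triangle inequality yields the recursion
\[
  \|d_m\| \;\le\; \|d_{m-1}\| + \epsilon\,\|\tilde z_{m-1}\| \;\le\; \|d_{m-1}\| + \epsilon(1+\epsilon)^{m-1}\|x\|.
\]
Unrolling this over $m = 1,\dots,k$ and summing the resulting geometric series gives
\[
  \|d_k\| \;\le\; \epsilon\|x\|\sum_{m=1}^{k}(1+\epsilon)^{m-1} \;=\; \bigl((1+\epsilon)^k - 1\bigr)\|x\|.
\]

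Finally I would discharge the constant using the assumption $\epsilon \le \tfrac{1}{2k}$: from $(1+\epsilon)^k \le e^{k\epsilon}$ and the elementary bound $e^t - 1 \le t + t^2$ valid for $t\in[0,1]$ (here $t = k\epsilon \le \tfrac12$), one gets $(1+\epsilon)^k - 1 \le k\epsilon + (k\epsilon)^2 \le \tfrac32 k\epsilon \le 2k\epsilon$, which yields $\|d_k\| \le 2k\epsilon\|x\|$ as claimed. There is no real obstacle here; the only point requiring a little care is that the per-step error bound refers to the inflated computed norm rather than $\|x\|$ itself, but the inflation factor $(1+\epsilon)^{m}$ stays below $e^{1/2}$ and is absorbed harmlessly into the geometric sum, so the bookkeeping closes cleanly.
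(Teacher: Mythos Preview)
Your proof is correct and follows essentially the same route as the paper: both arguments establish the intermediate bound $\|d_k\|\le\bigl((1+\epsilon)^k-1\bigr)\|x\|$ (the paper phrases this as an induction on $k$, you as an unrolled recursion on $\|d_m\|$) and then invoke $\epsilon\le\tfrac{1}{2k}$ to conclude $(1+\epsilon)^k-1<2k\epsilon$.
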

\begin{proof}
  Let $B=\prod_{i=2}^k A_i$ and $Bx+e'$ the error committed using the algorithms $\mathcal{A}_i$.
  The algorithm $\mathcal{A}_1$ outputs $A_1(Bx+e')+e$, so that the
  error committed towards $A_1Bx$ is
  \[ \|A_1(Bx+e')+e-A_1Bx\|\leq \|e'\|+\|e\|\leq \|e'\|+\epsilon\|Bx+e\|  \]
  We now prove by induction that this error is less than $\left((1+\epsilon)^k-1\right)\|x\|$ with:
  \begin{align*}
    \|e'\|+\epsilon\|Bx+e\|\leq & \left((1+\epsilon)^{k-1}-1\right)\|x\|+
    \epsilon\left(\|x\|+\left((1+\epsilon)^{k-1}-1\right)\|x\|\right) \\
        = & \left((1+\epsilon)^k-1\right)\|x\|.
    \end{align*}
    The case $k=1$ is immediate and $(1+\epsilon)^k-1 < 2k\epsilon$ for $\epsilon<\frac{1}{2k}$ finishes the proof.

\end{proof}

\subsubsection{Analysis of the Discrete Fourier transform}

We now show how to efficiently compute a close approximation of
a Fourier transform. Indeed, the fast Fourier transform on
$2^n$ points correspond to a product of $n$ unitary matrices, so that
we can get $p$ bits of precision using a precision in $\bigO{p+\log n}$
by \cref{lem:prec_composition_matrix}.
Using this, we obtain an algorithm to multiply integers with $B$ bits with complexity $O(B/w\cdot \log(B/w))=O(B)$.

Bluestein's algorithm~\cite{bluestein1970linear} for Chirp-Z transform
reduces discrete Fourier transform in any size to the computation of
fast Fourier transform over power-of-two so that the same holds. Recall that
Inverse Fourier transform can also be computed from a discrete Fourier
transform.

All in all, we can evaluate the corresponding Fourier isomorphism and its
inverse:
\[ \RR[x]/(\Phi_f) \cong \CC^{\varphi(f)/2} \]
with limited loss in precision.

The complexity of this computation is a $\bigO{np+n\log n\cdot p/w}=\bigO{np}$ for $p=\Omega(w+\log n)$ with $n=\varphi(f)$. Indeed it
breaks down as:
\begin{itemize}
  \item Write the coefficients as polynomials with register-size
    coefficients and compute their Fourier transform with a cost of $\bigO{np}$
  \item Compute $\bigO{p/w}$ convolutions with Fourier transforms
    of size $\bigO{n}$
  \item Compute the inverse transform and
    propagate the carries for a running time of $\bigO{np}$.
\end{itemize}

(A modular implementation is probably faster if
$n$ is not tiny.)

In the general case, one would have to precompute the roots and use
product and remainder trees~\cite{moenck1972fast}.

\subsection{Householder orthogonalization}

The Householder orthogonalization algorithm transforms a complex matrix
$A$ into a product of $QR$, with $Q$ unitary and $R$ upper-triangular.
$Q$ is formed as a product of unitary reflections, which are all of the
type $\id-2v\conj{v}^t$ for certain vectors $\|v\|=1$.

The vector $v$ corresponding to the first symmetry is chosen so that the
first column of $R$ has only its first coordinate to be non-zero.
The algorithm then applies this unitary operation to the matrix $A$ and
recursively orthogonalize the bottom-right of this new matrix.

More precisely, denote by $a$ the first column of the matrix $A$.
As such, the first column of $R$ will be the vector
\[r = \left(-\|a\|\cdot \frac{a_1}{|a_1|},0,\dots,0\right)^t, \]
with the quotient $\frac{a_1}{|a_1|}$ set to $1$ if $a_1=0$.
Then with $v=\frac{a-r}{\|a-r\|}$ and $Q=\id-2v\conj{v}^t$, we have
that:
\[
  Qa=a-2\frac{(a-r)\conj{(a-r)}^ta}{\|a-r\|^2}=a-\frac{2(\|a\|^2-\conj{r}^ta)}{\|a-r\|^2}(a-r)
\]
We now use the fact that $\conj{a}^tr\in \RR$ and $\|r\|=\|a\|$ to get:
\[
  2(\|a\|^2-\conj{r}^ta)=\|a\|^2-\conj{r}^ta-\conj{a}^tr+\|r\|^2=\|a-r\|^2
\]
so that $Qa=r$.

The sign in the definition of $r$ implies that $\|a-r\|\geq \|a\|$ so
that we can compute $v$ with the precision used to handle $a$.

If we use $p>\omega(\log d)$ bits of precision, we can multiply by
$\id-2v\conj{v}^t$ with a relative error of $\bigO{d2^{-p}}$.
Using \cref{lem:prec_composition_matrix}, since we are performing $d$
symmetries, each column is computed with a
relative error of at most a $\bigO{d^22^{-p}}$.
Hence, with $\hat{Q}$ the matrix output by the algorithm, each column
of $\conj{Q}^tA$ has a relative error of $\bigO{d^22^{-p}}$ with respect
to the computed $R$.
This implies that there exists a matrix $A'$ where each column is $A$
within a relative error of $\bigO{d^22^{-p}}$, and whose $R$-factor in
the QR decomposition is the returned $R$.  Remark that the returned
$R_{i,i}$ may not be real. While this is usually not a problem, $R$ has
to be multiplied on the left by a diagonal unitary matrix to obtain {\em
the} QR-decomposition.\medskip

We define the conditional number of $A$ as $\kappa(A)=\|A\|\|A^{-1}\|$.
We can bound the stability of the QR decomposition~\cite{sun1991perturbation}:

\begin{theorem}\label{thm:stabqr}
  Given a matrix $A$, let $R$ be the $R$-factor of its QR decomposition.
  For the matrix $A+\delta A$, let $R+E$ be the $R$-factor of its QR
  decomposition.
  Then: \[ \|E\|\leq 3\kappa(A)\|\delta A\| \]
  provided that $\kappa(A)\frac{\|\delta A\|}{\|A\|}<1/10$.
\end{theorem}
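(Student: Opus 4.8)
The plan is to recover this bound by transporting the problem to the Cholesky factorization of the Gram matrix, where the triangular structure is directly exploitable. Writing $A=QR$ and $A+\delta A=\tilde Q\tilde R$ with $\tilde R=R+E$, and forming the normal equations $A^*A=R^*R$ and $\tilde A^*\tilde A=\tilde R^*\tilde R$, their difference reads
\[
  R^*E+E^*R+E^*E=\delta G,\qquad \delta G:=A^*\delta A+(\delta A)^*A+(\delta A)^*\delta A.
\]
Substituting $F:=ER^{-1}$, which is again upper triangular, and conjugating this identity by $R^{-1}$ yields the clean equation
\[
  (I+F)^*(I+F)=I+H,\qquad H:=R^{-*}(\delta G)R^{-1},
\]
so $I+F$ is exactly the Cholesky factor of the Hermitian matrix $I+H$ (its diagonal equals $\tilde R_{ii}/R_{ii}>0$, so the factorization is well posed once $\|H\|<1$). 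Bounding $E$ then reduces to bounding the Cholesky factor of a small perturbation of the identity, after which one undoes the substitution through $E=FR$ and $\|E\|\le\|F\|\,\|R\|=\|F\|\,\|A\|$.

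The point of routing through the Gram matrix is a cancellation in $H$. Using $AR^{-1}=Q$ and $R^{-*}A^*=Q^*$ one gets
\[
  H=Q^*(\delta A)R^{-1}+\bigl(Q^*(\delta A)R^{-1}\bigr)^*+(\delta A\,R^{-1})^*(\delta A\,R^{-1}),
\]
and since $Q$ is unitary and $\|R^{-1}\|=\|A^{-1}\|$, this gives $\|H\|\le 2\|A^{-1}\|\,\|\delta A\|+\|A^{-1}\|^2\|\delta A\|^2$. Only one factor of $\|A^{-1}\|$ occurs at leading order, which is what ultimately makes the final bound linear rather than quadratic in $\kappa(A)$. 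Under the hypothesis $\kappa(A)\|\delta A\|/\|A\|<1/10$ one has $\|A^{-1}\|\,\|\delta A\|<1/10$, hence $\|H\|<\tfrac1{5}\bigl(1+\tfrac1{20}\bigr)<\tfrac14$, comfortably inside the region where the Cholesky map is defined and Lipschitz.

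To bound $F$ I would solve $(I+F)^*(I+F)=I+H$ by a fixed point: rewrite it as $F+F^*=H-F^*F$ and apply the linear ``upper-triangular-with-halved-diagonal'' projection $\mathrm{up}(\cdot)$, the inverse of $X\mapsto X+X^*$ on real-diagonal upper-triangular matrices, to obtain $F=\mathrm{up}(H-F^*F)$. For $\|H\|$ as small as above, $X\mapsto\mathrm{up}(H-X^*X)$ is a contraction of a small ball, giving $\|F\|\le\|\mathrm{up}(H)\|\,\bigl(1+\bigO{\|H\|}\bigr)$ and therefore $\|E\|=\|FR\|\le\|A\|\,\|\mathrm{up}(H)\|\,\bigl(1+\bigO{\|H\|}\bigr)$; combined with the estimate on $\|H\|$, the two $(1+\text{small})$ factors turn the leading constant $2$ into at most $3$, so $\|E\|\le 3\kappa(A)\|\delta A\|$. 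The main obstacle is precisely the control of $\mathrm{up}$: in the spectral norm, triangular truncation is bounded only up to a $\Theta(\log n)$ factor, so to keep the dimension-free constant $3$ one should run the whole argument in the Frobenius norm, where $\|\mathrm{up}(M)\|_F\le\tfrac1{\sqrt2}\|M\|_F$ for Hermitian $M$ makes the iteration a genuine contraction; the spectral-norm statement then follows from $\|\cdot\|_2\le\|\cdot\|_F$ together with unitary invariance, or else by carrying the triangular structure of $F$ explicitly so that no truncation loss is ever incurred.
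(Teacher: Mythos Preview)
Your approach is essentially the same as the paper's: both pass to the Gram matrix, obtain the perturbation identity $F+F^*+F^*F=H$ with $F=ER^{-1}$ upper triangular, and bound $\|H\|\le 2\epsilon+\epsilon^2$ with $\epsilon=\|\delta A\|\,\|A^{-1}\|$. The paper then simply asserts that taking norms yields $\rho-\rho^2\le 2\epsilon+\epsilon^2$ (with $\rho=\|F\|$) and uses a connectedness argument on $t\mapsto\delta A(t)$ to discard the spurious branch $\rho>1/2$; you instead set up a contraction via the projection $\mathrm{up}$, which handles the branch selection implicitly.

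You are right to flag the triangular truncation: the step $\|F\|\lesssim\|F+F^*\|$ is exactly where a dimension-free spectral-norm constant is not available in general, and the paper does not justify it either---the cited result of Sun is a Frobenius-norm bound. Your Frobenius route is the standard repair and gives $\|E\|_F\le c\,\kappa_2(A)\|\delta A\|_F$ cleanly. The one loose end is your closing remark: passing back via $\|\cdot\|_2\le\|\cdot\|_F$ yields $\|E\|_2\le c\,\kappa_2(A)\|\delta A\|_F$, not $\|\delta A\|_2$, so a $\sqrt d$ creeps in; ``unitary invariance'' does not remove it. For the purposes of this paper the Frobenius statement is entirely sufficient, so this is a cosmetic rather than substantive gap.
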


\begin{proof}
  Let $A=QR$ be the QR-decomposition.
  Without loss of generality, we assume $\|A\|=1$.
  For a technical reason, we study the problem with $\delta A$ a linear function where $\delta A(1)$ is the wanted matrix, which means that other quantities such as $E$ are also functions.

  We now obtain:
  \[ \conj{\left(A+\delta A\right)}^t(A+\delta A)=\conj{A}^tA+\conj{\delta A}^tA+\conj{A}^t\delta A+\conj{\delta A}^t\delta A\]
  which is equal to:
  \[ \conj{\left(R+E\right)}^t(R+E)=\conj{R}^tR+\conj{E}^tR+\conj{R}^tE+\conj{E}^tE \]
  so we deduce:
  \[ \conj{E}^tR+\conj{R}^tE+\conj{E}^tE=\conj{\delta A}^tA+\conj{A}^t\delta A+\conj{\delta A}^t\delta A. \]
  We multiply by $\conj{A}^{-t}$ on the left and $A^{-1}$ on the right:
  \[ \conj{A}^{-t}\conj{E}^t\conj{Q}^t+QEA^{-1}+\conj{A}^{-t}\conj{E}^tEA^{-1}=\conj{A}^{-t}\conj{\delta A}^t+\delta AA^{-1}+\conj{A}^{-t}\conj{\delta A}^t\delta AA^{-1}. \]
  With $\rho=\|EA^{-1}\|$ and $\epsilon=\|\delta AA^{-1}\|$, we take the norm and get the inequality:
  \[ \rho-\rho^2\leq 2\epsilon+\epsilon^2 \]
  so that for $\epsilon<1/10$ we have $\rho\leq 3\epsilon$ if $\rho<1/2$.

  We now have to exclude the case $\rho>1/2$, which we do with a topological argument.
  It is clear from the algorithm that the QR-decomposition is continuous over invertible matrices.
  Since \[ \|A^{-1}(A+\delta A(t))-\id\|\leq \|A^{-1}\|\|\delta A(t)\|<1/2 \] for $0\leq t\leq 1$, we have that $A+\delta A$ is invertible and therefore $\rho$ is continuous over $[0;1]$.
  As $\rho(0)=0$ and $\rho([0;1])$ is connex, we get $\rho(1)<1/2$.

  Finally, $\|E\|\leq \|EA^{-1}\|\|A\|=\rho$ gives the result.
\end{proof}

Combining these results, we get:
\begin{theorem}
  Given a matrix $A$, we can compute the $R$-factor of its QR
  decomposition in time \[\bigO{\frac{d^3p}{w}+d^3+d^2p}\] with a
  relative error
  of \[ \bigO{\kappa(A)d^22^{-p}} \] if this is smaller than a constant.
\end{theorem}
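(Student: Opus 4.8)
The plan is to combine the backward‑stability of the floating‑point Householder procedure established just above with the perturbation bound \cref{thm:stabqr}, and then to add up the word operations. First I would recall the output of the Householder analysis: run with $p$ bits of precision, the algorithm implicitly forms a matrix $\hat Q$ that is a product of $d$ \emph{exactly} unitary reflections $\id - 2v\conj{v}^{t}$, only the vectors $v$ being computed approximately, and each such reflection is applied with relative error $\bigO{d\,2^{-p}}$. Since $\bigO{d\,2^{-p}} \leq \tfrac1{2d}$ in the regime forced by our hypothesis, \cref{lem:prec_composition_matrix} applies and shows that composing the $d$ reflections multiplies the error by at most $2d$, so every column of the computed $\conj{\hat Q}^{t}A$ agrees with the returned upper‑triangular $\hat R$ up to relative error $\bigO{d^{2}2^{-p}}$. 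Reinterpreted backward, this means there is a matrix $A' = A + \delta A$ with $\|\delta A\| = \bigO{d^{2}2^{-p}\|A\|}$ whose exact $R$‑factor is $\hat R$; the fact that the diagonal of $\hat R$ need not be real is harmless, as one may rescale $\hat R$ on the left by a diagonal unitary matrix.

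Second, I would feed this into \cref{thm:stabqr}. Its hypothesis $\kappa(A)\,\|\delta A\|/\|A\| < 1/10$ is exactly the clause ``if this is smaller than a constant'': since $\|\delta A\|/\|A\| = \bigO{d^{2}2^{-p}}$, it amounts to $\kappa(A)\,d^{2}2^{-p}$ lying below an absolute constant (which in particular puts $p$ comfortably in the range $p = \Omega(\log d)$ where the preceding Householder error analysis is valid). Under this hypothesis \cref{thm:stabqr} yields $\|\hat R - R\| \leq 3\,\kappa(A)\,\|\delta A\| = \bigO{\kappa(A)\,d^{2}2^{-p}\|A\|}$, and since $Q$ is unitary we have $\|R\| = \|A\|$, so the relative error on the $R$‑factor is $\bigO{\kappa(A)\,d^{2}2^{-p}}$, as claimed.

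Third, for the running time I would count the work of the $d$ reflection steps. Step $k$ forms the reflection vector from the current leading column — a norm, the quotient $a_1/|a_1|$, and a normalization, i.e.\ $\bigO{d}$ arithmetic operations together with $\bigO{1}$ square roots on $p$‑bit fixed‑point numbers — and then applies $\id - 2v\conj{v}^{t}$ to the at most $d$ remaining columns, which is an inner product, a scaling and a subtraction per column, hence $\bigO{d^{2}}$ multiplications and additions. Using the Fourier‑based integer multiplication recalled earlier (itself numerically stable via \cref{lem:prec_composition_matrix}), each $p$‑bit multiplication or addition costs $\bigO{1 + p/w}$ word operations; summing the $\bigO{d^{3}}$ such operations over all reflections gives $\bigO{d^{3} + d^{3}p/w}$, while the $\bigO{d^{2}}$ divisions and $\bigO{d}$ square roots from the reflection‑vector constructions, together with reading the $\bigO{d^{2}}$ input entries of $p$ bits, contribute the residual $\bigO{d^{2}p}$ term. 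This gives the bound $\bigO{d^{3}p/w + d^{3} + d^{2}p}$.

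The one genuinely delicate point — and the step I would be most careful about — is the passage in the first paragraph from ``the computed $\hat R$ has small forward error relative to $\conj{\hat Q}^{t}A$'' to ``$\hat R$ is the \emph{exact} $R$‑factor of a nearby matrix $A'$''. This relies crucially on each Householder reflection being exactly unitary in exact arithmetic, so that $\hat Q$ is genuinely unitary and $\hat Q\hat R$ is a bona‑fide QR‑type factorization of a matrix within $\bigO{d^{2}2^{-p}\|A\|}$ of $A$; absent this structure, the perturbation bound \cref{thm:stabqr} could not be invoked. Everything else is routine bookkeeping layered on top of the two cited results.
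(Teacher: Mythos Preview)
Your proof is correct and follows essentially the same approach as the paper: the paper's ``proof'' is literally the phrase ``Combining these results, we get:'', referring to the preceding Householder backward-error discussion and \cref{thm:stabqr}, and you have carried out precisely that combination with appropriate detail. Your remark on the delicate backward-stability step (that $\hat Q$ is a product of exactly unitary reflections, so $\hat R$ is the exact $R$-factor of a nearby matrix) is the key observation the paper makes informally in the paragraph preceding the theorem.
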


We can, of course, decrease the $3$ in the exponent to a few matrix multiplications using aggregated Householder transformations and a divide-and-conquer algorithm, see~\cite[Subsection 18.4]{higham2002accuracy}.
This is also at the end of the appendix.

\subsection{Size-reduction}

We first consider the size-reduction for unitriangular matrices (i.e.
upper triangular matrices with ones on the diagonal).
Such a matrix $A$ is said to be size-reduced if both $A$ and $A^{-1}$
are small.

\begin{lemma}
  Let $A$ be a unitriangular matrix of dimension $d$ with coefficients
  in $\KK = \QQ[\zeta_f]$, such that its coefficients in the power basis
  are bounded in absolute value by $1$.
  Then $\|A\|\leq dn^{3/2}$ and $\|A^{-1}\|=(2n)^{\bigO{d}}$ with
  $n=\varphi(f)$.
\end{lemma}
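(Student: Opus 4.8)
The plan is to pass to the complex embeddings of $\KK$ and then estimate everything entrywise. Let $\sigma$ run over the $n$ embeddings $\KK\to\CC$. Since the canonical Hermitian norm satisfies $\|\alpha\|_{\KK/\QQ}^2=\sum_\sigma|\sigma(\alpha)|^2$, the operator norm of any $\KK$-matrix $M$ (for the norm $\|v\|^2=\sum_i\|v_i\|_{\KK/\QQ}^2$ on $\KK^d$) satisfies $\|M\|\le\max_\sigma\|\sigma(M)\|_2\le\|M\|_F$, where $\|M\|_F^2=\sum_{i,j}\|M_{i,j}\|_{\KK/\QQ}^2$; if $\|\cdot\|$ already denotes the Frobenius norm over $\KK$ this is an equality. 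So it suffices to bound $\|A\|_F$ and, for the inverse, the moduli of the entries of $\sigma(A)^{-1}=\sigma(A^{-1})$. The one elementary input about entries is: if $\alpha=\sum_{\ell=0}^{n-1}c_\ell\zeta_f^\ell$ with all $|c_\ell|\le1$, then each $\sigma$ sends $\zeta_f$ to a root of unity, so $|\sigma(\alpha)|\le\sum_\ell|c_\ell|\le n$, hence $\|\alpha\|_{\KK/\QQ}\le n^{3/2}$.

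For the first bound: a $d\times d$ unitriangular matrix has at most $d(d+1)/2\le d^2$ nonzero entries, each of canonical norm $\le n^{3/2}$ by the above, so $\|A\|\le\|A\|_F=\bigl(\sum_{i\le j}\|A_{i,j}\|_{\KK/\QQ}^2\bigr)^{1/2}\le d\,n^{3/2}$.

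For the second bound: write $A=\id+N$ with $N$ strictly upper triangular, so that $N^d=0$ and $A^{-1}=\sum_{k=0}^{d-1}(-1)^kN^k$. Fix $\sigma$. The $(i,j)$ entry of $\sigma(N)^k$ is the sum, over strictly increasing chains $i=i_0<\dots<i_k=j$ in $\{1,\dots,d\}$, of products of $k$ entries of $\sigma(N)$; there are at most $\binom dk$ such chains and each factor has modulus $\le n$, so every entry of $\sigma(N)^k$ has modulus $\le\binom dk n^k$. Summing over $k$, every entry of $\sigma(A)^{-1}$ has modulus at most $\sum_{k=0}^{d}\binom dk n^k=(1+n)^d$; therefore $\|A^{-1}\|\le\|A^{-1}\|_F\le\bigl(n\,d^2\,(1+n)^{2d}\bigr)^{1/2}=d\sqrt n\,(1+n)^d\le(2n)^{\bigO{d}}$, using $d\le 2^d$ and $1+n\le 2n$.

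The only step with real content is the second bound. The naive estimate $\|A^{-1}\|\le\sum_{k<d}\|N\|^k$ is useless, since $\|N\|$ can be of order $dn$ (take $N$ proportional to the all-ones strictly upper triangular matrix), which would inject a spurious $d^{\,d}$ factor incompatible with $(2n)^{\bigO{d}}$. Keeping the estimate entrywise repairs this: the nilpotent triangular structure caps the number of contributing length-$k$ chains at $\binom dk$ rather than $d^k$, so the Neumann series collapses to $(1+n)^d$. Everything else is routine, and the comparison $\|M\|\le\|M\|_F$ makes the argument valid for whichever of the standard matrix norms is meant by $\|\cdot\|$, since each is dominated by the Frobenius norm over $\KK$.
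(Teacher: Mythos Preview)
Your proof is correct. For the inverse bound you take a different route from the paper: the paper argues column by column, picking for a column $x$ of $A^{-1}$ the index $i$ that maximizes the weighted quantity $\|x_i\|(2n^{3/2})^i$, then uses $\|(Ax)_i\|\le 1$ together with the entry bound $\|A_{i,j}\|\le n^{3/2}$ to force $\|x_i\|\le 3$, after which the weighting propagates to $\|x_j\|\le 3(2n^{3/2})^{d-1}$ for every $j$. Your Neumann-series-plus-chain-counting argument is equally valid and arguably more transparent; it even yields a slightly smaller exponent in $n$ (roughly $n^d$ against the paper's $n^{3d/2}$), though both collapse to $(2n)^{\bigO{d}}$. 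The paper's weighted-maximum trick, on the other hand, never expands $A^{-1}$ and would carry over more directly to settings where one has only diagonal dominance rather than nilpotency.
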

\begin{proof}
  It is clear that $\|A_{i,j}\|\leq \sqrt{nf}\leq n^{3/2}$ so that
  $\|A\|\leq dn^{3/2}$.
  Now let $x$ be a column of $A^{-1}$.
  Consider a $i$ which maximizes $\|x_i\|(2n^{3/2})^i$.
  Then we have
  \[ 1\geq \|(Ax)_i\|\geq \|x_i\|-\sum_{j>i} \|A_{i,j}\|\|x_j\| \geq
  \|x_i\|\left(1-\sum_{j>i} \frac{n^{3/2}}{(2n^{3/2})^{j-i}} \right)>
  \|x_i\|/3
  \]
  and we obtain $\|x_i\|\leq 3$ which gives $\|x\|\leq
  3(2n^{3/2})^{d-1}\sqrt{d}$.
\end{proof}

We can finally prove our size-reduction theorem:
\begin{theorem}
  Let $A$ be a matrix of dimension $d$ with coefficients in
  $\KK=\QQ[\zeta_f]$, and $n=\varphi(f)$.
  We are given $p$, where $\|A\|,\|A^{-1}\|\leq 2^p$ and also
  $\sqrt{n\log n\log\log n}+d\log n<p$.
  In time  $\bigO{d^3np/w+d^2pn\log d}$,
  we can find an integral triangular matrix $U$ with $U_{i,i}\in
  \order_\KK^\times$ and a matrix $R+E$ such that $\|E\|\leq 2^{-p}$,
  with $R$ the R-factor of the QR decomposition of $AU$ and
  \[ \kappa(AU)\leq \left(\frac{\max_i \norm_{\KK/\QQ}(R_{i,i})}{\min_i
  \norm_{\KK/\QQ}(R_{i,i})}\right)^{1/n}2^{\bigO{\sqrt{n\log n\log\log n}+d\log n}}.\]
  We also have $\|U\|\in 2^{\bigO{p}}$
\end{theorem}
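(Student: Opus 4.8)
The plan is to read the statement as the correctness and complexity analysis of the procedure \algName{Size-Reduce} of \cref{alg:size_reduce}, applied to the $R$-factor returned by a Householder \algName{Orthogonalize}. First I would pass to the numerical setting: applying the Fourier isomorphism $\RR[x]/(\Phi_f)\cong\CC^{\varphi(f)/2}$ coordinatewise (with the precision loss controlled by \cref{lem:prec_composition_matrix} and the DFT analysis above), a QR factorization over $\KK$ splits into $\bigO{n}$ independent numerical QR factorizations, each handled by the Householder routine analyzed above run at working precision a large constant multiple of $p$. Since $\kappa(A)=\|A\|\,\|A^{-1}\|\le 2^{2p}$, \cref{thm:stabqr} together with the backward-error bound of that routine gives a returned factor of the form $R+E_0$ with $\|E_0\|\le 2^{-p}$ for the true $R$-factor of $A$; the cost of this phase is $\bigO{n(d^3p/w+d^3+d^2p)}$, which lies within the claimed bound once one uses $d\log n<p$.

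The heart of the argument is the structural description of the $R$-factor $R'$ of the output $AU$. I would observe that \algName{Size-Reduce} multiplies $A$ on the right only by diagonal dilations $D_i(u_i)^{-1}$ with $u_i\in\order_\KK^\times$ returned by \algName{Unit}, and by transvections $T_{i,j}(-\mu)$ with $\mu\in\order_\KK$; hence $U$ is integral, upper triangular, with $U_{i,i}=u_i^{-1}\in\order_\KK^\times$. Because the Hermitian form on the $\order_\KK$-module is $\KK\otimes\RR$-linear, the orthogonal projections used in \algName{Orthogonalize} are $\KK_\RR$-linear, so they commute with multiplication by $u_i^{-1}$; the transvections change neither the span of the first $i-1$ columns nor the diagonal of $R$. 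Consequently $R'_{i,i}=R_{i,i}/u_i$, so $\norm_{\KK/\QQ}(R'_{i,i})=\norm_{\KK/\QQ}(R_{i,i})$, while \cref{thm:unites} forces $|\sigma(R'_{i,i})|=2^{\bigO{\sqrt{n\log n\log\log n}}}\norm_{\KK/\QQ}(R'_{i,i})^{1/n}$ for every embedding (two-sided). A routine error analysis of the inner loop — using that the rounding sub-step of \algName{Size-Reduce} drives the power-basis coefficients of the computed off-diagonal quotients below $1/2$, and that the computed $R$ is within relative error $2^{-\Omega(p)}$ of the true one — shows that the true quotients $\mu'_{i,j}=R'_{i,j}/R'_{i,i}$ have power-basis coefficients bounded by $1$. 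Writing $R'=D_0T$ with $D_0=\mathrm{diag}(R'_{1,1},\dots,R'_{d,d})$ and $T$ unitriangular with entries $\mu'_{i,j}$, the unitriangular lemma proved just above gives $\|T\|\le dn^{3/2}$ and $\|T^{-1}\|=(2n)^{\bigO{d}}$, and balancedness gives $\|D_0\|\,\|D_0^{-1}\|\le 2^{\bigO{\sqrt{n\log n\log\log n}}}\bigl(\max_i\norm_{\KK/\QQ}(R'_{i,i})/\min_i\norm_{\KK/\QQ}(R'_{i,i})\bigr)^{1/n}$. Since $Q'$ is unitary, $\kappa(AU)=\kappa(R')\le\|D_0\|\,\|T\|\,\|T^{-1}\|\,\|D_0^{-1}\|$, which is exactly the asserted bound.

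It remains to bound $\|U\|$ and the cost of the size-reduction phase. Here $\|U\|\le\|A^{-1}\|\,\|AU\|=\|A^{-1}\|\,\|R'\|$ and $\|R'\|\le\|D_0\|\,\|T\|\le dn^{3/2}2^{\bigO{\sqrt{n\log n\log\log n}}}\max_i\norm_{\KK/\QQ}(R_{i,i})^{1/n}$; since $|\sigma(R_{i,i})|\le\|R\|=\|A\|\le 2^p$ we get $\norm_{\KK/\QQ}(R_{i,i})^{1/n}\le 2^p$, and the hypothesis $\sqrt{n\log n\log\log n}+d\log n<p$ then yields $\|R'\|=2^{\bigO{p}}$ and $\|U\|=2^{\bigO{p}}$; propagating $E_0$ through the (well-conditioned, since $\|T\|$ is small) column updates keeps the returned factor of the form $R'+E$ with $\|E\|\le 2^{-p}$. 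For complexity, each of the $\bigO{d^2}$ inner steps performs one rounding and one column update in which one operand has $\bigO{1}$-size power-basis coefficients, costing $\bigO{d\cdot np/w}$ bit operations, for a total $\bigO{d^3np/w}$; together with the $\bigO{d}$ quasi-linear \algName{Unit} calls and the QR phase above, and absorbing low-order terms under $d\log n<p$, this gives $\bigO{d^3np/w+d^2pn\log d}$.

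The step I expect to be the main obstacle is the quantitative floating-point certification: guaranteeing that precision $\bigO{p}$ suffices throughout even though $\kappa(A)$ can be as large as $2^{2p}$ and the inverse of the unitriangular part can be as large as $(2n)^{\bigO{d}}$. This is handled by taking the working precision a sufficiently large multiple of $p$, invoking \cref{thm:stabqr} for the orthogonalization and \cref{lem:prec_composition_matrix} to control error amplification along the $\bigO{d}$ chained transvections in each column, and by keeping every estimate \emph{relative} so that the output bound $\|E\|\le 2^{-p}$ is genuinely met. A secondary subtlety, without which the whole condition-number computation collapses, is the $\KK_\RR$-linearity of the orthogonalization maps, since this is precisely what legitimizes the identity $R'_{i,i}=R_{i,i}/u_i$ used to transfer the balancing from the \algName{Unit} output to the diagonal of $R'$.
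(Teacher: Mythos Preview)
Your proposal is correct and follows essentially the same argument as the paper: split via the Fourier isomorphism into $n/2$ independent complex Householder QR factorizations, run \algName{Size-Reduce}, then analyze the output by writing $R'=D\cdot T$ with $D$ the diagonal (controlled by \cref{thm:unites}) and $T$ unitriangular with small power-basis entries (controlled by the preceding lemma), and bound $\|U\|$ through $\|A^{-1}\|\|AU\|$. One minor slip in your cost analysis: the transvection multiplier $\mu=\lfloor R_{i,j}/R'_{i,i}\rceil$ need not have $\bigO{1}$-size power-basis coefficients during the sweep; the $\bigO{d\cdot np/w}$ per-step bound instead follows because, in the Fourier domain you already passed to, each $\KK$-multiplication is $n/2$ pointwise complex products at precision $\bigO{p}$.
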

\begin{proof}
  In the canonical basis of $\KK$ repeated $d$ times, $A$ corresponds to a
  $d\times d$ block matrix, where each block is a diagonal complex matrix of
  size $n/2\times n/2$, so that the QR decomposition can be
  obtained from $n/2$ complex QR decompositions of dimension $d$.
  We can transform into (and from) this basis at a cost of
  $\bigO{d^2pn}$; and the same technique can be used with the
  size-reduction algorithm.

  The algorithm computes $R'$, the R-factor of the QR decomposition of
  $A$. Then we use~\cref{alg:size_reduce} on $R'$ which returns a
  $U$, and the algorithm returns $U$ and $R'U$.

  We have that
  \[ \|AU\|\leq d\sum_i \|R_{i,i}\|\leq d^2\|A\|
   \]
  so that $\|U\|\leq \|A^{-1}\|\|AU\|\leq d^22^{2p}$.
  As a result, we can use a precision of $\bigO{p}$ bits.

  Let $D$ be the diagonal of $R$.
  We have $\kappa(AU)=\kappa(R)\leq \kappa(D)\kappa(D^{-1}R)$.
  The reduction with units guarantees that \[ \kappa(D)\leq
  \left(\frac{\max_i \norm_{\KK/\QQ}(R_{i,i})}{\min_i
\norm_{\KK/\QQ}(R_{i,i})}\right)^{1/n}2^{\bigO{\sqrt{n\log n\log\log n}}}. \]
  The previous lemma gives $\kappa(D^{-1}R)=2^{\bigO{d\log n}}$.
\end{proof}

\subsubsection{On the reduction of well-conditioned matrices}
\label{sec:well_conditioned}

We finish this subsection with properties of lattices represented by a
well-conditioned matrix.
The following easy theorem indicates that if we want to reduce the
lattice generated by $A$, we can always truncate the matrix and work
with precision only $\bigO{\log(\kappa(A))}$.
The transition matrix which will be computed by the algorithm also needs
at most this precision.
Up to an irrelevant (small) quantity, this is of course a
\[\bigO{\log\left(\frac{\max_i \norm_{\KK/\QQ}(R_{i,i})}{\min_i
\norm_{\KK/\QQ}(R_{i,i})}\right)/n}.\]

\begin{theorem}
\label{thm:well_conditioned}
  Let $A$, $\delta A$ and $U$ an integer matrix such that $\|AU\|\leq
  \kappa\|A\|$, $\kappa(AU)\leq \kappa$ and \[ \frac{\|\delta
  A\|}{\|A\|}\leq \frac{\epsilon}{3\kappa^3} \]
  with $\epsilon<1/4$ and $\kappa\geq \kappa(A)$.
  Let $R$ be the $R$-factor of the QR-decomposition of $AU$ and $R+E$ be
  the one of $(A+\delta A)U$.
  Then $\|U\|\leq \kappa^2$ and
  \[ \frac{\|E\|}{\|A\|}\leq \epsilon. \]
\end{theorem}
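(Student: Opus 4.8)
The plan is to reduce everything to the perturbation bound for the QR decomposition already established in \cref{thm:stabqr}, applied not to $A$ itself but to the already-transformed matrix $AU$, with perturbation $\delta A\,U$.

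First I would dispatch the bound on $\|U\|$. Since $U = A^{-1}(AU)$, submultiplicativity of the operator norm gives $\|U\|\leq \|A^{-1}\|\,\|AU\|$. Using $\|A^{-1}\| = \kappa(A)/\|A\|$, the hypothesis $\|AU\|\leq\kappa\|A\|$, and $\kappa\geq\kappa(A)$, this yields $\|U\|\leq \kappa(A)\kappa\leq\kappa^2$, which is the first assertion.

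For the second claim, the observation is that $(A+\delta A)U = AU + \delta A\,U$, so $R$ and $R+E$ are precisely the $R$-factors of $AU$ and of its perturbation by $\delta A\,U$; \cref{thm:stabqr} then gives $\|E\|\leq 3\kappa(AU)\|\delta A\,U\|$ as soon as $\kappa(AU)\,\|\delta A\,U\|/\|AU\| < 1/10$. The one point that requires care is that the naive estimate $\|\delta A\,U\|\leq\|\delta A\|\,\|U\|$ would force control of $\|U^{-1}\|$ when dividing by $\|AU\|$, and no such bound is available; instead I would factor the perturbation through $A^{-1}$, writing $\delta A\,U = (\delta A\,A^{-1})(AU)$. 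This gives $\|\delta A\,U\|/\|AU\|\leq\|\delta A\|\,\|A^{-1}\| = (\|\delta A\|/\|A\|)\,\kappa(A)\leq \epsilon/(3\kappa^2)$, hence $\kappa(AU)\,\|\delta A\,U\|/\|AU\|\leq \kappa\cdot\epsilon/(3\kappa^2) = \epsilon/(3\kappa) < 1/10$ since $\epsilon<1/4$ and $\kappa\geq 1$, so the hypothesis of \cref{thm:stabqr} is satisfied.

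Finally, to conclude I would again use $\delta A\,U = (\delta A\,A^{-1})(AU)$ together with $\|AU\|\leq\kappa\|A\|$ to obtain $\|\delta A\,U\|\leq (\|\delta A\|/\|A\|)\,\kappa(A)\cdot\kappa\|A\|\leq \epsilon\|A\|/(3\kappa)$, and plug this into $\|E\|\leq 3\kappa(AU)\|\delta A\,U\|\leq 3\kappa\cdot\epsilon\|A\|/(3\kappa) = \epsilon\|A\|$; dividing by $\|A\|$ gives the result. There is no real obstacle here — the proof is a short chain of submultiplicativity estimates — the only things to watch are the bookkeeping of which condition number is used where ($\kappa(A)$ versus the a priori bound $\kappa\geq\kappa(A)$, and $\kappa(AU)\leq\kappa$) and the choice to route the perturbation through $A^{-1}$ rather than $U^{-1}$, for which we have no bound.
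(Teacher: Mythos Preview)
Your proof is correct and follows essentially the same approach as the paper: bound $\|U\|$ via $U=A^{-1}(AU)$, then apply \cref{thm:stabqr} to $AU$ with perturbation $\delta A\,U$, routing the perturbation estimate through $A^{-1}$ rather than $U$. Your explicit factorization $\delta A\,U=(\delta A\,A^{-1})(AU)$ makes the key step cleaner than in the paper, and you correctly observe that no bound on $U^{-1}$ is needed; the paper additionally derives a lower bound $\|AU\|\geq 1/\|A^{-1}\|$ from the integrality of $U$, but this is not actually used in the argument (the final step needs the \emph{upper} bound $\|AU\|\leq\kappa\|A\|$, as you do).
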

\begin{proof}
  First $\|U\|\leq \|A^{-1}\|\|AU\|\leq \kappa\|A^{-1}\|\|A\|\leq
  \kappa^2$.
  Then $\|U\|\geq 1$ since it is integral so that $1\leq
  \|A^{-1}AU\|\leq \|A^{-1}\|\|AU\|$ and $\|AU\|\geq
  \frac{1}{\|A^{-1}\|}=\frac{\kappa(A)}{\|A\|}$.
  We deduce: \[ \frac{\|\delta AU\|}{\|AU\|} \leq
  \frac{\epsilon}{3\kappa^2}\]
  and applying the stability theorem we get:
  \[ \frac{\|E\|}{\|AU\|}\leq \frac{\epsilon}{\kappa}. \]
  Using the lower bound on $\|AU\|$ finishes the proof.
\end{proof}
In all \LLL{} algorithms, $\max_i \norm_{\KK/\QQ}(R_{i,i})$ is non-increasing with respect to the round number and
$\min_i \norm_{\KK/\QQ}(R_{i,i})$ is non-decreasing so that we can use the theorem
for all $U$ where $AU$ is size-reduced with
\[ \kappa\leq \left(\frac{\max_i \norm_{\KK/\QQ}(R_{i,i})}{\min_i
\norm_{\KK/\QQ}(R_{i,i})}\right)^{1/n}2^{\bigO{\sqrt{n\log n\log\log n}+d\log n}}. \]

Heuristically, for random lattices, we have $\|U\|\lesssim
\sqrt{\kappa(A)}$ and $\kappa(AU)$ depends only on the dimension so a
truncation of the R-factor of the QR-decomposition of $A$ with error
roughly $\|A\|/\kappa(A)$ is enough.
The precision needed is therefore on the order of $2\log(\kappa(A))$.

\subsection{Faster algorithms}
\label{app:ultra_fast}

We explain here algorithms running in time essentially equal to a matrix multiplication for all previous tasks.
They are only used in~\cref{sec:fastlll}.
We represent a matrix of real numbers by a matrix of integers and a denominator which is a power of two.
Multiplication of matrices, therefore, do not depend on {\em how} the multiplication is computed, as long as it is correct:
whether the corresponding algorithm for floating-point inputs is stable or not is not relevant here.

The QR-decomposition works as follows.
Given the matrix $\begin{pmatrix} A & B\end{pmatrix}$ with $n$ columns and $m\geq n$ rows, we first recursively compute the QR-decomposition of $A=Q_1R_1$.
We let $\conj{Q_1}^tB=\begin{pmatrix} B'_1 \\ B'_2\end{pmatrix}$ where $B'_1$ has as many rows as there are columns in $A$.
Then we compute the QR-decomposition of $B'_2=Q_2R_2$.
The QR-decomposition of the input is then
\[ \left(Q_1\begin{pmatrix}\Id & 0 \\ 0 & Q_2\end{pmatrix}\right)\begin{pmatrix}R_1 & B'_1 \\ 0 & R_2\end{pmatrix}. \]
Remark that
\[ (\Id+XY)(\Id+ZW)=\Id+XY+ZW+XYZW=\Id+\begin{pmatrix} X & Z+XYZ\end{pmatrix}\begin{pmatrix} Y \\ W\end{pmatrix} \]
so we represent all Q matrices in this way, and the base case is done as usual.

We now consider the complexity for a square matrix of dimension $d$.
At the $k$-th recursive levels, the matrices have at most $d/2^k+1$ columns and $d$ rows.
There are $\bigO{2^k}$ rectangular matrix products to be computed, each can be computed in $2^k+1$ products of square matrices of dimension $\leq d/2^k+1$.
The total complexity with $p$ bits of precision is
\[ \bigO{\sum_{k=1}^{1+\log d} 2^{2k}(d/2^k)^{\omega(d/2^k)}p/w+d^2p}=\bigO{\frac{d^{\omega}}{\omega-2}p/w+d^2p\log d}. \]

We can prove by induction for $p\geq \bigO{\log d}$ that for the Q computed
$\hat{Q}$, and for the Q matrix if it were computed exactly\footnote{This matrix is computed from erroneous inputs so that it need not be the Q part of the QR decomposition.} $\breve{Q}$, we have
\[ \| \conj{\breve{Q}}^t\hat{Q}-\Id\|=d^{\bigO{1}}2^{-p}. \]
As a result, $\conj{\breve{Q}}^t$ times the input matrix is with a relative error of $d^{\bigO{1}}2^{-p}$ the computed R;
so that the computed R corresponds to a QR-decomposition of the input matrix with a relative error in the input matrix of $d^{\bigO{1}}2^{-p}$.

We deduce using~\cref{thm:stabqr}:
\begin{theorem}
  Given a matrix $A$, we can compute the $R$-factor of its QR
  decomposition in time $\bigO{\frac{d^\omega p}{(\omega-2)w}+d^2p\log d}$ for $p\geq w+\log(d\kappa(A))$ with a relative error
  of $ 2^{-p} $.
\end{theorem}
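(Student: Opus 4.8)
The plan is to assemble three already-available pieces: the arithmetic cost of the divide-and-conquer block scheme described just above, the backward stability of the \emph{exact} QR decomposition recorded in \cref{thm:stabqr}, and a forward error analysis of the recursion carried out through \cref{lem:prec_composition_matrix}. Throughout I would normalise $\|A\|=1$ and run the recursion with an internal precision $p'$ that is fixed only at the very end, so that all intermediate estimates are stated in terms of $p'$ and then specialised.

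\textbf{Cost.} On a $d\times d$ input, I would bound the $k$-th recursive level by observing that it consists of $\bigO{2^k}$ rectangular products of inner dimension $\bigO{d/2^k}$, each of which splits into $\bigO{2^k}$ multiplications of square blocks of dimension $\bigO{d/2^k}$ with $p'$-bit integer entries; using the fast integer multiplication of the previous subsection, one such block product costs $\bigO{(d/2^k)^{\omega}p'/w}$ word operations, plus $\bigO{(d/2^k)^2 p'}$ for the additions and carry propagation. Summing over the $\bigO{\log d}$ levels and controlling $\sum_k 2^{2k}(d/2^k)^{\omega(d/2^k)}$ by a geometric-type bound $\bigO{d^{\omega}/(\omega-2)}$ (using that $\omega$ is non-increasing and bounded away from $2$) gives a total of $\bigO{\frac{d^{\omega}p'}{(\omega-2)w}+d^2p'\log d}$, exactly as in the informal derivation above.

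\textbf{Error.} I would represent every orthogonal factor the algorithm produces in the compressed form $\id+XY$ and compose such factors via $(\id+XY)(\id+ZW)=\id+\begin{pmatrix}X & Z+XYZ\end{pmatrix}\begin{pmatrix}Y\\ W\end{pmatrix}$, the base case being the product of Householder reflections treated as in the Householder subsection. An induction on the recursion depth then shows that the blocks appearing stay polynomially bounded in norm, so that \cref{lem:prec_composition_matrix}, applied to the unitary reflections that are assembled along the recursion, yields $\|\conj{\breve Q}^{t}\hat Q-\id\|=d^{\bigO{1}}2^{-p'}$, where $\hat Q$ is the factor actually computed and $\breve Q$ the one obtained by running the same sequence of operations in exact arithmetic on the (perturbed) inputs. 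Consequently $\conj{\breve Q}^{t}A$ agrees with the returned matrix $\hat R$ up to relative error $d^{\bigO{1}}2^{-p'}$, i.e. $\hat R$ is the $R$-factor of the QR decomposition of some $A+\delta A$ with $\|\delta A\|\le d^{\bigO{1}}2^{-p'}$.

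\textbf{Conclusion.} I would then set $p'=p+\lceil\log\kappa(A)\rceil+C\log d$ for a suitable absolute constant $C$. Since $p\ge w+\log(d\kappa(A))$ forces $\log\kappa(A)\le p$ and $\log d\le p$, this gives $p'=\bigO{p}$ and $p'\ge w+\log d$, which legitimises the precision hypotheses used in the error estimates; and for $C$ large enough $\kappa(A)\|\delta A\|/\|A\|\le\kappa(A)d^{\bigO{1}}2^{-p'}<1/10$, so \cref{thm:stabqr} applies to $A+\delta A$ and bounds the perturbation of the $R$-factor by $\|E\|\le 3\kappa(A)\|\delta A\|\le 3\kappa(A)d^{\bigO{1}}2^{-p'}\le 2^{-p}$. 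Substituting $p'=\bigO{p}$ into the cost bound of the first step yields the announced complexity. The step I expect to be the main obstacle is the inductive control of the forward error: one must check that the norms of the blocks $X,Y,Z,W$ in the compressed representation of the orthogonal factors stay polynomially bounded in $d$ as the recursion is unwound, so that the accumulated error is only $d^{\bigO{1}}2^{-p'}$ rather than something growing super-polynomially with the depth; once that is in hand the remainder is routine precision bookkeeping and a geometric sum.
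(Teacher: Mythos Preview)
Your proposal is correct and follows essentially the same line as the paper: the cost estimate via the geometric sum $\sum_k 2^{2k}(d/2^k)^{\omega}$, the backward error $\|\conj{\breve Q}^t\hat Q-\id\|=d^{\bigO{1}}2^{-p'}$ obtained by induction on the recursion, and the final appeal to \cref{thm:stabqr} are exactly the ingredients the paper uses. Your explicit separation of the working precision $p'$ from the target precision $p$ is slightly more detailed than the paper's presentation, but otherwise the arguments coincide.
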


We now show a fast size-reduction.
The best-known algorithm for minimizing the condition number of a unitriangular matrix was given by Seysen~\cite{seysen1993simultaneous}.
Using this approach replaces the $(2n)^{\bigO{d}}$ term by a $(2nd)^{\bigO{\log d}}$ in the final condition number.
We explain Seysen's size-reduction as it is both easier and better than the standard one.

It works as follows.
Given a matrix $\begin{pmatrix} A & B \\ 0 & C\end{pmatrix}$, we can assume that both $A$ and $C$ are size-reduced, after two recursive calls.
We then multiply it by
\[ \begin{pmatrix} \Id & -\lfloor A^{-1}B \rceil \\ 0 & \Id \end{pmatrix}\]
and return this matrix.

The result is thus
\[ \begin{pmatrix} A & B-A\lfloor A^{-1}B \rceil \\ 0 & C\end{pmatrix}. \]
and the top-right part is not much larger than $A^{-1}$.
The inverse of the result is
\[ \begin{pmatrix} A^{-1} & -\left(A^{-1}B-\lfloor A^{-1}B \rceil \right)C^{-1} \\ 0 & C^{-1} \end{pmatrix}\]
and the top-right part is not much larger than $C^{-1}$.

We first study an algorithm to invert unitriangular matrices.

 \begin{boxedAlgorithm}[algotitle={Invert}, label=alg:Invert]
    \begin{algorithm}[H]
      \Input{An unitriangular matrix $M$}
      \Output{An approximation of $M^{-1}$}
      \BlankLine
      \lIf{dimension=1}{\Return $1$}
      $\begin{pmatrix} A & B \\ 0 & C\end{pmatrix} \gets M$ \tcp{with a dimension almost halved}
      $A' \gets \algName{Invert}\left(A\right)$ \;
      $C' \gets \algName{Invert}\left(C\right)$ \;
     \Return $\begin{pmatrix} A' & -A'BC' \\ 0 & C' \end{pmatrix}$
    \end{algorithm}
  \end{boxedAlgorithm}

We first prove the performances of the inversion algorithm:

\begin{theorem}
    Given a unitriangular matrix $M$ of dimension $d$ with coefficients in $\KK=\QQ[\zeta_f]$, a field of dimension $n$, with $\|M\|,\|M^{-1}\|\leq 2^p$ and $p\geq w+\log(nd)$,
    \algName{Invert}{} returns a matrix $M'$ such that $\|M'-M^{-1}\|\leq 2^{-p}$
    with a running time of $\bigO{d^{\omega}np/w+d^2np}$.
\end{theorem}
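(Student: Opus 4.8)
The plan is to prove both halves of the statement by induction on the dimension $d$, tracking the recursion of Algorithm~\ref{alg:Invert} level by level. The starting point for correctness is the exact block identity
\[
  \begin{pmatrix} A & B \\ 0 & C\end{pmatrix}^{-1}
  = \begin{pmatrix} A^{-1} & -A^{-1}BC^{-1} \\ 0 & C^{-1}\end{pmatrix},
\]
so that the algorithm is exactly this formula evaluated on the approximations $A',C'$ of $A^{-1},C^{-1}$ returned by the two recursive calls, followed by two matrix multiplications to form $-A'BC'$. First I would record the norm bounds needed to invoke the induction hypothesis on the sub-calls: since $A$ (resp.\ $C$) is a principal submatrix of $M$ and $A^{-1}$ (resp.\ $C^{-1}$) is the matching diagonal block of $M^{-1}$, the spectral-norm bounds $\|A\|,\|A^{-1}\|,\|C\|,\|C^{-1}\|\le 2^p$ are inherited for free, and likewise $\|B\|\le 2^p$; the base case $d=1$ is exact.

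Next I would carry out the error bookkeeping for the combination step. The $(1,1)$ and $(2,2)$ blocks of the output inherit the sub-call errors unchanged, while for the $(1,2)$ block I would split
\[
  \widehat{A'BC'}-A^{-1}BC^{-1}
  = \bigl(\widehat{A'BC'}-A'BC'\bigr) + (A'-A^{-1})BC' + A^{-1}B(C'-C^{-1}),
\]
bounding the first parenthesis by the error of the two matrix products performed at the internal precision, and the other two by the sub-call errors times $\|B\|\|C'\|$ and $\|A^{-1}\|\|B\|$. Here I would use that one multiplication of $d\times d$ matrices over $\KK=\QQ[\zeta_f]$ at precision $P$ can be done through the Fourier isomorphism $\RR[x]/(\Phi_f)\cong\CC^{n/2}$ — transform the $d^2$ entries, perform $n/2$ complex $d\times d$ products, transform back — at cost $\bigO{d^\omega nP/w + d^2 nP}$ and with the relative error controlled by \cref{lem:prec_composition_matrix} applied to the FFT butterfly network. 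Propagating these bounds up the $\bigO{\log d}$ recursion levels then shows that running the whole procedure at an internal precision $P=\bigO{p}$ (a constant multiple of $p$, plus the lower-order term $\bigO{\log(nd)}$) yields $\|M'-M^{-1}\|\le 2^{-p}$; the structural fact that prevents a genuine blow-up is that the true block $-A^{-1}BC^{-1}$ is a block of $M^{-1}$ and hence has norm at most $2^p$.

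For the running time I would set up the recurrence $T(d) = 2\,T(\lceil d/2\rceil) + \bigO{(d/2)^\omega np/w + (d/2)^2 np}$, the inhomogeneous term accounting for the two $(d/2)$-dimensional products over $\KK$ at precision $\bigO{p}$ and the negligible assembly of the block matrix. Since $\omega>2$ (and $\omega$ is non-increasing, so the local exponent grows only slowly as the dimension shrinks, changing the geometric ratio by at most a bounded factor), both $2\cdot(d/2)^\omega$ and $2\cdot(d/2)^2$ are geometrically decreasing down the recursion, so the total is dominated by the root and equals $\bigO{d^\omega np/w + d^2 np}$, as claimed.

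The step I expect to be the main obstacle is making the precision claim $P=\bigO{p}$ rigorous: the naive propagation of absolute errors through $A'B$ and then $(A'B)C'$ introduces an amplification of size up to $\|B\|\|C^{-1}\|\le 2^{2p}$ per recursion level, and one must argue these do not compound to force $P=\omega(p)$. The clean way to do this is to replace the absolute-error invariant by a relative one keyed to $\|M^{-1}\|$ (using that the assembled matrix is, by the block identity, exactly $M^{-1}$ up to the freshly committed multiplication errors, rather than a product of three possibly-large factors), or, for the use of \algName{Invert} inside Seysen's size-reduction, to exploit that the inputs there are already size-reduced so that the relevant products have only $\text{poly}(nd)$-bounded norm. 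The remaining ingredients — the block inverse identity, monotonicity of the spectral norm under submatrix extraction, the cost and stability of Fourier-based matrix multiplication, and the divide-and-conquer summation — are routine.
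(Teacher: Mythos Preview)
Your running-time recurrence is correct, and you have correctly identified the obstacle: if you induct on $\|M'-M^{-1}\|$, the off-diagonal terms $(A'-A^{-1})BC'$ and $A^{-1}B(C'-C^{-1})$ amplify the sub-call errors by factors of order $2^{2p}$ at \emph{every} level of the recursion, which would force internal precision $\Theta(p\log d)$ rather than $\bigO{p}$. Neither of your proposed fixes closes this gap. A relative invariant keyed to $\|M^{-1}\|$ still suffers the same recursive amplification (a relative error $2^{-p}$ times an amplification $2^{2p}$ is no longer small), and appealing to size-reduced inputs is not available for the theorem as stated, which concerns an arbitrary unitriangular $M$ with $\|M\|,\|M^{-1}\|\le 2^p$.

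The paper avoids the compounding by \emph{flipping the invariant}: it inducts on the backward error $\|M'^{-1}-M\|$ instead of the forward error $\|M'-M^{-1}\|$. Writing the output as $M'=\begin{pmatrix} A' & -A'BC'+E \\ 0 & C'\end{pmatrix}$ with $E$ the fresh multiplication error, the block inverse formula gives
\[
  M'^{-1}-M=\begin{pmatrix} A'^{-1}-A & -A'^{-1}EC'^{-1} \\ 0 & C'^{-1}-C\end{pmatrix}.
\]
The diagonal blocks are precisely the sub-call backward errors, \emph{unamplified}; the only amplification is of the fresh $E$, by $\|A'^{-1}\|\|C'^{-1}\|\approx\|A\|\|C\|\le 2^{2p}$, and this is absorbed once and for all by performing the two matrix products at precision $p'+2p$ with $p'=2p+\bigO{\log d}=\bigO{p}$. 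One then converts at the very end: with $F=M'^{-1}-M$ one has $M'=(\Id+M^{-1}F)^{-1}M^{-1}$, whence $\|M'-M^{-1}\|\le\|M^{-1}\|\cdot\|(\Id+M^{-1}F)^{-1}-\Id\|\le 2^{-p}$. Your parenthetical remark that ``the assembled matrix is, by the block identity, exactly $M^{-1}$ up to the freshly committed multiplication errors'' is gesturing at the right phenomenon (with zero multiplication error, $M'^{-1}$ has off-diagonal block exactly $B$), but the working formalisation is this backward-error induction, not a relative forward error.
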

\begin{proof}
    We use a precision $p'=1+2p+\lceil \log(d)\rceil =\bigO{p}$.

    We prove that $\|M'^{-1}-M\|\leq 2d^{0.5}2^{-p'}$ by induction on $d$.
    The case $d=1$ is easy, so we assume $d>1$.
    Let $E$ be such that the top-right part of $M'$ is $-A'BC'+E$, and also $A'^{-1}=A+\delta A$, $B'^{-1}=B+\delta B$.
    Then, we have:
    \[ M'^{-1}-M=\begin{pmatrix} \delta A & -A'^{-1}EC'^{-1} \\ 0 & \delta C\end{pmatrix}. \]
    We can guarantee $\|E\|\leq 2^{-p'-2p}$ with an intermediary bitsize $\bigO{p'}$.
    This leads to our intermediary result.

    Now let $M'^{-1}=M+F$.
    We get \[ M'=(M(\Id+M^{-1}F))^{-1}=(\Id+M^{-1}F)^{-1}M^{-1} \]
    and therefore $\|M'-M^{-1}\|\leq \|M^{-1}\|\|(\Id+M^{-1}F)^{-1}-\Id\| \leq 2^{-p}$.
\end{proof}

 \begin{boxedAlgorithm}[algotitle={Seysen-Size-Reduce}, label=alg:Seysen-SR]
    \begin{algorithm}[H]
      \Input{An unitriangular matrix $M$}
      \Output{An integer unitriangular transformation $U$, and $(AU)^{-1}$}
      \BlankLine
      \lIf{dimension=1}{\Return $1$}
      $\begin{pmatrix} A & B \\ 0 & C\end{pmatrix} \gets M$ \tcp{with a dimension almost halved}
      $U_1 \gets \algName{Seysen-Size-Reduce}\left(A\right)$ \;
      $U_2 \gets \algName{Seysen-Size-Reduce}\left(C\right)$ \;
      $A' \gets \algName{Invert}\left(AU_1\right)$ \;
      $W \gets \lfloor A'BU_2 \rceil $ \;
      \Return $\begin{pmatrix} U_1 & -U_1W \\ 0 & U_2 \end{pmatrix}$
    \end{algorithm}
  \end{boxedAlgorithm}

We finally have:
\begin{theorem}
  Given a unitriangular matrix $M$ of dimension $d$ with coefficients in
  $\KK=\QQ[\zeta_f]$, a field of dimension $n$, with
  $\|M\|,\|M^{-1}\|\leq 2^p$ and $p\geq w+\log(nd)\log(d)$.
  Then \algName{Seysen-Size-Reduce} returns an integer unitriangular matrix $U$ with
  $\|U\|\leq 2^{\bigO{p}}$ such that
  \[ \|MU\|,\|(MU)^{-1}\|\leq (n^{3/2}d)^{\lceil \log d\rceil} \]
  with a running time of $\bigO{d^{\omega}np/w+d^2np}$.
\end{theorem}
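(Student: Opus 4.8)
The plan is to prove all three assertions simultaneously by strong induction on the dimension $d$. The base case $d=1$ is immediate: $U=1$, $MU=M$ is the $1\times 1$ identity, and $(n^{3/2}d)^{\lceil\log d\rceil}=1$. For the inductive step, write $M=\left(\begin{smallmatrix}A&B\\0&C\end{smallmatrix}\right)$ with $A,C$ unitriangular of dimensions $d_1=\lceil d/2\rceil$ and $d_2=\lfloor d/2\rfloor$. Since $M$ is block upper-triangular we have $M^{-1}=\left(\begin{smallmatrix}A^{-1}&-A^{-1}BC^{-1}\\0&C^{-1}\end{smallmatrix}\right)$, so $\|A\|,\|A^{-1}\|,\|C\|,\|C^{-1}\|\le 2^p$, and because $d_i\le d$ the hypothesis $p\ge w+\log(nd)\log d$ is inherited by both recursive calls. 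The induction hypothesis then gives integer unitriangular $U_1,U_2$ with $\|AU_1\|,\|(AU_1)^{-1}\|\le(n^{3/2}d_1)^{\lceil\log d_1\rceil}$ and likewise for $CU_2$; since $\lceil\log d_i\rceil\le\lceil\log d\rceil-1$ for $d\ge2$, all four quantities are at most $(n^{3/2}d)^{\lceil\log d\rceil-1}$.

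The conceptual heart is the Seysen cancellation. By construction $MU=\left(\begin{smallmatrix}AU_1&BU_2-AU_1W\\0&CU_2\end{smallmatrix}\right)$ with $W=\lfloor(AU_1)^{-1}BU_2\rceil$ rounded coordinatewise in the power basis, so that $BU_2-AU_1W=AU_1\left((AU_1)^{-1}BU_2-W\right)$. Up to the (controlled) computation error discussed below, the bracketed matrix has every entry of power-basis sup-norm at most $\tfrac12$, hence norm $\bigO{n^{3/2}d}$ by the unitriangular-norm estimate recalled in this appendix. Multiplying by $AU_1$ bounds the off-diagonal block of $MU$ by $(n^{3/2}d)^{\lceil\log d\rceil-1}\cdot\bigO{n^{3/2}d}\le(n^{3/2}d)^{\lceil\log d\rceil}$, and together with the diagonal blocks this gives $\|MU\|\le(n^{3/2}d)^{\lceil\log d\rceil}$. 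For the inverse, $(MU)^{-1}=\left(\begin{smallmatrix}(AU_1)^{-1}&-\left((AU_1)^{-1}BU_2-W\right)(CU_2)^{-1}\\0&(CU_2)^{-1}\end{smallmatrix}\right)$, where the factor $(AU_1)^{-1}AU_1$ has cancelled, leaving the same small matrix now multiplied on the right by $(CU_2)^{-1}$; the identical estimate yields $\|(MU)^{-1}\|\le(n^{3/2}d)^{\lceil\log d\rceil}$. Finally $U$ is integer unitriangular because $U_1,U_2,W$ are, and $\|U\|\le\|M^{-1}\|\,\|MU\|\le 2^p(n^{3/2}d)^{\lceil\log d\rceil}=2^{\bigO p}$, using $(n^{3/2}d)^{\lceil\log d\rceil}=2^{\bigO{\log d\log(nd)}}\le 2^{\bigO p}$.

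For the precision I would call the \algName{Invert} theorem above on $AU_1$ at an internal precision $p'=\bigO p$: since $\|AU_1\|,\|(AU_1)^{-1}\|\le(n^{3/2}d)^{\lceil\log d\rceil}\le 2^{\bigO p}$ and $p\ge w+\log(nd)\log d$, its hypotheses hold and it returns $A'$ with $\|A'-(AU_1)^{-1}\|\le 2^{-p'}$. All of $A',B,U_2$ have norm $2^{\bigO p}$, so computing $A'B$ and then $(A'B)U_2$ at precision $\bigO p$ and composing the errors via \cref{lem:prec_composition_matrix} yields an approximation of $(AU_1)^{-1}BU_2$ within absolute error $\le\tfrac14$; hence the computed $W$ differs from the exact rounding by at most one per coordinate, which only enlarges the bracketed matrix above by a harmless additive constant and preserves the norm bounds after absorbing constants into $n^{3/2}d$. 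The crucial observation is that at every level of the $\bigO{\log d}$-deep recursion the inverse norm is bounded by the same $2^{\bigO p}$, so a single uniform choice of constants in the $\bigO{\cdot}$ suffices. The running time follows by master-theorem bookkeeping: at depth $k$ there are $2^k$ nodes of dimension $\le d/2^k+1$, each doing one \algName{Invert} call and $\bigO1$ matrix products at precision $\bigO p$, costing $\bigO{(d/2^k)^\omega np/w+(d/2^k)^2np}$ by the \algName{Invert} theorem and the fast multiplication of \cref{app:ultra_fast}; summing $2^k\bigO{(d/2^k)^\omega np/w}$ is a geometric series (as $\omega>2$) dominated by $k=0$, and likewise for the $d^2np$ term, giving $\bigO{d^\omega np/w+d^2np}$.

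I expect the main obstacle to be exactly this precision bookkeeping: one must verify that $\bigO p$ bits truly propagate through the interleaving of the two recursive size-reductions, the \algName{Invert} subroutine, and the intermediate products, the delicate point being that $\|(AU_1)^{-1}\|$ is only bounded by $n^{\bigO{\log d}}$, so the slack in the hypothesis $p\ge w+\log(nd)\log d$ is tight and the implicit constants in the precision must be shown to remain uniform over all levels of the recursion rather than compounding with depth.
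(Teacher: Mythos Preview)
Your proposal is correct and follows essentially the same approach as the paper: the same induction on $d$, the same block decomposition, the same Seysen cancellation $BU_2-AU_1W=AU_1\bigl((AU_1)^{-1}BU_2-W\bigr)$ with the dual observation that the $(AU_1)^{-1}(AU_1)$ factor cancels in the inverse, the same call to the \algName{Invert} theorem at precision $\bigO{p}$, and the same bound $\|U\|\le\|M^{-1}\|\|MU\|$. The paper is slightly terser on the precision argument (it splits $\|(AU_1)^{-1}BU_2-W\|\le\|\delta A\cdot BU_2\|+\|A'BU_2-W\|$ and chooses the precision so the first term is at most $1/3$, while the second is at most $2dn^{3/2}/3$ from coordinate-wise rounding) and omits the running-time recurrence you spell out, but the arguments are the same in substance.
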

\begin{proof}
  We use a precision $p'=\bigO{p+\log(nd)\log(d)}=\bigO{p}$.
  We prove by induction on $d$ that $\|MU\|,\|(MU)^{-1}\|\leq (n^{3/2}d)^{\lceil\log
d \rceil}$.  Initialization is clear, so we assume $d>1$.
  We have that $MU$ is
  \[ \begin{pmatrix} AU_1 & BU_2-AU_1W \\ 0 & CU_2\end{pmatrix}. \]
  The top-right matrix is $AU_1((AU_1)^{-1}BU_2-W)$ and we have, with
  $A'-(A_1U)^{-1}=1+\delta A$:
  \[ \|(AU_1)^{-1}BU_2-W\|\leq \|\delta ABU_2\|+\|A'BU_2-W\|. \]
  The first term is bounded by $2^{\bigO{p}} \|\delta A\|$ and the
  second by $2dn^{3/2}/3$.
  We choose the precision so that the first term is at most $1/3$ and
  the result is proven, as $\|AU_1\|,\|CU_2\|\leq(n^{3/2}d)^{\lceil \log d\rceil-1}$.

    Next, the matrix $(MU)^{-1}$ is
    \[ \begin{pmatrix} (AU_1)^{-1} & -(AU_1)^{-1}(BU_2-AU_1W)(CU_2)^{-1} \\ 0 & (CU_2)^{-1}\end{pmatrix}. \]
    The top-right matrix is $((AU_1)^{-1}BU_2-W)(CU_2)^{-1}$.
    The first term was already bounded above, so $\|(CU_2)^{-1}\|\leq(n^{3/2}d)^{\lceil \log d\rceil-1}$ finishes the proof.

    Finally, we have $\|U\|=\|M^{-1}MU\|\leq \|M^{-1}\|\|MU\|\leq 2^p (n^{3/2}d)^{\lceil \log d\rceil}$.
\end{proof}
Note that it is mandatory to have $M$ well-conditioned if we want a $U$ which is not much larger than $M$.
This is also true for other variants of \LLL~(including \textsf{fplll}): outputting the transition matrix may lead to a slow-down by a factor of $n$.

 \section{Fast unit-rounding in cyclotomics fields}
\label{app:unites}

The goal of this section is to prove \cref{thm:unites}.
In particular we perform a novel analysis of the algorithm of
\cite{EC:CDPR16} to obtain a faster running time and we extend
their result for \emph{arbitrary} cyclotomic fields.

\subsection{Prime power-case}

As a starter, we prove that the techniques of \cite{EC:CDPR16} can be used for
unit-rounding in prime-power cyclotomic fields with quasi-linear
complexity. Formally we aim at proving the following:

\begin{theorem}\label{thm:unites_prime_power}
  Let $\KK$ be the cyclotomic field of prime power conductor $f$.
  There is a quasi-linear randomized algorithm that given any element in
  $x\in (\RR \otimes \KK)^\times$ finds a unit $u\in \order_\KK^\times$
  such that for any field embedding $\sigma:\KK \rightarrow\CC$ we have
  \[ \sigma\left({x}{u}^{-1}\right)= 2^{\bigO{\sqrt{f\log
  f}}}\norm_{\KK/\QQ}(x)^{\frac{1}{\varphi(f)}}.\]
\end{theorem}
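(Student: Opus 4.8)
The plan is to recast the problem as a bounded-distance decoding in the \emph{log-unit lattice} of $\KK$, exactly as in Cramer--Ducas--Peikert--Regev~\cite{EC:CDPR16}, and then to show that every step of their decoding can be carried out in quasi-linear time when the conductor is a prime power. Write $n=\varphi(f)$, let $\Log\colon(\RR\otimes\KK)^\times\to\RR^{n/2}$ be the logarithmic embedding $y\mapsto(\log|\sigma(y)|)_\sigma$ with one coordinate per pair of conjugate embeddings, and recall that since $f$ is a prime power the elements $z_j=\tfrac{\zeta_f^{\,j}-1}{\zeta_f-1}$, for $j$ ranging over $(\ZZ/f\ZZ)^\times/\{\pm1\}\setminus\{1\}$, are units (one must pass to these ratios, as $\zeta_f^{\,j}-1$ alone is not a unit in the prime-power case). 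Their images $\Log z_j$ form a basis of a finite-index sublattice $\Lambda_{\mathrm{cyc}}$ of the log-unit lattice. Because every unit has norm $\pm1$, its log-embedding lies in the trace-zero hyperplane; hence multiplying $x$ by $u^{-1}$ moves $\Log x$ only inside that hyperplane and leaves the $\mathbf 1$-component, which records $\tfrac1n\log|\norm_{\KK/\QQ}(x)|$, untouched. So it suffices to produce $u\in\langle z_j\rangle$ for which $\Log(xu^{-1})$ differs from $\tfrac{\mathbf 1}{n}\log|\norm_{\KK/\QQ}(x)|$ by a vector of $\ell_\infty$-norm $O(\sqrt{f\log f})$, i.e.\ to Babai-round the orthogonal projection of $\Log x$ onto the hyperplane in the basis $(\Log z_j)_j$.

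For the \textbf{approximation guarantee} I would invoke the analysis of~\cite{EC:CDPR16}: the Gram matrix of the (suitably balanced real variant of the) basis $(\Log z_j)_j$ is diagonalized by the Dirichlet characters modulo $f$, its eigenvalues are, up to elementary factors, quantities of the shape $\sum_a\chi(a)\log|1-\zeta_f^{\,a}|$ and are bounded away from zero, and from this CDPR deduce that Babai's round-off in this basis produces, from any target, a lattice point at $\ell_\infty$-distance $O(\sqrt{f\log f})$. For odd $p$ the group $(\ZZ/f\ZZ)^\times$ is cyclic and this is immediate; for $p=2$ one uses the decomposition $(\ZZ/2^k\ZZ)^\times\cong\ZZ/2\ZZ\times\ZZ/2^{k-2}\ZZ$ and argues on each factor, which affects only the constants. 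Nothing new is needed here beyond transcribing~\cite{EC:CDPR16}.

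The substance of the theorem is the \textbf{quasi-linear running time}, which I would obtain by implementing the three phases of round-off with fast arithmetic. Phase~1: compute an approximation of the orthogonal projection of $\Log x$ onto the hyperplane, using the fast DFT-based evaluation of the Archimedean embeddings of \cref{app:precision}; this costs $\bigOtilde{np}$ at working precision $p$. Phase~2: pass to $\Lambda_{\mathrm{cyc}}$-coordinates, i.e.\ apply the inverse of the basis matrix and round each coordinate; by the character diagonalization recalled above this reduces to a generalized Fourier transform over the (nearly) cyclic group $(\ZZ/f\ZZ)^\times$ together with $n/2$ scalar divisions, hence $\bigOtilde{n}$ arithmetic operations. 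Phase~3: reconstruct $u$ in the power basis from the integer exponent vector $(e_j)$, that is, evaluate $\prod_j z_j^{\,e_j}$. Since the $z_j$ form a single Galois orbit, working modulo a family of split primes $q\equiv1\pmod f$ of total bit-length $\bigOtilde{p}$ turns this product, after taking discrete logarithms in $\FF_q^\times$, into a single cyclic convolution of $(e_j)_j$ with a fixed sequence indexed by $(\ZZ/f\ZZ)^\times$; the convolution is computed by FFT in $\bigOtilde{n}$ operations per prime, followed by a fast inverse number-theoretic transform and a CRT lift, for a total of $\bigOtilde{np}$. Throughout, the numerical error is tracked with \cref{lem:prec_composition_matrix} and \cref{thm:stabqr}, which show that precision $\bigOtilde{p}$ suffices; the randomness is used, as in~\cite{EC:CDPR16}, only to break ties in the rounding.

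The \textbf{main obstacle} is Phase~3. The exponents $e_j$ can individually be as large as the input bit-length, so the naive product $\prod_j z_j^{\,e_j}$ — or even maintaining a running quotient $x\prod_j z_j^{-e_j}$ — costs $\Theta(n)$ field multiplications and is therefore quadratic; the whole point is that the computation must be folded into a single convolution (plus NTT and CRT steps), which forces one to choose the right multiplicative representatives of the $z_j$, to verify that the split primes are cheap to locate and that their discrete-logarithm data can be tabulated within budget, and to keep the CRT moduli just large enough to determine the $2^{\bigOtilde{p}}$-sized coefficients of the balanced unit $u$. A secondary, essentially bookkeeping, difficulty is carrying the $p=2$ case and the conductor-$2$ primes uniformly through Phases~2 and~3, where the relevant group is no longer cyclic but a product of two cyclic groups.
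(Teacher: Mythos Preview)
Your outline is broadly sound and would work, but the paper takes a simpler and rather different route on all three fronts.

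For the \emph{parametrisation and rounding}, the paper does not work with the basis $(z_j)_j$ at all. It identifies $\RR^{n/2}$ with the group ring $\RR[G]$, $G=(\ZZ/f\ZZ)^\times/\{\pm1\}$, observes that every cyclotomic unit is $g\cdot(\zeta_f-1)$ for some $g$ in the augmentation ideal of $\ZZ[G]$, and simply divides: $y=\Log(x)\cdot\Log(\zeta_f-1)^{-1}$ in $\RR[G]$ (one FFT, pointwise division, one inverse FFT). Rounding $y$ to $z\in\ZZ[G]$ with $\sum_\alpha z_\alpha=0$ then yields $u=z\cdot(\zeta_f-1)$. Your Phase~2 inverse-basis-matrix step, with its rank-one correction, is thus replaced by a single group-ring inversion.

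For the \emph{approximation bound}, the paper does not invoke the CDPR $L$-function analysis. It proves directly (\cref{lem:estimate_log}) that $\|\Log(\zeta_f-1)\|_2=\bigO{\sqrt f}$ and $\|\Log(\zeta_f-1)\|_\infty\le\log f$, then uses \emph{randomized} rounding so that the errors $y_\alpha-z_\alpha$ for $\alpha\neq1$ are independent, bounded and zero-mean; each coordinate of $(y-z)\Log(\zeta_f-1)$ is therefore subgaussian with parameter $\bigO{\sqrt f}$, hence $\bigO{\sqrt{f\log f}}$ with high probability. The forced coordinate $z_1=-\sum_{\alpha\neq1}z_\alpha$ is handled by Berry--Esseen, giving $|y_1-z_1|\le\sqrt n/\log n$ with probability $\Theta(1/\log n)$, which combined with the $\ell_\infty$ bound contributes only $\bigO{\sqrt n}$. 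So the randomness is \emph{not} mere tie-breaking as you write: it is what makes this elementary concentration argument go through, and is the reason the statement says ``randomized''.

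For your Phase~3, the split-prime/discrete-log/CRT construction is correct but unnecessary. Since $\Log(u)=z\cdot\Log(\zeta_f-1)$ is a single group-ring convolution, one recovers $u$ up to torsion by exponentiating coordinatewise and applying one inverse Fourier transform (using a complex-logarithm variant of $\Log$ if one wants $u$ itself rather than $\sqrt{u\bar u}$). No split primes, no dlog tables, no CRT.

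What your approach buys is fidelity to CDPR and an essentially deterministic bound; what the paper's approach buys is that it avoids the $L$-function input entirely in the prime-power case and collapses all three phases to a handful of FFTs over an abelian group.
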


Compared to~\cite{EC:CDPR16}, there are two differences
with the treatment proposed here: on the one hand we use fast
arithmetic of the involved objects---namely Fourier-based
multiplication in an abelian group-ring---and on the other hand we
increase the success probability by using a better bound by the
classical Berry-Esseen theorem, as it was hinted in their
seventh footnote.

\subsubsection{Recall on the probability notions used in the proof}

Before diving in the proof of~\cref{thm:unites_prime_power},
let us recall the basis notions of probability theory we are using,
namely subgaussians variables and the Berry-Esseen theorem.

\subsubsection*{On subgaussian random variables.}

The notion of subgaussian distribution goes back to the work of Kahane
in~\cite{Kah60}, and encompasses a large family of real distributions
with very convenient properties similar to the normal law.

\begin{definition}
  A real random variable $X$ is said to be $\tau$-\emph{subgaussian}
  for some $\tau>0$ if the following bound holds for all $s\in\RR$:
  \begin{equation}
    \label{eq:subgdef}
    \E\big[\exp(sX)\big] \leq \exp\Big(\frac{\tau^2 s^2}2\Big).
  \end{equation}
  A $\tau$-subgaussian probability distribution is in an analogous
  manner.
\end{definition}

\begin{lemma}
A $\tau$-subgaussian random variable $X$ satisfies
\[ \E[X] = 0. \]
\end{lemma}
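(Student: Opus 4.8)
The plan is to exploit the fact that the subgaussian bound \eqref{eq:subgdef} holds for \emph{every} $s\in\RR$, which is a strong constraint near $s=0$. First I would record that finiteness of the moment generating function on all of $\RR$ forces $X$ to be integrable: applying Markov's inequality to $\exp(sX)$ and optimizing over $s>0$ yields the tail bound $\Pr[X\geq t]\leq \exp(-t^2/(2\tau^2))$, and symmetrically $\Pr[-X\geq t]\leq \exp(-t^2/(2\tau^2))$, so $\E[|X|]=\int_0^\infty \Pr[|X|\geq t]\,dt<\infty$. In particular $\E[X]$ is a well-defined real number, and the same tail bounds show that the MGF is finite on a full neighbourhood of any real point.

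The cleanest route is then Jensen's inequality applied to the convex function $x\mapsto e^{sx}$: for each fixed $s\in\RR$ we have $\exp(s\,\E[X])\leq \E[\exp(sX)]\leq \exp(\tau^2 s^2/2)$, whence $s\,\E[X]\leq \tau^2 s^2/2$. Dividing by $s>0$ and letting $s\to 0^+$ gives $\E[X]\leq 0$; dividing by $s<0$, which reverses the inequality, and letting $s\to 0^-$ gives $\E[X]\geq 0$. Hence $\E[X]=0$.

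An alternative I might present, since it is self-contained, is to set $g(s)=\exp(\tau^2 s^2/2)-\E[\exp(sX)]$. By hypothesis $g\geq 0$ on $\RR$ and $g(0)=0$, so $0$ is a global minimum of $g$. Differentiation under the integral sign is legitimate here because the MGF is finite on a neighbourhood of $0$ (a standard consequence of the tail bounds above via dominated convergence), so $g$ is differentiable at $0$ with $g'(0)=0-\E[X]$. Since $0$ is an interior minimum, $g'(0)=0$, i.e.\ $\E[X]=0$.

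The only genuinely technical point is the justification of integrability and of differentiation under the integral sign; everything else is a one-line consequence of convexity, so I expect the argument to be routine.
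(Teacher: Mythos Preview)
Your proposal is correct, and your second alternative is essentially the paper's argument: the paper simply writes the Taylor expansion $\E[\exp(sX)]=1+s\,\E[X]+\bigO{s^2}$ at $s=0$ and compares with $\exp(\tau^2 s^2/2)=1+\bigO{s^2}$, which is exactly your observation that $g'(0)=-\E[X]$ must vanish at the interior minimum. Your Jensen route is a minor variant that avoids differentiation under the integral sign at the cost of invoking convexity; both are standard, and you are more careful than the paper about the integrability prerequisites.
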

\begin{proof}
	Follows from the Taylor expansion at 0 of $\E[\exp(sX)]=1+s\E[X]+\bigO{s^2}$.
\end{proof}

The main property of subgaussian distributions is that they satisfy a
Gaussian-like tail bound.
\begin{lemma}
  \label{lem:subgaussiantail}
  Let $X$ be a $\tau$-subgaussian distribution. For all $t>0$, we have
  \begin{equation}
    \label{eq:subgtailbound}
    \Pr[ X > t ] \leq \exp\Big(-\frac{t^2}{2\tau^2}\Big).
  \end{equation}
\end{lemma}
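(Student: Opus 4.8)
The plan is to apply the standard exponential Markov (Chernoff) argument, exploiting the defining moment bound~\eqref{eq:subgdef}. First I would fix an arbitrary parameter $s>0$ and observe that, since $x\mapsto e^{sx}$ is increasing, the event $\{X>t\}$ is contained in $\{e^{sX}>e^{st}\}$; Markov's inequality then gives $\Pr[X>t]\leq e^{-st}\,\E[e^{sX}]$. The subgaussianity hypothesis immediately bounds the moment-generating function, so that $\Pr[X>t]\leq \exp\!\left(-st+\tfrac{\tau^2 s^2}{2}\right)$, valid for every $s>0$.

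The second step is to optimize the exponent over the free parameter $s$. The quadratic $s\mapsto -st+\tfrac{\tau^2 s^2}{2}$ is minimized at $s=t/\tau^2$, which is a legitimate choice since $t>0$ and $\tau>0$ guarantee $s>0$. Substituting this value yields exponent $-t^2/\tau^2 + t^2/(2\tau^2) = -t^2/(2\tau^2)$, which is exactly the claimed bound~\eqref{eq:subgtailbound}.

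There is essentially no serious obstacle here: the only points requiring a word of care are that $\E[e^{sX}]$ is finite (which is exactly what~\eqref{eq:subgdef} asserts, so Markov's inequality applies unconditionally) and that the optimal $s$ is strictly positive (ensured by the hypothesis $t>0$). I would present the argument in two short displayed lines — the Markov step and the optimized bound — without belaboring the elementary one-variable minimization.

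\begin{proof}
  For any $s>0$, the monotonicity of the exponential and Markov's inequality give
  \[
    \Pr[X>t] = \Pr\big[e^{sX}>e^{st}\big] \leq e^{-st}\,\E\big[e^{sX}\big]
    \leq \exp\Big(-st+\frac{\tau^2 s^2}{2}\Big),
  \]
  where the last inequality is the subgaussian bound~\eqref{eq:subgdef}. Choosing $s=t/\tau^2>0$ minimizes the exponent and yields
  \[
    \Pr[X>t] \leq \exp\Big(-\frac{t^2}{\tau^2}+\frac{t^2}{2\tau^2}\Big)
    = \exp\Big(-\frac{t^2}{2\tau^2}\Big),
  \]
  as claimed.
\end{proof}
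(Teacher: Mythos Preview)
Your proof is correct and follows essentially the same approach as the paper: apply Markov's inequality to $e^{sX}$, invoke the subgaussian bound~\eqref{eq:subgdef}, and optimize the exponent at $s=t/\tau^2$. The only cosmetic difference is that the paper states the Markov step for all $s\in\RR$ rather than $s>0$, which is immaterial since the optimal parameter is positive.
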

\begin{proof}
  Fix $t>0$. For all $s\in\RR$ we have, by Markov's inequality:
  \[
  \Pr[ X>t ] = \Pr[\exp(sX) > \exp(st)] \leq \frac{\E[\exp(sX)]}{\exp(st)}
  \]
since the exponential is positive. Using that $X$ is
$\tau$-subgaussian, \cref{eq:subgdef} gives:
\[ \Pr[ X>t ] \leq \exp\Big(\frac{s^2\tau^2}2 - st\Big) \]
and the right-hand side is minimal for $s=t/\tau^2$, entailing the
announced result.
\end{proof}

Many usual distributions over $\ZZ$ or $\RR$ are subgaussian. This is in
particular the case for distributions with finite supports and zero
mean.

\subsubsection*{The Berry-Esseen approximation theorem}
The Berry-Esseen theorem, or Berry-Esseen inequality, provides a
quantitative estimate of the rate of convergence towards the normal
distribution, as showing that the cumulative function (CDF) of the
probability distribution of the scaled mean of a random sample converges
to $\Phi$ at a rate inversely proportional to the square root of the
number of samples. More formally we have:

\begin{theorem}
  \label{thm:Berry_Esseen}
  There exists a positive $C<0.5$ such that if $X_1, X_2, \cdots, X_n$
  are independent and identically distributed random variables with
  zero mean, satisfying
  $\E(X_1^2) = \sigma^2 > 0$,
  $\E(|X_1|^3) = \rho$, and by setting
  \[Y_n = \frac{X_1 + X_2 + \cdots + X_n}{n}\]
  the sample mean, with $F_n$ the cumulative distribution function of
  $\frac{Y_n \sqrt{n}}{\sigma}$
  and $\Phi$ the cumulative distribution function of the standard normal
  distribution, then for all $x$ and $n$ we have,
  \[
    \left|F_n(x) - \Phi(x)\right| \le \frac{C \rho}{\sigma^3\sqrt{n}}
  \]
\end{theorem}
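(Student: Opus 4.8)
The statement is the classical Berry--Esseen inequality, so the plan is to reproduce the standard characteristic-function proof via Esseen's smoothing lemma, and then to invoke the known refinements that push the absolute constant below $\tfrac12$. First I would normalise, replacing each $X_i$ by $X_i/\sigma$, so that one may assume $\sigma=1$; writing $\beta=\rho=\E(|X_1|^3)$ we have $\beta\geq 1$ by Jensen's inequality. Let $\varphi$ be the characteristic function of $X_1$. Then $\tfrac{Y_n\sqrt n}{\sigma}=\tfrac{X_1+\cdots+X_n}{\sqrt n}$ has characteristic function $\varphi(t/\sqrt n)^n$, while $\Phi$ has characteristic function $e^{-t^2/2}$.

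The second step is Esseen's smoothing inequality: for any distribution function $F$ and every $T>0$,
\[
  \sup_{x}\bigl|F(x)-\Phi(x)\bigr|\ \leq\ \frac1\pi\int_{-T}^{T}\left|\frac{\widehat F(t)-e^{-t^2/2}}{t}\right|dt\ +\ \frac{24}{\pi\sqrt{2\pi}}\cdot\frac1T ,
\]
the last term coming from $\sup_x|\Phi'(x)|=1/\sqrt{2\pi}$. I would apply this with $F=F_n$ and estimate the integrand. A three-term Taylor expansion with integral remainder gives $\bigl|\varphi(u)-(1-\tfrac{u^2}{2})\bigr|\leq \tfrac{\beta}{6}|u|^3$, and hence $|\varphi(u)|\leq e^{-u^2/3}$ for $|u|\leq \tfrac1{4\beta}$. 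Using the elementary bound $|a^n-b^n|\leq n|a-b|\max(|a|,|b|)^{n-1}$ with $a=\varphi(t/\sqrt n)$ and $b=e^{-t^2/(2n)}$, one obtains, for $|t|\leq \sqrt n/(4\beta)$,
\[
  \bigl|\varphi(t/\sqrt n)^n-e^{-t^2/2}\bigr|\ \leq\ \frac{\beta|t|^3}{6\sqrt n}\,e^{-t^2/4}.
\]
Dividing by $|t|$ and integrating over $\RR$ yields a contribution of order $\beta/\sqrt n$; choosing $T=c\sqrt n/\beta$ with $c$ small enough that $[-T,T]$ lies inside the range where this estimate is valid makes the tail term $\tfrac{24}{\pi\sqrt{2\pi}\,T}$ of the same order. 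Collecting the two pieces gives $\sup_x|F_n(x)-\Phi(x)|\leq C\,\beta/\sqrt n=C\rho/(\sigma^3\sqrt n)$ for an absolute constant $C$.

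The main obstacle is that the crude optimisation above only produces a constant $C$ of order a few units (Esseen's original value was $7.59$); extracting the sharper bound $C<\tfrac12$ asserted here requires the refined estimates of the characteristic-function increments and of the smoothing step developed by Tyurin and by Shevtsova. For the purposes of this paper it suffices to cite that $C$ may be taken below $0.4690<0.5$, which is all that is subsequently used in the proof of \cref{thm:unites}.
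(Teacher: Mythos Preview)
The paper does not actually prove this theorem. It is stated as a classical recall (``The Berry--Esseen theorem, or Berry--Esseen inequality, provides a quantitative estimate of the rate of convergence\ldots'') in the preliminary probability background of \cref{app:unites}, with no accompanying proof; it is then used as a black box inside the proof of \cref{thm:unites_prime_power}. So there is nothing in the paper to compare your argument against.

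Your sketch is the standard characteristic-function route via Esseen's smoothing lemma, and it is correct as an outline. You also correctly identify the real issue: the crude version of this argument only yields a constant of a few units, and getting $C<0.5$ requires the sharper analyses of Tyurin and Shevtsova, which you cite rather than redo. That is exactly the right attitude here, and it matches how the paper treats the result---as a known fact to be invoked, not reproved.
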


\subsubsection{Going back on the rounding problem}

We now fix a cyclotomic field $\KK=\QQ[\zeta_f]$ with prime
power-conductor $f$. We recall that in $\KK$, the cyclotomic units are
easily described:

\begin{lemma}[Lemma 8.1 of~\cite{Was97}]
  Let $f$ be a prime power, then the group of cyclotomic units is
  generated by $\pm \zeta_f$ and $\frac{\zeta_f^\alpha -1}{\zeta_f-1}$
  for $\alpha\in (\ZZ/f\ZZ)^\times$.
\end{lemma}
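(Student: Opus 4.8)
The plan is to recall and slightly reorganise the classical argument from~\cite{Was97}. Write $f=p^m$, $K=\QQ(\zeta_f)$, and let $\mathfrak{p}=(1-\zeta_f)$ be the unique prime of $\order_K$ above $p$. Let $V$ be the multiplicative group generated by $-1$, $\zeta_f$ and the elements $1-\zeta_f^a$ for $1\le a\le f-1$, so that the group of cyclotomic units is $C=V\cap\order_K^\times$, and let $G$ be the subgroup generated by $\pm\zeta_f$ together with the $\xi_a:=\frac{\zeta_f^a-1}{\zeta_f-1}$ for $a$ prime to $p$; the claim is $C=G$. First I would check the easy inclusion $G\subseteq C$: each $\xi_a$ with $\gcd(a,p)=1$ lies in $V$ and is a unit, since $\zeta_f^a$ is again a primitive $f$-th root of unity, whence $v_\mathfrak{p}(1-\zeta_f^a)=v_\mathfrak{p}(1-\zeta_f)=1$. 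The converse $C\subseteq G$ I would prove by induction on $m$.

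For $m=1$ every $a\in\{1,\dots,p-1\}$ is prime to $p$, so an arbitrary $\beta=\pm\zeta_p^c\prod_a(1-\zeta_p^a)^{e_a}\in V$ equals $\pm\zeta_p^c\,(1-\zeta_p)^{\sum_a e_a}\prod_a\xi_a^{e_a}$; if $\beta\in\order_K^\times$ then $(1-\zeta_p)^{\sum_a e_a}$ is a unit, which forces $\sum_a e_a=0$ because $(1-\zeta_p)=\mathfrak{p}$, and hence $\beta\in G$. For $m\ge 2$ I would take $\beta=\pm\zeta_f^c\prod_{a=1}^{f-1}(1-\zeta_f^a)^{e_a}\in C$ and split the product according to $\gcd(a,p)=1$ versus $p\mid a$. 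The coprime part is $(1-\zeta_f)^S\prod_{\gcd(a,p)=1}\xi_a^{e_a}$ with $S=\sum_{\gcd(a,p)=1}e_a$ and the second factor a unit lying in $G$; in the part with $p\mid a$, writing $a=pa'$ and $\zeta_f^{pa'}=\zeta_{f/p}^{a'}$ produces $B:=\prod_{a'}(1-\zeta_{f/p}^{a'})^{e_{pa'}}$, an element of the analogous group $V_{m-1}$ of the subfield $L:=\QQ(\zeta_{f/p})$.

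Two computations then drive the step. From the polynomial identity $\prod_{j=0}^{p-1}(1-Y\zeta_p^j)=1-Y^p$ taken at $Y=\zeta_f$ with $\zeta_p=\zeta_f^{p^{m-1}}$ — so that every exponent $1+jp^{m-1}$ is prime to $p$ — one gets $1-\zeta_f^p=(1-\zeta_f)^p\prod_{j=0}^{p-1}\xi_{1+jp^{m-1}}$, hence $\gamma:=(1-\zeta_f)^p/(1-\zeta_f^p)=\prod_{j}\xi_{1+jp^{m-1}}^{-1}\in G$. Secondly, the relative norm $\norm_{K/L}$ sends an exponent prime to $p$ to another such one, so for $\gcd(a',p)=1$ one has $\norm_{K/L}(\xi_{a'})=\frac{1-\zeta_{f/p}^{a'}}{1-\zeta_{f/p}}=:\xi'_{a'}$, while expanding the norm over the $p$ conjugates $\xi_{a'}\mapsto\frac{1-\zeta_f^{a'(1+jp^{m-1})}}{1-\zeta_f^{1+jp^{m-1}}}=\xi_{a'(1+jp^{m-1})}/\xi_{1+jp^{m-1}}$ exhibits $\xi'_{a'}$ as a product of admissible $\xi$'s of $K$, i.e.\ $\xi'_{a'}\in G$. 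Now the valuation bookkeeping: $1-\zeta_{f/p}$ generates the prime $\mathfrak{q}$ of $L$ above $p$ and $\mathfrak{q}\order_K=\mathfrak{p}^{p}$, so with $t:=v_\mathfrak{q}(B)$ the unit condition $0=v_\mathfrak{p}(\beta)=S+v_\mathfrak{p}(B)=S+pt$ gives $S=-pt$. Therefore $(1-\zeta_f)^SB=\gamma^{-t}\bigl((1-\zeta_{f/p})^{-t}B\bigr)=\gamma^{-t}U'$, where $U':=(1-\zeta_{f/p})^{-t}B\in V_{m-1}$ has $v_\mathfrak{q}(U')=0$ and so is a cyclotomic unit of $L$; by induction $U'$ lies in the group generated by $\pm\zeta_{f/p}=\pm\zeta_f^{p}$ and the $\xi'_{a'}$, which we have just placed in $G$. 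Assembling $\beta=\pm\zeta_f^c\cdot\bigl(\prod_{\gcd(a,p)=1}\xi_a^{e_a}\bigr)\cdot\gamma^{-t}\cdot U'$ completes the induction.

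The one genuinely delicate point is the second computation. The cyclotomic units coming from the subfield $L$ are naturally presented using elements $1-\zeta_f^b$ with $b$ \emph{divisible} by $p$, which are exactly the generators excluded from $G$; the induction closes only because each subfield generator $\xi'_{a'}$ is the relative norm $\norm_{K/L}(\xi_{a'})$ of an admissible unit of $K$, hence a product of admissible $\xi$'s. Everything else is routine bookkeeping with the factorisation of $1-X^p$ and with $\mathfrak{p}$-adic valuations along $\QQ\subset L\subset K$.
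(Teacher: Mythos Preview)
The paper does not give its own proof of this lemma; it merely quotes it as Lemma~8.1 of Washington~\cite{Was97}. Your argument is correct and is essentially the classical induction on $m$ (with $f=p^m$) found in Washington: split a putative cyclotomic unit into its coprime-to-$p$ and $p$-divisible parts, use the identity $\prod_{j=0}^{p-1}(1-\zeta_f\zeta_p^j)=1-\zeta_f^p$ together with $\mathfrak p$-adic valuation to pass to the subfield $\QQ(\zeta_{f/p})$, and observe via the relative norm that the subfield generators $\xi'_{a'}$ already lie in $G$. There is nothing to compare against in the paper itself.
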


We first provide a convenient description of the cyclotomic units
as an orbit of the element $\zeta_f-1$ under the action of its Galois
group.

\subsubsection{Log-embedding and action of $(\ZZ/f\ZZ)^\times /
\{-1,+1\}$.}

Define the $\Log$ embedding to be the coefficient-wise
composition of the real logarithm with the absolute value of the Archimedean
embeddings:
\[
  \textrm{Log}:\left|
  \begin{array}{rcl}
     \KK & \longrightarrow & \RR^\frac{n}{2} \\
     \alpha & \longmapsto &
    \left[\log\left(|\sigma_i(\alpha)|\right)\right]_{i\in G}
  \end{array}
\right.,
\]
where the embeddings are paired by conjugates and listed by the group $G
= (\ZZ/f\ZZ)^\times / \{-1,+1\}$. The image of the unit
multiplicative group $\order_\KK^\times$ is a full rank lattice by
Dirichlet unit's theorem, and is called the \emph{Log-unit lattice}.

We first remark that the group-ring $\ZZ[(\ZZ/f\ZZ)^\times]$ acts on the
group $(\RR \otimes \KK)^\times$ in the following way: for any
$g=\sum_\alpha g_\alpha\alpha\in\ZZ[(\ZZ/f\ZZ)^\times]$ and $x\in(\RR \otimes \KK)^\times$,
\[
  g\cdot x=
      \prod_{\alpha \in (\ZZ/f\ZZ)^\times}
    \sigma_\alpha(x)^{g_\alpha},
\]
where $\sigma_\alpha$ maps $\zeta_f$ to $\zeta_f^\alpha$.
But $\sigma_\alpha$ acts as a permutation on the Archimedean embedding
so that the embedding in the Log-unit lattice \emph{commutes} with the
action of $\ZZ[(\ZZ/f\ZZ)^\times]$ in the following sense: \[\Log(g\cdot
x)=g\Log(x) \in \RR[G], \] for all $x\in(\RR \otimes \KK)^\times$.

Henceforth, the cyclotomic units can be described using this action,
as they correspond to the orbit of the element $\zeta_f-1$ by the
kernel, called the \emph{augmentation ideal}, of $g\mapsto \sum_\alpha g_\alpha$:
\begin{equation}
  \label{eq:description_of_units}
  \left\{ g\cdot (\zeta_f-1)\, |\, \sum_\alpha g_\alpha = 0 \right\}
\end{equation}

\subsubsection{An upper bound on the norm of $\Log(\zeta_f-1)$.}

We also have that $\Log(\zeta_f-1)$ is invertible, and with a small
inverse (for example $\|\Log(\zeta_f-1)\|=\bigO{n^3}$) so that we can compute
efficiently. Let us formalize this intuition. We first bound $\Log(\zeta_f-1)$:

\begin{lemma}
  \label{lem:estimate_log}
  We have $\|\Log(\zeta_f-1)\|_\infty\leq \log f$ and
  $\|\Log(\zeta_f-1)\|_2 =\bigO{\sqrt{f}}$.
\end{lemma}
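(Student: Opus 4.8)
The plan is to unwind the definition of $\Log$ and reduce both bounds to elementary estimates on sines. Writing out coordinates, the entry of $\Log(\zeta_f-1)$ at the class of $j\in(\ZZ/f\ZZ)^\times$ in $G=(\ZZ/f\ZZ)^\times/\{-1,+1\}$ is
$\log|\sigma_j(\zeta_f-1)|=\log|\zeta_f^{\,j}-1|=\log\bigl(2\sin(\pi j/f)\bigr)$,
choosing for each class the representative $j$ with $0<j<f$. Thus the whole lemma is a statement about the numbers $2\sin(\pi j/f)$ for $j$ ranging over representatives of $G$; the degenerate cases $f\le2$, where $\KK=\QQ$, are trivial and can be set aside.

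For the $\ell_\infty$ bound I would show each coordinate lies in $[-\log f,\log f]$. The upper side is immediate: $2\sin(\pi j/f)\le 2\le f$. For the lower side, among $j\in\{1,\dots,f-1\}$ the quantity $2\sin(\pi j/f)$ is minimised at $j=1$ (or $j=f-1$), and $1$ is always coprime to $f$; the concavity estimate $\sin x\ge 2x/\pi$ on $[0,\pi/2]$ then gives $2\sin(\pi/f)\ge 4/f\ge 1/f$, so $\log\bigl(2\sin(\pi j/f)\bigr)\ge-\log f$. Hence $\|\Log(\zeta_f-1)\|_\infty\le\log f$.

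For the $\ell_2$ bound the naive estimate (at most $n/2<f/2$ coordinates, each of size $\le\log f$) only yields $\bigO{\sqrt f\,\log f}$, so I would instead exploit that few coordinates are large. Partition the coordinates according to whether $2\sin(\pi j/f)\ge1$ or $<1$. In the first case there are at most $f/2$ such coordinates, each contributing at most $(\log2)^2$, for a total $\bigO{f}$. In the second case $\sin(\pi j/f)<1/2$ forces $j<f/4$ (again via $\sin x\ge 2x/\pi$, equivalently $\arcsin y\le\tfrac{\pi}{2}y$), and then $2\sin(\pi j/f)\ge 4j/f$, so the coordinate has absolute value at most $\log(f/4j)\le\log(f/j)$; summing, $\sum_{1\le j< f/4}\bigl(\log(f/j)\bigr)^2\le\sum_{1\le k\le f}\bigl(\log(f/k)\bigr)^2=\bigO{f}$, the last step by comparison of the decreasing summand with the finite integral $\int_0^1(\ln(1/x))^2\,dx=2$ (up to the harmless constant change of logarithm base). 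Adding the two contributions gives $\|\Log(\zeta_f-1)\|_2^2=\bigO{f}$, i.e.\ $\|\Log(\zeta_f-1)\|_2=\bigO{\sqrt f}$.

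There is no real obstacle here; the only points needing a little care are that every sum over $j$ coprime to $f$ is bounded above by the corresponding sum over all $j$, which is harmless, and that the $j\leftrightarrow f-j$ symmetry should not be allowed to spuriously double the counting in a way that breaks the bound — it does not, since the extra factors of $2$ are absorbed in the $\bigO{\cdot}$. All remaining steps are routine trigonometric and integral estimates.
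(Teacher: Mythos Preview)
Your proof is correct and follows essentially the same approach as the paper: both write the coordinates as $\log\bigl(2\sin(\pi j/f)\bigr)$, invoke the concavity bound $\sin(\pi x)\ge 2x$ on $[0,\tfrac12]$ to get $2\sin(\pi j/f)\ge 4j/f$, and then control the $\ell_2$-norm by comparing the resulting sum $\sum_j\log^2(f/4j)$ to the finite integral $\int_0^{1/2}\log^2(4/x)\,dx$. Your explicit split on the sign of $\log\bigl(2\sin(\pi j/f)\bigr)$ is slightly more careful than the paper's one-line bound, but the substance is identical.
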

\begin{proof}
  The coordinates are given by
  \[
    \Log(\zeta_f-1)_\alpha=
    \Log(|\zeta_f^\alpha-1|)=\log(|2\sin(\pi\alpha/f)|),
  \]
  for any $\alpha\in(\ZZ/f\ZZ)^\times$.
  Now, for $0\leq x \leq \frac12$ and $\alpha\in(\ZZ/f\ZZ)^\times$,
  we have $\sin(\pi x)\geq 2x$ and we can
  consider that $0\leq \frac{\alpha}{f}\leq \frac12$.
  We deduce that $\|\Log(\zeta_f-1)\|_\infty\leq
  \log\left(\frac{f}{4}\right)$ and
  \[ \|\Log(\zeta_f-1)\|_2^2\leq \sum_\alpha
    \log^2\left(\frac{f}{4\alpha}\right)\leq
  f\int_{0}^{\frac12} \log^2\left(\frac4x\right) dx,\] the latest
  integral being equal to $\frac92+\frac{3}{\ln 2}+\frac{1}{\ln^2 2}$
  entails the announced inequality.
\end{proof}

\begin{remark}
The multiplication in the group ring $\ZZ[G]$ is quasi-linear as $G$ is
a finite abelian group. Indeed, we can use Fourier transform to reduce
the multiplication to point-wise multiplications (see for
instance~\cite{maslen}).
\end{remark}

\subsubsection{Fast rounding in the Log-unit lattice}
We can now describe the rounding algorithm, which essentially
is a randomized coefficient-wise rounding using the orbital description
of \cref{eq:description_of_units}.

\begin{proof}[Proof of \cref{thm:unites_prime_power}]
  Without loss of generality, we can assume $\norm_{\KK/\QQ}(x)=1$.

  Then, using the description given by \cref{eq:description_of_units}
  the problem is thus reduced to searching a unit $u$ such that
  $\Log(u)\in \ZZ[G]$ which is close to
  $y=\frac{\Log(x)}{\Log(\zeta_f-1)}$ and such that $\sum_\alpha
  \Log(u)_\alpha=0$. The simplest idea consists in performing a
  coefficient wise rounding of the coefficients of the vector $y$.
  However, this approach does not succeed all the time, but we can take
  advantage of the two possible choices in the rounding to closest
  integers to randomize the rounding---that is to say, by randomizing
  the choice of floor or ceil instead of relying deterministically on
  the round function $\lfloor\cdot\rceil$. \medskip

  Formally, for $\alpha\neq 1$, we sample $z_\alpha$ following the
  unique distribution on the two elements set $\{\lfloor
  y_\alpha\rfloor,\lceil y_\alpha\rceil\}$
  with expectation $y_\alpha$.
  Then, $z_1$ is set at $-\sum_{\alpha\neq 1} z_\alpha$ to
  ensure $\sum_\alpha z_\alpha=0$.
  Clearly, $u=z\cdot (\zeta_f-1)$ verifies our requirements if
  \[ \left\|\Log(\zeta_f-1)(y-z)\right\|_\infty=\bigO{\sqrt{f\log f}}.\]

  The Berry-Esseen theorem indicates that $|y_1-z_1|\leq \sqrt{n}/\log
  n$ with probability $\Theta(1/\log n)$.
  The coordinates of \[ \Log(\zeta_f-1)(y-z-(y-z)_1\sigma_1) \]
  are subgaussians of parameter $\|\Log(\zeta_f-1)\|_2$.
  Therefore, using the estimation of \cref{lem:estimate_log}, we know that
  their absolute
  values can all be bounded by
  $\bigO{\sqrt{f\log f}}$ except with probability at most
  $\Theta\left(\frac1{\log^2 f}\right)$. Hence, our requirement is
  fulfilled with probability $\Omega\left(\frac1{\log n}\right)$. We have
  $\Log(u)=z\Log(\zeta_f-1)$ which can be computed in quasi linear time.
  Eventually a Fourier transform recovers $\sqrt{u\bar u}$, which is $u$ up
  to an irrelevant torsion\footnote{One can compute $u$ by simply removing
  the absolute values in the definition of $\Log$, and taking any
determination of complex logarithm. As we work inside a CM-field, this
technicality is not needed.}.
\end{proof}

\subsection{Extension to arbitrary cyclotomic fields}
\label{app:generalized_unites}

We now extend the result of \cref{thm:unites_prime_power} to
arbitrary cyclotomic fields, that is proving:

\begin{theorem}\label{thm:unites_general}
  Let $\KK$ be the cyclotomic field of conductor $f$.
  There is a quasi-linear randomized algorithm that given any element in
  $x\in (\RR \otimes \KK)^\times$ finds a unit $u\in \order_\KK^\times$
  such that for any field embedding $\sigma:\KK \rightarrow\CC$ we have
  \[ \sigma\left({x}{u}^{-1}\right)= 2^{\bigO{\sqrt{f\log
  f}}}\norm_{\KK/\QQ}(x)^{\frac{1}{\varphi(f)}}.\]
\end{theorem}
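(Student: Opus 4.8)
The plan is to reduce the general conductor case to the prime-power case of \cref{thm:unites_prime_power} by exploiting the tensor decomposition of cyclotomic fields. Write $f = \prod_{i=1}^{k} q_i^{e_i}$ as a product of prime powers; then $\KK = \QQ[\zeta_f]$ is the compositum $\KK_1 \cdots \KK_k$ where $\KK_i = \QQ[\zeta_{q_i^{e_i}}]$, and correspondingly $(\ZZ/f\ZZ)^\times \cong \prod_i (\ZZ/q_i^{e_i}\ZZ)^\times$ by the Chinese Remainder Theorem. The group ring $\ZZ[(\ZZ/f\ZZ)^\times]$ factors as a tensor product $\bigotimes_i \ZZ[(\ZZ/q_i^{e_i}\ZZ)^\times]$, and all the objects appearing in the prime-power proof --- the $\Log$ embedding, the augmentation ideal, the multiplicative action --- are compatible with this factorization. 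First I would set up this notation carefully and observe, as in the prime-power case, that we may assume $\norm_{\KK/\QQ}(x) = 1$, reducing the problem to finding a unit $u$ with $\Log(u) \in \ZZ[G]$, $G = (\ZZ/f\ZZ)^\times/\{\pm 1\}$, close to a prescribed target and lying in the augmentation kernel.

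The key difficulty is that for non-prime-power $f$, the element $\zeta_f - 1$ is a \emph{unit} (as noted in the footnote of the excerpt), so $\Log(\zeta_f-1) = 0$ and the orbital description of \cref{eq:description_of_units} collapses; we no longer have a single convenient generator whose Galois orbit spans a finite-index subgroup of the cyclotomic units. The standard fix, following \cite{EC:CDPR16} and \cite{kuvcera1992bases}, is to use instead the elements $\zeta_{q_i^{e_i}} - 1$ pulled back to $\KK$ (equivalently, for each prime power divisor, the cyclotomic unit coming from that subfield), whose Galois orbits do generate a finite-index subgroup of $\order_\KK^\times$ --- this is essentially Kučera's basis of the group of circular units. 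So the plan is: (i) for each $i$, invoke \cref{thm:unites_prime_power} structurally, i.e.\ reuse its proof with $\Log(\zeta_{q_i^{e_i}}-1)$, which by \cref{lem:estimate_log} is invertible over $\RR[(\ZZ/q_i^{e_i}\ZZ)^\times]$ with $\|\cdot\|_2 = \bigO{\sqrt{q_i^{e_i}}}$ and whose inverse can be computed by a Fourier transform over the abelian group $(\ZZ/q_i^{e_i}\ZZ)^\times$ in quasi-linear time; (ii) solve the rounding problem one prime-power coordinate at a time, randomizing the rounding via the distribution on $\{\lfloor\cdot\rfloor,\lceil\cdot\rceil\}$ with the correct expectation, as in the prime-power proof.

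The main obstacle --- and the place where the argument needs genuine care rather than bookkeeping --- is the probabilistic tail analysis when $k \geq 2$. In the prime-power proof one randomizes $\varphi(q^e) - 1$ coordinates, fixes the last to enforce the augmentation constraint, controls the resulting error by Berry–Esseen (for the one ``pinned'' coordinate) plus a subgaussian tail bound (for the rest), and obtains success probability $\Omega(1/\log n)$. For general $f$ one has a product structure with several augmentation-type constraints (one per prime-power block, coming from the fact that only the augmentation ideal acts nontrivially), so one must randomize over a suitable complement and argue that the total error, which is now a sum/convolution of the contributions from each block multiplied through by the various $\Log(\zeta_{q_i^{e_i}}-1)$ tensored with identities, stays $\bigO{\sqrt{f \log f}}$ with probability bounded below by an inverse polylogarithmic factor. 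The error coordinates remain subgaussian (sums of bounded independent mean-zero variables), with parameter bounded by $\prod_i$-type estimates that telescope to $\bigO{\sqrt{f}}$ using $\|\Log(\zeta_{q_i^{e_i}}-1)\|_2 = \bigO{\sqrt{q_i^{e_i}}}$ and $\prod_i q_i^{e_i} = f$; the pinned coordinates are each handled by Berry–Esseen at cost $\Theta(1/\log)$, and since there are only $k = \bigO{\log f / \log\log f}$ of them the product of success probabilities is still $\exp(-\bigO{\log f \cdot \log\log\log f / \log\log f}) = f^{-\littleO{1}}$, so a polynomial number of retries suffices for a randomized quasi-linear algorithm. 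I would close by noting that, exactly as before, $\Log(u) = z \cdot \bigl(\text{orbit data}\bigr)$ is computed in quasi-linear time via Fourier transforms on the $(\ZZ/q_i^{e_i}\ZZ)^\times$, and a final Fourier transform recovers $\sqrt{u\bar u}$, hence $u$ up to torsion, completing the proof of \cref{thm:unites_general} and therefore of \cref{thm:unites}.
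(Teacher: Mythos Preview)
Your proposal has a genuine gap at its central step. You claim that the Galois orbits of the elements $\zeta_{q_i^{e_i}}-1$ (pulled back from each prime-power subfield $\KK_i$) generate a finite-index subgroup of $\order_\KK^\times$, and you attribute this to Ku\v{c}era. This is false as soon as $f$ has at least two distinct prime factors. Since $\zeta_{q_i^{e_i}}-1\in\KK_i$, its $\Log$ image is invariant under $\textrm{Gal}(\KK/\KK_i)$; hence the $\RR[G]$-span of its orbit has dimension at most $\varphi(q_i^{e_i})/2-1$. Summing over $i$ gives at most $\sum_i(\varphi(q_i^{e_i})/2-1)$, which is far short of the unit rank $\varphi(f)/2-1=\big(\prod_i\varphi(q_i^{e_i})\big)/2-1$. (For $f=15$ the discrepancy is already $1$ versus $3$.) What Ku\v{c}era actually proves, and what the paper uses, is that one needs the elements $\zeta_f^a-1$ for \emph{all} $a$ in the set $\mathcal{Q}$ of the $2^r$ products of the $q_i^{e_i}$, not just the $r$ ``pure'' prime-power ones.

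The paper's proof accordingly does not work one prime-power block at a time via a tensor decomposition. It takes the full family $b_i=\Log(\zeta_f^a-1)$ indexed by $a\in\mathcal{Q}$, Gram--Schmidt orthogonalises it inside $\RR[G]$, and then runs Babai nearest-plane with randomized rounding against the $b_i^*$ (enforcing the augmentation constraint only when $f/a$ is a prime power, exactly the cases in which $\zeta_f^a-1$ is not already a unit). The control on the $b_i^*$ comes from an explicit character evaluation $\chi(\Log(\zeta_f^e-1))$ in terms of $L(1,\chi)$, which identifies which characters are supported on each $b_i^*$ and bounds their size; the subgaussian parameter then telescopes to $\bigO{\sqrt{f}}$ because $\sum_{e\in\mathcal{Q}}\varphi(f/e)=f$. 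Your tensor-product bookkeeping and your probability analysis for the pinned coordinates are reasonable, but they are attached to a generating system that does not span, so the argument cannot be completed along the lines you propose.
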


\subsubsection{Setting.}

Let us consider an integer $f$ and take its prime decomposition
$f=\prod_{i=1}^r p_i^{e_i}$. We set $q_i=p_i^{e_i}$ and we fix the
cyclotomic field $\KK=\QQ[\zeta_f]$ of conductor $f$.  Classically, the
Galois group of $\KK$ is equal to
$G=\faktor{(\ZZ/f\ZZ)^\times}{\{-1,1\}}$, whose elements are the
$\sigma_\alpha$, sending $\zeta_f$ to $\zeta_f^\alpha$ for
any $\alpha\in G$.

\subsubsection{Cyclotomic units and their generators.}

The cyclotomic units are defined as all the products
of $\pm \zeta_f$ and $\zeta_f^a-1$ which are units.
We let $\mathcal{Q}$ be the set of the $2^r$ possible products of the $q_i$.

A standard theorem of~\cite[Lemma 2.2]{kuvcera1992bases} reduces the
number of generators of the cyclotomic units:
\begin{theorem}
  The cyclotomic units are all the products of $\pm \zeta_f$ and $G\cdot
  (\zeta_f^a-1)$ which are units, when $a$ runs through $\mathcal{Q}$.
\end{theorem}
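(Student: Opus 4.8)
This statement is exactly \cite[Lemma~2.2]{kuvcera1992bases}; the plan for a self-contained argument is to combine the \emph{distribution relation} for cyclotomic units with an induction that ``saturates'' the moduli involved. Write $f=\prod_{i=1}^{r}q_i$ with $q_i=p_i^{e_i}$, and first record the elementary identity $\prod_{\eta^{\ell}=1}(1-\zeta\eta)=1-\zeta^{\ell}$, valid for any $\zeta$ and any $\ell\geq 1$ because $\prod_{\eta^\ell=1}(X-\eta)=X^{\ell}-1$; all quantities occurring in it stay inside $\QQ(\zeta_f)$ as long as the conductors in play divide $f$. Next I would regroup the defining generators of the group $C_f$ of cyclotomic units: apart from $\pm\zeta_f$, these are the \emph{unit} products of the elements $\zeta_f^a-1$, and I sort them by the \emph{level} $m=f/\gcd(a,f)$, the multiplicative order of $\zeta_f^a$. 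Since $\mathrm{Gal}(\QQ(\zeta_f)/\QQ)\twoheadrightarrow\mathrm{Gal}(\QQ(\zeta_m)/\QQ)$ is surjective, the level-$m$ generators are precisely the Galois conjugates of $1-\zeta_m$ inside $\QQ(\zeta_f)$; the levels range over all divisors $m\mid f$ with $m>1$, and the condition $a\in\mathcal{Q}$ amounts to $m\in\mathcal{Q}$ (i.e.\ $v_{p_i}(m)\in\{0,e_i\}$ for every $i$).

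The heart of the argument removes one ``unsaturated'' prime at a time. Fix $p=p_i$, let $p^{e}\,\|\,f$, and take a level $m=p^{k}m_0$ with $p\nmid m_0$ and $1\leq k<e$, so $p$ is unsaturated in $m$. Choosing $\zeta_m:=\zeta_{p^{e}m_0}^{\,p^{e-k}}$ (which has order $m$) and applying the distribution relation with $\zeta=\zeta_{p^{e}m_0}$, $\ell=p^{e-k}$, I obtain $1-\zeta_m=\prod_{u=0}^{p^{e-k}-1}\bigl(1-\zeta_{p^{e}m_0}^{\,1+p^{k}m_0 u}\bigr)$. Since $k\geq 1$, each exponent $1+p^{k}m_0u$ is coprime to $p^{e}m_0$, so every factor is a Galois conjugate of $1-\zeta_{p^{e}m_0}$, a generator of level $p^{e}m_0$. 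If $m_0>1$, both sides are units, so the level-$m$ orbit lands in the subgroup generated by the level-$(p^{e}m_0)$ orbit. If $m_0=1$ (a pure prime power, where $1-\zeta_{p^{k}}$ is not a unit), the same relation applied to numerator and denominator gives $\tfrac{1-\zeta_{p^{k}}^{b}}{1-\zeta_{p^{k}}}=\prod_{u}\tfrac{1-\zeta_{p^{e}}^{\,b+p^{k}u}}{1-\zeta_{p^{e}}^{\,1+p^{k}u}}$, a product of ratios of level-$p^{e}$ conjugates; as all the $1-\zeta_{p^{k}}^{b}$ generate the same prime above $p$, these ratios are exactly the level-$p^{k}$ cyclotomic units (cf.\ \cite[Lemma~8.1]{Was97}), and $p^{e}=q_i\in\mathcal{Q}$. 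In either case the substitution raises $v_p$ of the level from $k$ to $e$ while leaving $v_{p_j}$ unchanged for $j\neq i$.

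I would then run the induction on $u(m):=\#\{\,i:0<v_{p_i}(m)<e_i\,\}$, the number of unsaturated primes of a level. The previous step shows every level-$m$ generator with $u(m)>0$ lies in the subgroup generated by $\pm\zeta_f$ and generators of strictly smaller $u$-value, while the base case $u(m)=0$ is exactly ``$m\in\mathcal{Q}$'', i.e.\ ``$a\in\mathcal{Q}$''. As $C_f$ is a group, unit products of generators already lie in the subgroup being built up, so descending the induction gives $C_f\subseteq\langle\,\pm\zeta_f,\ \{\text{unit products of Galois conjugates of }\zeta_f^{a}-1\}\,:\,a\in\mathcal{Q}\,\rangle$; the reverse inclusion is trivial, which finishes the proof.

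The work here is not conceptual but careful bookkeeping, and that is where I expect the friction: keeping the various $\zeta_{m}$'s mutually consistent so the distribution relation introduces no stray root of unity (any that did would anyway be a power of $-\zeta_f$), verifying the coprimality claim $\gcd(1+p^{k}m_0u,\,p^{e}m_0)=1$ that confines each factor to a single Galois orbit, and handling the pure-prime-power levels through ratios rather than the non-unit elements $1-\zeta_{p^{k}}$ themselves so everything genuinely stays inside $\order_\KK^{\times}$.
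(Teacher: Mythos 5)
Your argument is correct and rests on the same key identity as the paper (the distribution relation $\prod_{\eta^\ell=1}(1-\zeta\eta)=1-\zeta^\ell$), but you deploy it quite differently: you saturate one unsaturated prime at a time and close the argument by an induction on the number of unsaturated primes of the level, whereas the paper saturates \emph{all} of them in a single shot. Concretely, for $a\mid f$ the paper sets $k=\prod_{q_i\mid a}q_i\in\mathcal{Q}$ and applies the distribution relation once, $1-\zeta_f^a=\prod_{i=0}^{a/k-1}\bigl(1-\zeta_f^{\,k+ifk/a}\bigr)$, then checks directly that $\gcd(k+ifk/a,\,f)=k$, so every factor is already a Galois conjugate of $1-\zeta_f^{\,k}$ with $k\in\mathcal{Q}$; no induction, no separate handling of pure prime-power levels via ratios, and no need to appeal to Washington's Lemma~8.1. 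Your incremental version is sound (the coprimality and unit-ratio steps you worry about do go through), but the paper's choice of exponent decomposition collapses your whole induction into one identity, which is where the bookkeeping friction you anticipated disappears. The one thing worth making explicit in your write-up, as in the paper's, is the initial normalisation $a\mid f$ (harmless since $\zeta_f^a$ is Galois-conjugate to $\zeta_f^{\gcd(a,f)}$), which is what makes all the exponents $fk/a$ (resp.\ your $p^{e-k}$) integral.
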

\begin{proof}
  Let $a\in \ZZ$, and define $k$ to be the product of all the $q_i$
  dividing $a$, so that by construction $k\in \mathcal{Q}$.
  Now, we have:
  \[ 1-\zeta_f^a=\prod_{i=0}^{\frac{a}{k}-1} 1-\zeta_f^{k+\frac{ifk}{a}}. \]
  Let $p_j|k+\frac{ifk}{a}$. Remark that $p_j|\frac{fk}{a}$, so that
  $p_j|k$, and by definition of $k$ we have $q_j|k$.
  We have therefore $q_j|\frac{fk}{a}$ and hence $\zeta_f^{k+ifk/a}-1 \in \pm
  G\cdot \zeta_f^k-1$.
\end{proof}

\begin{theorem}\label{thm:eval_character}
  Let $\chi$ be an even Dirichlet character of conductor $c\,|\,f$ with $c>1$ and
  $e\in \mathcal{Q}$.  Then if $c$ and $e$ are coprime, then \[
  |\chi(\Log(\zeta_f^e-1))|=\frac{\varphi(e)\sqrt{c}}{2\ln(2)}\left(\prod_{\substack{i\\
p_i|\frac{f}{e}}} |1-\chi(p_i)| \right)|L(1,\chi)| \]
  else it is 0.
\end{theorem}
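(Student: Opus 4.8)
The plan is to compute $\chi(\Log(\zeta_f^e-1))$ directly from the coordinate description of $\Log(\zeta_f^e - 1)$ established in \cref{lem:estimate_log}, namely $\Log(\zeta_f^e-1)_\alpha = \log|2\sin(\pi\alpha e/f)|$ for $\alpha$ running over $G = (\ZZ/f\ZZ)^\times/\{\pm1\}$, and then fold this against an even Dirichlet character $\chi$ of conductor $c \mid f$. First I would expand $\chi(\Log(\zeta_f^e-1)) = \sum_{\alpha} \chi(\alpha)\log|2\sin(\pi\alpha e/f)|$, where the sum runs over representatives of $(\ZZ/f\ZZ)^\times$ (the evenness of $\chi$ makes the $\{\pm 1\}$ quotient harmless, since $\log|2\sin(\pi\alpha e/f)|$ is also invariant under $\alpha \mapsto -\alpha$). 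The key classical input is the Fourier/Kronecker-type expansion $\log|2\sin(\pi t)| = -\sum_{m\geq 1} \frac{\cos(2\pi m t)}{m}$, which after substitution lets me interchange summations and recognize Gauss sums $\tau(\chi) = \sum_{\beta \bmod c}\chi(\beta)e^{2\pi i \beta/c}$.

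Next I would carry out the character-sum manipulation. Writing $t = \alpha e/f$ and inserting the cosine expansion, the inner sum over $\alpha \in (\ZZ/f\ZZ)^\times$ of $\chi(\alpha)e^{2\pi i m\alpha e/f}$ needs to be evaluated. Here the hypotheses split into the two announced cases: if some prime $p_i \mid c$ also divides $e$, then (because $\chi$ has conductor exactly $c$ and the exponential argument has a denominator missing that prime factor) the character sum vanishes identically, giving the stated $0$; this is a standard orthogonality/Gauss-sum degeneracy argument, and the coprimality hypothesis between $c$ and $e$ is exactly what rules it out. When $c$ and $e$ are coprime, the standard reduction of the incomplete Gauss sum $\sum_{\alpha \bmod f,\ \gcd(\alpha,f)=1}\chi(\alpha)e^{2\pi i m\alpha e/f}$ to $\tau(\chi)$ introduces: (i) a factor $\overline{\chi(e)}$ or $\chi(e)$ absorbing the shift by $e$ — using $|\chi(e)| = 1$ since $\gcd(c,e)=1$; (ii) a M\"obius-type sifting over which primes of $f/c$ divide $m$, which upon resummation over $m$ with the $1/m$ weights produces the Euler-factor correction $\prod_{p_i \mid f/e}(1 - \chi(p_i))$ times the value $L(1,\chi)$ (the $1/e$ inside can be handled because $\gcd(e,c)=1$, and the $p_i \mid e$ contribute the $\varphi(e)$ factor from counting residues). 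The constant $\frac{1}{2\ln 2}$ comes from converting natural logs (in the Kronecker formula) to the base-2 logarithm used throughout the paper, together with the factor $\frac12$ from the $\{\pm 1\}$-quotient convention in the $\Log$ map of \cref{app:unites}.

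The main obstacle I expect is the careful bookkeeping of the two multiplicative truncations at once: the incompleteness of the Gauss sum (summing only over $\gcd(\alpha,f)=1$ rather than all $\alpha \bmod f$) and the fact that the exponential has argument $m\alpha e/f$ rather than $m\alpha/c$. Disentangling these requires writing $f = c \cdot f'$, pushing the M\"obius inversion through $f'$, and correctly tracking how the primes dividing $e$ versus those dividing $f/e$ contribute — the first giving the clean $\varphi(e)$ (because the shift by a unit times $e$ permutes residues predictably once $\gcd(c,e)=1$), the second giving the $\prod_{p_i \mid f/e}|1 - \chi(p_i)|$ Euler correction. Once this is organized, the identification with $|L(1,\chi)|$ is immediate from the Dirichlet series $L(1,\chi) = \sum_{m\geq 1}\chi(m)/m$ and the standard Gauss-sum identity $|\tau(\chi)| = \sqrt{c}$ for primitive $\chi$. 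I would present the argument by first stating the Kronecker limit ingredient, then the vanishing case, then the coprime computation, keeping the constant-chasing to a one-line remark at the end.
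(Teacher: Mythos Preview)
Your proposal is correct in its overall strategy and would yield the theorem, but it organizes the computation differently from the paper. The paper's proof first reduces the character sum from $(\ZZ/f\ZZ)^\times$ down to $(\ZZ/c\ZZ)^\times$ \emph{before} invoking any series expansion: it evaluates the norm-like products $\prod_{\beta\equiv 1\bmod c}(1-\zeta_{f/e}^\beta)$ prime-by-prime via elementary root-of-unity identities (the factorizations $1-x^p = \prod_j(1-\zeta_p^j x)$ and their variants), which produces the $\varphi(e)$ and the Euler factors $\prod_{p_i\mid f/e}(1-\chi(p_i))$ directly as multiplicative constants. Only after this reduction does it apply the power-series expansion $\ln(1-\zeta_c^\alpha)=-\sum_k \zeta_c^{\alpha k}/k$ on the clean sum over $(\ZZ/c\ZZ)^\times$, citing the classical identity from Washington.

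Your route reverses the order: expand $\log|2\sin(\pi t)|$ first, then disentangle the resulting incomplete Gauss sums over $(\ZZ/f\ZZ)^\times$ via M\"obius sifting. This is equally valid and is the more ``analytic'' textbook path, but the bookkeeping you flag as the main obstacle---simultaneously tracking the restriction $\gcd(\alpha,f)=1$ and the denominator shift $e/f$---is genuinely heavier, since the M\"obius inversion over the primes of $f/c$ interacts with the $e$-shift in a way that requires care to separate the $\varphi(e)$ contribution from the Euler factors. The paper's algebraic reduction sidesteps this entirely: by collapsing to conductor $c$ first, it never has to handle an incomplete Gauss sum. Both arguments use the same Gauss-sum identity $|\tau(\chi)|=\sqrt{c}$ at the end, and your accounting of the $1/(2\ln 2)$ constant is correct.
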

\begin{proof}
  If $\gcd(c,e)>1$, we have $\sum_{\alpha\in
  (\ZZ/\gcd(c,e)\ZZ)^\times} \chi(\alpha)=0$ so the result is zero.
  We therefore assume for now on that $c$ and $e$ are coprime.

  We first compute: \[ \prod_{\substack{\beta\in G\\ \beta=1 \bmod
  c}}1-\zeta_{\frac{f}{e}}^{\beta}. \]
  Let $p_i|\frac{f}{ec}$ and $p_i|c$.
  Then:
  \begin{align*}
    \prod_{\substack{\beta\in G\\ \beta=1 \bmod c}}1-\zeta_{\frac{f}{e}}^{\beta}
    &=
    \prod_{\substack{\beta\in G\\ \beta=1 \bmod
cp_i}}\prod_{j=0}^{p_i-1} 1-\zeta_{\frac{f}{e}}^{\beta}\zeta_{p_i}^j \\&=
\prod_{\substack{\beta\in G\\ \beta=1 \bmod
cp_i}}1-\zeta_{\frac{f}{e}}^{p_i\beta}.  \end{align*}
  In the same way, we have if $p_i|\frac{f}{e}$ and $p_i\nmid c$, with
  $r^{-1}=\frac{f}{eq_i}\bmod p_i$:
  \begin{align*}
    \prod_{\substack{\beta\in G\\ \beta=1 \bmod c}}1-\zeta_{\frac{f}{e}}^{\beta}
    &=
    \prod_{\substack{\beta\in G\\ \beta=1 \bmod
        cq_i}}\prod_{\substack{j=0\\ j\neq -r \bmod p_i}}^{q_i-1}
        1-\zeta_{\frac{f}{e}}^{\beta}\zeta_{q_i}^{\beta j} \\&=
        \prod_{\substack{\beta\in G\\ \beta=1 \bmod
            cq_i}}\frac{1-\zeta_{\frac{f}{e}}^{\beta
    q_i}}{1-\zeta_{\frac{f}{e}}^{\beta(q_i-\frac{rf}{e})/p_i}} \\&=
    \prod_{\substack{\beta\in G\\ \beta=1 \bmod
    cq_i}}\frac{1-\zeta_{\frac{f}{eq_i}}^{\beta}}{1-\zeta_{\frac{f}{eq_i}}^{\frac{\beta}{p_i}}}.
\end{align*}
  In case $p_i|e$, we have $q_i|e$ and therefore
\[ \prod_{\substack{\beta\in G\\ \beta=1 \bmod
c}}1-\zeta_{\frac{f}{e}}^{\beta}=\prod_{\substack{\beta\in G\\ \beta=1 \bmod
cq_i}} \left(1-\zeta_{\frac{f}{e}}^{\beta}\right)^{\varphi(q_i)}. \]

We can now compute our sum:
\begin{align*}
   \sum_{\alpha\in G} \chi(\alpha)\log(|\zeta_f^{e\alpha}-1|)  &=
  \sum_{\alpha\in (\ZZ/c\ZZ)^\times/\{-1,1\}}
  \chi(\alpha)\log\left(\left|\sigma_\alpha\left(\prod_{\substack{\beta\in
          G
  \beta=1 \bmod c}}\zeta_{\frac{f}{e}}^{\beta}-1\right)\right|\right)  \\ &=
   \varphi(e)\bigg(\prod_{\substack{i\\ p_i|\frac{f}{e} \\ p_i\nmid c}}
   1-\chi(p_i) \bigg) \sum_{\alpha\in (\ZZ/c\ZZ)^\times/\{-1,1\}}
   \chi(\alpha)\log(|\zeta_c^\alpha-1|).
\end{align*}
  We finish by the standard computation (\cite[Theorem 4.9]{Was97}) of
  the term on the right with the Gauss sum:
  $\tau=\sum_{\alpha\in(\ZZ/c\ZZ)^\times}
  \conj\chi(\alpha)\zeta_c^\alpha$:
  \begin{align*} \sum_{\alpha\in (\ZZ/c\ZZ)^\times}
    \conj\chi(\alpha)\ln(|\zeta_c^\alpha-1|) &=
    \sum_{\alpha\in (\ZZ/c\ZZ)^\times}
    \conj\chi(\alpha)\ln(1-\zeta_c^\alpha)  \\ &=
    \sum_{\alpha\in (\ZZ/c\ZZ)^\times} \sum_{k=1}^\infty
    \conj\chi(\alpha)\frac{\zeta_c^{\alpha k}}{k} \\ &=
    \sum_{i=1}^\infty \frac{\tau\chi(k)}{k} = \tau L(1,\chi)
  \end{align*}
  and $\tau \conj \tau=c$.
\end{proof}

\begin{definition}
  The \emph{augmentation ideal} is the kernel of the form:
  $\left(\sum_\alpha x_\alpha \sigma_\alpha \to \sum_\alpha
  x_\alpha\right)$ over $\ZZ[G]$.
\end{definition}

With this definition we can complete the description of the cyclotomic
units:
\begin{theorem}{\cite[Lemma 2.4]{kuvcera1992bases}}
  The cyclotomic units are generated by:
  \begin{itemize}
    \item The pair $\pm \zeta_f$,
    \item the $G \cdot \zeta_f^a-1$ for
      all $a\in \mathcal{Q}$ such that $\frac{f}{a}$ is not prime power,
    \item the orbit of $\zeta_f^{f/q_i}-1$ by the action of the
      augmentation ideal.
  \end{itemize}
\end{theorem}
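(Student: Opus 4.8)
The plan is to bootstrap from the coarser generating set just obtained---namely $\pm\zeta_f$ together with all products of conjugates $\sigma_\beta(\zeta_f^a-1)$, $a\in\mathcal{Q}$, $\beta\in G$, that happen to be units---and to refine it by sorting the indices $a\in\mathcal{Q}$ according to whether the complementary divisor $f/a$ is a prime power. Write an arbitrary cyclotomic unit coming from that description as $u=\pm\zeta_f^{k}\prod_{a\in\mathcal{Q}}P_a$ with $P_a=\prod_{\beta\in G}\sigma_\beta(\zeta_f^{a}-1)^{c_{a,\beta}}$; the index $a=f$ contributes $\zeta_f^f-1=0$, so it must appear with zero exponent and may be dropped.

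First I would treat the indices $a$ with $f/a$ not a prime power. Here $\zeta_f^{a}$ is a primitive $(f/a)$-th root of unity with $f/a\ge 2$, so $\zeta_f^{a}-1$ is a unit by the classical fact that $1-\zeta_m$ is a unit precisely when $m$ is not a prime power. Hence $P_a$ is already a unit and lies in the subgroup generated by the Galois orbit $G\cdot(\zeta_f^{a}-1)$, which is the second family. Next I would treat $a=f/q_i$, the only case where $f/a=q_i$ is a prime power. In $\QQ[\zeta_{q_i}]$ the element $\zeta_f^{a}-1$ generates the unique prime above $p_i$; since $\KK=\QQ[\zeta_{q_i}]\cdot\QQ[\zeta_{f/q_i}]$ is unramified over $\QQ[\zeta_{q_i}]$ at $p_i$ (because $p_i\nmid f/q_i$), we get $v_{\mathfrak{q}}(\zeta_f^{a}-1)=1$ at every prime $\mathfrak{q}$ of $\order_\KK$ above $p_i$ and $v_{\mathfrak{q}}(\zeta_f^{a}-1)=0$ at every other prime. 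Using $v_{\mathfrak{q}}(\sigma_\beta(x))=v_{\sigma_\beta^{-1}\mathfrak{q}}(x)$ and that $\sigma_\beta$ permutes the primes above $p_i$, this yields $v_{\mathfrak{q}}(P_{f/q_i})=\sum_\beta c_{f/q_i,\beta}$ for $\mathfrak{q}$ above $p_i$ and $v_{\mathfrak{q}}(P_{f/q_i})=0$ elsewhere; in particular $P_{f/q_i}$ affects only the primes above $p_i$, and symmetrically the blocks $P_a$ for the other indices do not.

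Putting these together, the condition that $u$ is a unit, read off at a prime $\mathfrak{q}$ above $p_i$, forces $0=v_{\mathfrak{q}}(u)=v_{\mathfrak{q}}(P_{f/q_i})=\sum_\beta c_{f/q_i,\beta}$ for each $i$. Therefore every $P_a$ is a unit: for $f/a$ not a prime power by the first step, for $a=f/q_i$ because its exponent vector now lies in the augmentation ideal, so that $P_{f/q_i}$ is in the orbit of $\zeta_f^{f/q_i}-1$ under the augmentation ideal---the third family. Hence $u$ decomposes as a product of $\pm\zeta_f$ and elements of the second and third families, proving that these generate the cyclotomic units; the reverse inclusion is immediate since, by the valuation computations above, every listed generator is itself a cyclotomic unit. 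The main obstacle will be the clean local analysis at $p_i$: verifying that $\KK/\QQ[\zeta_{q_i}]$ is unramified at $p_i$ so that $\zeta_f^{f/q_i}-1$ has valuation exactly $1$ at each prime above $p_i$, and checking that no ``mixed'' relation between distinct indices $a$ survives once one examines the units prime by prime; the rest is bookkeeping on top of the already-available reduction theorem.
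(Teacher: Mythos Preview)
Your argument is correct, and it reaches the same conclusion as the paper, but by a different mechanism. The paper works globally: it observes that a product of conjugates of the $\zeta_f^a-1$ is a unit if and only if its absolute norm is $\pm 1$, then computes $\norm_{\KK/\QQ}(1-\zeta_f^a)$ directly (it equals $p_i^{\varphi(a)}$ when $a=f/q_i$ and $1$ otherwise, via the product formulas established at the start of the proof of \cref{thm:eval_character}). Since conjugates share the same norm, the norm of $P_{f/q_i}$ is $p_i^{\varphi(f/q_i)\sum_\beta c_{f/q_i,\beta}}$, and coprimality of the distinct $p_i$ forces each exponent sum to vanish. You instead work locally, showing that $\zeta_f^{f/q_i}-1$ has valuation exactly $1$ at every prime above $p_i$ (using that $\KK/\QQ(\zeta_{q_i})$ is unramified at $p_i$) and valuation $0$ elsewhere, so the unit condition read at a prime above $p_i$ yields $\sum_\beta c_{f/q_i,\beta}=0$.

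The two approaches are essentially dual: the paper's norm computation packages all the local information into a single rational integer and then separates the primes by unique factorisation, whereas you separate the primes from the outset via valuations. The norm route is shorter once the explicit norm values are available (and the paper already has the needed product identities on hand), while your valuation route is more self-contained and makes the ramification picture explicit. Either way the ``mixed relation'' concern you flag is handled automatically: in the paper by multiplicativity of the norm and coprimality of the $p_i$, in your argument by the fact that the support of $P_{f/q_i}$ in the divisor group is concentrated above $p_i$.
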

\begin{proof}
  Note first that for any $a\in \mathcal{Q}$,
$(1-\sigma_\alpha)\cdot (\zeta_f^a-1) \in \order_\KK$.
  Next, we prove that an element $u$ generated by the $\zeta_f^a-1$ is a
  unit if $\norm_{\KK/\QQ}(u)=1$.
	We remark that 	\[ \varphi(f)\cdot u= \norm_{\KK/\QQ}(u)\left(
    \left(\sum_\alpha
	1-\sigma_\alpha\right)\cdot u\right)=\left(\sum_\alpha 1-\sigma_\alpha\right)
\cdot u \]
  so that it is a unit.  The converse is clear.  Finally
  $\norm_{\KK/\QQ}(1-\zeta_f^a)$ is easily
  computed to be $p_i^{\varphi(a)}$ if $a=f/q_i$ and $1$ else using the
  equations at the beginning of the proof of
  \cref{thm:eval_character}.
\end{proof}

\subsubsection{Construction of an ``orthogonal'' basis}
We now define the family $(b_i)_{1\leq i\leq |\mathcal{Q}|}$ by setting
$b_i=\Log(\zeta_f^a-1)$ where the $a\in \mathcal{Q}$ are taken in
decreasing order.
We can define some Gram-Schmidt orthogonalization on this family with
the relations:
\[
b_i^*=b_i-\sum_{j<i}
\frac{\inner{b_i}{b_j^*}}{\inner{b_j^*}{b_j^*}}b_j^*=b_i-\sum_{j<i}
b_ib_j^*{(b_j^*)^\dagger}
\]
where the dagger is the Moore-Penrose pseudo-inverse.
As such, $\chi(b_i)=\chi(b_i^*)$ if $\chi(b_j^*)=0$ for all $j<i$, and
is equal to zero elsewhere.
As $L(1,\chi)\neq 0$, we have for all $\chi\neq 1$ that $\chi(b_i^*)\neq
0$ iff $\rad(\frac{f}{e})|c|\frac{f}{e}$ where $c$ is the conductor of the character
$\chi$.
Furthermore, in this case, the term $\prod_{p_i|\frac{f}{e}}
(1-\chi(p_i))$ is
one.  We can now give our decoding algorithm, assuming again that the
cyclotomic units have a finite index:

\begin{proof}[Proof of \cref{thm:unites_general}]
  We let $b_i=\Log(\zeta_f^e-1)$ and recall that for all $\chi$ with
  conductor not coprime with $e$ we have $\chi(b_i)=0$.
  We remark that if $\frac{f}{e}$ is a prime power, we have $b_i^*=b_i$ and as a
  result $\|b_i^* \|_\infty\leq \log(\frac{f}{e}).$
  Also, we have for all $i$ that $\|b_i^*\|\leq
  \|b_i\|=\bigO{\sqrt{\varphi\left(\frac{f}{e}\right)}}$ using the same technique.  The
  algorithm consists in using Babai reduction with our generating
  family, with the modification described above to round with respect to
  the augmentation ideal when we have to.
  More precisely, for any $y\in \ZZ[G]b_i^*$, we compute $z$ a
  randomized rounding of $y/b_i^*$ in the same way as in the previous
  section.
  If $\frac{f}{e}$ is a prime power, the rounding is $z-\sum_\alpha z_\alpha
  \sigma_1$, else it is $z$.
  If $|\sum_\alpha z_\alpha|\geq
  \frac{\sqrt{\frac{f}{e}}}{\log(\frac{f}{e})}$ in case where
  $\frac{f}{e}$ is a prime power, we restart the rounding.
  We then continue in the same way with $i-1$. The analysis is as
  before.
  The randomized rounding produces an error with subgaussian coordinates
  with parameter $\bigO{\sqrt{\sum_{e\in X}
  \varphi(\frac{f}{e})}}=\bigO{\sqrt{f}}$.
  The correction for the prime power adds an error bounded by $\sum_i
  \log(q_i)\sqrt{q_i}/\log(q_i)=O(\sqrt{f})$.
  Hence, the bound on the output holds.  The running time is
  quasi-linear since we can work at each step with the ring \[
    \ZZ\left[\left(\ZZ\bigg/\left(\frac{f}{e}\right)\ZZ\right)^\times\right].\]
\end{proof}
Remark that the running time is also quasi-linear if we work with the
input and output in the logarithm space.
Note that $\KK^+=\QQ[\zeta_f+\conj\zeta_f]$ has the same units, up to
torsion. As such, the same theorem is true for $\KK^+$.
It has the following algorithmic implication.
Given an ideal $\ideal{a} \subset \order_{\KK^+}$, as the class group
order of $\KK^+$ is usually small, it is simple to find an ideal $\alpha
\order_{\KK^+}\subset \ideal{a}$ with low norm.
From there, we can compute a generator $\alpha$ in quantum polynomial
time and using the above theorem on $\alpha$, we have found quickly an
element in $\ideal{a}$ with approximation factor $2^{O(\sqrt{f\log
f})}$.

\subsection{BDD on the unit lattice}
The following theorem has deep
implications in arithmetic. One part is due to
Landau~\cite{landau1927dirichletsche}, another to Dirichlet.
\begin{theorem}
  Let $\chi$ be a character of conductor $c>1$.
  If $\chi^2=1$ ($\chi$ is quadratic) we have
  $|L(1,\chi)|=\Omega(1/\sqrt{c})$, else
  $|L(1,\chi)|=\Omega(1/\log(c))$.
\end{theorem}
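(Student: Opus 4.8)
The plan is to split into the two cases, which rest on genuinely different inputs, after the harmless normalization that $\chi$ is \emph{primitive} of conductor $c$ (if $\chi$ modulo some modulus is induced by the primitive $\chi^{\ast}$ of conductor $c^{\ast}\mid c$, then $L(1,\chi)=L(1,\chi^{\ast})\prod_{p\mid c}\bigl(1-\chi^{\ast}(p)/p\bigr)$, and the finite product is $\gg 1/\log\log c$, so only the conductor matters up to an irrelevant factor).

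\emph{The quadratic case $\chi^{2}=1$.} A primitive real character of conductor $c$ is the Kronecker symbol $\chi=\bigl(\tfrac{d}{\cdot}\bigr)$ attached to the fundamental discriminant $d$ with $|d|=c$, so I would invoke Dirichlet's analytic class number formula for $\QQ(\sqrt d)$:
\[
 L(1,\chi)=\frac{2\pi\, h(d)}{w_{d}\sqrt{|d|}}\ \ (d<0),\qquad
 L(1,\chi)=\frac{2\, h(d)\log\varepsilon_{d}}{\sqrt d}\ \ (d>0),
\]
where $h(d)\ge 1$ is the class number, $w_{d}\le 6$ the number of roots of unity, and $\varepsilon_{d}>1$ the fundamental unit. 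Plugging in $h(d)\ge 1$, $w_{d}\le 6$, and in the real case $\varepsilon_{d}\ge\tfrac{1+\sqrt 5}{2}$ (the golden ratio being the smallest fundamental unit of any real quadratic field), both formulas yield $L(1,\chi)\ge c_{0}/\sqrt{|d|}=c_{0}/\sqrt c$ for an absolute constant $c_{0}>0$, which is the assertion.

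\emph{The non-quadratic case.} Now $\chi$ is complex, hence $\bar\chi\neq\chi$. First I would establish non-vanishing: $Z(s)=\prod_{\psi\bmod c}L(s,\psi)$ agrees with $\zeta_{\QQ(\zeta_{c})}(s)$ up to finitely many Euler factors, so it is a Dirichlet series with non-negative coefficients whose restriction to the reals tends to $+\infty$ as $s\to(1/\varphi(c))^{+}$; by Landau's theorem on Dirichlet series with non-negative coefficients it must have a real singularity in $[1/\varphi(c),1]$, and the only candidate is the simple pole contributed by $\psi=\chi_{0}$, which is therefore uncancelled, forcing $L(1,\psi)\neq 0$ for all $\psi\neq\chi_{0}$, in particular $L(1,\chi)\neq 0$. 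For the rate $|L(1,\chi)|\gg 1/\log c$ I would then use that, for a complex character, the classical zero-free region $\{\sigma\ge 1-c_{1}/\log c\}$ holds \emph{unconditionally} (there is no Siegel-zero obstruction precisely because $\chi\neq\bar\chi$); this makes the effective prime number theorem in the progression available, so $\sum_{p\le y}\chi(p)/p=O(1)$ once $y\ge\exp((\log c)^{2})$. Combining this with the identity $\log|L(1,\chi)|=\sum_{p}\chi(p)/p+O(1)$ (legitimate because $L(1,\chi)\neq 0$, the higher prime powers contributing only $O(1)$) and the trivial bound $\sum_{p\le\exp((\log c)^{2})}1/p=\log\log c+O(1)$ gives $\log|L(1,\chi)|\ge-\log\log c-O(1)$, that is $|L(1,\chi)|\gg 1/\log c$.

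\emph{The main obstacle} is the quadratic case. Elementary work with $\zeta(s)L(s,\chi)$ (a Dirichlet series with non-negative coefficients that is $\gg\sqrt x$ in partial sums near $x\asymp c$) only delivers $L(1,\chi)\gg_{\epsilon}c^{-\epsilon}$, and the Landau abscissa-of-convergence trick alone gives exponentially weak bounds; the clean exponent $1/2$ genuinely requires Dirichlet's class number formula together with $h\ge 1$, and this is unavoidable — an effective improvement to $o(1/\sqrt c)$ would exclude Siegel zeros. By contrast the complex case is soft: non-vanishing is forced by Landau's theorem, and the quantitative rate follows from the Siegel-free zero-free region.
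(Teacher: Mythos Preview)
The paper does not prove this theorem at all: it is stated as a classical result, with one part attributed to Dirichlet and the other to Landau~\cite{landau1927dirichletsche}, and is then used as a black box in the subsequent BDD theorem. Your proposal is therefore not competing with a proof in the paper but rather reconstructing the classical arguments behind those attributions, and it does so correctly: the quadratic bound via the class number formula and $h\ge 1$ is exactly Dirichlet's input, and the complex case via the Siegel-free zero-free region is the content of Landau's 1927 paper cited.

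One minor comment on the complex case: your passage from the zero-free region to $|L(1,\chi)|\gg 1/\log c$ through ``effective PNT in progressions'' and the identity $\log|L(1,\chi)|=\sum_p\chi(p)/p+O(1)$ is more circuitous than necessary. The standard route is direct: write $\log L(1,\chi)=\log L(\sigma_0,\chi)-\int_1^{\sigma_0}\frac{L'}{L}(\sigma,\chi)\,d\sigma$ with $\sigma_0=1+1/\log c$, bound $|\log L(\sigma_0,\chi)|\le\log\zeta(\sigma_0)=\log\log c+O(1)$, and use that the zero-free region gives $|L'/L(\sigma,\chi)|=O(\log c)$ on $[1,\sigma_0]$, so the integral is $O(1)$. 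This avoids invoking the prime number theorem as an intermediate step. Your argument is not wrong, just heavier than needed; and the remark at the end that an effective improvement on the quadratic side would exclude Siegel zeros is exactly the right perspective on why the two cases are qualitatively different.
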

Note that under the Generalized Riemann Hypothesis we can take
$|L(1,\chi)|=\Omega(1/\log \log c)$ and for most characters we have
$|L(1,\chi)|=\bigO{1}$.
This justifies our previous assumptions.  We let $\tau(f)=\prod_i 1+e_i$
be the number of divisors of $f$; we have the well-known bound $\tau(f)=
f^{\bigO{1/\log \log f}}$.  We can now prove our BDD\footnote{The usual
  definition of BDD is about the worst case decoding distance. The
implied worst case bound is too large to be useful, but with high
probability we can decode large Gaussian noise, which is enough for
current applications.} theorem:
\begin{theorem}
  Given $\KK=\QQ[\zeta_f]$, there are $\varphi(f)/2$ (explicit) elements
$r_i$ of norm \[ \bigO{\frac{\sqrt{\tau(f)}}{n}\log(n)} \] in $\RR[G]$
with the following property.
  Let $x \in (\RR \otimes \KK)^\times$ be such that there is a
  cyclotomic unit $u$ with for all $i$, $ |\inner{r_i}{\Log(x/u)}|<1/3$.
  Then, given $x$ we can find $u$ up to a power of $\zeta_f$ in
  quasi-linear time.
\end{theorem}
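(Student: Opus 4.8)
The plan is to reduce the BDD problem over the Log-unit lattice to a sequence of one-dimensional rounding problems along the ``orthogonal'' basis $(b_i^*)$ constructed just before the statement, exactly as in the proof of \cref{thm:unites_general}, but now tracking the decoding radius rather than merely the size of the output. First I would fix the generating family $b_i = \Log(\zeta_f^{e_i}-1)$ for $e_i\in\mathcal{Q}$ ordered decreasingly, together with its Gram--Schmidt orthogonalization $b_i^*$, and recall from \cref{thm:eval_character} that for a nontrivial even character $\chi$ of conductor $c$ one has $\chi(b_i^*)\neq 0$ precisely when $\rad(f/e_i)\mid c\mid f/e_i$, in which case $|\chi(b_i^*)| = \frac{\varphi(e_i)\sqrt{c}}{2\ln 2}|L(1,\chi)|$. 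This is the key structural fact: each character ``hits'' exactly one $b_i^*$, so the dual vectors $r_i$ we want are (up to normalization) the $(b_i^*)^\dagger$, i.e.\ the rows of the Moore--Penrose pseudo-inverse of the matrix whose rows are the $b_i^*$. Concretely I would define $r_i$ as the functional on $\RR[G]$ dual to $b_i^*$ within the span of the cyclotomic-unit lattice, so that $\inner{r_i}{b_j^*} = \delta_{ij}$ and $\inner{r_i}{\Log(x/u)}$ measures the $i$-th coordinate error of $\Log(x)$ with respect to the lattice $\ZZ[G]\cdot b_i$ (with the augmentation-ideal modification when $f/e_i$ is a prime power).

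Next I would prove the norm bound $\|r_i\| = \bigO{\frac{\sqrt{\tau(f)}}{n}\log n}$. In the character basis, $r_i$ is supported on the characters $\chi$ with $\rad(f/e_i)\mid c_\chi\mid f/e_i$, and on each such $\chi$ its value is $\overline{\chi(b_i^*)}/|\chi(b_i^*)|^2 = \bigO{1/(\varphi(e_i)\sqrt{c_\chi}\,|L(1,\chi)|)}$. Using the Landau--Dirichlet lower bounds on $|L(1,\chi)|$ quoted in the theorem right before the statement — $\Omega(1/\sqrt c)$ for quadratic characters and $\Omega(1/\log c)$ otherwise — the quadratic characters contribute $\bigO{1/\varphi(e_i)}$ each and the rest contribute $\bigO{\log(c_\chi)/(\varphi(e_i)\sqrt{c_\chi})}$; summing the squared magnitudes over the at most $\tau(f/e_i)\le\tau(f)$ relevant characters and dividing by $n$ (Parseval normalization of the $\RR[G]$ inner product, since $|G| = n/2 = \Theta(n)$) gives $\|r_i\|^2 = \bigO{\tau(f)\log^2(f)/n^2}$, whence the claimed bound. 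The number of such $r_i$ is $|\mathcal{Q}| = 2^r$ for the $r_i$ coming from the basis vectors, but expanding the augmentation-ideal orbits to actual coordinate functionals yields one $r_i$ per dimension, i.e.\ $\varphi(f)/2$ of them, matching the statement.

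For the decoding algorithm itself I would run Babai's nearest-plane along the $b_i^*$ in decreasing order of $i$ (increasing order of $e_i$): at step $i$, having fixed the higher coordinates, the residual lies in $\ZZ[G] b_i^* + \RR(\text{lower terms})$, and the hypothesis $|\inner{r_i}{\Log(x/u)}| < 1/3$ says the true coordinate of $\Log(x)$ along $b_i^*$ is within $1/3$ of the lattice coordinate of $u$; so deterministic rounding recovers that integer coordinate of $u$ exactly (with the extra ``subtract the mean times $\sigma_1$'' step, respectively the restart-if-$|\sum z_\alpha|$-too-large step, handled exactly as in \cref{thm:unites_general} when $f/e_i$ is a prime power — here it is deterministic since we are told the noise is small, so no restarts are needed). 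Peeling off $\zeta_f^e-1$ and iterating recovers $\Log(u)$, hence $u$ up to torsion; the quasi-linear running time follows as before since every step works in the group ring $\ZZ[(\ZZ/(f/e)\ZZ)^\times]$ and the QR-type subtraction of lower terms is a bounded number of group-ring multiplications, each quasi-linear by FFT over the finite abelian group.

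The main obstacle I expect is the norm estimate for the $r_i$: one must be careful that the Gram--Schmidt process does not inflate the $b_i^*$ (here it does not, because \cref{thm:eval_character} shows the $b_i$ are already ``block-diagonal'' in the character basis, so $b_i^* = b_i$ on the relevant characters and $0$ elsewhere — this is the crucial point that makes the orthogonalization essentially trivial), and that summing $1/|L(1,\chi)|^2$ over the divisor-indexed family of characters really does only cost a factor $\tau(f) = f^{o(1)}$ rather than something polynomial. A secondary subtlety, shared with \cref{thm:unites_general}, is the dependence on the (conjectural) finite index of the cyclotomic units inside the full unit group: the theorem as stated decodes only on the \emph{cyclotomic} unit sublattice, which sidesteps this, so I would state and use it in that form and note that under GRH one gets the sharper $\|r_i\| = \bigO{\sqrt{\tau(f)}\log\log(f)/n}$.
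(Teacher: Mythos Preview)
Your approach is essentially identical to the paper's: you take the $r_i$ to be the $G$-orbit of the pseudo-inverses $(b_i^*)^\dagger$, bound their norms in the character basis via \cref{thm:eval_character} together with the Landau--Dirichlet lower bounds on $|L(1,\chi)|$, and decode by deterministic nearest-plane along the $b_i^*$ (with the augmentation-ideal correction when $f/e$ is a prime power). The only substantive difference is bookkeeping: the paper computes $\sum_\chi 1/c$ exactly via CRT rather than bounding the number of characters by $\tau(f)$, and obtains $\|(b_i^*)^\dagger\|^2 = \bigO{n^{-1}\log^2(n)\tau(f)}$ --- your Parseval step only justifies one factor of $1/n$, not the $1/n^2$ you wrote, so that last arithmetic line needs tightening (the worst case is $e_i=1$, where $\varphi(e_i)=1$ gives no extra saving).
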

\begin{proof}
  The algorithm is similar to the previous one.
  We first scale $x$ to get $\norm_{\KK/\QQ}(x)=1$.
  For decreasing $i$, we compute $z$ the (deterministic) rounding
  $\Log(x)/b_i^*$ where we force $\sum_\alpha z_\alpha=0$ if
  $b_i=\Log(\zeta_f^e-1)$ with $\frac{f}{e}$ a prime power,
  and we then divide $x$ by $z \cdot \zeta_f^e-1$.  We first bound
  $\|({b_i^*})^\dagger\|_2$ where $b_i=\zeta_f^e-1$.
  Thanks to our previous computations and the character orthogonality
  relation, we have
  \[ \left\| ({b_i^*})^\dagger-\frac{\sum_\alpha
      \sigma_\alpha}{\sum_\alpha
      (b_i^*)_\alpha} \right\|^2 =\frac{1}{|G|}\sum_{\chi} \frac{4\ln^2
  2}{c\varphi(e)^2} \frac{1}{|L(1,\chi)|^2} \]
  where $\chi$ has a conductor $c>1$ with
  $\rad(\frac{f}{e})|c|\frac{f}{e}$.
  The Chinese Remainder theorem implies that:
  \[ \sum_\chi \frac{1}{c}=\frac{1}{2}\prod_{p_i|\frac{f}{e}}
    \sum_{k=1}^{e_i}
  \frac{(p_i-1)p_i^{k-1}}{p_i^k}=\frac{1}{2} \prod_{p_i|\frac{f}{e}}
e_i(1-1/p_i) \]
  with the same assumptions on $\chi$.
  We have at most $2^{r+1}$ quadratic characters, so we get:
  \[ \left\| ({b_i^*})^\dagger-\frac{\sum_\alpha
      \sigma_\alpha}{\sum_\alpha
      (b_i^*)_\alpha} \right\|^2 \leq
      \bigO{\frac{2^r+\log^2\left(\frac{f}{e}\right)\prod_{p_i|\frac{f}{e}}
  e_i}{\varphi(f)\varphi(e)^2}}\]
  which is in $\bigO{n^{-1}\log^2(n)\tau(f)}$.  Now each non-zero
  coefficient of $z$ in the algorithm can be expressed as an inner
  product between an element of ${G}({b_i^*})^\dagger$ and $\Log(x/u)$, which is of
  unit norm.
  This leads to a $r$ vector for each coefficient, and with the given
  condition this guarantees that $\Log(u)$ is exactly recovered.
\end{proof}

This implies that given any generator of the ideal
$\alpha\order_\KK$ where $\alpha$ is sampled from a large discrete
Gaussian, we can recover $\alpha$ in quasi-linear time;
see~\cite[Section 5]{EC:CDPR16}.  The practical average length of the $r_i$ is of
course on the order of $\sqrt{\frac{\prod_i e_i}{n}}$.

 \section{The symplectic structure in all number fields}
\label{app:sympallnf}
In~\cref{sec:symplectic}, we described how to obtain a symplectic structure when $\lL=\KK[X]/(X^d+a)$.
We show here the general case, with $\lL=\KK[X]/f(X)$.
We first give a simple construction which recovers the one given above but has losses in the general case; and then describe a general construction without losses.

\subsection{The dual integer construction}

We have the following lemma, proved in~\cite[Chapter III, Proposition 2.4]{Neukirch}:
\begin{lemma}
	Let $a_i=X^i$ and $\sum_i b_i Y^i=\frac{f(Y)}{Y-X}$.
	Then $\tr_{\lL/\KK}(a_ib_j/f'(X))$ is equal to 1 if $i=j$ and $0$ else.
\end{lemma}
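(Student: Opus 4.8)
The statement is the classical description of the trace-dual of the power basis, so the plan is to reduce it to Lagrange interpolation at the roots of $f$. Since $\lL/\KK$ is separable (we are in characteristic zero), write $d=[\lL:\KK]$ and let $\sigma_1,\dots,\sigma_d\colon\lL\to\overline{\KK}$ be the $d$ distinct $\KK$-embeddings into a fixed algebraic closure, with $\theta_k=\sigma_k(X)$ the $d$ pairwise distinct roots of $f$. Recall that $\tr_{\lL/\KK}(\alpha)=\sum_{k=1}^d\sigma_k(\alpha)$ for every $\alpha\in\lL$.

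First I would make sense of the quantity $f(Y)/(Y-X)$: since $f(Y)-f(X)$ is divisible by $Y-X$ in $\KK[X,Y]$ and $f(X)=0$ in $\lL$, the expression $f(Y)/(Y-X)=\sum_{j=0}^{d-1}b_j Y^j$ is a genuine element of $\lL[Y]$ with $b_j\in\lL$, which matches the definition of the $b_j$ in the statement. Applying $\sigma_k$ coefficientwise (i.e. to the $X$-variable only) sends this polynomial to $f(Y)/(Y-\theta_k)\in\overline{\KK}[Y]$, the division again being exact because $\theta_k$ is a root of $f$. Using also $\sigma_k(f'(X))=f'(\theta_k)\neq0$ (separability), this gives, for each $0\le i\le d-1$,
\[
  \tr_{\lL/\KK}\!\left(\frac{a_i}{f'(X)}\cdot\frac{f(Y)}{Y-X}\right)
  \;=\;\sum_{k=1}^d \frac{\theta_k^{\,i}}{f'(\theta_k)}\cdot\frac{f(Y)}{Y-\theta_k}.
\]

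The key step is to recognise the right-hand side: the polynomials $L_k(Y)=\dfrac{f(Y)}{(Y-\theta_k)f'(\theta_k)}$ are exactly the Lagrange interpolation polynomials at the nodes $\theta_1,\dots,\theta_d$, characterised by $\deg L_k<d$ and $L_k(\theta_m)=\delta_{k,m}$. Hence $\sum_{k}\theta_k^{\,i}L_k(Y)$ is the unique polynomial of degree $<d$ taking the value $\theta_m^{\,i}$ at each node $\theta_m$, which for $i<d$ is simply $Y^i$; so the displayed trace equals $Y^i$. On the other hand, expanding $f(Y)/(Y-X)=\sum_j b_j Y^j$ and using $\KK$-linearity of the trace gives
\[
  \tr_{\lL/\KK}\!\left(\frac{a_i}{f'(X)}\cdot\frac{f(Y)}{Y-X}\right)
  \;=\;\sum_{j=0}^{d-1}\tr_{\lL/\KK}\!\left(\frac{a_i b_j}{f'(X)}\right)Y^j .
\]
Comparing the coefficients of $Y^j$ on both sides then yields $\tr_{\lL/\KK}(a_i b_j/f'(X))=\delta_{i,j}$, which is the claim.

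There is no genuine obstacle here — this is essentially a textbook computation (it is Neukirch, Ch.\ III, Prop.\ 2.4). The only points needing a little care are: justifying that $f(Y)/(Y-X)$ defines an element of $\lL[Y]$ and that the embeddings act on it as described; invoking separability of $f$ so the $\theta_k$ are distinct and $f'(\theta_k)\neq 0$; and tracking the degree constraint $i\le d-1$, which is precisely what makes the Lagrange argument return $Y^i$ itself rather than its reduction modulo $f$.
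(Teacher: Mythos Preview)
Your proof is correct and is exactly the standard Lagrange-interpolation argument; the paper does not give its own proof but simply cites Neukirch, Ch.~III, Prop.~2.4, which is the very computation you reproduce.
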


This suggests taking as a $\KK-$basis for $\lL^2$ the $(a_i,0)$ followed by the $(0,b_i)$.
With the notations of~\cref{sec:symplectic}, we now define $J'_{\lL}$ as \[ \tr_{\lL/\KK}(J_{\lL}/f'(X)). \]
It follows from the lemma that in our basis, this is represented by the Darboux matrix:
\[ \begin{pmatrix} 0 & \Id_d \\ -\Id_d & 0 \end{pmatrix} \]
and, as usual, we can reverse the order of the second part of the basis to obtain the wanted matrix.

We can convert efficiently a number $z\in \lL$ in the basis of $b_i$.
Clearly, the coefficients are given by all the $\tr_{\lL/\KK}(z/f'(X)\cdot X^i)$.
We then simply evaluate $z/f'(X)$ on all roots of $f$ using a remainder tree, and follow by a Vandermonde matrix-vector multiplication, which is also a multipoint evaluation~\cite{moenck1972fast}.
In particular, we do not need to compute the $b_i$.

There is however a loss with this basis: the algorithm tries to minimize the size of the coefficients in our basis of $\lL^2$ instead of the canonical norm.

\subsection{The orthogonal construction}

We want to build an orthogonal $\RR \otimes \KK$-basis of $\RR \otimes \lL$.
We assume for simplicity (only) that $\lL$ (and therefore $\KK$) is a totally real field.
Hence, with $\KK=\QQ[Y]/g(Y)$, we have that all roots $r_i$ of $g$ are real, and when we evaluate all coefficients of $f$ on $r_i$, the resulting polynomial has real roots $r_{i,j}$.

We then define the $j$-th element of the basis as being the element of $\lL$ which, when we evaluate on $(X-r_{i,k},Y-r_i)$, we obtain 1 if $j=k$ and 0 else.
This is clearly an orthogonal basis for the canonical norm, and in this case, it is also its dual.
Hence, using twice this basis leads again to the Darboux matrix for $J'_{\lL}=\tr_{\lL/\KK}(J_{\lL})$.
Exactly the same construction works for totally imaginary $\KK$ (and therefore $\lL$).

The general case can be done in the same way, by taking care of ramified places.
 \section{Reduction with linear algebra}
\label{app:reduction}

We shall prove that lattice reduction is no easier than linear algebra on a large field $\ZZ/p$.
We start by defining the problems.

\begin{definition}[Lattice reduction]
	The problem of lattice reduction consists in, given an integer matrix $A$ of dimension $d$ with $\|A\|,\|A^{-1}\|\leq 2^B$, outputting
	a matrix $AU$ with $U$ a unimodular integer matrix such that with $QR=AU$ the QR-decomposition, we have for all $i$:
	\[ R_{i,i}\leq 2R_{i+1,i+1}. \]
\end{definition}

\begin{definition}[Kernel problem]
	The kernel problem consists in, given a square matrix $A$ of dimension $d$ over $\ZZ/p$, outputting a matrix $K$ such that $AK=0$ and the number of columns of $K$ is $\dim \ker A$.
\end{definition}

\begin{theorem}
	If one can solve the lattice problem in dimension $2d$ with parameter $B$, one can solve the kernel problem in dimension $d$ for any prime $p\leq 2^{B/2-d}d^{-1}$ with the same complexity, up to a constant.
\end{theorem}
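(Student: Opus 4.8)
The plan is to reduce, with a single oracle call, the kernel problem in dimension $d$ to reduction of a $2d$-dimensional integer lattice whose geometry pins down the integer kernel lattice $\Lambda=\{x\in\ZZ^d:Ax\equiv 0\bmod p\}$; a $\ZZ$-basis of $\Lambda$ yields $\dim\ker A$ and, read modulo $p$, a generating set of $\ker A$. Concretely, represent the entries of $A$ in $\{0,\dots,p-1\}$, put $N=2^{2d}p$, and form
\[
  M=\begin{pmatrix}\Id_d & 0\\ NA & Np\,\Id_d\end{pmatrix}\in\ZZ^{2d\times 2d},
  \qquad
  M^{-1}=\begin{pmatrix}\Id_d & 0\\ -A/p & (Np)^{-1}\Id_d\end{pmatrix}.
\]
Since the lower-left block of $M^{-1}$ has entries of absolute value $<1$, one has $\|M\|\le Np\sqrt{2d}=2^{2d}p^2\sqrt{2d}$ and $\|M^{-1}\|\le\sqrt{2d}$; for $d\ge 2$ the hypothesis $p\le 2^{B/2-d}d^{-1}$ gives $\|M\|,\|M^{-1}\|\le 2^B$, so $M$ is a legal input (the case $d=1$ being trivial), and $\det M=(Np)^d$. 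Feed $M$ to the dimension-$2d$ oracle to obtain $MU$, $U\in\mathrm{GL}_{2d}(\ZZ)$, reduced; write the columns of $MU$ as $c_i=(a_i,b_i)$ with $a_i,b_i\in\ZZ^d$, and size-reduce $MU$ ourselves if the oracle did not (this leaves the diagonal $R_{i,i}$ unchanged).

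Let $L=M\ZZ^{2d}$. The argument rests on three elementary facts. (i) $L\cap(\RR^d\times 0)=\Lambda\times 0$, and this is a saturated sublattice, since $L/(\Lambda\times 0)\cong A\ZZ^d+p\ZZ^d$ is torsion-free. (ii) $\covol\Lambda=[\ZZ^d:\Lambda]=p^{\,d-\dim\ker A}$, and $\lambda_d(L)\le p$, witnessed by $(pe_1,0),\dots,(pe_d,0)\in\Lambda\times 0$. (iii) Every vector of $L$ outside $\RR^d\times 0$ has norm $\ge N$, its lower block being $N$ times a nonzero integer vector. Now for a size-reduced basis with $R_{i,i}\le 2R_{i+1,i+1}$ one checks, for $i\le d$, that $\|c_i\|<2^{i}R_{i,i}$ (size-reduction together with $R_{j,j}\le 2^{i-j}R_{i,i}$) and that $R_{i,i}\le 2^{2d-i}\lambda_i(L)$ (among $i$ independent vectors of norm $\le\lambda_i$ one has a nonzero $c_{i_0}$-coefficient for an index $i_0\ge i$, so $\lambda_i\ge R_{i_0,i_0}\ge 2^{i-i_0}R_{i,i}$); hence $\|c_i\|<2^{2d}\lambda_i(L)\le 2^{2d}p=N$. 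By (iii) this forces $c_1,\dots,c_d\in\RR^d\times 0$, hence $c_1,\dots,c_d\in\Lambda\times 0$. Being $d$ linearly independent members of a basis of $L$ that lie in the saturated rank-$d$ sublattice $\Lambda\times 0$, they form a $\ZZ$-basis of it. Thus $a_1,\dots,a_d$ is a basis of $\Lambda$: we obtain $\dim\ker A=d-\log_p\lvert\det(a_1,\dots,a_d)\rvert$ by (ii), and $\bar a_1,\dots,\bar a_d$ generate $\ker A=\Lambda/p\ZZ^d$ inside $(\ZZ/p)^d$; a column-echelon reduction over $\ZZ/p$ selects $\dim\ker A$ of them as the columns of the required $K$.

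Besides the single oracle call, we only assemble $M$, compute one determinant, and perform $O(d^\omega)$ operations over $\ZZ/p$, all of which are dominated by a dimension-$2d$ reduction — which already reads a $\Theta(d^2B)$-bit input — so the total cost matches the oracle's up to a constant. The one delicate point is the calibration of the blow-up $N$: it must beat the worst-case length $2^{\Theta(d)}\lambda_d(L)=2^{\Theta(d)}p$ of the first $d$ vectors of a reduced basis in dimension $2d$, so that the reduction's unavoidable exponential slack cannot lift any of them out of $\Lambda\times 0$, while $\|M\|$ must stay around $2^{\Theta(d)}p^2$ so that $\kappa(M)\le 2^{B}$. These requirements balance at $N=2^{2d}p$, i.e. $\|M\|\approx 2^{2d}p^2$, forcing $B$ to roughly double the bit-length of $p$ and leave $\Theta(d)$ of slack — which is exactly the content of the hypothesis $p\le 2^{B/2-d}d^{-1}$.
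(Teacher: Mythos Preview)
Your proof is correct and takes essentially the same approach as the paper: embed $\Lambda=\{x\in\ZZ^d:Ax\equiv0\bmod p\}$ as the sublattice of short vectors in a $2d$-dimensional lattice (your $M$ is just a block rearrangement of the paper's $L$), and read it off from the first columns of the reduced basis. You are in fact more explicit than the paper about extracting an $\FF_p$-basis of $\ker A$ from the resulting $\ZZ$-basis of $\Lambda$ via echelon; one small simplification is that your size-reduction step can be dropped, since an easy induction (once $c_1,\dots,c_{i-1}\in\RR^d\times0$, the last $d$ coordinates of $c_i^*$ equal those of $c_i$) already forces $c_i\in\Lambda\times0$ directly from $R_{i,i}\le 2^{2d-i}\lambda_i(L)<N$.
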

\begin{proof}
	Let $A$,$p$ be the input of the kernel problem.
	The matrix
	\[ L=\begin{pmatrix} pd2^{2d-1}\cdot p\Id_d & pd2^{2d-1} A \\ 0 & \Id_d\end{pmatrix} \]
	is given to the lattice reduction oracle.
	The output is of the form
	\[ \begin{pmatrix} 0 & * \\ K & *\end{pmatrix} \]
	where we maximize the number $k$ of columns of $K$.
	The reduction returns this matrix $K$.

	We have $\|L\|\leq d^22^{2d}p^2\leq 2^B$ and $\|L^{-1}\|\leq 2d$ which is also less than $2^B$ since $p\geq 2$.
	It is clear that vectors in $L\ZZ^{2d}$ of the form $\begin{pmatrix} 0 \\ x\end{pmatrix}$ are exactly the integer solutions of $Ax=0 \Mod p$.
	We let $QR$ be the QR-decomposition of $AU$.
	Let $K'$ be a basis of $\ker A$, where entries are integers smaller than $p$.
	Then, since $U$ is unimodular, there is an integer matrix $V$ such that \[ AUV=\begin{pmatrix} 0 \\ K'\end{pmatrix}.\]
	If $V$ has no nonzero entries $V_{i,j}$ with $i>k$, then it is clear that the output is correct.
	Hence, we consider $v$ a column of $V$ where it is not the case, and let $i$ be maximal with $v_i\neq 0$.
	First, we have $\|AUv\|\leq \sqrt{d}p$.
	Second, as $Q$ is orthogonal, we have $\|AUv\|=\|Rv\|\geq R_{i,i}$.
	Third, the definition of $k$ implies that $R_{k+1,k+1}\geq d2^{2d-1}p$.
	As the lattice is reduced and $i>k$, we have $R_{i,i}\geq R_{k+1,k+1}2^{1-2d}$.
	We conclude that:
	\[ \sqrt{d}p\geq \|AUv\|\geq R_{k+1,k+1}2^{1-2d}\geq dp \]
	which is a contradiction.
\end{proof}

As we expect the kernel problem to have a complexity of $\Omega(d^\omega B/\log B+d^2B)$, we can expect the same for the lattice reduction problem.
The reduction can of course be extended with other rings, and also to compute a span.

\end{document}